\newtheorem{theorem}{Theorem}
\newtheorem{proposition}{Proposition}
\theoremstyle{definition}
\newtheorem{definition}[thm]{Definition}
\newcommand{\diam}{\mathrm{diam}}
\newcommand{\BB}{\mathcal B}
\newcommand{\HH}{\mathcal H}
\renewcommand{\AA}{\mathcal A}
\newcommand{\caal}{\caA_{\textrm{al}}}
\newcommand{\Adjoint}{\mathrm{Ad}}
\newcommand{\loopc}{\,\square\,}
\newcommand{\eder}{\equiv}
\newcommand{\nor}{|||}
\begin{document}
\title{A classification of $G$-charge Thouless pumps in 1D invertible states}

\author[1]{Sven Bachmann}
\author[2]{Wojciech De Roeck}
\author[3]{Martin Fraas}
\author[2]{Tijl Jappens}

\affil[1]{Department of Mathematics, University of British Columbia, Vancouver, BC V6T 1Z2, Canada}
\affil[2] {Institute of Theoretical Physics, K.U. Leuven, 3001 Leuven, Belgium }
\affil[3]{Department of Mathematics, University of California, Davis, Davis, CA, 95616, USA}
\date{\today}                 
\setcounter{Maxaffil}{0}
\renewcommand\Affilfont{\itshape\small}
\date{\today }

\maketitle
 
\begin{abstract} Recently, a theory has been proposed that classifies cyclic processes of symmetry protected topological (SPT) quantum states.  For the case of spin chains, i.e.\ one-dimensional bosonic SPT's, this theory implies that cyclic processes are classified by zero-dimensional SPT's. This is often described as a generalization of Thouless pumps, with the original Thouless pump corresponding to the case where the symmetry group is $U(1)$ and pumps are classified by an integer that corresponds to the charge pumped per cycle. 
In this paper, we review this one-dimensional theory in an explicit and rigorous setting and we provide a proof for the completeness of the proposed classification for compact symmetry groups $G$. 
\end{abstract}


\section{Introduction}   \label{sec: introduction}

Symmetry protected topological states are states of a spatially extended $d$-dimensional quantum many-body system that are symmetric with respect to a certain on-site symmetry group $G$ and that can be adiabatically connected to a product state, but only if the adiabatic path passes through states that are not $G$-symmetric. The idea is hence that the interesting topological characteristics of these states is only present due to the $G$-symmetry, which explains the nomenclature. Much progress has been made towards classifying such states: While the very notion of equivalence was introduced in~\cite{chen2010local}, classifications were first proposed in $d=1$ \cite{chen_gu_wen_2011,schuch2011classifying,pollmann2012symmetry}, with a first general proposal for all dimensions using group cohomology was given in~\cite{chen2013symmetry}. Among the parallel classification schemes, we mention string order~\cite{nijs:1989a,perez2008string} as well as the entanglement spectrum~\cite{li2008entanglement,pollmann2010entanglement}. More mathematical works are recent, with the cohomological classification given a fully rigorous treatment in~\cite{OgataZ2,ogata2021classification,kapustin2021classification} in one dimension, and \cite{Ogata2d,sopenko2021index} in two dimensions.  

In the present paper, we restrict ourselves to $d=1$ and to bosonic systems, i.e.\ spin chains. Given that restriction, we will consider a class of states that is a priori broader than SPT states, namely \emph{invertible} states on a spin chains, to be defined in Section \ref{sec: states}.  Let us immediately say that it is believed that for spin chains, the two classes (SPT states and invertible states) actually coincide, and for finite symmetry groups $G$ this was proven in \cite{kapustin2021classification}. For us, the class of invertible states is the natural class for which the main result of the paper can be readily formulated. 

Our main interest lies in classifying not the $G$-symmetric invertible states themselves, but \emph{loops of $G$-symmetric invertible states}. Physically, these loops can be thought of as describing periodic pumps. The most well-known example of this is given by considering pumping protocols where a $U(1)$-charge is conserved, often referred to as Thouless pumps.  The phenomenon of topological quantization of charge transport, which is relevant to the integer quantum Hall effect, is then \cite{thouless1983quantization,avron1994charge,De-Nittis:2016va,HastingsMichalakis,OurIndex,kapustin2020hall}:  \emph{The average charge transported per cycle is an integer} when expressed in an appropriate, pump-independent, unit. Of course, this statement needs a restrictive assumption that excludes fractional charge transport in states with non-trivial topological order, cf.\ the fractional Quantum Hall effect. In our setting, it is precisely the property of invertibility that excludes such topologically ordered states. 

In this paper, we construct an index associated with loops of $G$-symmetric invertible states. The index takes values in the first group cohomology group $H^1(G,\bbS^1)$ of the symmetry $G$, and we shall refer to its value as a charge. In order to classify the loops, we introduce an equivalence relation on the set of loops which is given by a new and specific form of homotopy of paths of states, see Section~\ref{sec: processes}. With this in hand, Theorem~\ref{thm: classification loops} states that the cohomological index yields a full classification of the loops: two loops with the same basepoint are equivalent if and only if they have the same index. We show that all possible values of the index can be realized. Finally, we prove that the group structure of $H^1(G,\bbS^1)$ reflects both the composition of cyclic processes and the stacking of physical systems. While this is reminiscent of the classification of SPT states discussed above, it is describing a different aspect of the manifold of states, namely its fundamental group. The discussion of Section~\ref{sec: processes} will however make the connection between the two, in relating the index of a loop in one dimension to an index of a particular SPT state in zero dimension. 
\begin{figure}[htb]
\begin{center}
\includegraphics[width=0.6\textwidth]{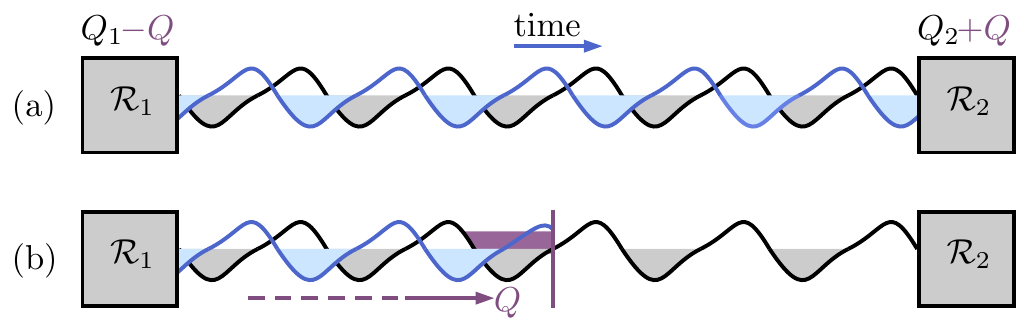}
\caption{Bulk (a) and edge (b) characterization of charge transport.} \label{fig: edge_characterization}
\end{center}
\end{figure} 

Before heading to the main text, we present the idea behind our classification in heuristic terms, and for the sake of clarity for $G=U(1)$, see Figure~\ref{fig: edge_characterization}. There are two ways to measure charge transport. In the `bulk' characterization used e.g. in the original work of Thouless, transport is obtained as the difference of charge in reservoirs before and after the process has taken place; equivalently, one can compute the integrated current. For the purpose of this paper, we prefer the `edge' characterization: Transport is determined by running the process truncated to only one half-line and by measuring the charge accumulated at the edge: specifically by measuring the difference between the amount of charge in the left part of the system before and after the truncated process took place.  In the setting of spin chains, the idea goes back to \cite{KitaevConf} and it was developed more explicitly in~\cite{Else_2014,Else_2016,potter2016classification,von2016phase}, often under the name of `Floquet phases', see also \cite{Xiong_2018} for a review. The problem of classifying loops of states (and more general families of states) has received much interest in the recent physics literature, see in particular~\cite{KunoHatsugai,kapustin2022local,wen2021flow,aasen2022adiabatic,shiozaki2022adiabatic}. 

Finally, we mention an auxiliary result, see Proposition~\ref{prop: uniqueness of ground state}, which may be of interest in its own, namely that small perturbations of a trivial on-site interaction have a unique gapped ground state. Results of that kind have been established in great generality with various techniques, see~\cite{yarotsky2006ground,bravyi2010topological,michalakis2013stability,nachtergaele2020quasi,del2021lie}. However, the specific claim that we need does not seem to appear in the literature, mainly because we consider the infinite-volume setup and because the class of interactions we consider is broader. The fact that the unperturbed ground state is a product state allows for a simpler argument --- but still along the lines of~\cite{michalakis2013stability}.

\subsection*{Plan of the paper} We develop the setup in Sections \ref{sec: setup} and \ref{sec: processes}. The results and relevant examples are stated in Sections \ref{sec: processes} and \ref{sec: pumping index}.  The rest of the paper is devoted to the proofs. In Sections \ref{sec: technical preliminaries} and \ref{sec: hilbert space theory} we state some general preliminaries, and the remaining Sections \ref{sec: trivial index loop is short}, \ref{sec: contractibility of short loops with product basepoint}, \ref{sec: classification for product loops}, \ref{sec: proofs of main} contain specific aspects of the proofs, see Section~\ref{sec:plan} for a more detailed overview.

\subsection*{Acknowledgements}
W.D.R thanks Dominic Else for explaining him the classification of periodic processes of SPT states, which is the main inspiration for this paper.    M.F. was supported in part by the NSF under grant DMS-1907435. W.D.R. and T.J. were supported in part by the FWO under grant G098919N. S.B. was upported by NSERC of Canada.

\subsection*{Data Availability Statement} Data sharing not applicable to this article as no datasets were generated or analysed during the current study.

\subsection*{Conflicts of Interests} The authors have no relevant financial or non-financial interests to disclose.

\section{Setup}   \label{sec: setup}
\subsection{Algebras}
We define a spin chain $C^*$-algebra $\caA$ in the standard way.
To any site $j \in \bbZ$, we associate a finite-dimensional algebra $\caA_j$ isomorphic to  $M_{n_j}(\bbC)$, the algebra of $n_j\times n_j$ matrices with complex entries. The algebra $M_{n_j}(\bbC)$ is equipped with its natural operator norm and $*$-operation (Hermititian adjoint of a matrix) making it into a $C^*$-algebra.  The spin chain algebra $\caA$ is the inductive limit of 
algebra's $\caA_S=\otimes_{j\in S} \caA_j$, with  $S$ finite subsets of $\bbZ$. 
It comes naturally equipped with subalgebra's $\caA_X, X\subset \bbZ$. We refer to standard references \cite{BratRob2,simon2014statistical,naaijkens2017quantum} for more background and details.

\subsection{Almost local evolutions and their generators}
We introduce the framework to discuss processes, i.e. time-evolutions.  Because of the infinite-volume setup, and in contrast to quantum mechanics with a finite number of degrees of freedom, most evolutions cannot be generated by a Hamiltonian that is itself an observable, i.e.\ an element of the  spin chain algebra $\caA$ and we need, instead, to consider so-called \emph{interactions}.  

\subsubsection{Interactions}\label{sec: interactions}
Let $\caF$ be the class of non-increasing, strictly positive functions $f:\bbN^+\to\bbR^+$, with $\bbN^+=\{1,2,\ldots\}$, satisfying the fast decay condition $\lim_{r\to\infty}r^pf(r)=0$ for any $p>0$. 
An interaction is a collections $H=(H_S)$ labelled by finite subsets $S$ of $\bbZ$ such that $H_S=H^*_S \in\caA_S$ and
$$
|| H ||_f  = \sup_{j\in\bbZ}  \sum_{S \ni j} \frac{||H_S||}{f(1+\diam(S))}
$$
is finite for some $f \in\caF$. This excludes for example power law interactions. As in the above expression, sums over $S$ or $S_1,S_2,\ldots$ will always be understood to run over  finite subsets of $\bbZ$.

\subsubsection{Almost local algebra $\caal$}\label{subsec:alal}

If $H$ is an interaction with finite $||\cdot||_f$ norm \emph{and} it satisfies the additional property that $H_S=0$ unless $0 \in S$\footnote{One can replace the site $0$ by any other $j\in\bbZ$ without changing the subalgebra obtained from the procedure described here},  then $\sum_{S} H_S$ is convergent in the topology of $\caA$ and we define
$$
\iota(H)= \sum_{S} H_S\in\caA,
$$
as a map from interactions to observables. The algebra generated by all such elements is a norm-dense subalgebra of $\caA$ that is usually called the \emph{almost local algebra} and that we denote by $\caal$, see also~\cite{kapustin2021classification}. The main interest of this algebra is that it is invariant under the adjoint action associated with an interaction. Indeed, if $A\in\caal$ and $||H||_f <\infty $, then $\sum_{S} [H_S,A]$ is norm convergent and its sum, denoted $[H,A] = [\iota(H),A]$, is again an element of $\caal$.

For later purposes, we also let $\iota_\Lambda(H) = \sum_{S\subset\Lambda} H_S$ whenever the sum is convergent. In particular $\iota_{\bbZ} = \iota$, and if $Z$ is a finite set, then $\iota_Z(H) \in \caA_Z$.
 
\subsubsection{Almost local evolutions}\label{ALEs}

We will consider families of interactions $H(s)$ parametrized by $s\in [0,1]$ and call them `time-dependent interactions' (TDI) provided that they satisfy some regularity conditions to be formulated below. Since the risk of confusion is small, we will often denote them $H$ as well. An interaction $H$ is a TDI if there is $f\in\caF$ such that $s\mapsto H(s)$ is $||\cdot||_f$ bounded and strongly measurable\footnote{namely, it is the limit of a sequence of simple functions, pointwise almost everywhere, where the limit at any $s$ is taken in $||\cdot||_f$-norm, see~\cite{diestel1978vector} }.  On such functions, we use the supremum norms 
$$
|||H |||_{f}=\sup_{s\in[0,1]} ||H(s)||_{f}.
$$
The role of a TDI $H$ is to generate \emph{almost local evolutions} $\alpha_H=(\alpha_H(s))_{s\in[0,1]}$, namely the one-parameter family of strongly continuous $^*$-automorphisms $\alpha_H(s)$ on $\caA$ defined as a particular solution of the Heisenberg evolution equations on $\caal$:
\begin{equation}\label{eq: heisenberg}
\alpha_H(s)[A]= A +i\int_0^s du\, \alpha_H(u) \{ [H(u),A] \}.
\end{equation} 
The integral on the right hand side is to be understood in the sense of Bochner, and the strong measurability of the integrand follows from strong measurability of $s\mapsto H(s)$ and the strong continuity of $s\mapsto \alpha_H(s)$. 

Specifically, $\alpha_H(s)[A]$ is defined as the limit in the topology of $\caA$ of solutions of~(\ref{eq: heisenberg}) where $H$ is replaced with $\iota_{\Lambda_n}(H)$ for an increasing and absorbing sequence $(\Lambda_n)_{n\in\bbN}$ of subsets of $\Gamma$. By a standard argument using the Lieb-Robinson bound \cite{Lieb:1972ts,nachtergaele2006propagation}, this procedure is well-defined and the limit solves~(\ref{eq: heisenberg}). 

We conclude this section by noting that all definitions introduced so far can be applied to the tensor product algebra $\widetilde \caA= \caA \otimes  \caA'$ of two spin chain algebras $\caA$ and $\caA'$. The site algebras $\widetilde \caA_j$ are defined as $\caA_j\otimes\caA'_j$ and $\widetilde\caA$ is then the inductive limit of (tensor products of) these site algebras.

\subsection{States}\label{sec: states}

States are normalized positive linear functionals on the spin chain algebra $\caA$. The set of states is denoted by $\caP(\caA)$ and a natural metric on states is derived from the Banach space norm
\begin{equation}\label{eq: metric on states}
||\psi-\psi'|| = \sup_{A \in\caA, ||A||=1} |\psi[A]-\psi'[A]|.
\end{equation}
The set $\caP(\caA)$ is convex and its extremal points are called the pure states. In all what follows, we consider only pure states.

A distinguished class of states is that of spatial product states $\phi$, i.e.\ states that satisfy
$$
\phi(A_jA_i)= \phi(A_j)\phi( A_i),\qquad  A_j \in\caA_j,A_i\in \caA_i, i\neq j.
$$  
\begin{definition}[Invertible state]
A pure state $\psi$ on a spin chain algebra $\caA$ is \emph{invertible} if there is a pure state $\overline{\psi}$ on a spin chain algebra $\caA'$, a TDI $H$ on $\widetilde \caA= \caA \otimes \caA'$ and a product state $\phi$ on $\widetilde \caA$ such that, 
\begin{equation}\label{eq: def invertibility}
\psi\otimes\overline{\psi} =\phi \circ \alpha_H(1).
\end{equation}
\end{definition}
In this context, the state $\overline{\psi}$ is usually referred to as an inverse of $\psi$.  Invertibility, first coined by \cite{KitaevConf} is a way of expressing that a state has no intrinsic topological order. For spin chains, invertibility has recently been shown \cite{kapustin2021classification} to be equivalent to a stronger property, namely being stably short-range entangled, which amounts to the additional requirement that $\overline{\psi}$ is itself a product state. The notion of invertibility is however essential when adding symmetries.

\subsection{Symmetries}\label{sec: symmetries}
Let $G$ be a compact topological group. We equip the chain algebra with a strongly continuous on-site action of $G$: For any $j\in \bbZ$, there is an automorphism $\gamma_{j}(g)$ of $\caA_j$ for each $g\in G$ such that $\gamma_{j}(g')\circ \gamma_{j}(g)=\gamma_{j}(g'g)$ and $g\mapsto \gamma_j(g)[A]$ is continuous for any $A\in \caA_j$. For any $A\in\caA_S$ with finite $S$, we let $\gamma(g)[A]=\otimes_{j\in S}\gamma_{j}(g)[A]$ and extend this action to a strongly continuous action  on $\caA$ by density. A state $\psi$ is $G$-invariant if  $\psi \circ \gamma(g)=\psi$ for all $g\in G$. 
A TDI $H$ is called $G$-invariant if all its local terms are $G$-invariant: 
$$
\gamma(g)[H_S(s)]=H_S(s), \quad \text{for all $s\in[0,1]$, any finite $S \subset \bbZ$ and all $g\in G$}.
$$
Note these notions of invariance depend on the choice of action $\gamma$, which is not well reflected in the nomenclature. In particular, any state is $G$-invariant if the action $\gamma$ is chosen trivial. 

\begin{definition}[$G$-invertible state]
A $G$-invariant pure state $\psi$ on a spin chain algebra $\caA$ is $G$-\emph{invertible} if it is invertible, and its inverse $\overline{\psi}$ and the interpolating family of states $\phi\circ\alpha_H(s)$ (see \eqref{eq: def invertibility}) can be chosen to be $G$-invariant as well. 
\end{definition}
As we shall see shortly, an equivalent condition for a $G$-invariant, invertible pure state $\psi$ to be $G$-invertible, is that the TDI $H$ in \eqref{eq: def invertibility} can be chosen to be $G$-invariant.

\section{Processes}\label{sec: processes}

Processes should intuitively be defined as continuous paths or loops of states.  This  presupposes in particular a suitable topology on the space of states. However, the standard topologies on states (e.g.\ norm topology or weak topology) do not fit physical intuition, in particular concerning locality properties, and it has become standard in this setting to view a curve of states as `continuous' if it can be generated by a TDI, even if there is no known topology that substantiates this.  This idea is sometimes referred to as \emph{automorphic equivalence} and it was introduced in the works \cite{hastings2005quasiadiabatic,bachmann2012automorphic}.  Concretely, we now define paths of pure states.
\begin{definition}[Paths] \label{def: paths}
Let $\caA$ be a spin chain algebra and let $\gamma$ be a group action. A \emph{path} is a function $[0,1] \ni s \mapsto \psi(s)\in\caP(\caA)$ where each $\psi(s)$ is a pure state, and there is a TDI $H$ such that $\psi(s)=\psi(0)\circ\alpha_H(s)$. A path is called a \emph{loop} if $\psi(0) = \psi(1)$.
Finally, a path $\psi(\cdot)$ is $G$-invariant if all $\psi(s)$ are G-invariant. 
\end{definition}
Two remarks are in order. First of all, the group action (as well as the algebra) are fixed along the path. Secondly, while we did not require that a $G$-invariant path is generated by a $G$-invariant TDI, it is always possible to choose the TDI $G$-invariant. Indeed, if a $G$-invariant path is generated by a TDI $H$, then the TDIs given by $\gamma(g)[H(s)_S]$ for $g\in G$ all generate the same path since $\psi$ and $\psi(s)$ are $G$-invariant, and so
$
\bar{H}_S(s)=   \int_G d\mu(g)  \gamma(g)[H_S(s)],
$
with $\mu(\cdot)$ the normalized Haar measure on $G$, generates the same path. Now $\bar{H}$ is $G$-invariant and $|||\bar H |||_f \leq |||H|||_f$ for any $f\in \caF$ since $\gamma(g)$ are automorphisms.

\subsection{Charge and equivalence}\label{sec: equivalence}

We can now use this notion of path to define connected components of the set of states on a spin chain algebra. We present this in the form of an equivalence relation.
\begin{definition}[Equivalence]
A pair of pure states $\psi,\psi'$  on a spin chain algebra $\caA$ are called \emph{equivalent} if and only if there is a connecting  path $s\mapsto \psi(s)$ such that $\psi(0)=\psi$ and $\psi(1)=\psi'$. If $\psi,\psi'$ are $G$-invariant, the pair is \emph{$G$-equivalent} if the connecting path is $G$-invariant. 
\end{definition}

We first settle the classification of the $G$-equivalence classes of the set of pure $G$-invariant states for zero-dimensional systems that we define below. In this work, we will not consider the classification problem on spin chains, see however Section~\ref{sec: stable equivalence}.

\subsubsection{Zero-dimensional systems}\label{sec: equivalence zero}
One relevant case where the classification problem is straightforward is for \emph{$0$-dimensional} pure $G$-invariant states, called $0$-dim $G$-states for brevity. 
By this we mean that $\caA = \caB(\caH)$, for some Hilbert space $\caH$. The $0$-dim $G$-states are $G$-invariant pure normal states on $\caB(\caH)$ and we will assume that there is at least one such $0$-dim $G$-state. 
It follows\footnote{See Lemma \ref{lem: gns} and its proof} that the $G$-action on $\caB(\caH)$ can be implemented by a family of unitaries $U(g)$, i.e.\  $ \gamma(g)[A] = U(g) A U(g)\str$, and $U(\cdot)$ is a strongly continuous unitary representation of $G$.

A pure normal states $\psi$ on $\caB(\caH)$ is given by a ray in Hilbert space as  $\psi(B)=\langle\Psi, B\Psi\rangle$ for some $\Psi \in \caH$ with $||\Psi||=1$, which we call a representative of $\psi$.
 The equivalence question for $0$-dim $G$-states is stated more easily than in the case of states on spin chain algebras, because now the natural topologies are physically relevant. 
We simply call a pair of $0$-dim $G$-states $\psi,\psi'$  $G$-equivalent if they can be connected by a norm-continuous path of $0$-dim $G$-states\footnote{Choosing a weaker topology would not change the classification}.  
Note that if the action $\gamma$ is trivial, then any pair of $0$-dim $G$-states is $G$-equivalent, by the connectedness of the projectivization of~$\caH$. For general compact groups $G$, we will provide a full classification of $G$-equivalence classes of $0$-dim $G$-states  in Proposition \ref{prop: zerodim}.
 
 \subsubsection{G-charge}\label{sec: g charge}

We consider the group $H^1(G)=\mathrm{Hom}(G,\bbS^1)$ of continuous group homomorphisms $G\to \bbS^1$, equipped with the topology of uniform convergence. We use additive notation for the Abelian group $H^1(G)$. Because of the compactness of $G$,  $H^1(G)$ is a discrete space in this topology \cite{hofmann1965topologische}, and this fact plays a central role in our reasoning.

The next proposition shows that $H^1(G)$ completely classifies the $0$-dim $G$-states. Recall that equivalence is defined between states of the same algebra equipped with the same group action. 
\begin{proposition}\label{prop: zerodim}
To every ordered pair of $0$-dim $G$-states $\psi_1,\psi_2$ of $\caB(\caH)$ equipped with $G$-action $\gamma$ we can associate an element $h_{\psi_2/\psi_1} \in H^1(G)$ such that 
\begin{enumerate}
\item $\psi_1,\psi_2$ are $G$-equivalent if and only if  $h_{\psi_2/\psi_1}=0$
\item $h_{\psi_3/\psi_2}+h_{\psi_2/\psi_1}=h_{\psi_3/\psi_1}$
\item If $\psi_1',\psi_2'$ are $G$-invariant states of $\caB(\caH')$ equipped with $G$-action $\gamma'$, then $h_{(\psi_2 \otimes \psi'_2)/(\psi_1 \otimes \psi'_1)}= h_{\psi_2/\psi_1}+ h_{\psi'_2/\psi'_1}$
\item If $\delta:\caB(\caH')\to \caB(\caH) $ is a $*$-isomorphism such that $\delta\circ\gamma'(g)=\gamma(g)\circ\delta$, then $h_{\psi_2/\psi_1}= h_{(\psi_2\circ\delta)/(\psi_1\circ\delta)}$
\end{enumerate} 
\end{proposition}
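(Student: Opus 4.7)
The starting point is to associate to every pure $G$-invariant normal state $\psi$ on $\caB(\caH)$ a character of $G$. Let $\Psi\in\caH$ be a unit vector representing $\psi$. Since the $G$-action is implemented by a strongly continuous unitary representation $U(g)$ and $\psi$ is $G$-invariant, the rank-one projection $\Psi\Psi^*$ is fixed by conjugation with $U(g)$. Hence $U(g)\Psi$ lies in the ray of $\Psi$, and there is $\chi_\psi(g)\in\bbS^1$ with $U(g)\Psi=\chi_\psi(g)\Psi$. Multiplicativity of $U$ makes $\chi_\psi$ a group homomorphism, and strong continuity of $U$ yields continuity of $g\mapsto\chi_\psi(g)=\psi(U(g))$; the final formula also shows that $\chi_\psi$ does not depend on the choice of representative. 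Thus $\chi_\psi\in H^1(G)$, and I define $h_{\psi_2/\psi_1}:=\chi_{\psi_2}-\chi_{\psi_1}\in H^1(G)$.

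With this definition, properties (2)--(4) reduce to short computations. Property (2) is telescoping. For (3), the $G$-action on $\caB(\caH\otimes\caH')$ is implemented by $U(g)\otimes U'(g)$, so the product vector $\Psi\otimes\Psi'$ has character $\chi_\psi+\chi_{\psi'}$. For (4), invoke that every $*$-isomorphism $\delta\colon\caB(\caH')\to\caB(\caH)$ is of the form $\delta(\cdot)=V\cdot V^*$ for a unitary $V\colon\caH'\to\caH$. Equivariance gives $VU'(g)V^*=c(g)U(g)$ with a phase $c(g)\in\bbS^1$; since both sides are genuine representations, $c$ is a homomorphism, hence lies in $H^1(G)$. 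A representative of $\psi_i\circ\delta$ is $V^*\Psi_i$, whose character is $c+\chi_{\psi_i}$, so the difference $\chi_{\psi_2}-\chi_{\psi_1}$ is unchanged.

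The substantive part is (1). For the ``if'' direction, assume $\chi_{\psi_1}=\chi_{\psi_2}=:\chi$. Both representatives live in the closed $\chi$-isotypic subspace $\caH_\chi=\{\Phi\in\caH:U(g)\Phi=\chi(g)\Phi\ \forall g\}$, and the unit sphere of $\caH_\chi$ is path-connected (if $\dim\caH_\chi=1$ the vectors differ by a phase and an interpolation $e^{i\theta s}\Psi_1$ suffices; otherwise connect them within a $2$-dimensional subspace of $\caH_\chi$). The map from the unit sphere to pure normal states is norm-continuous, and every vector in $\caH_\chi$ yields a $G$-invariant state, giving the desired connecting path. For the ``only if'' direction, let $s\mapsto\psi_s$ be a norm-continuous path of pure $G$-invariant normal states. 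For any $g\in G$,
\[
|\chi_{\psi_s}(g)-\chi_{\psi_{s'}}(g)|=|\psi_s(U(g))-\psi_{s'}(U(g))|\le\|\psi_s-\psi_{s'}\|,
\]
uniformly in $g$ since $\|U(g)\|=1$. Hence $s\mapsto\chi_{\psi_s}$ is continuous into $H^1(G)$ equipped with the topology of uniform convergence. Because $G$ is compact, $H^1(G)$ is discrete, so the map is constant and $\chi_{\psi_0}=\chi_{\psi_1}$, i.e.\ $h_{\psi_1/\psi_0}=0$.

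The only genuinely non-trivial ingredient is the discreteness of $H^1(G)$ for compact $G$; everything else is a matter of recognizing that invariance plus purity forces a character, together with the elementary estimate $\|\chi_{\psi}-\chi_{\psi'}\|_\infty\le\|\psi-\psi'\|$ that converts norm-continuity of the state path into continuity into the discrete target. The potential subtlety in (4) -- that $c(g)$ must itself be a continuous homomorphism rather than just a measurable phase -- is handled automatically because it is expressed as a ratio of two strongly continuous unitary representations of $G$.
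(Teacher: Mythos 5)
Your proof is correct and follows essentially the same route as the paper's: fix a unitary implementation $U$ of $\gamma$, read off the character $\chi_\psi$ from $U(g)\Psi=\chi_\psi(g)\Psi$, define $h_{\psi_2/\psi_1}$ as the difference, and use discreteness of $H^1(G)$ for compact $G$ together with the Lipschitz estimate $\lvert\chi_{\psi_s}(g)-\chi_{\psi_{s'}}(g)\rvert\le\lVert\psi_s-\psi_{s'}\rVert$ for item (1). The only point you leave implicit (and which the paper makes explicit) is that $\chi_\psi$ itself depends on the choice of $U$ up to a common $U(1)$-representation, so only the difference $h_{\psi_2/\psi_1}$ is canonically in $H^1(G)$; this is immediate from your formulas but worth stating since the proposition asserts the assignment depends only on $(\psi_1,\psi_2,\gamma)$.
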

We call henceforth $h_{\psi_2/\psi_1}$ the relative charge of $\psi_2$ and $\psi_1$. 
An alternative view on $G$-charge would be to restrict the algebra to $G$-invariant elements, in which case we could phrase the above proposition as describing the \emph{superselection sectors}.
\begin{proof}
We choose a unitary representation $U(\cdot)$ of $G$ that implements $\gamma(\cdot)$ on $\caB(\caH)$.  
If a state $\psi_j$ is $G$-invariant then any of its representatives $\Psi_j$ is invariant up to phase: $U(g)\Psi_j= z_j(g)\Psi_j$, and $G\to U(1): g\mapsto z_j(g)$ is a group homomorphism. We then define $h_{\psi_2/\psi_1}(g)\in\bbS^1$ by
$$
e^{ih_{\psi_2/\psi_1}(g)}=\frac{z_2(g)}{z_1(g)}. 
$$ 
It is apparent that $z_j(g)$ does not depend on the choice of representatives $\Psi_j$. The unitary representation $U(\cdot)$ is uniquely determined up to multiplication with a $U(1)$ representation of $G$, which does affect $z_j(g)$ but not $h_{\psi_2/\psi_1}(g)$, and we conclude that $h_{\psi_2/\psi_1} \in H^1(G)$ is well-defined. To prove $i)$, we note that if two states are $G$-equivalent with interpolating path $\psi(s)$, then $s\mapsto h_{\psi(s)/\psi_1} $ is continuous and hence, by the discreteness of $H^1(G)$, constant. Reciprocally, if two states have zero relative charge, then there is a continuous path of $G$-invariant unitaries $V_s$ acting non-trivially only in the subspace spanned by their representatives $\Psi_1,\Psi_2$ and interpolating between the two. The vector states given by $V_s\Psi_1$ form the connecting path. $ii)$ is checked using the relation $\frac{z_3}{z_1} = \frac{z_3}{z_2}\frac{z_2}{z_1}$ and $iii)$ follows similarly. To prove $iv)$, we pick a unitary representation $U'$ implementing $\gamma'$ on  $\caB(\caH')$. By the intertwining property we have then that $\gamma(g)[\cdot]=\delta(U'(g))[\cdot]\delta(U'(g))\str$ and the claim follows from $z_j(g) = \psi_j(U(g)) = \psi_j(\delta(U'(g))$.
\end{proof}

\subsection{Homotopy}\label{sec: homotopy}

We introduce a notion of a homotopy of two loops, say $\psi_0(\cdot),\psi_1(\cdot)$, generated by the TDIs $H_0, H_1$, respectively. 
If we had an appropriate topology on the set of states, we would say that a homotopy of those loops is a continuous function $(s,\lambda)\mapsto \psi_\lambda(s)$, such that $\psi_\lambda$ is a loop for each $\lambda$, reducing to the two given loops for $\lambda=0$ and $\lambda=1$. 
Though there is no topology,  we have  already specified in Definition \ref{def: paths} what it means for a state-valued function to be `continuous' and we impose that definition here for the dependence on the parameter $\lambda$ as well.  That is, we demand that $s\mapsto \psi_\lambda(s),\lambda\mapsto \psi_\lambda(s) $ are generated by TDIs that we note as $H_\lambda(\cdot), F_s(\cdot)$;
\begin{equation}\label{eq: homotopy}
\psi_\lambda(s)= \psi_\lambda(0) \circ \alpha_{H_\lambda}(s),\qquad
\psi_\lambda(s)= \psi_0(s) \circ \alpha_{F_s}(\lambda),\qquad \text{for any $s, \lambda \in[0,1]$}
\end{equation}
 Additionally, we demand a uniform bound on the generating TDIs:
\begin{equation}\label{Homotopy: Uniform bound}
\sup_{\lambda} ||| H_\lambda |||_f <  \infty,\qquad \sup_s  ||| F_s |||_f  <\infty
\end{equation}
A more intuitive way of phrasing our definition of homotopy is to say that motion on the 
 two-dimensional sheet of states $(s,\lambda)\mapsto \psi_\lambda(s)$ is generated by an interaction-valued vector field $H\frac{\partial}{\partial s} + F\frac{\partial}{\partial \lambda} $, see Figure~\ref{fig: homotopy states}, but we do not formalize this, as it would require technical conditions that are ultimately not relevant for our goals (but see \cite{kapustin2022local}).
 
\begin{figure}[h]
\begin{center}
   \includegraphics{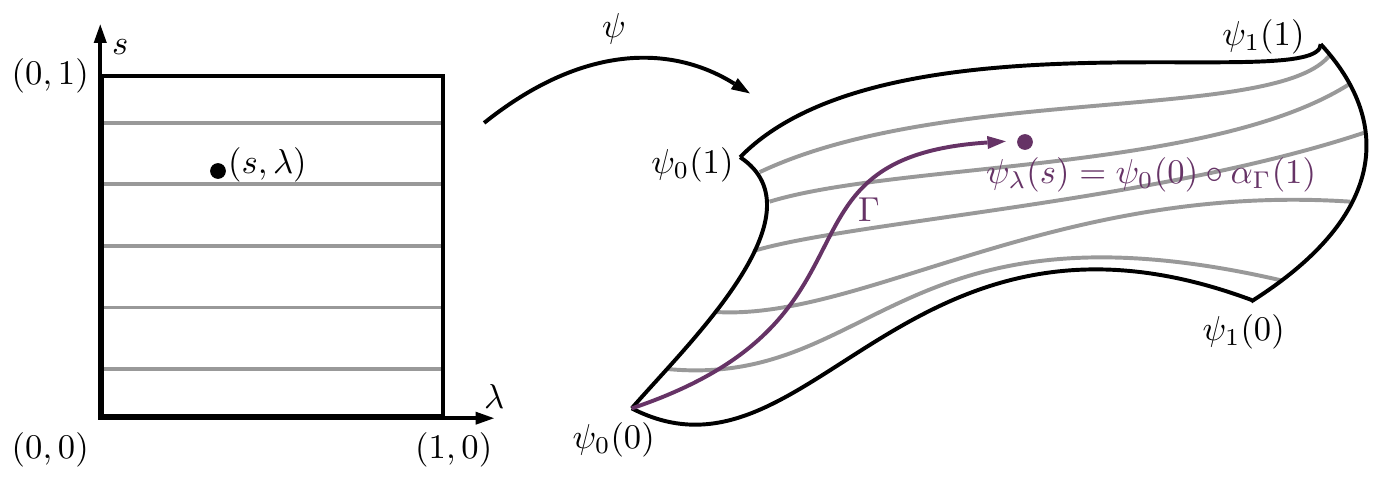}
   \caption{Intuitive notion of homotopy: the state $\psi$ at the endpoint of the curve $\Gamma: [0,1]\to [0,1]^2$ is given by  $\psi=\psi_0(0)\circ \alpha_\Gamma(1)$, with $\alpha_\Gamma$ the almost local evolution generated by the TDI $L(z), z\in [0,1]$  defined as i.e.\  
$L(z)= H\frac{\partial \Gamma(z)}{\partial s} + F\frac{\partial \Gamma(z)}{\partial \lambda} $. }
\label{fig: homotopy states}
\end{center}
\end{figure} 
 
\begin{definition}[Homotopy and $G$-homotopy]
A pair of loops $\psi(\cdot)$ and $\psi'(\cdot)$ is homotopic if there exists a homotopy $(s,\lambda)\mapsto \psi_\lambda(s)$ in the sense of eqs.\ (\ref{eq: homotopy}, \ref{Homotopy: Uniform bound}) such that $\psi_0(\cdot)=\psi(\cdot)$ and $\psi_1(\cdot)=\psi'(\cdot)$.   
A pair of loops is $G$-homotopic if the homotopy can be chosen such that $ \psi_\lambda(s)$ are $G$-invariant  for all $s,\lambda$.   
\end{definition}

One could opt for a different definition of paths and their homotopies by requiring that the map $(s,\lambda)\to H_\lambda(s)$ is sufficiently smooth (allowing e.g.\ for isolated singularities) in $||\cdot||_f$-norm, for some $f\in\caF$. Such a condition would imply our definition of homotopy and all our results would still hold if we stuck to this definition but we find it unnatural as it puts the stress squarely on the generators rather then on the states.

On the other hand, it is not possible to drop the boundedness property \eqref{Homotopy: Uniform bound}. The necessity of this boundedness property is best understood on an example. Let the TDI $H$ generate a loop $\psi(\cdot)$. Then, for any $\lambda$, 
$$
H_\lambda(s)= \begin{cases} 0  &  s\leq \lambda \\
   \tfrac{1}{1-\lambda} H(\tfrac{s-\lambda}{1-\lambda})&  s>\lambda
  \end{cases}
$$ 
generates a loop  $\psi_\lambda(\cdot)$. 
For any $\lambda<1$, this loop is simply a reparametrization of the original loop, namely 
$$
\psi_\lambda(s)= \begin{cases} \psi(0)  &  s\leq \lambda \\
  \psi( \tfrac{s-\lambda}{1-\lambda})  &  s>\lambda
  \end{cases}
$$ 
and hence these loops are homotopic to each other by any reasonable definition. However, for $\lambda=1$, we get the constant loop $\psi_1(s)=\psi(0)$. Hence, we arguably need some uniform boundedness property on the generators, akin to \eqref{Homotopy: Uniform bound}, to avoid the conclusion that any loop is homotopic to a constant loop.

\subsubsection{Concatenation of loops} \label{sec: concatenation loops}

If $\psi(\cdot)$ is a loop, we refer to the state $\psi(0)$ is its basepoint. We define the concatenation $\psi' \loopc \psi$ of two loops $\psi',\psi$ with the same basepoint, in the obvious way
\begin{equation} \label{eq: concatenation}
(\psi' \loopc \psi) (s) = \begin{cases} \psi(2s) &  s \leq \tfrac{1}{2}  \\
\psi'(2s-1) &  s \geq \tfrac{1}{2}  \\
  \end{cases}
\end{equation}
and we note that $\psi' \loopc \psi$ is also a loop with the same basepoint as $\psi,\psi'$.
It is quite intuitive that, if $\psi'$ is a constant loop, then $\psi' \loopc \psi$ and $\psi \loopc \psi'$ are homotopic to $\psi$. Along the same lines, for a loop $\psi$, we can define the time-reversed loop ${\psi^{\theta}}(s)=\psi(1-s)$ and it turns out that ${\psi^{\theta}} \loopc\psi$ and $\psi\loopc{\psi^{\theta}}$ are  homotopic to a constant loop. Tools to prove such statements will be furnished in Section \ref{sec: technical preliminaries}. Hence, just as for the conventional notion of homotopy, our notions of homotopy and concatenation lead to a group structure  on homotopy equivalence classes of loops.  

\subsection{Stable equivalence and homotopy}  \label{sec: stable equivalence and homotopy}
\subsubsection{Stable equivalence}   \label{sec: stable equivalence}
We return to the question of equivalence of states.  There are some works classifying certain $G$-invariant pure states on spin chain algebras, see in particular~\cite{ogata2021classification}, 
but most authors \cite{kitaev2009periodic,chen_gu_wen_2011,kapustin2021classification} choose to relax the notion of equivalence to allow for tensoring with a product state on an auxiliary spin chain algebra, analogous to our 
definition
of invertibility \eqref{eq: def invertibility}: One identifies $\psi$ on a spin chain algebras $\caA$ with $\psi \otimes \phi$ on $\caA\otimes\caA'$ if $\phi$ is a pure product state on $\caA'$. This gives states on spin chain algebras the structure of a monoid, with product states being the unit element. In that sense invertible states are indeed those that have an inverse in this monoid.   
\begin{definition}[Stable ($G$)-equivalence]
A pair of states $\psi_1,\psi_2$ on spin chain algebras $\caA_1,\caA_2$ are $(G)$-stably equivalent if and only if there are a ($G$-invariant) product states $\phi_1,\phi_2$ on spin chain algebras $\caA'_1,\caA'_2$  such that the states $\psi_1\otimes \phi_1, \psi_2\otimes \phi_2$ are ($G$)-equivalent.
\end{definition}  
 The relaxation of the equivalence property by stabilization is natural, if only because it allows to compare states defined on spin chain algebras with distinct on-site dimensions $n_j$.
On the other hand, a case could be make that our notion of equivalence is too generous and that one should only allow for tensoring with product states of which all the factors have zero $G$-charge with respect to some fixed reference state, see \cite{kapustin2021classification} for a similar restriction.  For example, consider a $U(1)$-invariant spin chain where, for each site $i$, we specify a reference zero-dimensional pure $G$-invariant state $\phi_i$ and we consider the $G$-invariant product state $\psi=\otimes_i \psi_i$ such that 
$h_{\psi_i/\phi_i}=i$, i.e.\ the charges grow linearly as one moves outwards.   By our definition, $\psi$ is trivially stably $G$-equivalent to $\phi$, but it is not by the alternative definition, which seems in this case physically more relevant. 

The classification of states up to stable ($G$-)equivalence is usually referred to as the classification of (symmetry protected) topological phases. It has been a central theme in the recent literature. 

\subsubsection{Stable Homotopy} \label{sec: stable homotopy}

Just as for the notion of equivalence, it is natural to relax the notion of homotopy by allowing tensoring with product states. 

\begin{definition}[Stable ($G$)-homotopy]
A pair of loops $\psi_1(\cdot),\psi_2(\cdot)$ on spin chain algebras $\caA_1,\caA_2$ are $(G)$-stably homotopic if and only if there are  ($G$-invariant) product states $\phi_1,\phi_2$ on spin chain algebras $\caA'_1,\caA'_2$ such that the following loops are ($G$)-homotopic 
$\psi_1(\cdot)\otimes \phi_1, \psi_2(\cdot)\otimes \phi_2$
\end{definition} 

We note that the adjoined loops on $\caA'_1,\caA'_2$ are constant and that the role of the unit element in the monoidal structure of loops is played  by constant loops with product basepoint.

\subsection{Classification of loops}\label{sec: classification of loops}

To streamline the discussion of our results, we introduce some terminology tailored to our results.   This also serves to summarize the definitions made in the previous sections. 
\begin{definition}[$G$-state]
A $G$-state is a triple $(\caA,\gamma,\psi)$ where
\begin{enumerate}
\item $\caA$ is a spin chain algebra.
\item $\gamma$ is a strongly continuous $G$-action on $\caA$ by on-site $*$-automorphisms. 
\item $\psi$ is a pure, $G$-invariant and $G$-invertible state on $\caA$. 
\end{enumerate}
\end{definition}
The extension to loops is then self-explanatory. 
\begin{definition}[$G$-loop]
A $G$-loop is a triple $(\caA,\gamma,\psi(\cdot))$ where
\begin{enumerate}
\item For any $s\in [0,1]$, $(\caA,\gamma,\psi(s))$ is a $G$-state. 
\item $s\mapsto \psi(s)$ is a loop, as defined in Section~\ref{sec: homotopy}.  
\end{enumerate}
\end{definition}

We note that both $G$-states and $G$-loops come equipped with the monoidal structure of taking tensor products. If $\psi,\psi'$ are $G$-loops, then $s\mapsto \psi(s)\otimes\psi'(s)$ is a $G$-loop. Additionally, $G$-loops $\psi,\psi'$ that  have a common basepoint, 
admit another natural operation, namely concatenation, see Section~\ref{sec: concatenation loops}.  To be precise, saying that two $G$-loops `have a common basepoint' presupposes that they are defined on a common spin chain algebra $\caA$, equipped with the same $G$-action $\gamma$.

We now  come to our main result, the classification of loops, up to stable $G$-homotopy.

\begin{theorem}\label{thm: classification loops} [Classification of loops]
There is a map $h$ assigning to any $G$-loop $\psi$ an element  $h(\psi) \in H^1(G)$ such that 
\begin{enumerate}
\item If $\psi$ is a constant $G$-loop, i.e. $\psi(s)=\psi(0)$ for $s\in[0,1]$, then $h(\psi)=0$. 
\item For any $h \in H^1(G)$, there is a $G$-loop $\psi$ such that $h(\psi)=h$. 
\item If a pair of $G$-loops $\psi$ and $\psi'$ is stably $G$-homotopic, then $h(\psi)=h(\psi')$.
\item If the $G$-loops $\psi,\psi'$ have the same basepoint,  then  $h(\psi'\loopc \psi)=h(\psi')+h(\psi)$.
\item For $G$-loops  $\psi,\psi'$, we have  $h(\psi'\otimes \psi)=h(\psi')+h(\psi)$.
\item Let $\psi,\psi'$ be $G$-loops whose basepoints are stably $G$-equivalent and satisfying $h(\psi)=h(\psi')$, then $\psi,\psi'$ are stably $G$-homotopic. 
\end{enumerate} 
\end{theorem}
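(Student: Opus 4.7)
The plan is to build the index $h$ from the \emph{edge} characterization advertised in the introduction, verify properties (1)--(5) by inspection via Proposition~\ref{prop: zerodim}, and reserve the bulk of the work for the completeness statement (6), which we organize following the section plan of the paper.

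\textbf{Defining the index.} Given a $G$-loop $(\caA,\gamma,\psi(\cdot))$, we choose a $G$-invariant TDI $H$ generating it (possible by the averaging remark after Definition~\ref{def: paths}) and truncate it to the left half-line by setting $H^L_S(s)=H_S(s)$ if $S\subset(-\infty,0]$ and $H^L_S(s)=0$ otherwise. Since $\psi(0)\circ\alpha_H(1)=\psi(0)$ and $H$ has the decay built into the $\lVert\cdot\rVert_f$ norm, a Lieb--Robinson estimate shows that $\psi(0)\circ\alpha_{H^L}(1)$ agrees with $\psi(0)$ far from the origin and differs from it only through an almost-local $G$-invariant automorphism supported near the cut. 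In the GNS representation of $\psi(0)$, which is of type $I$ by invertibility, the pair $(\psi(0),\psi(0)\circ\alpha_{H^L}(1))$ yields a pair of $0$-dim $G$-states, and we define $h(\psi)$ to be the relative charge of the second with respect to the first in the sense of Proposition~\ref{prop: zerodim}. Independence of the choice of $G$-invariant generator, of the cut location, and of auxiliary choices in the GNS construction follows from item~(iv) of Proposition~\ref{prop: zerodim}, since any two admissible choices differ by a $G$-invariant almost-local unitary and the relative charge is invariant under such a conjugation.

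\textbf{Properties (1)--(5).} For a constant loop, $H\equiv 0$ is an admissible generator and the two edge states coincide, so $h=0$. Realizability (2) is exhibited by an explicit $G$-equivariant Thouless-style pump transporting a single $G$-charged site across the cut. Invariance under (stable) $G$-homotopy (3) is the key structural property: along a $G$-homotopy $(s,\lambda)\mapsto \psi_\lambda(s)$, the uniform bounds~(\ref{Homotopy: Uniform bound}) yield a norm-continuous family of edge $0$-dim $G$-states, so $\lambda\mapsto h(\psi_\lambda)$ is a continuous map into the discrete group $H^1(G)$ and is therefore constant; insensitivity to stacking with constant product loops is automatic, as a product basepoint contributes trivially at the edge. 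Additivity under concatenation (4) and under stacking (5) reduce at the edge to items~(ii) and~(iii) of Proposition~\ref{prop: zerodim}, once one checks that the truncated TDIs associated with $\psi'\loopc\psi$ and $\psi'\otimes\psi$ implement, respectively, the composition and the tensor product of the individual edge automorphisms.

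\textbf{The main obstacle: completeness (6).} The argument proceeds in three stages. First, combining additivity with the fact that $\psi^{\theta}\loopc\psi$ is $G$-homotopic to a constant loop (this relies on the tools of Section~\ref{sec: technical preliminaries}), completeness reduces to showing that every $G$-loop with trivial index is stably $G$-homotopic to a constant loop. Second, we upgrade the basepoint to a $G$-invariant product state: by $G$-invertibility of $\psi(0)$ one stacks with a constant loop at $\overline{\psi(0)}$ and conjugates by the $G$-invariant TDI furnished by~\eqref{eq: def invertibility}, taking care to preserve the uniform bound~(\ref{Homotopy: Uniform bound}) along the conjugation; this is the content of Section~\ref{sec: classification for product loops}. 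Third, and this is the heart of the proof, one must contract a $G$-loop with product basepoint and vanishing index to a constant loop. The strategy is to first show (Section~\ref{sec: trivial index loop is short}) that triviality of the edge charge forces the loop to be \emph{short} --- meaning that the truncated almost-local evolution is implemented, modulo the product basepoint, by a genuine $G$-invariant element of $\caal$ localized near the cut --- and then (Section~\ref{sec: contractibility of short loops with product basepoint}) to contract short loops by splitting the generator into two far-separated edge-localized pieces which can be straight-line deformed to zero, the basepoint remaining the unique gapped ground state throughout by Proposition~\ref{prop: uniqueness of ground state}. The hardest single step is the passage from ``trivial index'' to ``short'': one must leverage $G$-equivariance together with the discreteness of $H^1(G)$ to upgrade the cohomological vanishing of the edge charge into a global, genuinely local structural property of the TDI, and this is where the Hilbert-space machinery of Section~\ref{sec: hilbert space theory} will be put to work.
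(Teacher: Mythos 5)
Your overall architecture matches the paper's: the index is the relative charge of the truncated (``pumped'') state against the basepoint, well-definedness is established via the discreteness of $H^1(G)$, and completeness is proven by reducing to a product basepoint, then to a short loop, then contracting via spectral flow. However, two of the crucial intermediate steps have genuine gaps as you have sketched them.

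First, your argument for homotopy invariance does not work. You claim that the uniform bounds in~(\ref{Homotopy: Uniform bound}) ``yield a norm-continuous family of edge $0$-dim $G$-states'' in $\lambda$. They do not: a uniform bound $\sup_\lambda |||H_\lambda|||_f<\infty$ says nothing about the modulus of continuity of $\lambda\mapsto H_\lambda$, and the definition of homotopy only guarantees that the motion \emph{in each direction separately} ($s$-direction generated by $H_\lambda$, $\lambda$-direction generated by $F_s$) is almost-locally generated --- not that $H_\lambda$ varies continuously. Since $\hat\rho$ (Lemma~\ref{lem: uniform cont index}) is only continuous with respect to the $L^1$-norm of the generator, continuity of the pumped state in $\lambda$ is not available directly. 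The paper circumvents exactly this obstruction in Proposition~\ref{prop: homotopy invariance products} by a discretization argument: it introduces admissible walks on a grid in the unit square $[0,1]^2$, each walk giving a loop whose generating TDI has a definite $L^1$-norm; adjacent walks differ by $O(1/N)$ in $L^1$ and hence (by Proposition~\ref{prop: finite index set}) give the same index; then the ``lower-left'' and ``upper-right'' extremal walks are identified with $\psi_1\loopc\nu$ and $\nu\loopc\psi_0$ (where $\nu$ is the path of basepoints) and one closes by additivity. Without something like this combinatorial interpolation, your continuity argument is unjustified. A closely related imprecision appears in your well-definedness step: the claim that two generators of the same loop produce pumped states that ``differ by a $G$-invariant almost-local unitary'' is not something you can invoke from Proposition~\ref{prop: zerodim} iv), and is not true in general; the paper instead proves (Lemma~\ref{lem: index indep of h}) that the path $s\mapsto\phi\circ\alpha_{H_1'}(s)\circ\alpha_{H_2'}^{-1}(s)$ is a norm-continuous path of density matrices and uses discreteness of $H^1(G)$ once again.

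Second, your description of the ``short loop'' reduction and its contraction is garbled relative to what actually works. In the paper, ``short'' means the TDI has small $|||\cdot|||_{\hat f}$-norm (obtained by splitting the chain into blocks of length $R$ and letting $R\to\infty$, Proposition~\ref{prop: loop contractible to small loop}); it does \emph{not} mean that the truncated evolution is implemented by a $G$-invariant unitary near the cut. The latter property (Lemma~\ref{lem: parallel transport}) is an ingredient, but it is used to split the loop at each block boundary, not as the definition of shortness. Likewise, the contraction of short loops in Section~\ref{sec: contractibility of short loops with product basepoint} does not proceed by ``splitting the generator into two far-separated edge-localized pieces'': it constructs a $2$-parameter family of interactions $Z_\lambda(s)$ close to the trivial on-site interaction $F$, whose unique gapped ground states (Proposition~\ref{prop: uniqueness of ground state}) supply the homotopy via spectral flow. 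You should be careful to separate the block-splitting step (Section~\ref{sec: trivial index loop is short}) from the perturbative spectral-flow step (Section~\ref{sec: contractibility of short loops with product basepoint}), as the vanishing of the index is what permits the first and has nothing to do with the second.
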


\noindent We point out that the surjectivity $ii$ presupposes that $\caA$ and the action $\gamma$ can be varied.

\section{Pumping index}\label{sec: pumping index}

In this section, we provide an explicit construction of the map $h$ whose existence was postulated in Theorem \ref{thm: classification loops}. This echoes the informal
discussion of the Thouless pump in the introduction.

\subsection{Relative charge associated to cut loop}\label{sec: relative charge for cut loop}

We consider a loop $\psi(\cdot)$ generated by the $G$-invariant TDI $H$. 
We restrict the generating TDI $H$ at the origin to get the truncated TDI $H'$ given by
\begin{equation}\label{eq: tdi cut loop}
H'_{S}(s)= \begin{cases}   H_S(s)  &  S \subset \{\ldots,-2, -1,0\} \\  0 & \text{otherwise} \end{cases}  
\end{equation}
This TDI is again $G$-invariant, but the path it generates is in general no longer a loop because it can fail to be closed. That is, its endpoint, which shall henceforth call the \emph{pumped state},
$$
\psi':=  \psi(0) \circ \alpha_{H'}(1) 
$$
need not be equal to the basepoint $\psi(0)$.  Nevertheless, far to the left of the origin,  $\psi'$ resembles $\psi(0) \circ \alpha_{H}(1)=\psi(0)$ because far to the left $H$ is equal to $H'$. Far to the right,  $\psi'$ trivially resembles $\psi(0)$. Making these observations precise, we will show that
\begin{equation}\label{eq: same away from cut}
 ||(\psi(0)-\psi')_{< -r} || \leq f(r),\qquad  ||(\psi(0)-\psi')_{{\geq r}} || \leq f(r)
\end{equation}
for some $f\in \caF$. Here we abbreviated $\zeta_X=\zeta|_{\caA_X}$ for the restriction of functionals $\zeta$ on $\caA$ to the subalgebra $\caA_X$, with $X=\{{< -r}\}$ or $X=\{{\geq r}\}$ in the above examples. Let $\caH$ be the Hilbert space carrying the GNS representation associated with $\psi(0)$. Then the above bounds lead to
\begin{proposition}\label{prop: equivalence of states}
If the pure state $\psi(0)$ is invertible, then the pumped state $\psi'$, is normal with respect to $\psi(0)$. That is, it is represented by a density matrix on $\caH$.
\end{proposition}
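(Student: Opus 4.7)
The goal is to exhibit a density matrix on $\caH$ representing $\psi'$. My approach has three steps: (i) establish the tail-decay bounds \eqref{eq: same away from cut}; (ii) use invertibility to move the normality question to the GNS space of a product state; (iii) implement the relevant TDI-generated automorphism on that product state by a unitary.

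For (i), observe first that the right bound is in fact an equality: for $A\in\caA_{\geq r}$ with $r\geq 1$, every local term $H'_S$ lies in $\caA_{\leq 0}$ and therefore commutes with $A$, so $\alpha_{H'}(s)[A]=A$ by~\eqref{eq: heisenberg}, giving $\psi'(A)=\psi(0)(A)$. For the left bound, the loop property $\psi(0)\circ\alpha_H(1)=\psi(0)$ permits the rewriting, for $A\in\caA_{<-r}$,
\[
(\psi'-\psi(0))(A)=\psi(0)\bigl[(\alpha_{H'}(1)-\alpha_H(1))[A]\bigr].
\]
A Duhamel expansion expresses $(\alpha_{H'}(1)-\alpha_H(1))[A]$ as a time integral of a commutator with $(H-H')(u)$, whose local terms are supported on sets meeting $\{j\geq 1\}$, hence at distance at least $r+1$ from $A$. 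A standard Lieb-Robinson estimate for TDIs of finite $|||\cdot|||_f$-norm then yields the required $f(r)$ bound.

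For (ii), invertibility of $\psi(0)$ supplies $\overline\psi$, a product state $\phi$ on $\widetilde\caA$, and a TDI $K$ with $\psi(0)\otimes\overline\psi=\phi\circ\alpha_K(1)$. Extending $H'$ trivially on $\caA'$ to a TDI $\widetilde H'$ on $\widetilde\caA$, the pumped state on the doubled system reads
\[
\psi'\otimes\overline\psi=(\psi(0)\otimes\overline\psi)\circ\alpha_{\widetilde H'}(1)=\phi\circ\alpha_L(1),
\]
where $\alpha_L(1)=\alpha_K(1)\circ\alpha_{\widetilde H'}(1)$ is generated by the TDI $L$ obtained by concatenating $K$ and $\widetilde H'$. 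By the tensor structure of the GNS representation, normality of $\psi'\otimes\overline\psi$ with respect to $\psi(0)\otimes\overline\psi$ (density matrix on $\caH\otimes\caH_{\overline\psi}$) yields normality of $\psi'$ with respect to $\psi(0)$ after partial trace over $\caH_{\overline\psi}$. Since two pure states in the same folium are mutually normal, it suffices to show that both $\phi\circ\alpha_K(1)$ and $\phi\circ\alpha_L(1)$ lie in the folium of $\phi$.

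The remaining and main step (iii) is to prove that, for any TDI $M$ on $\widetilde\caA$, the automorphism $\alpha_M(1)$ is unitarily implementable on the GNS space $\caH_\phi$ of the pure product state $\phi$; granted this, $\phi\circ\alpha_M(1)$ is a vector state on $\caH_\phi$ and therefore in the folium of $\phi$. I would construct the implementing unitary as the strong limit, on $\caH_\phi$, of the finite-volume unitaries $U_\Lambda=\mathrm{T}\exp\bigl(-i\int_0^1\iota_\Lambda M(u)\,du\bigr)$ as $\Lambda\nearrow\bbZ$. A Cauchy estimate for the sequence $U_\Lambda\Omega_\phi$, obtained via Duhamel for $U_{\Lambda'}U_\Lambda^*$, reduces to controlling contributions from local terms $M_S(u)$ with $S$ straddling the boundary of $\Lambda$; the product structure of $\phi$ (trivial clustering) together with the rapid decay $f\in\caF$ forces this sum to zero as $\Lambda$ grows. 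This Cauchy estimate and the attendant time-ordering bookkeeping constitute the main technical obstacle, although the argument parallels standard proofs that quasi-local automorphisms are unitarily implementable on pure clustering states.
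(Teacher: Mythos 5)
Your plan breaks in step (iii), and the reduction at the end of step (ii) already signals the problem. You write that it ``suffices to show that both $\phi\circ\alpha_K(1)$ and $\phi\circ\alpha_L(1)$ lie in the folium of $\phi$.'' But $\phi\circ\alpha_K(1)=\psi(0)\otimes\overline\psi$, and there is no reason for this state to be in the folium of the product state $\phi$: the entire content of invertibility is that $\psi(0)\otimes\overline\psi$ can be carried to $\phi$ only by a TDI-generated automorphism, which is typically \emph{not} inner. Step (iii) then compounds this by asserting that for \emph{any} TDI $M$ on $\widetilde\caA$, $\alpha_M(1)$ is unitarily implementable on $\caH_\phi$. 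That is false: take $M(s)=\pi\sum_j X_j$ on a qubit chain with $\phi$ the all-up product state; then $\phi\circ\alpha_M(1)$ is the all-down state, which is disjoint from $\phi$. Your proposed Cauchy estimate for $U_\Lambda\Omega_\phi$ would collapse precisely here, because the local terms of a generic $M$ are spread over the whole bulk, not confined near a boundary, so the ``straddling'' contributions do not vanish.

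What you are missing is that the quantity to compare to $\phi$ is not $\psi'\otimes\overline\psi$ but rather $\phi':=(\psi'\otimes\overline\psi)\circ\beta^{-1}$, where $\beta=\alpha_K(1)$. By the GNS uniqueness for tensor products, $\psi'$ is normal w.r.t.\ $\psi(0)$ iff $\psi'\otimes\overline\psi$ is normal w.r.t.\ $\psi(0)\otimes\overline\psi$, and since the latter pair is the image under $\beta$ of $(\phi',\phi)$, this is equivalent to $\phi'$ normal w.r.t.\ $\phi$. One then computes $\phi'=\phi\circ\alpha_{\beta[H']}(1)$ using~\eqref{eq: commutation autos integrated}. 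The crucial structural input — which your step (iii) discards — is that $\beta[H]$ generates a \emph{loop with product basepoint} $\phi$, so Lemma~\ref{lem: cut state} (whose mechanism is exactly your step (i), but now carried out at the product basepoint so that Lemma~\ref{lem: condition for normality} applies) shows $\phi\circ\alpha_{\beta[H]'}(1)$ is normal to $\phi$. Finally, $\beta[H]'-\beta[H']$ is anchored in a finite set, so Lemma~\ref{lem: local perturbation tdi} gives normality of $\phi\circ\alpha_{\beta[H']}(1)$ to $\phi\circ\alpha_{\beta[H]'}(1)$, and transitivity finishes. Your step (i) is essentially correct and is the right engine, but it must be run at the product basepoint after conjugating by $\beta$, not replaced by a generic implementability claim.
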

\noindent We postpone the proof to Section~\ref{sec: proofs of main}. 
 By standard considerations (see Lemma \ref{lem: gns}), we obtain a unitary action $g\mapsto U(g)$ satisfying the requirements of Proposition \ref{prop: zerodim} in Section \ref{sec: equivalence zero},  and therefore  the pumped charge $h_{\psi'/{\psi(0)}}$ is well-defined. We then define the index
$$
I(\psi(0),H):=h_{\psi'/{\psi(0)}}.
$$
We will show that it only depends on the loop $\psi$, not on the TDI $H$ generating the loop and so the map $\psi \mapsto I(\psi(0),H) $ is well-defined as a map on $G$-loops.
\begin{theorem}\label{thm: pump index}
The map $h : \psi \mapsto I(\psi(0),H) $ satisfies all the properties of Theorem \ref{thm: classification loops}.
\end{theorem}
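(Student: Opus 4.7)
The plan is to verify the six properties of Theorem~\ref{thm: classification loops} one at a time, relying on the technical machinery developed in the remaining sections. Before anything else, I would establish that the quantity $I(\psi(0),H)$ depends only on the loop $\psi$ and not on the particular TDI $H$ generating it. If two TDIs $H_1, H_2$ generate the same loop, one can linearly interpolate $H_t = (1-t)H_1 + tH_2$ and track the truncated endpoint $\psi'_t = \psi(0) \circ \alpha_{H'_t}(1)$. Since Proposition~\ref{prop: equivalence of states} ensures each $\psi'_t$ is normal with respect to $\psi(0)$, the relative charge $h_{\psi'_t/\psi(0)} \in H^1(G)$ is well-defined and varies continuously in $t$; by the discreteness of $H^1(G)$ noted in Section~\ref{sec: g charge}, it must be constant. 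The same discreteness principle will be invoked repeatedly below.

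Property (1) is immediate: a constant loop is generated by $H\equiv 0$, so $H'\equiv 0$ and $\psi'=\psi(0)$, hence $h_{\psi'/\psi(0)}=0$. Property (2) is proved by exhibiting, for any character $h\in H^1(G)$, an explicit one-site pump: take site algebras that host both the trivial and the $h$-charge irreducible representations and construct a TDI implementing a sequence of local swaps that translates a single $h$-charged quantum one step to the right per cycle, producing exactly one unit of $h$-charge across the cut. Property (4) follows by noting that a TDI generating $\psi'\loopc\psi$ is the time-rescaled concatenation of generators for $\psi$ and $\psi'$, so the truncations compose to give a pumped endpoint of the form $\psi(0)\circ\alpha_{H'}(1)\circ\alpha_{H''}(1)$; part (ii) of Proposition~\ref{prop: zerodim} then delivers additivity. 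Property (5) is analogous via part (iii) of Proposition~\ref{prop: zerodim}, since the truncated TDI for a tensor-product loop is the sum of the truncated TDIs on each factor.

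Property (3), homotopy invariance, requires more care. Given a $G$-homotopy $(s,\lambda)\mapsto\psi_\lambda(s)$ generated by TDIs $H_\lambda$ and $F_s$ as in \eqref{eq: homotopy}, one would like to say that $\lambda\mapsto h(\psi_\lambda)$ is continuous and hence constant. The subtlety is that the basepoint $\psi_\lambda(0)$ itself varies with $\lambda$. I would trivialize this by conjugating with the almost-local evolution $\alpha_{F_0}(\lambda)$ so as to recast the homotopy as a $\lambda$-family of loops with fixed basepoint $\psi_0(0)$, generated by a $\lambda$-dependent TDI that is uniformly bounded by \eqref{Homotopy: Uniform bound}. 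The corresponding truncated endpoints then form a norm-continuous family of states normal to $\psi_0(0)$, so their relative charges vary continuously in $H^1(G)$ and hence are constant. Stable homotopies reduce to the non-stabilized case using (5) and (1).

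Property (6) is the main obstacle and its proof absorbs the bulk of the remaining sections. The strategy is: (a)~Section~\ref{sec: trivial index loop is short} shows that any loop with trivial index is a \emph{short} loop, i.e.\ close to the constant loop in a suitable norm sense; (b)~Section~\ref{sec: contractibility of short loops with product basepoint} shows that short loops with product basepoint are $G$-homotopic to a constant loop; (c)~Section~\ref{sec: classification for product loops} combines these to classify loops with product basepoint. To treat arbitrary basepoints, one stabilizes using $G$-invertibility of $\psi(0)$ so that $\psi\otimes\overline{\psi}$ has product basepoint, reducing to (c). Finally, given $\psi,\psi'$ with stably $G$-equivalent basepoints and $h(\psi)=h(\psi')$, pick a path $\phi$ connecting $\psi(0)$ to $\psi'(0)$ (after stabilization) and form the loop $\widetilde\psi = \phi^\theta \loopc {\psi'}^\theta \loopc \phi \loopc \psi$ at basepoint $\psi(0)$; properties (3) and (4), together with the triviality of the segments built from $\phi$ and $\phi^\theta$ (which is established by the concatenation lemmas of Section~\ref{sec: concatenation loops}), show $h(\widetilde\psi)=0$, hence $\widetilde\psi$ is $G$-homotopic to a constant loop by (a)--(c), and unwrapping this homotopy yields a stable $G$-homotopy between $\psi$ and $\psi'$.
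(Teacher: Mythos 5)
Your high-level scaffolding matches the paper's: reduce to product basepoint via invertibility, deform trivial-index loops with product basepoint to short loops, contract short loops perturbatively, and translate everything back via an associated loop construction. However, two of your sketches for the ``easier'' properties contain genuine gaps that the paper goes to considerable trouble to close.

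\emph{Well-definedness.} Your proposal to interpolate $H_t=(1-t)H_1+tH_2$ and track $\psi'_t=\psi(0)\circ\alpha_{H'_t}(1)$ does not work, because $H_t$ generically does \emph{not} generate a loop based at $\psi(0)$ for $0<t<1$, and the loop property is indispensable. Concretely, the proof that the pumped state is normal to the basepoint (Lemma~\ref{lem: cut state}) uses $\phi\circ\alpha_{H}(1)=\phi$ to trade $\phi(A)$ for $\phi(\alpha_H(1)[A])$ and then invoke locality of $H-H'$; Proposition~\ref{prop: equivalence of states} is only stated for actual loops, so it does not license the well-definedness of $h_{\psi'_t/\psi(0)}$ along your interpolation. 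The paper instead compares the two pumped states directly by studying $\xi(s)=\phi\circ\alpha_{H'_1}(s)\circ\alpha^{-1}_{H'_2}(s)$, which \emph{is} uniformly close to $\phi$ away from the cut (using $\alpha_{H_1}(s)\circ\alpha^{-1}_{H_2}(s)$ fixing $\phi$), and proves norm-continuity of $s\mapsto\rho(s)$ from there (Lemma~\ref{lem: index indep of h}).

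\emph{Additivity under concatenation.} You observe that the pumped state of $\psi'\loopc\psi$ is $\psi(0)\circ\alpha_{H'_1}(1)\circ\alpha_{H'_2}(1)$ and conclude additivity from Proposition~\ref{prop: zerodim}(ii). But Proposition~\ref{prop: zerodim}(ii) only gives
$h_{\chi_2/\psi(0)}=h_{\chi_2/\chi_1}+h_{\chi_1/\psi(0)}$ with $\chi_1=\psi(0)\circ\alpha_{H'_1}(1)$ and $\chi_2=\chi_1\circ\alpha_{H'_2}(1)$. The term $h_{\chi_1/\psi(0)}$ is indeed $h(\psi)$, but $h_{\chi_2/\chi_1}$ is the relative charge of $\chi_1\circ\alpha_{H'_2}(1)$ with respect to $\chi_1$, not of $\psi(0)\circ\alpha_{H'_2}(1)$ with respect to $\psi(0)$, so it does not equal $h(\psi')$ by definition alone. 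The relative charge in Proposition~\ref{prop: zerodim} is a cocycle on ordered pairs, not a translation-invariant ``charge difference,'' so this step needs an argument. The paper supplies one (Lemma~\ref{lem: additivity}) by moving the cut of $H_2$ far to the left, representing both truncated evolutions by parallel-transport TDIs anchored near well-separated cuts, and showing that the implementing unitaries asymptotically commute, at which point the phases multiply. Without something like this, your additivity sketch does not close.

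Your treatment of homotopy invariance also departs from the paper (which uses the discrete ``admissible walks'' device and the uniform-continuity/discreteness input of Lemma~\ref{lem: uniform cont index} and Proposition~\ref{prop: finite index set} rather than a conjugation-by-$\alpha_{F_0}$ trick), but since it leans on additivity, fixing the additivity gap is a prerequisite. The remaining items (1), (2), (5), and the overall architecture of (6) are in the spirit of the paper's argument.
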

We will therefore write the index as $h(\psi)=I(\psi(0),H)$. We will use the notation $I(\psi(0), H)$ whenever we locally want to emphasize the dependence on $H$.

\subsection{Example}\label{sec: examples}
We present a class of examples that realize a loop for every element $ h \in H^1(G)$.
 We take the on-site algebra $\AA_j\simeq \BB(\HH)$ with  $\HH$ a $3$-dim Hilbert space with a distinghuished orthonormal basis, denoted by $\{|-h\rangle,  | 0\rangle,  |h\rangle\}$. The group action is implemented by the unitary on-site operator $U(g)$ defined on the basis by 
\begin{equation}
	U(g) | h'\rangle = e^{-i h'(g)}| h'\rangle,
\end{equation}
and hence the automorphism on the on-site algebra $\BB(\HH)$ is given by $\gamma(g)[A]=U(g)AU^*(g)$
Recalling the classification of zero-dimensional systems, we have on each site the pure states $\langle h'| \cdot |h'\rangle$ with charge $h'$ relative to the state $\langle 0| \cdot |0\rangle$. 

Given an operator $ O \in \BB(\HH)$, we write $O_j$ for $O\otimes \id_{\caA_{j^c}}$, viewing $O$ as an element in $\caA_j$. Similarly, given an operator $O \in \BB(\HH) \otimes \BB(\HH) $, we write $O_{j,j+1}$ for $O\otimes \id_{\caA_{\{j,j+1\}^c}}\in\caA_{ \{j,j+1\}  }$. This will be used now to construct the TDI.
We define the charge pair creation and annihilation operators $a(h),a^*(h)$ in $\BB(\HH) \otimes \BB(\HH) $ by 
\begin{align}
 a(h')=\ket{0,0}\bra{-h',h'},\qquad  	a^*(h')&=\ket{-h',h'}\bra{0,0}.
\end{align}
Now we define two families of interactions $E(h'),O(h')$ parameterized by $h'$:
\begin{align}
	E(h')_{\{i,i+1 \}} & =  (a(h')+a^*(h'))_{i,i+1},\qquad \text{if $i$ is even}     \\
	O(h')_{\{i,i+1 \}} & = (a(h')+a^*(h'))_{i,i+1},\qquad \text{if $i$ is odd}  .
\end{align}
and $E(h')_S,O(h')_S=0$ for all other $S$. 
The TDI  $H=H(s)$ is then 
\begin{equation}
	H(s)= 	\pi \begin{cases}
	  E(h)& s\in[0,1/2]\\
		  O(-h)&s \in[1/2,1]
	\end{cases}
\end{equation}
The basepoint of the loop is chosen as the product state $\phi$ over zero-charge states defined by $\phi[A_{i}]  = \langle 0| A |0\rangle$ for any $A \in \caB(\caH)$ and $\phi[A_{i}A'_{j}] =  \phi[A_{i}]  \phi[A'_{j}]$ for any $A,A' \in \caB(\caH)$.

To understand the loop, it is helpful to graphically represent $ \phi$ as 
$$
\ldots \ket{0} \otimes \ket{0}\otimes \ket{0} \otimes \ket{0}  \ldots 
$$
During the timespan $ [0,1/2]$, the vector $\ket{0,0}$ is rotated into $\ket{-h,h}$ at pairs $(i,i+1)$ with $i$ even, and vice versa. Vectors orthogonal to the $2$-dimensional space spanned by $\ket{0,0}$ and $\ket{-h,h}$, are left invariant. 
Similarly, during the time $ [1/2,1]$, the vector  $\ket{h,-h}$ is rotated into $\ket{0,0}$ at pairs $(i,i+1)$ with $i$ odd, and vice versa.
This gives hence
$$
\ldots \ket{0} \otimes \ket{0}\otimes \ket{0} \otimes \ket{0} \ldots  \Rightarrow   \ldots \ket{-h} \otimes \ket{h}\otimes \ket{-h} \otimes \ket{h} 
\ldots  \Rightarrow \ldots
  \ket{0} \otimes \ket{0}\otimes \ket{0} \otimes \ket{0} \ldots
$$
 In Figure \ref{fig:ExampleLoop} we give a graphical representation of $\phi,$ $\phi\circ\alpha_H(1/2)$ and $\phi\circ\alpha_H(1)$. This shows that indeed $\phi=\phi\circ\alpha_H(1)$ and that therefore $H$ does generate a loop. To determine the index of the loop, one goes through a similar calculation, to find
$$
\ldots \ket{0} \otimes \ket{0}\otimes \ket{0} \otimes \ket{0} \ldots \Rightarrow \ldots \ket{-h} \otimes \ket{h}\otimes \ket{0} \otimes \ket{0} 
\ldots \Rightarrow \ldots
  \ket{0} \otimes \ket{h}\otimes \ket{0} \otimes \ket{0} \ldots 
$$
The apparant violation of charge conservation is because the opposite charge $-h$ has been evacuated to $-\infty$. It is not visible in the final state, and the index simply measures the charge $h$ of the final state.

\begin{figure}[ht] 
\begin{center}
\includegraphics[width=0.8\textwidth]{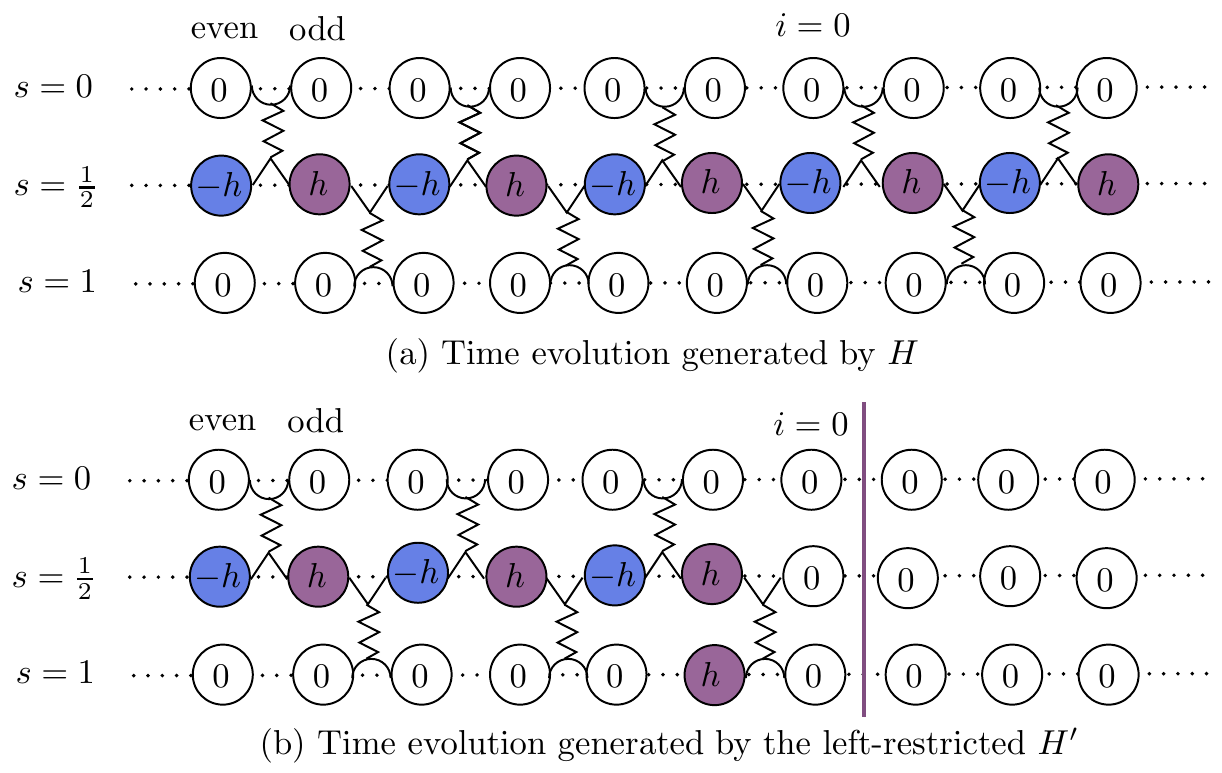}
\caption{Schematic representation of the examples presented in Section \ref{sec: examples}.}
\label{fig:ExampleLoop}
\end{center}
\end{figure}

\subsection{Overview of the proof}\label{sec:plan}
We are now ready to turn to proofs. We start in Section~\ref{sec: technical preliminaries} with locality results on almost local evolutions; although they are known to some level from the literature, full proofs are not readily available in the precise setting we are considering here. In Section~\ref{sec: hilbert space theory}, we describe the Hilbert space setting obtained through the GNS representation. By comparing the initial state and the pumped state, this allows us to define an index following the lines sketched above and show that it is invariant under small perturbations of the TDI generating it. Section~\ref{sec: trivial index loop is short} starts the analysis of homotopies by showing that a loop with product basepoint can be deformed to what we call a short loop, namely a loop generated by a TDI with small norm. We show that this is doable through a $G$-homotopy provided the index vanishes. Section~\ref{sec: contractibility of short loops with product basepoint} finalizes this step by proving that a short loop can be further deformed to a constant loop; unlike the previous section which relies on splitting the chain, this section uses perturbative techniques on gapped ground states of quantum spin chains. Section~\ref{sec: classification for product loops} concludes the proof of our main result in the case of a fixed product basepoint, by proving the claimed additivities on the one hand, and showing that loops have the same index if and only if they are homotopic. Finally, Section~\ref{sec: proofs of main} extends the results to loops with arbitrary ($G$-invertible) basepoints, concluding the proof of Theorem~\ref{thm: classification loops}.

\section{Technical preliminaries}   \label{sec: technical preliminaries}
In this section, we establish some technical facts and tools concerning interactions and TDIs.

\subsection{Anchored interactions and TDIs} \label{sec: anchored interactions}

In Section \ref{sec: setup}, we defined interactions and we equipped them with a family of norms $||\cdot||_f$. We will often need to express that interactions are restricted to the vicinity of a region $X\subset\bbZ$. We say that an interaction $F$ is \emph{$X$-anchored} if 
$$
S \cap X=\emptyset\Rightarrow  F_S=0
$$
and it will be convenient to have an associated norm
\begin{equation}\label{AnchoredInt}
||F||_{X,f}=\begin{cases} ||F||_{f}   & \text{$F$ is $X$-anchored} \\
 \infty  &   \text{otherwise}
\end{cases}
\end{equation}
If  $ F$ is $X$-anchored for a finite $X$, 
then $F$ determines in a natural way an element $\iota(F)$ of $\caA$, see Section~\ref{subsec:alal}, where it was used primarily to define the almost local algebra $\caal$.

Anchored TDIs are defined in a similar fashion, with norm $
\nor H \nor_{X,f}=\sup_{s\in[0,1]} ||H(s)||_{X,f}$. We note that if $X$ is a finite set and $\nor H \nor_{X,f}<\infty$, the almost local evolution is generated by a time-dependent family of self-adjoint elements of $\caA$, namely $\iota(H(s))$, see Lemma~\ref{lem: evolution by similar tdi}.

\subsection{Manipulating interactions} \label{sec: manipulating interactions}
We need some results about manipulating interactions and TDIs.
We first introduce these manipulations algebraically and then we state most of the relevant bounds. 
\paragraph{Commutators} Given interactions $H,H'$ we define the commutator interaction
\begin{equation}\label{commutator of interactions}
([H,H'])_S= \sum_{\substack{ S_1,S_2:  S_1\cup S_2=S \\  S_1 \cap S_2 \neq\emptyset} }   [H_{S_1},H'_{S_2}]
\end{equation}
The same symbol was already used in Section \ref{subsec:alal}  to denote the action of an interaction on $\caal$. These two usages are consistent with each other:  If $H'$ is anchored in a finite set, then $\iota(H')\in\caal$. In particular, $[H,\iota(H')]$ is well defined and $\iota([H,H'])=[H,\iota(H')]$. If also $H$ is anchored in a finite set, then the latter expression equals  $[\iota(H),\iota(H')]$ with $[\cdot,\cdot]$ the standard commutator on $\caA$.

\paragraph{Derivations}\label{sec: derivations}
The action of an interaction $H$ on $\caal$ by $[H,\cdot]$ is an (unbounded) derivation. It is the derivation, rather then the interaction itself, that determines the almost local evolution $\alpha_H$. 
One should notice that distinct $H\neq H'$ can give rise to the same derivation, which complicates some of our calculations. Therefore, we introduce the relation $\eder$ to indicate equality of derivations;
$$
H \eder H' \quad \text{iff} \quad   [H,A]=[H',A] \quad \forall A \in\caal
$$

\paragraph{Weak sums} 
Given a family of interactions $H^{(n)}$ indexed by $n\in \bbZ$, we define the weak sum $\sum_n H^{(n)}$ by
\begin{equation}\label{Def WeakSum}
   (\sum_n H^{(n)})_S=  \sum_n H^{(n)}_S, 
\end{equation}
provided the right-hand side is absolutely norm-convergent for every $S$. 
\subsection{Automorphisms}\label{sec: automorphisms}
Since we consider spin chain algebras $\caA$ where for every finite $S$, $\caA_S$ is finite-dimensional, there is a well-defined \emph{tracial} state $\tau$ on $\caA$. For a region X, we write  $ \tau_{X^c}:\caA \to \caA$ for the corresponding conditional expectation $\id_{\caA_X}\otimes \tau_{X^c}$. This linear map is contracting:  $|| \tau_{X^c}[A] || \leq ||A||$.  We define a canonical decomposition of observables  $A\in\caA$ into finitely suppported terms centered at a given site $j \in\bbZ$, $A=\sum_{k\in\bbN} A_{j,k}$, defined by 
 \begin{equation}\label{eq: decomposition finite}
A_{j,k}=\begin{cases}  \tau_{B^c_{k}(j)}[A]- \tau_{B^c_{k-1}(j)} [A]  &  k>0  \\
 \tau_{B^c_{0}(j)}[A] & k=0
 \end{cases},
\end{equation}
where we used the balls $B_k(j)=\{  \dist(\cdot,j)\leq k\}$. 
Given an interaction $F$ that is $X$-anchored and a $*$-automorphism $\beta$, we define $\beta[F]$ by
\begin{equation} \label{eq: evolved interaction}
(\beta[F])_{B_k(j)}=\sum_{S:  j\in S \cap X} \frac{1}{|X\cap S|}     (\beta[F_S])_{j,k}
\end{equation}
for any $k\in\bbN$ and $j\in X$. For $S$ that are not of the form $B_k(j)$ with $j\in X$, we set $(\beta[F])_{S}=0$. 
This definition satisfies the following constraint on the associated derivations. 
\begin{equation} \label{eq: commutation autos}
[\beta[F],\cdot]= \beta \circ [F,\cdot] \circ \beta^{-1}
\end{equation}
and the integrated version for a TDI $H$ follows from Definition~(\ref{eq: heisenberg})
\begin{equation} \label{eq: commutation autos integrated}
\alpha_{\beta[H]}= \beta \circ \alpha_H \circ \beta^{-1}
\end{equation}
where $\beta[H](s)=\beta[H(s)]$. 
However, the above definition of $\beta[F]$ is of course not the unique definition satisfying these properties. For example, if $F$ is $X$-anchored, then it is also $X'$-anchored, whenever $ X\subset X'$. The transformed interaction resulting  from the definition \eqref{eq: evolved interaction} does in general change upon replacing $X$ by $X'$, whereas the associated derivations are equal.  
Let us illustrate a particular  and relevant consequence of this. Let $\alpha_H$ be an almost local evolution and $F$ an interaction, so that the family 
$\alpha_H(s)[F]$ is defined by \eqref{eq: evolved interaction} for every $s\in [0,1]$.  Then in general the Heisenberg-like evolution equation $
 \alpha_H(s)[F] - F=  i\int_{0}^{s}  du \,   \alpha_H(u) \{[H(u),F]\}
$ 
is \emph{not} correct as an equation between interactions, but the equality does hold on the level of derivations, i.e.\ we do have
$$
 \alpha_H(s)[F] - F \eder  i\int_{0}^{s} du  \,  \alpha_H(u)  \{[H(u),F]\}
$$ 
with $\eder$ introduced in Section \ref{sec: derivations}.

\subsection{Bounds on transformed interactions}\label{sec: bounds on transformed interactions}

To state bounds on transformed interactions, we will need to adjust the functions $f\in\caF$ as we go along.   It would be too cumbersome to define each time explicity the adjusted function, and therefore we introduce the following generic notion:

\begin{definition}\label{def: derived function}
 We say that $\hat{f}$ is \emph{affiliated} to $f$ (and depends on some parameter $C$) if
\begin{enumerate}
\item Both $f,\hat{f}$ are elements of $\caF$ and $\hat{f}$ depends only on $f$ (and on the the parameter $C$).  
\item If $f(r)\leq C_\beta e^{-c_\beta r^\beta}$ for some $\beta<1$ and $c_\beta,C_\beta>0$, then $\hat{f}(r)\leq C'_\beta e^{-c'_\beta r^\beta}$ for some $C'_\beta,c'_\beta>0$.
\end{enumerate}
\end{definition}
We will also use repeatedly that $\max(f_1,f_2) \in \caF$ if $f_1,f_2 \in\caF$.

\begin{lemma}\label{lem: loc and liebrobinson}
Let $F,F_{n}, n\in \bbZ$ be interactions and $H$ a TDI. Let $X,X_n$ be (possibly infinite) regions and let $f,f_1,f_2$ be elements of $\caF$. 
\begin{enumerate}
\item There is a function $\hat f$ affiliated to $\max\{f_1,f_2\}$ such that 
\begin{equation*}
||[F_1,F_2]||_{X_1,\hat f}\leq \hat f(1+\mathrm{dist}(X_1,X_2))||F_1||_{X_1,f_1}||F_2||_{X_2,f_2}.
\end{equation*}
\item If all $X_n$ are mutually disjoint, then 
$$||\sum_{n}  F^{(n)}||_{X,\hat{f}} \leq \sup_n || F^{(n)}||_{X_n,f},\qquad  X=\cup_n X_n$$
for $\hat{f}$ affiliated to $f$.
\item  Assume that $|||H|||_{f_1} \leq C_H $ for some $C_H<\infty$. Then
$$ ||  \alpha_H(s)[F] ||_{X,\hat{f}} \leq     ||F||_{X,f_2},  \qquad s\in [0,1]
$$
for $\hat{f}$ affiliated to $f=\max(f_1,f_2)$ and depending on $C_H$.
\end{enumerate}
\end{lemma}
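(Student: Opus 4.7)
The three items are handled in order; (i) and (ii) are essentially combinatorial rearrangements of the definitions, while (iii) is the substantive step that invokes the Lieb--Robinson bound.

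For (i), I would first verify anchoring: if $([F_1,F_2])_S\neq 0$ then the expansion \eqref{commutator of interactions} contains some pair $(S_1,S_2)$ with $S_1\cup S_2=S$, $S_1\cap S_2\neq\emptyset$ and $[F_{1,S_1},F_{2,S_2}]\neq 0$, forcing $S_1\cap X_1\neq\emptyset$ and hence $S\cap X_1\neq\emptyset$. For the norm estimate, I fix $j$ and split $\sum_{S\ni j}||([F_1,F_2])_S||/\hat f(1+\diam(S))$ according to whether $j\in S_1$ or $j\in S_2$, inserting the crude bound $||[F_{1,S_1},F_{2,S_2}]||\leq 2||F_{1,S_1}||\,||F_{2,S_2}||$. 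The key geometric remark is that any contributing pair links $X_1$ to $X_2$ through a common site of $S_1\cap S_2$, so $\diam(S_1\cup S_2)\geq \dist(X_1,X_2)$, and moreover $\diam(S_1\cup S_2)\geq\max(\diam S_1,\diam S_2)$ with summable slack. Choosing $\hat f$ affiliated to $\max(f_1,f_2)$ and satisfying an inequality of the shape $1/\hat f(1+\diam(S_1\cup S_2))\leq \hat f(1+\dist(X_1,X_2))\cdot w(S_1,S_2)/(f_1(1+\diam S_1)f_2(1+\diam S_2))$ with $w$ summable, factorizes the double sum into the advertised product.

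Part (ii) is dispatched quickly. For any finite $S$, disjointness of the $X_n$'s implies that at most $\diam(S)+1$ indices $n$ satisfy $X_n\cap S\neq\emptyset$, so only that many terms contribute to $(\sum_n F^{(n)})_S$ and $||(\sum_n F^{(n)})_S||\leq (\diam(S)+1)\sup_n||F^{(n)}_S||$. The polynomial factor $\diam(S)+1$ is absorbed into $\hat f$, for instance by taking $\hat f(r)\asymp f(r)/(1+r)^2$, which obviously preserves affiliation to~$f$.

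The substance of the lemma is (iii). The starting point is the standard Lieb--Robinson estimate in the $\caF$-setting: for a TDI with $|||H|||_{f_1}\leq C_H$ there is a function $g$ affiliated to $f_1$ and depending on $C_H$ such that for any operator $A$ supported in a finite set $T$,
\[
||\alpha_H(s)[A]-\tau_{B_r(T)^c}[\alpha_H(s)[A]]||\leq |T|\,g(r)\,||A||, \qquad s\in[0,1].
\]
Applied to $A=F_T$ and inserted into the telescoping formula~\eqref{eq: decomposition finite}, this yields $||(\alpha_H(s)[F_T])_{j,k}||\leq 4|T|\,g(k-\diam(T))\,||F_T||$ whenever $k$ exceeds the radius of $T$ seen from $j$, with the trivial bound $||(\alpha_H(s)[F_T])_{j,k}||\leq 2||F_T||$ covering the remaining $k$. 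Substituting into \eqref{eq: evolved interaction} and bounding $1/|X\cap S|\leq 1$, the norm $||\alpha_H(s)[F]||_{X,\hat f}$ reduces to an expression of the form $\sup_{i}\sum_{k,j:\dist(i,j)\leq k,\,j\in X}\hat f(1+2k)^{-1}\sum_{T\ni j}|T|\,g(k-\diam T)\,||F_T||$, which after reordering separates into a sum over $T$ controlled by $||F||_{X,f_2}$ and a sum over $k$ controlled by the Lieb--Robinson decay. The main obstacle, and the only real work, is to design $\hat f$ affiliated to $f=\max(f_1,f_2)$ that simultaneously dominates the $f_2$-weight from the sum over $T$ and the $g$-weight from the sum over $k$ after this convolution, while still swallowing the combinatorial factors $|T|$ and $\diam(T)+1$. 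This is achievable by taking $\hat f$ to be a slowed-down variant of $f$ (schematically $\hat f(r)=f(\lambda r)^{1/2}$ for a suitable $\lambda<1$), relying on the fact that any stretched-exponential bound $f(r)\leq C_\beta e^{-c_\beta r^\beta}$ with $\beta<1$ is stable under polynomial prefactors and under such a square-root rescaling, exactly matching Definition~\ref{def: derived function}.
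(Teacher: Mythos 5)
Your proposal for parts (i) and (ii) follows essentially the same route as the paper: for (i), bound the commutator terms by $2\|F_{1,S_1}\|\,\|F_{2,S_2}\|$, use $S_1\cap S_2\neq\emptyset$ to get $\diam(S_1\cup S_2)\geq\dist(X_1,X_2)$, and trade half of a suitably constructed $\hat f$ for the distance decay while the other half controls the sum; for (ii), use disjointness to bound the number of contributing $n$ for each $S$ by $O(\diam S)$ and absorb that polynomial factor into $\hat f$. The paper's proof of (i) writes the key estimate as a single display and is slightly informal about the $\sup_j$ and the re-indexing of the double sum over $(S_1,S_2)$; your split according to whether $j\in S_1$ or $j\in S_2$ is actually the missing bookkeeping step there, so it is consistent and if anything a bit more careful. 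The paper also explicitly constructs the affiliated function (via $g(r)=2\max_{r_1+r_2=r+1}f(r_1)f(r_2)$ and $\hat f=\sqrt g$), whereas you leave it at the level of ``an inequality of the shape'' — that is fine as a plan, but the explicit construction is what one would ultimately need.

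For (iii), the paper gives no proof at all: it declares the result a corollary of the Lieb--Robinson bound and refers to the literature and to the proof of Lemma~\ref{lem: evolution by similar tdi}. Your sketch therefore supplies content the paper omits. The plan — Lieb--Robinson localization, insertion into the telescoping decomposition~\eqref{eq: decomposition finite}, substitution into the definition~\eqref{eq: evolved interaction} of the transformed interaction, and a final resummation over $(T,j,k)$ — is the standard argument and is sound. Two small caveats worth flagging: the conditional-expectation form of Lieb--Robinson you invoke carries a factor roughly $\min(|T|,|B_r(T)^c\cap\Lambda|)$; in this 1D two-sided setting the relevant constant is benign, but one should say so. Also, your schematic $\hat f(r)=f(\lambda r)^{1/2}$ with $\lambda<1$ needs a ceiling or an extension of $f$ to non-integer arguments since $f$ is a priori defined on $\bbN^+$; the paper's device $\hat f=\sqrt{g}$ with $g$ built from a max over integer splittings sidesteps this cleanly and is worth adopting.
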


\begin{proof}
(i) Let 
$$
f=\max(f_1,f_2), \qquad g(r)=2\max_{r_1+r_2=r+1} f(r_1)f(r_2), \qquad \hat f=\sqrt{g}.
$$
One checks that $f, g$ and $\hat{f}$ are in $\caF$. We also write $f_i(S_i)=f_i(1+\diam(S_i))$ for brevity. Definition \eqref{commutator of interactions} implies immediately that the commutator is anchored in $X_1$. We then estimate
\begin{align*}
 || [F_1,F_2]||_{X_1,\hat f}  &\leq  \sum_{ \substack{S_1 \cap X_1 \neq \emptyset , S_2 \cap X_2 \neq \emptyset \\   S_1 \cap S_2 \neq \emptyset}}      \frac{||(F_1)_{S_1}||}{f_1(S_1)}   \frac{||(F_2)_{S_2}||}{f_1(S_2)}   \frac{2 f_1(S_1)f_2(S_2)}{\hat f(1+\diam(S_1\cup S_2))}  .
\end{align*}
We now bound
$$2f_1(S_1)f_2(S_2)\leq g(1+\diam(S_1\cup S_2)) \leq \hat f(1+\dist(X_1,X_2)) \hat f(1+\diam(S_1\cup S_2)), $$ where we used that $\diam(S_1\cup S_2)\geq \dist(X_1,X_2)$,  and the claim follows.

(ii) Recalling Definition~(\ref{Def WeakSum}), we pick $S\subset\bbZ$ and we must estimate the norm of $\sum_{n} F^{(n)}_S$. Since the sets $X_n$ are mutually disjoint, there are at most $\mathrm{diam}(S)$ indices $n$ in the sum such that $F^{(n)}_S\neq0$. Hence
\begin{equation*}
    \Big\Vert\sum_n F^{(n)}_S \Big\Vert
    \leq \mathrm{diam}(S) \sup_{n}\Vert F^{(n)}_S \Vert
    \leq \mathrm{diam}(S) f(1+\mathrm{diam}(S))\sup_{n}  || F^{(n)} ||_{X_n,f},
\end{equation*}
which yields the claim since $\mathrm{diam}(S) f(1+\mathrm{diam}(S))\leq \hat{f}(1+\mathrm{diam}(S))$ with $\hat f$ affiliated to~$f$. Item (iii) is a corollary of the Lieb-Robinson bound \cite{Lieb:1972ts} . We refer to similar\footnote{One of the differences with the cited references is the use of slightly different norms. In the appendix of~\cite{Stability} it is explained how to accommodate this.} statements in~\cite{nachtergaele2019quasi,bachmann2022trotter} and their proofs, see also the proof of Lemma~\ref{lem: evolution by similar tdi}.
\end{proof}

\subsection{Inversion, time-reversal, and composition of almost local evolutions}\label{sec:composition}

We will often need to invert, time-reverse, and compose almost local evolutions. For TDIs $H,H_1,H_2$, we set
\begin{align}
(\alpha_H^{-1})(s) &= (\alpha_H(s))^{-1},\label{def: inverse} \\
(\alpha^{\theta}_H)(s) & = \alpha^{-1}_H(1)\circ \alpha_H(1-s), \label{def: reversal ale} \\
(\alpha_{H_2}\circ\alpha_{H_1})(s) &= \alpha_{H_2}(s)\circ\alpha_{H_1}(s).  \label{def: composition}
\end{align}
 
\begin{lemma}\label{lem: manipulation of evolutions}
The families of automorphisms defined in \eqref{def: reversal ale},\eqref{def: inverse} and \eqref{def: composition}, parametrized by $s\in [0,1]$ are almost local evolutions. More precisely: 
\begin{enumerate}
\item The TDI $-\alpha_H(s)[H(s)]$ generates the inverse $\alpha_H^{-1}$.
\item  The TDI $-H(1-s)$ generates  $\alpha^{\theta}_H$, the time-reversal of $\alpha_H$. 
\item The TDI $H_1(s)+\alpha_{H_1}^{-1}(s)[H_2(s)]$ generates the composition $ \alpha_{H_2}\circ\alpha_{H_1}$
\end{enumerate}
\end{lemma}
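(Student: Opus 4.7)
The strategy in all three cases is the same: starting from the defining composition \eqref{def: inverse}, \eqref{def: reversal ale}, or \eqref{def: composition}, formally differentiate in $s$ and use the Heisenberg equation \eqref{eq: heisenberg} for the individual factors to read off the generator. One then separately checks that the candidate TDI really is a TDI, i.e. that it lies in some $|||\cdot|||_{\hat f}$ with $\hat f \in \caF$ and is strongly measurable in $s$, using the tools collected in Lemma \ref{lem: loc and liebrobinson}.

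For part (ii), the computation is essentially immediate: writing $\gamma(s)=\alpha_H^\theta(s)=\alpha_H^{-1}(1)\circ \alpha_H(1-s)$ and differentiating on $A\in\caal$ gives $\partial_s\gamma(s)[A]=i\gamma(s)\{[-H(1-s),A]\}$, so $-H(1-s)$ is the generator. Time-reversal also preserves $|||\cdot|||_{f}$, so this TDI is admissible. For part (iii), let $\gamma(s)=\alpha_{H_2}(s)\circ\alpha_{H_1}(s)$. Differentiating term-by-term and using the identity $\alpha_{H_2}(s)\{[H_2(s),\alpha_{H_1}(s)[A]]\} = \gamma(s)\{[\alpha_{H_1}^{-1}(s)[H_2(s)],A]\}$ (which uses that $\alpha_{H_2}(s)$ is an automorphism) yields the generator $H_1(s)+\alpha_{H_1}^{-1}(s)[H_2(s)]$. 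Part (i) can either be derived as a special case of (iii) by setting $H_2=0$ and taking $\gamma(s)\circ\alpha_H(s)=\id$ in the form $\alpha_H(s)^{-1}=\gamma(s)$, or more directly by differentiating $\alpha_H(s)\circ\alpha_H^{-1}(s)=\id$ and shuffling $\alpha_H(s)$ through the commutator to obtain $\partial_s\alpha_H^{-1}(s)[A] = -i[\alpha_H^{-1}(s)[\alpha_H(s)[H(s)]],\,\alpha_H^{-1}(s)[A]]$ rewritten as $i\alpha_H^{-1}(s)\{[-\alpha_H(s)[H(s)],A]\}$.

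The genuine obstacle is item (i)'s generator $-\alpha_H(s)[H(s)]$ and the term $\alpha_{H_1}^{-1}(s)[H_2(s)]$ in item (iii), because the transformation rule \eqref{eq: evolved interaction} applies to anchored interactions while $H(s)$ or $H_2(s)$ need not be. To handle this, I would decompose $H(s)=\sum_{j\in\bbZ}H^{(j)}(s)$ with $H^{(j)}_S(s)=|S|^{-1}H_S(s)\,\mathbbm{1}[j\in S]$, so that each $H^{(j)}$ is $\{j\}$-anchored and $\sum_j H^{(j)}\eder H$. Then $\alpha_H(s)[H^{(j)}(s)]$ is well-defined and $\{j\}$-anchored, and Lemma \ref{lem: loc and liebrobinson}(iii) gives $\|\alpha_H(s)[H^{(j)}(s)]\|_{\{j\},\hat f}\leq \|H^{(j)}(s)\|_{\{j\},f}$. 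Summing over $j$ using Lemma \ref{lem: loc and liebrobinson}(ii) (the singletons $\{j\}$ are disjoint) produces a weak sum $\alpha_H(s)[H(s)]:=\sum_j \alpha_H(s)[H^{(j)}(s)]$ with uniformly bounded $\|\cdot\|_{\hat{\hat f}}$-norm, $\hat{\hat f}\in\caF$ affiliated to $f$. By construction $\alpha_H(s)[H(s)]\eder \alpha_H(s)\circ[H(s),\cdot]\circ\alpha_H^{-1}(s)$, which is exactly what is needed to match the derivation in the Heisenberg equation.

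Finally, strong measurability of $s\mapsto \alpha_H(s)[H(s)]$ and $s\mapsto\alpha_{H_1}^{-1}(s)[H_2(s)]$ in the corresponding $\|\cdot\|_{\hat f}$-norm follows from the strong measurability of $s\mapsto H(s)$, $H_1(s)$, $H_2(s)$ combined with the strong continuity of the automorphisms $\alpha_{H_i}(s)$ (which also follows from the Lieb--Robinson machinery underlying Lemma \ref{lem: loc and liebrobinson}(iii)) and, where needed, dominated convergence in the weak sum. Once this is in hand, each candidate generator is a bona fide TDI, the Heisenberg equation \eqref{eq: heisenberg} is satisfied by the composed automorphism family, and uniqueness of its solution identifies the generator, completing all three items.
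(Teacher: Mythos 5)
Your proposal is correct and follows essentially the same approach as the paper's terse proof: obtain the $|||\cdot|||_{\hat f}$-norm bounds from Lemma~\ref{lem: loc and liebrobinson}, argue strong measurability by piecewise-constant approximation, and identify the generators by differentiation and the Heisenberg equation, all of which the paper compresses into ``a direct computation'' and which you spell out. Two small asides: (a) the suggested reduction of item (i) to item (iii) by ``setting $H_2=0$'' is a tautology (it only says $H_1$ generates $\alpha_{H_1}$) and does not yield (i); the workable reduction takes the unknown $K$ as $H_1$, sets $H_2=H$ in $\alpha_H\circ\alpha_K=\id$, and solves $K(s)+\alpha_K^{-1}(s)[H(s)]=0$ --- though your alternative direct derivation is fine so nothing is lost; (b) the decomposition of $H$ into $\{j\}$-anchored pieces is valid but not actually necessary, since every interaction is trivially $\bbZ$-anchored (the condition $S\cap\bbZ=\emptyset\Rightarrow H_S=0$ is vacuous), so \eqref{eq: evolved interaction} and Lemma~\ref{lem: loc and liebrobinson}(iii) apply directly with $X=\bbZ$ and produce the same bound your sum over $j$ does.
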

\begin{proof}
The proposed families of interactions have a finite $|||\cdot|||_{f}$-norm for some $f\in\caF$ by Lemma~\ref{lem: loc and liebrobinson}. They are strongly measurable: If $H(s)$ is strongly measurable, it is the pointwise limit of a piecewise constant interaction; The definition~(\ref{eq: heisenberg}) and the bounds of Lemma~\ref{lem: loc and liebrobinson} further imply that $\alpha_H(s)$ can be approximated by a piecewise constant automorphism. The fact that they generate the given families of automorphisms follows from a direct computation.
\end{proof}
If $\psi(\cdot)$ is a loop generated by a TDI $H$, then the time-reversal $\alpha^{\theta}_H$ generates the time-reversed loop ${\psi^{\theta}}(\cdot)$ that was defined in Section \ref{sec: concatenation loops}. 
Indeed, using that $\psi$ and $\psi^\theta$ have the same basepoint $\psi(0)$, we verify
$$
 {\psi}(0)\circ \alpha^{\theta}_H(s)= {\psi}(0)\circ    \alpha^{-1}_H(1)\circ \alpha_H(1-s) =  {\psi}(0)\circ  \alpha_H(1-s)= \psi(1-s)=\psi^\theta(s). 
$$

We now introduce the notion of a cocyle, associated to an almost local evolution.  
The cocycle $\alpha_H(s,u)$, with $u\leq s$, is defined as
\begin{equation}\label{cocylce bis}
\alpha_H(s,u) = \alpha_H^{-1}(u)\circ\alpha_H(s)
\end{equation}
so that in particular $\alpha_H(s)=\alpha_H(s,0)$. It corresponds to the evolution from time $u$ to time $s$. The cocyle satisfies the cocyle  equation
\begin{equation}\label{cocylce equation}
 \alpha_H(s,u)=    \alpha_H(s',u) \circ \alpha_H(s,s').
\end{equation}
and the evolution equation
$$
 \alpha_H(s,u)[A] = A+i \int_u^s ds' \alpha_H(s',u)\{[H(s'),A]\}.
$$ 
In particular, $\alpha_H(s,u)$ satisfies the same bound as $\alpha_H(s)$ in Lemma~\ref{lem: loc and liebrobinson}(iii). Now we can then state the Duhamel principle:
\begin{equation}\label{Duhamel}
 \alpha_{H_2}(s)- \alpha_{H_1}(s)=  i\int_0^s \alpha_{H_2}(u)\big\{\big[H_2(u) - {H_1}(u),\alpha_{H_1}(s,u)\{\cdot\}\big]\big\}du
\end{equation}
which is established in the usual way from the Heisenberg equation \eqref{eq: heisenberg}. 
\begin{lemma}\label{lem: inverse small diff}
If ${H_1},{H_2}$ are TDIs, then 
\begin{enumerate}
\item The TDI  $\alpha_{{H_1}}(s)[{H_2}(s)-{H_1}(s)]$ generates the almost local evolution $ \alpha_{{H_2}}\circ\alpha_{{H_1}}^{-1}$.
\item The TDI 
\begin{equation}
\label{Difference}
{H_2}(s)-{H_1}(s)+ i \alpha_{H_2}^{-1}(s)\left[\int_0^s \alpha_{{H_2}}(u)\Big\{\big[{H_2}(u) - H_1(u),\alpha_{H_1}(s,u)\{H_1(s)\}\big]\Big\}du \right]
\end{equation}
generates the almost local evolution $\alpha_{{H_1}}^{-1}\circ\alpha_{{H_2}}$. 
\end{enumerate}
\end{lemma}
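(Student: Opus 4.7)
The plan is to derive both parts as direct consequences of the calculus of almost local evolutions developed in Lemma~\ref{lem: manipulation of evolutions}, with part (ii) requiring one additional step that is essentially a restatement of the Duhamel identity~\eqref{Duhamel}. Throughout, I work modulo the equivalence of derivations $\eder$ from Section~\ref{sec: derivations}, since two TDIs with equal derivations generate the same almost local evolution via the Heisenberg equation~\eqref{eq: heisenberg}.

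For part (i), the ALE $\alpha_{H_2}\circ\alpha_{H_1}^{-1}$ is the composition (in the pointwise sense of~\eqref{def: composition}) of the inner ALE $\alpha_{H_1}^{-1}$ and the outer ALE $\alpha_{H_2}$. By Lemma~\ref{lem: manipulation of evolutions}(i), $\alpha_{H_1}^{-1}$ is itself generated by the TDI $-\alpha_{H_1}(s)[H_1(s)]$, and its inverse is of course $\alpha_{H_1}$. Feeding these data into the composition formula of Lemma~\ref{lem: manipulation of evolutions}(iii) yields the generator
\begin{equation*}
-\alpha_{H_1}(s)[H_1(s)] + \alpha_{H_1}(s)[H_2(s)] = \alpha_{H_1}(s)\bigl[H_2(s)-H_1(s)\bigr],
\end{equation*}
which is the claimed TDI. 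That the right-hand side is indeed a TDI --- strongly measurable in $s$ with finite $|||\cdot|||_{\hat f}$-norm for some $\hat f$ affiliated to the decay functions of $H_1,H_2$ --- follows from Lemma~\ref{lem: loc and liebrobinson}(iii).

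For part (ii), the same composition rule, now with $\alpha_{H_2}$ as the inner and $\alpha_{H_1}^{-1}$ as the outer evolution, immediately produces the compact generator
\begin{equation*}
K(s) \;:=\; H_2(s) - \alpha_{H_2}^{-1}(s)\bigl[\alpha_{H_1}(s)[H_1(s)]\bigr]
\end{equation*}
for $\alpha_{H_1}^{-1}\circ\alpha_{H_2}$. It then remains to show $K\eder K'$, where $K'$ denotes the TDI in~\eqref{Difference}. Subtracting the common piece $H_2(s)-H_1(s)$ and applying $\alpha_{H_2}(s)$ to both sides (an operation preserving $\eder$ by~\eqref{eq: commutation autos}), the required equivalence collapses to
\begin{equation*}
\alpha_{H_2}(s)[H_1(s)] - \alpha_{H_1}(s)[H_1(s)] \;\eder\; i\int_0^s \alpha_{H_2}(u)\Bigl\{\bigl[H_2(u)-H_1(u),\alpha_{H_1}(s,u)\{H_1(s)\}\bigr]\Bigr\}\,du.
\end{equation*}
This is the interaction-level Duhamel identity: applying~\eqref{Duhamel} with observable $A = H_1(s)_S$ produces the identity termwise in $\caA$ for each finite $S$, and summing weakly in $S$ via~\eqref{Def WeakSum}, then testing against an arbitrary $A'\in\caal$ and rewriting the resulting derivations by means of~\eqref{eq: commutation autos}, yields the displayed equivalence.

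The main technical obstacle, aside from the $\eder$-bookkeeping above, is to confirm that~\eqref{Difference} is genuinely a TDI, i.e.\ has finite $|||\cdot|||_{\hat f}$-norm for some $\hat f\in\caF$ and is strongly measurable in $s$. This is handled by iterating the three estimates of Lemma~\ref{lem: loc and liebrobinson}: part~(i) controls the inner commutator of the integrand, part~(iii) absorbs the successive actions of $\alpha_{H_1}(s,u)$, $\alpha_{H_2}(u)$ and $\alpha_{H_2}^{-1}(s)$, and part~(ii), together with a trivial estimate for the $du$-integral, aggregates the resulting family of interactions into a single TDI; the affiliation rules of Definition~\ref{def: derived function} guarantee that $\hat f$ remains in $\caF$.
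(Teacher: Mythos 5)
Your proposal is correct and takes essentially the same route as the paper: both parts follow by substituting the generator of $\alpha_{H_1}^{-1}$ from Lemma~\ref{lem: manipulation of evolutions}(i) into the composition rule of Lemma~\ref{lem: manipulation of evolutions}(iii), and the extra step in part~(ii) is exactly the interaction-level Duhamel identity~\eqref{eq: duhamel} invoked by the paper. Your write-up is merely more explicit about the $\eder$-bookkeeping and the verification that the result is a bona fide TDI, which the paper compresses into a one-line pointer.
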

\begin{proof}
Both claims follow by applying Lemma \ref{lem: manipulation of evolutions}. For the second claim, one uses the Duhamel principle applied to interactions, more concretely,
\begin{equation} \label{eq: duhamel}
 \alpha_{{H_2}}(s)[H(s)] - \alpha_H(s)[H(s)] \eder i\int_0^s \alpha_{{H_2}}(u)\Big\{\big[{H_2}(u) - H(u),\alpha_{H}(s,u)\{{H}(s)\}\big]\Big\}du.
\end{equation}
\end{proof}

\subsection{Applications of the Duhamel principle}

We establish a few consequences of the bounds in the previous sections. Even though they are straightforward, they will be used so often that it is worthwile to highlight them. First of all, we introduce the $L^1$ norm on TDIs, given by 
$$
||| H |||^{(1)}_{X,f}:= \int_0^1 ds  || H(s) ||_{X,f}
$$
Of, course, we always have $||| H |||^{(1)}_{X,f}  \leq ||| H |||_{X,f}$, but some bounds are stated more naturally in terms of the $L^1$-norm. 
\begin{lemma} \label{lem: evolution by similar tdi}
Let $X,Y$ be (possibly infinite) intervals or complements thereof and let $A\in\caA_Y$.
\begin{enumerate}
\item If $F$ is an interaction anchored in $X$, then
\begin{equation}\label{eq: most simple com bound}
||[F,A]|| \leq  ||A|| ||F||_{X,f} \hat{f}(1+\dist(X,Y)) (1+|X\cap Y|) 
\end{equation}
with $\hat{f}$ affiliated to $f$.
\item
Let $H_1,H_2$ be TDIs satisfying $|||H_1|||_f \leq C_H, |||H_2|||_f \leq C_H$ for some constant $C_H$. If $H_1-H_2$ is anchored in $X$, then 
\begin{equation}\label{eq: duhamel practical}
||\alpha_{H_1}(s)[A]-\alpha_{H_2}(s) [A] || \leq  ||A|| |||H_1-H_2|||^{(1)}_{X,f} \hat{f}(1+\dist(X,Y)) (1+|X\cap Y|)
\end{equation}
with $\hat{f}$ affiliated to $f$ and depending on $C_H$. 
\end{enumerate}
\end{lemma}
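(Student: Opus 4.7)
Plan. Part (i) is a direct computation from the definition of $\|F\|_{X,f}$, while part (ii) will reduce to (i) via the Duhamel identity \eqref{Duhamel}, after controlling the fact that $\alpha_{H_1}(s,u)[A]$ is only almost localized in $Y$.

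For part (i), I would expand $[F,A]=\sum_S[F_S,A]$ and note that each term vanishes unless the finite set $S$ meets both $X$ (because $F$ is $X$-anchored) and $Y$ (because $A\in\caA_Y$). Using $\|[F_S,A]\|\leq 2\|F_S\|\|A\|$, the question reduces to bounding $\sum_S\|F_S\|$ subject to these support constraints. I would overcount by an anchor $y\in S\cap Y$ and use the defining inequality $\sum_{S\ni y}\|F_S\|/f(1+\mathrm{diam}(S))\leq\|F\|_{X,f}$, pulling the weight $f(1+\mathrm{diam}(S))$ out as an $\ell^\infty$ factor. For $y\in X\cap Y$ the minimal diameter is zero, so the weight is bounded by $f(1)$ and the sum over $y\in X\cap Y$ produces the $|X\cap Y|$ contribution. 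For $y\in Y\setminus X$, the constraint $S\cap X\neq\emptyset$ forces $\mathrm{diam}(S)\geq\mathrm{dist}(y,X)$, giving a weight $f(1+\mathrm{dist}(y,X))$ whose sum over $y$ in the one-dimensional (co-)interval $Y\setminus X$ telescopes into a single $\hat f(1+\mathrm{dist}(X,Y))$ with $\hat f$ affiliated to $f$.

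For part (ii), \eqref{Duhamel} and the isometry of $\alpha_{H_2}(u)$ give
$$\|\alpha_{H_1}(s)[A]-\alpha_{H_2}(s)[A]\|\leq\int_0^s\|[H_1(u)-H_2(u),\alpha_{H_1}(s,u)[A]]\|\,du,$$
where $H_1(u)-H_2(u)$ is $X$-anchored. To apply (i) despite $\alpha_{H_1}(s,u)[A]\notin\caA_Y$, I would view $A$ as the one-term interaction $\widetilde A$ with $\widetilde A_Y=A$ (using a conditional-expectation truncation $A_n\in\caA_{Y\cap B_n(j_0)}$ first if $Y$ is infinite), so that Lemma~\ref{lem: loc and liebrobinson}(iii) yields a $Y$-anchored evolved interaction $\alpha_{H_1}(s,u)[\widetilde A]$ with $\|\cdot\|_{Y,\hat f}$ controlled by $\|A\|$. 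Lemma~\ref{lem: loc and liebrobinson}(i) then bounds the interaction commutator $[H_1(u)-H_2(u),\alpha_{H_1}(s,u)[\widetilde A]]$, which I translate back to the observable-norm bound on $[H_1(u)-H_2(u),\alpha_{H_1}(s,u)[A]]$ using $\iota$ and the same geometric bookkeeping as in (i) to produce the factor $(1+|X\cap Y|)$. Integrating in $u$ converts $\|H_1-H_2\|_{X,f}$ into $\|H_1-H_2\|^{(1)}_{X,f}$.

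The main obstacle is in (ii), where two separate decay mechanisms, the Lieb-Robinson light cone implicit in Lemma~\ref{lem: loc and liebrobinson}(iii) and the intrinsic decay of $H_1-H_2$ away from $X$, must be combined without inflating the overlap factor $1+|X\cap Y|$. The resulting convolution of decays must collapse back into a single $\hat f$ still affiliated to $f$ in the sense of Definition~\ref{def: derived function}, in particular preserving the stretched-exponential class; this is the essential computation, while the rest is geometric bookkeeping on intervals of $\bbZ$.
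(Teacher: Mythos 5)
Your part (i) is essentially the paper's argument: overcount by a site $y\in S\cap Y$, use the constraint $S\cap X\neq\emptyset$ to force $\mathrm{diam}(S)\geq\mathrm{dist}(y,X)$, and split the sum over $y$ into $X\cap Y$ (giving $|X\cap Y|$) and $Y\setminus X$ (giving a convergent geometric-type tail absorbed into $\hat f(1+\mathrm{dist}(X,Y))$). No issue there.

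Part (ii), however, has a genuine gap. You apply Duhamel in the form where the cocycle $\alpha_{H_1}(s,u)$ hits $A$, and then try to control $\alpha_{H_1}(s,u)[A]$ by promoting $A$ to an interaction $\widetilde A$ with $\widetilde A_Y=A$ (or $\widetilde A_{Y_n}=A_n$ after truncation) and invoking Lemma~\ref{lem: loc and liebrobinson}(iii). But for a one-term interaction with $\widetilde A_{Y_n}=A_n$, the interaction norm is
\begin{equation*}
\|\widetilde A\|_{Y_n,f}=\frac{\|A_n\|}{f(1+\mathrm{diam}(Y_n))},
\end{equation*}
which \emph{diverges} as $\mathrm{diam}(Y_n)\to\infty$ since $f\in\caF$ decays. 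The same happens if instead you decompose $A$ via the conditional-expectation expansion \eqref{eq: decomposition finite}: a generic $A\in\caA_Y$ with $Y$ (semi-)infinite has no almost-local decay, so those terms also fail to have a finite $\|\cdot\|_f$-norm. Lemma~\ref{lem: loc and liebrobinson}(iii) then gives a bound proportional to $\|\widetilde A\|_{Y_n,f}$, not to $\|A\|$, and your claimed control ``$\|\cdot\|_{Y,\hat f}$ controlled by $\|A\|$'' does not hold. There is no subsequent cancellation of the $1/f(1+\mathrm{diam}(Y_n))$ factor, so the argument does not close.

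The paper sidesteps this obstruction entirely. It rewrites the Duhamel integrand using the automorphism identity $[B,\beta\{A\}]=\beta\{[\beta^{-1}\{B\},A]\}$ so that the cocycle $\alpha_{H_2}^{-1}(s,u)$ acts on the TDI difference $H_1(u)-H_2(u)$, not on $A$. Since $H_1-H_2$ \emph{is} an $X$-anchored interaction with finite $\|\cdot\|_{X,f}$-norm, Lemma~\ref{lem: loc and liebrobinson}(iii) applies cleanly and shows $\alpha_{H_2}^{-1}(s,u)\{H_1(u)-H_2(u)\}$ remains $X$-anchored with $\|\cdot\|_{X,\hat f}$ uniformly controlled. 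Part (i) then applies directly with the unevolved $A\in\caA_Y$. The structural point you missed is that the lemma is precisely designed so that $A$ need have \emph{no} almost-local structure; any route that tries to represent $A$ as an interaction with controlled $f$-norm is doomed, and the evolution must be diverted onto the object that does have such structure, namely $H_1-H_2$.
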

\noindent Note that the right hand side of~(\ref{eq: most simple com bound}) is allowed to be infinite, but if it is finite then $[F,A]$ is a well-defined element of $\caA$.
\begin{proof}
We obtain
\begin{align*}
||[F,A]|| &\leq \sum_{y\in Y}\sum_{S \ni y, S \cap X \neq \emptyset } 2 ||F_S|| ||A|| 
 \leq  2 ||A|| \sum_{y\in Y}  f(1+\dist(y,X)) ||F||_{X,f}.
\end{align*}
The bound $2\sum_{y\in Y}  f(1+\dist(y,X))\leq\hat{f}(1+\dist(X,Y)) (1+|X\cap Y|)$ now holds in the particular case of intervals. 
For $ii)$ we use Duhamel's formula~\eqref{Duhamel} to get
\begin{equation}
\alpha_{H_1}(s)[A]-\alpha_{H_2}(s) [A] 
= \iu\int_0^s du \,   \alpha_{H_1}(u) \circ \alpha_{H_2}(s,u)  \big\{[ \alpha^{-1}_{H_2}(s,u)\{H_1(s)-H_2(s)\}, A]\big\}  \label{eq: duhamel practical derivation}
\end{equation}
By the results above, $\alpha^{-1}_{H_2}(s,u)$ is generated by a TDI whose norm is upper bounded by an expression depending on $C_H$. Therefore, we invoke Lemma \ref{lem: loc and liebrobinson} item iii) to get $|| \alpha^{-1}_{H_2}(s,u)\{H_1(s)-H_2(s)\} ||_{X,\hat{f}}\leq   ||H_1(s)-H_2(s) ||_{X,f} $ uniformly in $u,s \leq 1$. 
Using now item i) above, he argument of $\{\cdot\} $ in \eqref{eq: duhamel practical derivation} is in norm bounded by 
$$
 ||H_1(s)-H_2(s) ||_{X,f}  ||A|| \hat{f}(1+\dist(X,Y)) (1+|X\cap Y|),
$$
which proves the claim since $\alpha_{H_1}(u)$ and $ \alpha_{H_2}(s,u)$ preserve the norm.
\end{proof}
The next lemma elaborates on the situation when a TDI $K$ is anchored in a set $X$.   If $X$ is finite, then $\alpha_K(s)$ is an \emph{inner automorphism}, i.e.\ we retrieve the formulas familiar from finite quantum systems:
\begin{equation}\label{eq: finite tdi}
\alpha_K(s)[A]= \Adjoint(V(s))[A]
\end{equation}
with $V(s) \in \caA$ a unitary family that solves  
$$
V(s)=\id+i\int^s_0 du V(u)\iota(K(u)).
$$
\begin{lemma}\label{lem: local perturbation tdi}
Let $H$ be a TDI and $B$ a TDI anchored in $X$, 
\begin{enumerate}
\item There are TDIs $K,K'$ anchored in $X$ such that 
$$
\alpha_{H+B}=\alpha_H \circ  \alpha_K =   \alpha_{K'} \circ  \alpha_H.
$$
 If $H,B$ are $G$-invariant, then so are $K,K'$.  
\item  Let $\psi$ be a pure state and let $X$ be finite. Then the states $\psi\circ\alpha_{H+B} $ and $\psi\circ \alpha_H$ are mutually normal. If $\psi$ and $H,B$ are $G$-invariant, then the relative charge of
$\psi\circ\alpha_{H+B} $ and $\psi\circ \alpha_H$ is zero.
\end{enumerate}
\end{lemma}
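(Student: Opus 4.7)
My plan for part (i) is to read off explicit formulas for $K$ and $K'$ from Lemma \ref{lem: inverse small diff} applied to the pair $(H_1,H_2) = (H, H+B)$. Item (i) of that lemma directly furnishes $K'(s) = \alpha_H(s)[B(s)]$, which generates $\alpha_{H+B}\circ\alpha_H^{-1}$, so that $\alpha_{H+B}=\alpha_{K'}\circ\alpha_H$. Item (ii) furnishes
\begin{equation*}
K(s) = B(s) + i\alpha_{H+B}^{-1}(s)\left[\int_0^s \alpha_{H+B}(u)\big\{\big[B(u),\alpha_H(s,u)\{H(s)\}\big]\big\}\, du\right],
\end{equation*}
which generates $\alpha_H^{-1}\circ\alpha_{H+B}$, so that $\alpha_{H+B}=\alpha_H\circ\alpha_K$.

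Next, I would verify that $K$ and $K'$ are $X$-anchored with finite $|||\cdot|||_{X,\hat f}$ norm, and $G$-invariant when $H, B$ are. From the commutator formula \eqref{commutator of interactions}, $[B,F]$ is $X$-anchored whenever $B$ is, since any nonzero term forces $S\supset S_1$ with $S_1\cap X\ne\emptyset$. From the definition \eqref{eq: evolved interaction}, $\beta[F]$ only populates balls centered at points of $X$, so the action of any almost local evolution preserves $X$-anchoring. Integration and addition trivially preserve it as well. Finiteness of $|||K|||_{X,\hat f}$ and $|||K'|||_{X,\hat f}$ follows by combining items (i) and (iii) of Lemma \ref{lem: loc and liebrobinson}, with an affiliated $\hat f \in \caF$. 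For $G$-invariance, the on-site $G$-action commutes with the tracial conditional expectations used in \eqref{eq: evolved interaction} (since the site-wise trace is $G$-invariant by the inner-automorphism form of $\gamma_j$), and commutes with $\alpha_H$ and $\alpha_{H+B}$ whenever their generators are $G$-invariant; hence $K$ and $K'$ inherit $G$-invariance.

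For part (ii), since $X$ is finite and $|||K|||_{X,f}<\infty$, the sum $\iota(K(s)) = \sum_S K_S(s) \in \caA$ converges in norm (the polynomial growth in $d$ of the number of subsets $S\ni j$ with $\diam S = d$ is dominated by the super-polynomial decay required of $f \in \caF$, via $\sum_d f(1+d)<\infty$). Therefore \eqref{eq: finite tdi} applies and $\alpha_K(1)[A] = VAV^*$ for the unitary $V = V(1) \in \caA$. Setting $\phi = \psi\circ\alpha_H$, I compute $(\psi\circ\alpha_{H+B})[A] = \phi[VAV^*]$, which in the GNS representation $(\pi_\phi, \caH_\phi, \Omega_\phi)$ of $\phi$ equals $\langle\Psi, \pi_\phi(A)\Psi\rangle$ for $\Psi := \pi_\phi(V)^*\Omega_\phi$. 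Thus $\psi\circ\alpha_{H+B}$ is a vector state in the GNS representation of $\psi\circ\alpha_H$, proving mutual normality.

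For the charge claim, $G$-invariance of $H$ and $B$ implies $G$-invariance of $K$, hence of $\iota(K(s))$, and therefore the unitary $V(s)$ solving the defining ODE is $G$-invariant for every $s$. In the GNS representation of the pure, $G$-invariant state $\phi$, the group $G$ is implemented by unitaries $U(g)$ fixing $\Omega_\phi$, and $\pi_\phi(V)$ commutes with each $U(g)$; it follows that $U(g)\Psi = \pi_\phi(V)^*U(g)\Omega_\phi = \Psi$, so both representing vectors carry the trivial character, and the relative charge vanishes by Proposition \ref{prop: zerodim}. The main obstacle I anticipate is not analytic but bookkeeping: carefully verifying that each layer of the explicit formula for $K$ preserves $X$-anchoring and $G$-invariance; once that is in hand, the rest reduces to invoking the already-established Lieb--Robinson-type bounds and the standard GNS setup.
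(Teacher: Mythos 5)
Your proof is correct and takes essentially the same approach as the paper's: both derive $K$ and $K'$ from Lemma~\ref{lem: inverse small diff}, verify $X$-anchoring via Lemma~\ref{lem: loc and liebrobinson} and the definitions \eqref{commutator of interactions}, \eqref{eq: evolved interaction}, and then for item (ii) use that when $X$ is finite the evolution $\alpha_K$ is inner (remark preceding the lemma, eq.~\eqref{eq: finite tdi}), so the two states differ by conjugation with a unitary $V\in\caA$; $G$-invariance of $V$ then forces zero relative charge. You simply spell out details the paper compresses into two sentences.

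One small slip in reasoning, though not in conclusion: you justify the norm convergence of $\iota(K(s))=\sum_S K_S(s)$ by appealing to ``polynomial growth in $d$ of the number of subsets $S\ni j$ with $\diam S=d$.'' In $\bbZ$ this count is exponential, not polynomial. The correct (and simpler) reason is built into the norm itself: since $f$ is non-increasing with $f\leq f(1)$, we have $\sum_{S\ni j}\Vert K_S\Vert \leq f(1)\,\Vert K\Vert_f$, hence $\sum_{S:\,S\cap X\neq\emptyset}\Vert K_S\Vert \leq |X|\,f(1)\,\Vert K\Vert_{X,f}<\infty$ for finite $X$. This is precisely how the paper sets up $\iota$ in Section~\ref{subsec:alal}/\ref{sec: anchored interactions}; no counting of subsets is involved. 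This is cosmetic and does not affect the validity of your argument.
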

\begin{proof}
Item i).  We use Lemma~\ref{lem: manipulation of evolutions} and Lemma~\ref{lem: inverse small diff} to derive expressions for $K,K'$. Then we use Lemma \ref{lem: loc and liebrobinson} to pass the property of being anchored in $X$ from $B$ to $K,K'$. 
Item ii). By i) and the fact that $K$ is anchored on a finite set, the remark preceding the lemma ensures that there is a unitary $V\in\caA$ such that $\psi'=\psi\circ\Adjoint(V)$. In particular they are normal with respect to each other. If $V$ is $G$-invariant, then they have zero relative charge.    
\end{proof}

\subsection{Concatenation of almost local evolutions}\label{sec:Concatenation of evolutions}

First, we remark that one can often construct almost local evolutions by time-rescaling. This is used so often that we put it in a lemma.
\begin{lemma}\label{lem: time-rescaling almost local evolutions}
Let $H$ be a TDI and let $j: [0,1]\to [0,1]$ be a piecewise smooth function, with  $|\frac{dj(s)}{ds}|\leq C_j$ except for a finite set of times $s$. Then $\alpha_H^{-1}(j(0))\circ\alpha_H(j(\cdot))$ is an almost local evolution.
\end{lemma}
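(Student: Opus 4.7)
The plan is to identify the generator of the rescaled evolution via the chain rule and then verify that both sides of the claimed identity satisfy the same Heisenberg-type integral equation. Let me denote $E:=\{s\in[0,1]: j \text{ is not differentiable at } s\}$, which is finite by assumption, and define
\begin{equation*}
\tilde H(s) := \begin{cases} j'(s)\, H(j(s)) & s\notin E, \\ 0 & s\in E. \end{cases}
\end{equation*}
The candidate claim is that $\tilde H$ is a TDI and that $\alpha_{\tilde H}(s)=\alpha_H(j(s),j(0))$ in the cocycle notation of \eqref{cocylce bis}; this is precisely the evolution in the statement.

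First I would verify that $\tilde H$ qualifies as a TDI. Pointwise, $\|\tilde H(s)\|_f \leq C_j \|H(j(s))\|_f \leq C_j\, \nor H\nor_f$ for almost every $s$, so $\nor \tilde H\nor_f\leq C_j\nor H\nor_f<\infty$ for the same $f\in\caF$ that controls $H$. Strong measurability is preserved: since $j$ is piecewise smooth it is Borel measurable, so if $H$ is the pointwise a.e.\ limit of simple interactions $H_n(s)=\sum_i \chi_{E_i}(s) H_i$, then $H_n\circ j$ is simple (the preimages $j^{-1}(E_i)$ are measurable) and converges pointwise a.e.\ to $H\circ j$ in $\|\cdot\|_f$; multiplying by the bounded measurable function $j'$ keeps the result strongly measurable. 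Modifying on the finite set $E$ is harmless.

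Next, I would show that $\gamma(s):=\alpha_H(j(s),j(0))$ solves the defining integral equation \eqref{eq: heisenberg} for $\alpha_{\tilde H}$. For $A\in\caal$, the cocycle evolution equation stated after \eqref{cocylce equation} gives
\begin{equation*}
\gamma(s)[A]= A+i\int_{j(0)}^{j(s)} \alpha_H(s',j(0))\big\{[H(s'),A]\big\}\,ds'.
\end{equation*}
Applying the fundamental theorem of calculus to the piecewise $C^1$ substitution $s'=j(u)$ (no monotonicity required), the right-hand side becomes
\begin{equation*}
A+i\int_0^s \gamma(u)\big\{[\,j'(u)\,H(j(u)),\,A\,]\big\}\,du
= A+i\int_0^s \gamma(u)\big\{[\tilde H(u),A]\big\}\,du,
\end{equation*}
where the integrand is unaffected by the values at the finite exceptional set $E$. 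This is exactly equation \eqref{eq: heisenberg} for $\alpha_{\tilde H}$, evaluated on the dense subalgebra $\caal$.

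Finally, by the construction recalled just after \eqref{eq: heisenberg} (approximation by $\iota_{\Lambda_n}(H)$ and Lieb--Robinson control), the almost local evolution generated by a TDI is uniquely determined by the integral equation on $\caal$ together with continuity on $\caA$. Since $s\mapsto \gamma(s)$ is strongly continuous on $\caA$ (being a composition of two strongly continuous almost local evolutions) and agrees with $\alpha_{\tilde H}(s)$ on $\caal$, we conclude $\gamma=\alpha_{\tilde H}$ on all of $\caA$, proving the lemma. The main, though mild, technical point is the change of variables without monotonicity of $j$; handling the finitely many singular points of $j'$ and the measurability of $H\circ j$ are routine, and no further estimates beyond Lemma~\ref{lem: loc and liebrobinson} are needed.
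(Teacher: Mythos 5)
Your proof is correct and follows essentially the same route as the paper: guess the generator $\tilde H(s)=j'(s)\,H(j(s))$ from the chain rule, bound $\nor\tilde H\nor_f\leq C_j\nor H\nor_f$, and verify the Heisenberg integral equation by a change of variables, citing uniqueness of the solution on $\caal$. Note that you were in fact more careful than the paper at one point: the paper's proof writes the generator as $K(s)=\frac{dj(s)}{ds}H(s)$, which is a typo for $\frac{dj(s)}{ds}H(j(s))$ --- your expression is the correct one, and your spelling out of strong measurability and of the non-monotone substitution makes explicit what the paper leaves as ``direct computation.''
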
     
\begin{proof}
We set $K=\frac{d j(s)}{ds} H(s)$, except for the finite set of times where the bound $|\frac{dj(s)}{ds}|\leq C_j$ does not hold, where we set $K(s)=0$.
We see that $|||K|||_f\leq C_j |||H|||_f $ for any $f\in \caF$ so that the TDI satisfies the necessary bound. The fact that it generates the family $\alpha^{-1}_H(j(0))\circ\alpha_H(j(\cdot))$  follows from direct computation.      
\end{proof}
For almost local evolutions $\alpha,\alpha'$, we define the concatenated evolution $\alpha\loopc \alpha'$ 
by 
\begin{equation}\label{concatenation}
(\alpha\loopc \alpha')(s)= \begin{cases} \alpha(2s) & 0\leq s <1/2 \\
\alpha(1)\circ\alpha'(2s-1)   & 1/2\leq s <1   
  \end{cases}
\end{equation}
The family of automorphisms $(\alpha\loopc \alpha')(s)$ is again an almost local evolution by Lemma \ref{lem: manipulation of evolutions}.    
We also note the relation to the concatenation of loops defined in \eqref{eq: concatenation}:  If $\psi,\psi'$ are loops with common basepoint and generated by $\alpha,\alpha'$, then $\psi' \loopc \psi$ is a loop with the same basepoint and generated by $\alpha\loopc \alpha'$. 
We will now state that composition of almost local evolutions is, in a certain  sense, homotopic, to concatenation of almost local evolutions.
\begin{definition}\label{def: homo of ale}
A family $(s,\lambda)\mapsto \alpha^{(\lambda)}(s)$ is a $G$-homotopy of almost local evolutions if 
$$
\sigma^{(0)}(\lambda) \circ \alpha^{(\lambda)}(s) 
=  \alpha^{(0)}(s) \circ \sigma^{(s)}(\lambda)  ,\qquad \text{for any $s,\lambda \in [0,1]$}
$$  
with $\alpha^{(\lambda)},\sigma^{(s)}$, for each $\lambda,s$,  generated by $G$-invariant TDIs $H_\lambda, F_s$ satisfying the uniform boundedness property \eqref{Homotopy: Uniform bound} and such that 
\begin{equation}\label{eq: extra condition homotopy}
\sigma^{(0)}(\lambda)=\sigma^{(1)}(\lambda)=\Id
\end{equation}
\end{definition}
In Section~\ref{sec: homotopy} the homotopy was defined as a property of a function $(s,\lambda)\mapsto \psi_\lambda(s)$ of \emph{states}. The connection with the above definition is the following: If $(s,\lambda)\mapsto \alpha^{(\lambda)}(s)$ is a homotopy of almost local evolutions, then $(s,\lambda)\mapsto \nu\circ\alpha^{(\lambda)}(s)$ is a homotopy of states, for any choice of state $\nu$. Moreover, if $s \mapsto \nu\circ\alpha^{(\lambda)}(s)$ is a loop for some $\lambda$, then it is a loop (with the same basepoint $\nu$) for any $\lambda$.
To complete the vocabulary, two almost local evolutions $\alpha^{(0)},\alpha^{(1)}$ are called $G$-homotopic if there exists a $G$-homotopy $\alpha^{(\lambda)}$ reducing to $\alpha^{(0)}$ for $\lambda=0$ and to $\alpha^{(1)}$ for $\lambda=1$. 
The main reason to consider homotopy of almost local evolutions, is given by the following lemma.
\begin{lemma}\label{lem: composition of loops is homotopic to concatenation}
\begin{enumerate}
\item Let $H$ be a $G$-invariant TDI and let $j: [0,1]\to [0,1]$ be a piecewise smooth function, whose derivative is bounded (except for a finite set), and such that $j(0)=0$ and $j(1)=1$. Then $\alpha_H(\cdot)$ and $\alpha_H(j(\cdot))$ are $G$-homotopic. 
\item Let $H_1,H_2$ be two $G$-invariant TDIs.  Then $\alpha_{H_2} \circ \alpha_{H_1}$ is $G$-homotopic to~$\alpha_{H_2} \square \alpha_{H_1}$.
\end{enumerate}
 \end{lemma}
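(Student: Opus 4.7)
The common strategy is to construct the interpolating family $\alpha^{(\lambda)}$ as a composition of $\alpha_{H_1}, \alpha_{H_2}$ (or just $\alpha_H$) evaluated at interpolated times, to read off both generating TDIs $H_\lambda(\cdot)$ and $F_s(\cdot)$ by the chain rule, and to establish the required uniform $\nor\cdot\nor_f$-bounds via Lemma~\ref{lem: loc and liebrobinson}(iii).

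For (i), set $j_\lambda(s) = (1-\lambda)s + \lambda j(s)$ and $\alpha^{(\lambda)}(s) := \alpha_H(j_\lambda(s))$, so that $\alpha^{(0)} = \alpha_H(\cdot)$ and $\alpha^{(1)} = \alpha_H(j(\cdot))$. By Lemma~\ref{lem: time-rescaling almost local evolutions}, $\alpha^{(\lambda)}$ is generated in $s$ by the TDI $H_\lambda(s) = j_\lambda'(s)\,H(j_\lambda(s))$ with $\sup_\lambda\nor H_\lambda\nor_f\leq(1+C_j)\nor H\nor_f$, while $\sigma^{(s)}(\lambda) := \alpha_H^{-1}(s)\circ\alpha_H(j_\lambda(s))$ is generated in $\lambda$ by $F_s(\lambda) = (j(s)-s)\,H(j_\lambda(s))$ with $\sup_s\nor F_s\nor_f\leq\nor H\nor_f$. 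The endpoint conditions $\sigma^{(0)}(\lambda)=\sigma^{(1)}(\lambda)=\Id$ are immediate from $j_\lambda(0)=0$ and $j_\lambda(1)=1$; $G$-invariance of $H_\lambda, F_s$ is inherited from that of $H$.

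For (ii), interpolate linearly in the unit square $[0,1]^2$ between the diagonal schedule $(s,s)$, which produces the composition $\alpha_{H_2}\circ\alpha_{H_1}$, and the L-shaped schedule going from $(0,0)$ vertically to $(0,1)$ and then horizontally to $(1,1)$, which produces the concatenation $\alpha_{H_2}\loopc\alpha_{H_1}$. Concretely, let $L_1(s)=\max(0,2s-1)$, $L_2(s)=\min(2s,1)$, $t_i(s,\lambda)=(1-\lambda)s+\lambda L_i(s)$ for $i=1,2$, and set
\begin{equation*}
\alpha^{(\lambda)}(s) = \alpha_{H_2}(t_2(s,\lambda))\circ\alpha_{H_1}(t_1(s,\lambda)).
\end{equation*}
Direct inspection on $s\in[0,1/2]$ and $s\in[1/2,1]$ confirms $\alpha^{(1)}=\alpha_{H_2}\loopc\alpha_{H_1}$, while $\alpha^{(0)}=\alpha_{H_2}\circ\alpha_{H_1}$ is clear. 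Arguing as in Lemma~\ref{lem: manipulation of evolutions}(iii), $\alpha^{(\lambda)}(\cdot)$ is generated by
\begin{equation*}
K_\lambda(s) = (\partial_s t_1)\,H_1(t_1) + (\partial_s t_2)\,\alpha_{H_1}^{-1}(t_1)[H_2(t_2)]
\end{equation*}
(with $t_i=t_i(s,\lambda)$), and $\sigma^{(s)}(\lambda):=(\alpha^{(0)}(s))^{-1}\circ\alpha^{(\lambda)}(s)$ is generated in $\lambda$ by the same expression with $\partial_s$ replaced by $\partial_\lambda$. Since $|\partial_s t_i|\leq 2$, $|\partial_\lambda t_i|\leq 1$, and Lemma~\ref{lem: loc and liebrobinson}(iii) gives an $\hat f$-norm bound on $\alpha_{H_1}^{-1}(t)[H_2(t')]$ uniform in $t,t'\in[0,1]$ and depending only on $f$ and $\nor H_1\nor_f$, Lemma~\ref{lem: loc and liebrobinson}(ii) then yields the required uniform bounds on $K_\lambda$ and $F_s$. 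The endpoint conditions $\sigma^{(0)}(\lambda)=\sigma^{(1)}(\lambda)=\Id$ follow from $t_1(0,\lambda)=t_2(0,\lambda)=0$ and $t_1(1,\lambda)=t_2(1,\lambda)=1$ for all $\lambda$; $G$-invariance of $K_\lambda, F_s$ is inherited from that of $H_1, H_2$ together with $\alpha_{H_1}\circ\gamma(g)=\gamma(g)\circ\alpha_{H_1}$.

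The only real technical obstacle is ensuring the locality of the conjugated interaction $\alpha_{H_1}^{-1}(t)[H_2(t')]$ uniformly across the square $(t,t')\in[0,1]^2$; this is precisely what the Lieb-Robinson-type estimate of Lemma~\ref{lem: loc and liebrobinson}(iii) delivers. All other steps are algebraic bookkeeping and straightforward checks on piecewise linear functions.
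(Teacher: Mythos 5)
Your proof is correct and follows essentially the same construction as the paper: for (i) you interpolate the time reparametrization linearly exactly as the paper's $k(s,\lambda)$, and for (ii) your schedules $t_1,t_2$ coincide with the coordinates $j_2^{(\lambda)},j_1^{(\lambda)}$ of the paper's broken-line path through $d^{(\lambda)}=\tfrac12(1+\lambda,1-\lambda)$, so the two-parameter family $\alpha^{(\lambda)}(s)$ is identical. The only difference is cosmetic — you write out the explicit generators $K_\lambda$ and $F_s$ and verify the bounds directly, while the paper delegates those checks to Lemmas~\ref{lem: time-rescaling almost local evolutions} and~\ref{lem: manipulation of evolutions}.
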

\begin{proof}
Item i)  We consider the family
$$
\alpha^{\lambda}(s) =      \alpha( k(s,\lambda)), \qquad   k(s,\lambda)=  (1-\lambda)s+ \lambda j(s)
$$
For any $s$, the function $k(s,\cdot)$ satisfies the conditions of Lemma \ref{lem: time-rescaling almost local evolutions} and hence it yields the required almost local evolution 
$\sigma^{(s)}(\cdot)=\alpha^{-1}(s) \alpha(k(s,\cdot))$.
The condition \eqref{eq: extra condition homotopy} follows from the fact that $j(0)=0$ and $j(1)=1$.\\
Item ii)  For simplicity, let us denote $\alpha(s)=\alpha_{H_1}(s)$ and $\beta(s)=\alpha_{H_2}(s)$.
We will construct a family $\alpha^\lambda(s)$ interpolating between $\beta\circ\alpha$ and $\beta \loopc \alpha$ as $\lambda$ ranges between $0$ and $1$. Let 
$$
d^{(\lambda)}=\tfrac{1}{2}(1+\lambda,1-\lambda) \in [0,1]^2,\qquad \lambda\in[0,1].
$$
Let $j^{(\lambda)}: [0,1] \mapsto [0,1]^2:  s \mapsto j^{(\lambda)}(s)=(j_1^{\lambda}(s),j_2^{(\lambda)}(s)) $ be the continuous and piecewise smooth map defined by 
$$
j^{(\lambda)}(s)=
\begin{cases}
  2s d^{(\lambda)}  &   s\leq 1/2 \\
  d^{(\lambda)} + 2(s-\tfrac{1}{2})\left((1,1) - d^{(\lambda)} \right)  & s>1/2  
\end{cases}.
$$
We note that
\begin{equation}\label{eq: extreme values for j}
j^{(0)}(s)=(s,s),\qquad    j^{(1)}(s)=
\begin{cases}
 (2s,0)  &   s\leq 1/2 \\
  (1,2s-1)  & s>1/2  
\end{cases}
\end{equation}
Then we define the family of automorphisms
$$ 
\alpha^{(\lambda)}(s)=   \beta(j_1^{(\lambda)}(s) ) \circ \alpha(j_2^{(\lambda)}(s) ) ,\qquad \lambda\in[0,1].
$$
We verify
\begin{enumerate}
\item $\alpha^{(\lambda)}(0)= \Id$ and $\alpha^{(\lambda)}(1)= \beta(1 ) \circ \alpha(1 )$ because $j^{(\lambda)}(0)=(0,0), j^{(\lambda)}(1)=(1,1)$.
\item $\alpha^{(0)}(s)=  (\beta\circ \alpha)(s )$. 
\item $\alpha^{(1)}(s)=  (\beta \loopc \alpha)(s)$. 
\item For each $\lambda$, $\alpha^{(\lambda)}$ is an almost local evolution.
\item $\alpha^{(\lambda)}(s)=    \alpha^{(0)}(s) \circ \sigma^{(s)}(\lambda) $, with 
$$
\sigma^{(s)}(\lambda)=\alpha(s )^{-1} \circ  \beta(s )^{-1}  \circ \beta(j_1^{(\lambda)}(s) ) \circ \alpha(j_2^{(\lambda)}(s) )    
$$
\item For each $s$,  $\sigma^{(s)}$ is an almost local evolution. 
\item   $\sigma^{(0)}(\lambda)=\sigma^{(1)}(\lambda)=\Id $, so item (v) means that $(s,\lambda)\mapsto \alpha^{(\lambda)}(s)$ is a homotopy.
\end{enumerate}
Item (i--iii) and (vii) are immediate consequences of the definition, using~\eqref{eq: extreme values for j}.  To check~(iv), 
we use Lemma \ref{lem: time-rescaling almost local evolutions} to conclude that both $\beta(j_1^{(\lambda)}(\cdot) )$ and $ \alpha(j_2^{(\lambda)}(\cdot) ) $ are almost local evolutions and so is their composition by Lemma~\ref{lem: manipulation of evolutions}.  
Item (v) follows from the definition of $\alpha^{(\lambda)}(s)$ and~(ii).
Item (vi) follows from Lemma  \ref{lem: time-rescaling almost local evolutions} upon noting that $\lambda \mapsto j^{(\lambda)}(s)$ has uniformly bounded derivatives and $j^{(0)}(s)=(s,s)$.
\end{proof}

\section{Hilbert space theory for states equivalent to a product state} \label{sec: hilbert space theory}

Although most of our reasoning stays on the level of the spin chain algebra $\caA$, some steps are easier taken within the GNS representation. In this section, we establish the tools that we will need from Hilbert space theory.

\subsection{GNS construction}\label{sec: representations}
We will make use of the GNS representation of state $\psi$ on $\caA$, see \cite{BratRob}. As before, we denote the GNS triple associated to $\psi$ by   $(\caH,\Omega,\pi)$ 
 (Hilbert space $\caH$,  normalized vector $\Omega\in\caH$,  $*$-representation  $\pi:\caA\to\caB(\caH)$), such that 
$$\psi(A)=\langle\Omega, \pi[A]\Omega\rangle, \qquad \text{ for any $A\in\caA$}.  $$     From the purity of $\psi$, it follows that $\pi(\caA)'=\bbC 1$ and hence the von Neumann algebra $\pi(\caA)''$ equals the full algebra $\caB(\caH)$. The following lemma, already used in Section \ref{sec: g charge}, verifies that we are in the setting outlined in Section \ref{sec: equivalence zero}; We omit its standard proof which relies on the uniqueness of the GNS representation.
\begin{lemma}\label{lem: gns} If $\psi$ is $G$-invariant, then there is a unique strongly continuous unitary representation $G\to \caU(\caH): g\mapsto U(g)$, such that $U(g)\Omega=\Omega$ and
\begin{equation} \label{eq: rep of g action}
\pi\circ \gamma(g)= \Adjoint(U(g))\circ \pi
\end{equation}
\end{lemma}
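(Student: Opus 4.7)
The plan is to construct $U(g)$ on the dense subspace $\pi(\caA)\Omega \subset \caH$ by the formula $U(g)\pi(A)\Omega := \pi(\gamma(g)[A])\Omega$, and then extend by continuity. First I would check that this is well-defined and isometric: if $\pi(A)\Omega = \pi(B)\Omega$, then for any $C \in \caA$,
\[
\langle \pi(\gamma(g)[A])\Omega, \pi(C)\Omega\rangle = \psi(\gamma(g)[A^*]C) = \psi(\gamma(g)[A^* \gamma(g^{-1})[C]]) = \psi(A^* \gamma(g^{-1})[C]),
\]
using the $G$-invariance of $\psi$ in the last step, and the same identity holds with $A$ replaced by $B$, so the two vectors agree on a dense set and hence are equal. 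Isometry follows from the same computation with $C = \gamma(g)[A]$ (or directly from $\|\pi(\gamma(g)[A])\Omega\|^2 = \psi(\gamma(g)[A^*A]) = \psi(A^*A)$).

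The isometric operator extends uniquely to an isometry $U(g)$ on $\caH$. That $U(g)$ is unitary follows because its range contains $\pi(\gamma(g)[A])\Omega$ for every $A$, which is already dense. The intertwining relation \eqref{eq: rep of g action} is immediate from the definition since
\[
U(g)\pi(A)U(g)^*\pi(B)\Omega = U(g)\pi(A)\pi(\gamma(g^{-1})[B])\Omega = \pi(\gamma(g)[A])\pi(B)\Omega.
\]
The invariance $U(g)\Omega = \Omega$ follows by taking $A = \mathbf{1}$, and the group homomorphism property $U(g_1 g_2) = U(g_1)U(g_2)$ follows by applying both sides to vectors $\pi(A)\Omega$ and using the homomorphism property of $\gamma$.

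The one step requiring a small argument is strong continuity of $g \mapsto U(g)$. For a vector of the form $\pi(A)\Omega$ with $A \in \caA$, one computes
\[
\|U(g)\pi(A)\Omega - \pi(A)\Omega\|^2 = \psi\bigl((\gamma(g)[A]-A)^*(\gamma(g)[A]-A)\bigr) \leq \|\gamma(g)[A] - A\|^2,
\]
which tends to zero as $g \to e$ by the strong continuity of $\gamma$ on $\caA$. An $\varepsilon/3$ argument, using uniform boundedness $\|U(g)\| = 1$ and density of $\pi(\caA)\Omega$, extends strong continuity to all of $\caH$.

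For uniqueness, suppose $U'(g)$ is another strongly continuous unitary representation satisfying both conditions. Then $U'(g)U(g)^*$ commutes with $\pi(\caA)$ (by the intertwining property applied to both), hence lies in $\pi(\caA)' = \bbC \mathbf{1}$ by purity of $\psi$, so $U'(g) = c(g) U(g)$ for some $c(g) \in \bbS^1$. Applying both sides to $\Omega$ and using $U(g)\Omega = U'(g)\Omega = \Omega$ forces $c(g) = 1$. I expect no real obstacle here; the only point to watch is ensuring that the argument for well-definedness actually invokes $G$-invariance of $\psi$ rather than something stronger, which the computation above does cleanly.
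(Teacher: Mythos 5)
Your proof is correct and is precisely the standard argument that the paper defers to — the paper explicitly omits the proof, remarking only that it relies on uniqueness of the GNS representation. Constructing $U(g)$ on the dense orbit $\pi(\caA)\Omega$ via $U(g)\pi(A)\Omega = \pi(\gamma(g)[A])\Omega$, checking well-definedness and isometry from $G$-invariance of $\psi$, and proving uniqueness from $\pi(\caA)'=\bbC\mathbf{1}$ is exactly what that standard proof unwinds to.
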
 
\noindent We further recall that two pure states $\psi,\psi'$ are said to be mutually normal\footnote{The term `equivalent' is used more often in the literature, but we avoid this term since it has already another natural meaning within this paper. However, in the zero-dimensional case, the two notions actually coincide.} iff\  $\psi'$ is represented as a pure density matrix $\rho_{\psi'}$ in the GNS representation of $\psi$: $\psi'(A) = \Tr_\caH(\rho_{\psi'}\pi(A))$. It then follows that the same holds with the roles of $\psi,\psi'$ reversed. 

\subsection{GNS representation of a pure product state}\label{sec:GNS of product state}
We construct the GNS representation of a pure, $G$-invariant product state $\phi$ in a pedagogical way to render the objects that follow more tangible.
Recall that $\caA_i$ is a matrix algebra of $n_i\times n_i$ complex matrices, that we write as $\caA_i= \caB(\bbC^{n_i})$ For every $i$, we choose an orthonormal basis of $\bbC^{n_i}$ labelled by $\sigma=0,\ldots, n_i-1$ such that 
$$
\phi(A)=\langle 0|  A | 0\rangle,\qquad A\in \caA_i.
$$
The state $\phi$ corresponds hence intuitively to the --- a priori ill-defined --- vector 
\begin{equation}\label{eq: gns intuitive}
\ldots \otimes | 0\rangle_{i-1} \otimes | 0\rangle_i \otimes | 0\rangle_{i+1} \otimes \ldots
\end{equation}
The representation Hilbert space $\caH$ is chosen as 
$$
\caH := l^2(M_\bbZ)
$$
where $M_X$, for $X\subset\bbZ$, is the space of sequences $m:X\to \{0,1,2\ldots\}$ such that $m(j) < n_j$ and the set $\{j: m(j)\neq 0\}$ is finite. We write henceforth $\caH_X=\ell^2(M_X)$  and we note the tensor product structure $\caH=\caH_X \otimes \caH_{X^c}$.
We now define an isometry $K_S: \otimes_{j\in S}\bbC^{n_j}\to\caH_S$ for finite $S$, by 
$$
K_S(\otimes_{j\in S}|m_j\rangle)(m')=\begin{cases}  1 &  m'=m \\
0 &   m'\neq m
 \end{cases},\qquad   m,m'\in M_S
$$
and we lift it to an isomorphism of $C^*$-algebras by setting
$$
\pi_S: \caA_S \to \caB(\caH_S):  A \mapsto K_SA K_S^{-1}.
$$
 Then the representation $\pi:\caA\to\caB(\caH)$ is fixed by the requirement that its restrictions 
$$
\pi|_{\caA_S}:  \caA_S \otimes \id_{S^c} \to   \caB(\caH_S))\otimes \id_{\caH_{S^c}}
$$
coincide with $\pi_S$, for finite $S$.  We note that for finite $S$, $\pi_S$ is an isomorphism between $\caA_S$ and $\caB(\caH_S)$ , whereas for infinite $X$,  $\pi(\caA_X)$ is a strict subset of $\caB(\caH_X)$. In the former case, the dual space of bounded linear functionals on $\caA_S$ is isomorphic to the trace-class operators on $\caH_S$, equipped with the trace norm $||\cdot||_1$.
%
%
The state $\phi$ is now represented by the vector $\Omega=\delta_{m_0,\cdot}\in\caH$ with $m_0(j)=0$ for all $j$; 
$$
\phi(A)=\langle\Omega, \pi[A]\Omega\rangle
$$
and we see that $\Omega$ indeed corresponds to the heuristic \eqref{eq: gns intuitive}, and it factors as $\Omega=\Omega_X\otimes\Omega_{X^c}$.  We denote by $\Pi_X$ (and again $\Pi = \Pi_\bbZ$) the orthogonal projection on the range of $\Omega_X$ and we abuse notation by also writing $\Pi_X$ to denote $\Pi_X\otimes 1_{X^c}$.

We now state a relevant condition for a pure state $\psi$ to be normal with respect to the product state $\phi$. We fix intervals in $\bbZ$ centered on the sites $\{0,1\}$;
\begin{equation}\label{eq: intervals} 
I_1=\emptyset, \qquad I_r=[2-r,r-1],\qquad  r>1.
\end{equation}
Then
\begin{lemma}\label{lem: condition for normality}
Let $\phi,\psi$ be pure states on $\caA$ and let $\phi$ be a product.
If $||(\psi-\phi)_{I_r^c}|| \to 0$ as $r\to \infty$, then $\psi$ is normal w.r.t.\ $\phi$. 
\end{lemma}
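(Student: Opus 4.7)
My plan is to realize $\psi$ as the trace-norm limit of a sequence of density matrices on $\caH$ that approximate $\psi$ on larger and larger intervals $I_r$, while the hypothesis controls the tail error.

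First I would observe that for each $r\geq 1$, since $\caA_{I_r}$ is finite-dimensional and $\pi_{I_r}$ is a $*$-isomorphism onto $\caB(\caH_{I_r})$, the restriction $\psi|_{\caA_{I_r}}$ is represented by a unique density matrix $\rho_r$ on $\caH_{I_r}$. Using the factorization $\caH=\caH_{I_r}\otimes\caH_{I_r^c}$, I would lift $\rho_r$ to a density matrix on $\caH$ via $\tilde\rho_r=\rho_r\otimes\Pi_{I_r^c}$, so that $A\mapsto\Tr(\tilde\rho_r\pi(A))$ agrees with $\psi$ on $\caA_{I_r}$ and with $\phi$ on $\caA_{I_r^c}$.

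The main work is to show that $(\tilde\rho_r)_{r\geq 1}$ is Cauchy in trace norm. Fix $r<r'$ and set $J=I_{r'}\setminus I_r$. The partial trace $\tau_{I_r}(\rho_{r'})$ is a density matrix on $\caH_J$ representing the restriction $\psi|_{\caA_J}$; since state-norm and trace-norm coincide on a finite-dimensional matrix algebra and $\caA_J\subset\caA_{I_r^c}$, the hypothesis yields
$$
\|\tau_{I_r}(\rho_{r'})-\Pi_J\|_1 \;\leq\; \|(\psi-\phi)_{I_r^c}\| \;=:\; \epsilon_r,
$$
with $\epsilon_r\to 0$ as $r\to\infty$. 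In particular $\Tr(\Pi_J\rho_{r'})\geq 1-\epsilon_r$, and a short Cauchy--Schwarz estimate (the gentle measurement lemma) gives $\|\rho_{r'}-\Pi_J\rho_{r'}\Pi_J\|_1\leq C\sqrt{\epsilon_r}$. Since $\Pi_J$ is rank one, $\Pi_J\rho_{r'}\Pi_J=\sigma'_r\otimes\Pi_J$ for some positive $\sigma'_r$ on $\caH_{I_r}$; applying $\tau_J$, using $\tau_J(\rho_{r'})=\rho_r$ and contractivity of the partial trace, gives $\|\sigma'_r-\rho_r\|_1\leq C\sqrt{\epsilon_r}$. The triangle inequality then yields
$$
\|\tilde\rho_{r'}-\tilde\rho_r\|_1 \;\leq\; 2C\sqrt{\epsilon_r},
$$
which is the desired Cauchy estimate.

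Finally, density matrices form a trace-norm closed subset of the Banach space of trace-class operators on $\caH$, so $(\tilde\rho_r)$ converges to a density matrix $\rho$. For any $A\in\caA_S$ with $S$ finite we have $A\in\caA_{I_r}$ for all sufficiently large $r$, whence $\Tr(\tilde\rho_r\pi(A))=\psi(A)$, and the trace-norm limit gives $\Tr(\rho\pi(A))=\psi(A)$; density of the local algebras in $\caA$ together with norm continuity of both sides concludes that $\rho$ represents $\psi$, i.e.\ $\psi$ is normal with respect to $\phi$. The step I expect to require the most care is the translation from the dual-pairing hypothesis into the trace-norm bound on the marginal $\tau_{I_r}(\rho_{r'})$ that feeds the gentle-measurement argument; once that translation is in place, the Cauchy-sequence argument runs cleanly.
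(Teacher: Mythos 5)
Your proof is correct and follows essentially the same route as the paper's: approximate $\psi$ by $\rho_r\otimes\Pi_{I_r^c}$, establish that the sequence is Cauchy in trace norm via a gentle-measurement (Cauchy--Schwarz) estimate of order $\sqrt{\epsilon_r}$, and pass to the limit using density of the local algebras. The only cosmetic difference is that you first convert the hypothesis into a trace-norm bound on the marginal $\tau_{I_r}(\rho_{r'})$ on $\caH_J$, while the paper applies the noncommutative H\"older inequality directly to obtain $\lVert\rho_{r'}(1-\Pi_{r,r'})\rVert_1\leq\sqrt{\delta}$; both yield the same estimate.
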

\begin{proof}
We use the notation introduced above. 
Let $\rho_{I_r}$ be the density matrix on the finite-dimensional Hilbert space $\caH_{I_r}$ representing the state $\psi_{I_r}$. 
We consider the sequence
\begin{equation}\label{eq: converging sequence of states}
\rho_{I_r} \otimes \Pi_{I_r^c}
\end{equation}
of density matrices on $\caB(\caH)$ indexed by $r$,
and we prove now that this sequence is a Cauchy sequence in trace-norm $||\cdot||_1$.
Let us denote $\delta(r)=||(\psi-\phi)_{I_r^c}|| $ and abbreviate 
$\rho_r=\rho_{I_r}$, $\Pi_{r,r'}=\Pi_{I_{r'}\setminus I_r}$ for $r<r'$, and let $\Tr_{r'}, \Tr_{r,r'}$ be the traces on $\caH_{I_{r'}}, \caH_{I_{r'}\setminus I_r}$, respectively. 
The key observation is that, writing $P=1-\Pi_{r,r'}$,
$$
  || \rho_{r'} P ||^2_1  \leq  \Tr_{r'}[\rho_{r'} P ] = \psi(P) = (\psi - \phi)(P)
$$
where we used the noncommutative H{\"o}lder inequality  $||AB||_1\leq  ||A||_2||B||_2 $ with $A=\sqrt{\rho_r'},B=\sqrt{\rho_r'} P  $ and $P=P^2$, to get the first inequality.
Therefore, 
\begin{equation}\label{eq: workhorse state restrictions}
    || \rho_{r'} (1-\Pi_{r,r'}) ||_1 \leq \sqrt{\delta}.  
\end{equation}
Since $\rho_{r}=  \Tr_{r,r'}[\rho_{r'}]$, we have that
\begin{align}
 \rho_{r}\otimes \Pi_{r,r'}&=  \Tr_{r,r'}[\Pi_{r,r'}\rho_{r'} \Pi_{r,r'}] \otimes \Pi_{r,r'} +E_1 \\
 &=  \Pi_{r,r'}\rho_{r'} \Pi_{r,r'} +E_1
=  \rho_{r'} +E_1+E_2
 \end{align} 
where $||E_1||_1 \leq \sqrt{\delta}$ and $||E_2||_1 \leq 3\sqrt{\delta}$, by \eqref{eq: workhorse state restrictions}. 
This shows that \eqref{eq: converging sequence of states} is indeed Cauchy, and hence it converges to a density matrix $\rho$. Therefore, the normal states on $\caA$, induced by the density matrices \eqref{eq: converging sequence of states} form a Cauchy sequence as well and they have a limit. The limit is necessarily equal to $\psi$ since $\psi$ is obviously a limit point (hence the unique limit point)  in the weak$^*$-topology. We conclude that~$\psi$ is represented by~$\rho$.  
\end{proof}

As remarked in Section \ref{sec: relative charge for cut loop},  this lemma also follows from Corollary~2.6.11 in~\cite{BratRob}.  Unlike there, the proof given here yields quantitative bounds, which we collect now. We use the notations above, in particular $\delta(r)=||(\psi-\phi)_{{I_{r}^c}}||$. Then, taking $r'\to\infty$ in the proof of Lemma \ref{lem: condition for normality}, we have
\begin{equation} \label{eq: loc bound 1}
||\rho-\rho_{I_r} \otimes \Pi_{I_{r}^c}||_1 \leq  4\sqrt{\delta(r)},
\end{equation}
\begin{equation}\label{eq: loc bound 2}
||\rho- \Pi_{I_{r}^c}\rho \Pi_{I_{r}^c} ||_1 \leq  3\sqrt{\delta(r)}.
\end{equation}
From \eqref{eq: loc bound 2}, we also deduce 
\begin{equation} \label{eq: loc bound 3}
||\rho- \frac{\Pi_{I_{r}^c}\rho \Pi_{I_{r}^c}}{\Tr[\Pi_{I_{r}^c}\rho \Pi_{I_{r}^c}]}||_1\leq    12 \sqrt{\delta(r)},  
\end{equation}
provided that $\Tr[\Pi_{I_{r}^c}\rho \Pi_{I_{r}^c}]\neq 0$. The claim is trivial if $12 \sqrt{\delta(r)}\geq 2$. If $6 \sqrt{\delta(r)}< 1$, we denote $\tilde\rho = \Pi_{I_{r}^c}\rho \Pi_{I_{r}^c}$ and have that $\Vert\rho - (\Tr\tilde\rho^{-1})\tilde\rho\Vert_1\leq 3\sqrt{\delta(r)} + \vert\Tr\tilde\rho\vert^{-1}\Vert \tilde\rho\Vert_1\vert 1-\Tr\tilde\rho\vert$, which yields the claim since (\ref{eq: loc bound 2}) implies $\vert 1- \Tr\tilde\rho\vert\leq 3\sqrt{\delta(r)} \leq 1/2$ and further $\Tr\tilde\rho\geq 1/2$.

\subsection{The pumped state and its index}\label{sec: states cut loop}

In this section we analyze the pumped state that was introduced in Section \ref{sec: relative charge for cut loop} for the special case in which the basepoint of the loop is a pure $G$-invariant product $\phi$, as in the previous subsections. 
We fix this state $\phi$ in what follows, and we consider a $G$-invariant TDI $H$ that generates a loop $\phi\circ\alpha_H(\cdot)$, satisfying  $|||H|||_f\leq C_H$ for some $f\in\caF$.  As in Section \ref{sec: relative charge for cut loop}, 
we cut the TDI by defining $H'$ to contain only the terms $H_S$ supported on the left, $S \leq 0$,  obtaining the pumped state
\begin{equation}\label{eq: pumped state}
\psi= \phi\circ \alpha_{H'}(1).
\end{equation}
We first establish that $\psi$ is normal w.r.t\ $\phi$.
\begin{lemma}\label{lem: cut state}
There is $\hat{f}$ affiliated to $f$ and depending on $C_H$ such that 
\begin{equation}\label{eq: two-sided bound}
|| (\psi-\phi)_{I_r^c} || \leq  \hat{f}(r).
\end{equation}
Therefore, as a consequence of Lemma \ref{lem: condition for normality}, $\psi$ is normal w.r.t.\ $\phi$
\end{lemma}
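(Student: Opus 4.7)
The plan is to split the argument over the two pieces of $I_r^c = Y_- \cup Y_+$ with $Y_- = (-\infty, 1-r]$ and $Y_+ = [r, \infty)$, and to exploit that $H'$ is literally supported on $(-\infty, 0]$. Two algebraic facts would drive the first half of the argument. Since $H'_S = 0$ unless $S \subset (-\infty, 0]$, we have $[H'(u), A_+] = 0$ for every $A_+ \in \caA_{[1,\infty)}$, so the Heisenberg equation \eqref{eq: heisenberg} forces $\alpha_{H'}(s)[A_+] = A_+$. Likewise, for $A_- \in \caA_{(-\infty,0]}$, each commutator $[H'(u),A_-]$ lies in $\caA_{(-\infty,0]}$, so the Picard iteration defining $\alpha_{H'}$ stays inside $\caA_{(-\infty,0]}$. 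Combining these with the factorization of the product state $\phi$ across the disjoint regions $(-\infty,0]$ and $[r,\infty)$ gives, for simple tensors $A_- A_+$ with $A_\pm\in \caA_{Y_\pm}$,
\begin{equation*}
\psi(A_- A_+) = \phi\bigl(\alpha_{H'}(1)[A_-]\cdot A_+\bigr) = \phi\bigl(\alpha_{H'}(1)[A_-]\bigr)\phi(A_+) = \psi(A_-)\phi(A_+).
\end{equation*}

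Writing $\caA_{I_r^c} = \caA_{Y_-} \otimes \caA_{Y_+}$ and extending the identity above by linearity and continuity yields the exact factorization $\psi|_{I_r^c} = \psi|_{Y_-} \otimes \phi|_{Y_+}$; the same factorization holds trivially for $\phi$, so
\begin{equation*}
(\psi - \phi)|_{I_r^c} = \bigl(\psi|_{Y_-} - \phi|_{Y_-}\bigr) \otimes \phi|_{Y_+}.
\end{equation*}
Because $\phi|_{Y_+}$ is a state, the slice map $\mathrm{id}\otimes \phi|_{Y_+}:\caA_{I_r^c}\to\caA_{Y_-}$ is a contraction, which gives $\|(\psi - \phi)|_{I_r^c}\| \leq \|(\psi - \phi)|_{Y_-}\|$. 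The whole problem thus reduces to controlling how much $\psi$ differs from $\phi$ on the left half-line alone.

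For that last bound I would use the loop property $\phi = \phi \circ \alpha_H(1)$ to rewrite, for $A_- \in \caA_{Y_-}$,
\begin{equation*}
\psi(A_-) - \phi(A_-) = \phi\bigl((\alpha_{H'}(1) - \alpha_H(1))[A_-]\bigr),
\end{equation*}
and apply Lemma~\ref{lem: evolution by similar tdi}(ii) with $H_1 = H$, $H_2 = H'$. The difference $H-H'$ is anchored in $X = [1,\infty)$ (by the very definition \eqref{eq: tdi cut loop}), whereas $Y = Y_- = (-\infty, 1-r]$ satisfies $\dist(X,Y) = r$ and $X\cap Y = \emptyset$, so the lemma yields $\|(\alpha_{H'}-\alpha_H)(1)[A_-]\| \leq \|A_-\|\, C_H \hat{f}(1+r)$ for some $\hat f$ affiliated to $f$. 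Combined with the reduction above this completes the proof.

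The main conceptual point is the first step, namely the genuine factorization $\psi|_{I_r^c} = \psi|_{Y_-}\otimes \phi|_{Y_+}$, which transforms a two-sided estimate into a one-sided one. The only mildly technical item is the claim that $\alpha_{H'}$ preserves $\caA_{(-\infty,0]}$ in the infinite-volume limit; this is either direct from the Picard iteration, or cleanest by observing that $H'$ is equally well a TDI on the sub-spin-chain $\caA_{(-\infty,0]}$ and the restriction of the evolution is unambiguous. Once this is in place, the quasi-locality step is a black-box application of Lemma~\ref{lem: evolution by similar tdi}(ii) and nothing more is required.
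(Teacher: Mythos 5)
Your proof is correct and follows essentially the same route as the paper's: factorize $\psi$ across the cut at the origin (so that it agrees with $\phi$ on the right and the problem reduces to the left half-line), then invoke the loop property and Lemma~\ref{lem: evolution by similar tdi}(ii) with $H - H'$ anchored in $[1,\infty)$. You spell out the factorization step (via the Picard iteration and the slice map contraction) in more detail than the paper, which merely states it as ``clear,'' but the underlying argument is identical.
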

\begin{proof}
Clearly, $\psi$ is a product $\psi=\psi_{\leq 0}\otimes \psi_{\geq 1}$, and $\psi_{\geq 1}=\phi_{\geq 1}$. It follows that 
$$
||(\psi-\phi)_{I_r^c} ||= || (\psi-\phi)_{<2-r} ||=\sup_{A \in\caA_{<2-r}, ||A|| \leq 1}  |\psi(A)-\phi(A)|
$$
and we now estimate the last quantity:
\begin{equation}
| \psi(A) - \phi(A)| =|\phi( \alpha_{H'}(1)[A] - \alpha_H(1)[A]) | \leq  \hat{f}(r) ||A||
 \label{eq: onesided bound}
    \end{equation}
where the equality follows because  $\phi\circ\alpha_{H}(\cdot)$ is a loop,
and the inequality follows from Lemma~\ref{lem: evolution by similar tdi} item ii), with the input that $H-H'$ is anchored in $\{\geq 1\}$ and that all constants that appear can be bounded by functions of $C_H$.  
\end{proof}
 
Since $\psi$ is normal w.r.t.\ $\phi$, and the $G$-action is implemented by a unitary representation, see Lemma \ref{lem: gns},  we can indeed define the index $I(\phi,H)$ introduced in Section \ref{sec: relative charge for cut loop}.  We will now derive some topological properties of the map $H\mapsto I(\phi,H) $.

\subsubsection{Properties of the index map $I(\phi,\cdot)$}

We fix again the product state $\phi$  and its GNS triple. We also fix $f\in\caF$ and a constant $C_H$.  
We let  $\caS=\caS(C_H,f,\phi)$ be the set of $G$-invariant TDIs such that
\begin{enumerate}
\item $|||H|||_f \leq C_H$
\item $\phi\circ\alpha_H(\cdot)$ is a loop
\end{enumerate}
 To any $H \in \caS$, we associate the pumped state $\psi=\phi\circ\alpha_{H'}(1)$. 
 Let  $\caT_{1,*}(\caH)$ the set of pure $G$-invariant density matrices on $\caH$. In particular,  $\rho_\psi$, the density matrix representing the state $\psi$ on $\caH$, is in $\caT_{1,*}(\caH)$.
 We define the function
 $$
 \hat\rho: \caS\to \caT_{1,*}(\caH): H\mapsto \rho_\psi.
 $$
Next, let $g\mapsto U(g)$ be the unitary representation on $\caH$ given in Lemma \ref{lem: gns}. Since $\rho$ is pure, it is a one-dimensional projection and the $G$-invariance implies that $U(g)$ reduces to a one-dimensional representation on the range of the projection. In particular $\Tr( \rho U(g))$ is a complex number on the unit circle. We set 
$$
\hat{h}:  \caT_{1,*}(\caH) \to H^1(G): \rho \mapsto \hat{h}(\rho)
$$
where $\hat{h}(\rho)$ is the homomorphism $G\mapsto \bbS^1$ defined by 
\begin{equation}\label{expression for charge}
 e^{i \hat{h}(\rho)(g)}=\Tr( \rho U(g)) \in U(1).
\end{equation}
Because of the specific choice of $U(\cdot)$, we find that $\hat{h}(\rho_\psi)=h_{\psi/\phi}$. 
We consider $\caS,\caT_{1,*}(\caH), H^1(G)$ with  topologies induced by, respectively, the norm $|||\cdot|||^{(1)}_f$, the trace norm $||\cdot ||_1$, and the metric $d_\infty(h,h')=\sup_{g\in G}|h(g)-h'(g)|_{\bbS^1}$.  With these choices, the map $\hat{h}$ is Lipschitz. Indeed, for any $\rho^{(1)},\rho^{(2)} \in \caT_{1,*}(\caH) $,
$$
d_{\infty}\left(\hat{h}(\rho^{(1)}),\hat{h}(\rho^{(2)})\right)
\leq 
C_{\bbS^1} \sup_{g\in G} \left|\Tr [\rho^{(1)} U(g)] -\Tr [\rho^{(2)} U(g)] \right| \leq C_{\bbS^1} ||\rho^{(1)}-\rho^{(2)}||_1. 
$$
with $ C_{\bbS^1}=\sup_{\theta\in [0,\pi] } \frac{\theta}{ |e^{i\theta}-1|}  <\infty $. 
We now show 
\begin{lemma}\label{lem: uniform cont index}
The maps
$ \hat{\rho}: \caS \to \caT_1(\caH)$ and 
$$
I(\phi,\cdot)=\hat h\circ \hat\rho:   \caS \to H^1(G)  
$$
are uniformly continuous. 
\end{lemma}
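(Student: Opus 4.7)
The plan is to reduce uniform continuity of $I(\phi,\cdot)$ to that of $\hat\rho$, exploiting the already established Lipschitz property of $\hat h$. Given $H_1, H_2 \in \caS$, write $\psi_i = \phi \circ \alpha_{H_i'}(1)$ and $\rho_i = \hat\rho(H_i)$. The strategy is to separate a finite central window $I_r$ from its complement, bound each piece, and arrange that every constant depends only on $(f, C_H)$ and not on the individual $H_i \in \caS$.

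Applying the localization estimate~\eqref{eq: loc bound 1} to each of $\rho_1, \rho_2$ yields, for any $r \geq 2$,
$$
|| \rho_1 - \rho_2 ||_1 \leq || \rho_{1,I_r} - \rho_{2,I_r} ||_1 + 4\sqrt{\delta_1(r)} + 4\sqrt{\delta_2(r)},
$$
where $\delta_i(r) = ||(\psi_i-\phi)_{I_r^c}||$. By Lemma~\ref{lem: cut state} each $\delta_i(r)$ is dominated by some $\hat f(r)$ with $\hat f \in \caF$ depending only on $f$ and $C_H$, so the tails are uniformly controlled by $8\sqrt{\hat f(r)}$.

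For the bulk term I use that $\pi_{I_r}$ identifies $\caB(\caH_{I_r})$ with $\caA_{I_r}$, so by trace--operator duality
$$
|| \rho_{1,I_r} - \rho_{2,I_r} ||_1 = \sup_{A \in \caA_{I_r},\,||A||\leq 1} |\phi(\alpha_{H_1'}(1)[A] - \alpha_{H_2'}(1)[A])|.
$$
Since $H_1' - H_2'$ is anchored in $X = \{j \leq 0\}$ by construction and $Y = I_r$ satisfies $\dist(X,Y) = 0$ and $|X\cap Y| \leq r$, Lemma~\ref{lem: evolution by similar tdi}(ii) gives
$$
||\alpha_{H_1'}(1)[A] - \alpha_{H_2'}(1)[A]|| \leq C \cdot r \cdot ||A|| \cdot |||H_1' - H_2'|||_{X,f}^{(1)},
$$
with $C$ depending only on $f$ and $C_H$. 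Since cutting only discards interaction terms from the sum defining $||\cdot||_f$, one has $|||H_1' - H_2'|||_{X,f}^{(1)} \leq |||H_1 - H_2|||_f^{(1)}$, and combining everything produces the master estimate
$$
|| \rho_1 - \rho_2 ||_1 \leq 8\sqrt{\hat f(r)} + C \cdot r \cdot |||H_1 - H_2|||_f^{(1)}.
$$

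Uniform continuity of $\hat\rho$ is then immediate: given $\epsilon > 0$, first pick $r$ so large that $8\sqrt{\hat f(r)} < \epsilon/2$ (possible since $\hat f \in \caF$), then take $\delta = \epsilon/(2Cr)$; both $r$ and $C$ depend only on $(\epsilon, f, C_H)$, never on the specific $H_i \in \caS$. The uniform continuity of $I(\phi,\cdot) = \hat h \circ \hat\rho$ then follows by composing with the already established Lipschitz $\hat h$. There is no real obstacle; the only point that genuinely needs the definition of $\caS$ (rather than just a bound like $|||H_i|||_f < \infty$) is the need for a uniform bound $|||H_i|||_f \leq C_H$, which is precisely the hypothesis of Lemma~\ref{lem: evolution by similar tdi}(ii) that guarantees the constant $C$ does not depend on the chosen pair.
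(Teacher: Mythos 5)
Your proof is correct and takes essentially the same route as the paper: reduce to uniform continuity of $\hat\rho$ via the Lipschitz property of $\hat h$, split $||\rho_1-\rho_2||_1$ into a finite-window piece plus tails controlled by \eqref{eq: loc bound 1} and Lemma~\ref{lem: cut state}, bound the window piece via trace--operator duality and Lemma~\ref{lem: evolution by similar tdi}(ii), and finish by choosing $r$ before $\delta$. The paper's writeup is slightly less explicit about the anchoring set $X=\{\leq 0\}$ and the comparison $|||H_1'-H_2'|||^{(1)}_{X,f}\leq |||H_1-H_2|||^{(1)}_f$, both of which you make precise, but the argument is the same.
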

\noindent Note that the value of $\hat h\circ \hat\rho$ is the same as the value of the index in Theoem~\ref{thm: pump index} but the map $h$ there differs from $\hat h\circ \hat\rho$ here because it is a function of a loop of states there, while it is here a function of the Hamiltonians generating a particular loop with fixed product basepoint.
\begin{proof}
We have already remarked that $\hat{h}$ is Lipschitz, so both claims will be proven once we show that $\hat{\rho}$ is uniformly continuous. 
Consider $H^{(1)},H^{(2)}\in \caS$ and let $\rho^{(j)}=\hat{\rho}(H^{(j)})$.
We have
\begin{align*}
 ||\rho^{(1)}-\rho^{(2)}||_1  &\leq    ||\rho^{(1)}_{I_r} \otimes\Pi_{I_r^c}-\rho^{(2)}_{I_r}\otimes\Pi_{I_r^c}||_1 +  ||\rho^{(1)}_{I_r}\otimes\Pi_{I_r^c}-\rho^{(1)}||_1  +   ||\rho^{(2)}_{I_r}\otimes\Pi_{I_r^c}-\rho^{(2)}||_1 \\
  &\leq  ||\rho^{(1)}_{I_r} -\rho^{(2)}_{I_r}||_1   +  8\sqrt{\hat{f}(r)}
\end{align*}
where we used the bound \eqref{eq: loc bound 1} with $\delta(r)=\hat{f}(r)$, by Lemma \ref{lem: cut state}. 
To bound  the difference
$||\rho^{(1)}_{I_r} -\rho^{(2)}_{I_r}||_1 $, we rewrite it as
\begin{align}
&\sup_{A \in \caA_{I_r}, ||A||\leq 1} |\phi\circ\alpha_{(H^{(1)})'}(1)[A]-\phi\circ\alpha_{(H^{(2)})'}(1)[A]| 
\end{align}
which is bounded by $ \hat{f}(1)|I_r| |||H^{(1)}-H^{(2)}|||^{(1)}_f   $ by Lemma \ref{lem: evolution by similar tdi}. 
We hence get 
$$
 ||\rho^{(1)}-\rho^{(2)}||_1 \leq    \hat{f}(1)|I_r|\, |||H^{(1)}-H^{(2)}|||^{(1)}_f +   8 \sqrt{\hat f(r)}        
$$
It it important to note that, since $f$ is fixed, $\hat{f}$ can be fixed as well.  Since $\hat f(r) \to 0$ as $r\to 0$, we can now find a modulus of continuity for the function $\hat{\rho}$.
\end{proof}

Finally, we formulate a proposition that will be used crucially in the proof of homotopy invariance of the index map $I(\phi,\cdot)$. 
\begin{proposition}\label{prop: finite index set}
 There is an $ \epsilon>0$, such that, if $|||H^{(1)}-H^{(2)}|||^{(1)}_f \leq \epsilon$ for $H^{(1)}, H^{(2)}\in\caS$, then $I(\phi,H^{(1)})=I(\phi,H^{(2)})$. 
\end{proposition}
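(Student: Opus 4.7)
The plan is to combine the uniform continuity statement of Lemma~\ref{lem: uniform cont index} with the discreteness of $H^1(G)$ noted in Section~\ref{sec: g charge}: since the index map takes values in a discrete space while depending continuously on $H$, a sufficiently small perturbation cannot change its value.

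Concretely, I would first observe that the metric $d_\infty$ on $H^1(G)$ is translation invariant with respect to the abelian group structure, since $|h(g) - h'(g)|_{\bbS^1}$ depends only on the difference $(h-h')(g) \in H^1(G)$. Hence discreteness of $H^1(G)$ in the $d_\infty$ topology, i.e. the fact that $\{0\}$ is open (established via~\cite{hofmann1965topologische} and already used in Proposition~\ref{prop: zerodim}), yields a uniform separation: there exists $\delta > 0$ such that for any two elements $h, h' \in H^1(G)$, the inequality $d_\infty(h, h') < \delta$ forces $h = h'$.

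Next, Lemma~\ref{lem: uniform cont index} furnishes a modulus of continuity for $I(\phi, \cdot): \caS \to H^1(G)$, where $\caS$ carries the $|||\cdot|||^{(1)}_f$ norm and $H^1(G)$ carries $d_\infty$. Given the separation constant $\delta$ from the previous step, uniform continuity produces an $\epsilon > 0$, possibly depending on $C_H$ and $f$, such that $|||H^{(1)} - H^{(2)}|||^{(1)}_f \leq \epsilon$ implies $d_\infty(I(\phi, H^{(1)}), I(\phi, H^{(2)})) < \delta$. Combining the two steps gives $I(\phi, H^{(1)}) = I(\phi, H^{(2)})$, which is the claim.

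There is no serious obstacle: all the analytic work has already been absorbed into Lemma~\ref{lem: uniform cont index} (which in turn rested on the Lieb--Robinson-type bounds of Section~\ref{sec: technical preliminaries} and the locality estimate~\eqref{eq: loc bound 1} for the pumped density matrix), and the role of this proposition is simply to record the elementary topological consequence that a continuous map into a discrete space is locally constant. The only point requiring minor care is the translation invariance of $d_\infty$, which upgrades discreteness at $0$ to a uniform spectral gap on the whole of $H^1(G)$.
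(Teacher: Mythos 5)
Your proposal is correct and follows essentially the same route as the paper: both proofs invoke the homogeneity (translation invariance) of $d_\infty$ to upgrade the pointwise separation coming from discreteness of $H^1(G)$ to a uniform one, and then conclude via the uniform continuity of $I(\phi,\cdot)$ established in Lemma~\ref{lem: uniform cont index}.
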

\begin{proof}
Let $h\in H^1(G)$. The discreteness of $H^1(G)$, see Section~\ref{sec: g charge}, implies that there is an $\epsilon(h)>0$ such that
$d_{\infty}(h,h')\leq \epsilon(h)$ implies $h=h'$.  However, since the metric on $H^1(G)$ is homogenous, namely $d_{\infty}(h,h') = d_{\infty}(h-h',0)$, we find  $\epsilon(h)=\epsilon(0)$ and one can choose $\epsilon$ uniformly on $H^1(G)$.  The claim of the proposition follows then from the uniform continuity of the index map~$I(\phi,\cdot)$ on $\caS$, i.e.\ from Lemma \ref{lem: uniform cont index}. 
\end{proof}

\subsection{Connecting states by adiabatic flow}

States that are mutually normal, can be connected via an almost local evolution whose generating TDI is anchored in a finite set. In other words, the almost local evolution consists of inner automorphisms. In this section, we establish this statement in a quantitative way, but restricted to the specific setting in which we will need it.  A result very similar to the crucial Lemma \ref{lem: parallel transport} below already appeared in \cite{kapustin2021classification}.

 \subsubsection{Parallel transport for $0$-dimensional systems}\label{Parallel in 0d}
 
 We first exhibit how to connect states on $0$-dimensional systems, i.e.\ states on $\caB(\caH)$, with $\caH$ a Hilbert space. We say that a bounded, measurable function $[0,1]\ni s\mapsto E(s)$ with $E(s)=E^*(s)\in \caB(\caH) $ is a zero-dimensional TDI and we set $|||E|||=\sup_{s\in [0,1]} ||E(s)||$.
The evolution $s\mapsto \alpha_E(s)$ is then defined as the family of automorphisms that is the unique solution to the equation 
$$\alpha_E(s)=\id+i\int_0^sdu \alpha_E(s)\{[E(u),\cdot]\}.$$ 
\begin{lemma}\label{lem: parallel transport simple}
Let $\nu,\omega$ be pure normal states on $\caB(\caH)$.
Then there exist zero-dimensional TDI $E$, such that 
\begin{enumerate}
\item   $|||E||| < 8 || \nu-\omega ||$
\item $\nu = \omega\circ \alpha_E(1)$
\item  If both $\nu,\omega$ are invariant under a $G$-action by automorphisms $\gamma(g)$, and $h_{\nu/\omega}=0$, then $E$ can be chosen $G$-invariant as well.
\end{enumerate}
\end{lemma}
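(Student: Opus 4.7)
The plan is to take $E$ constant in time and construct it explicitly as the generator of the two-dimensional unitary rotation carrying a chosen representative of $\omega$ into a chosen representative of $\nu$. I would start by fixing unit vectors $\Omega,\Psi\in\caH$ with $\omega(\cdot)=\langle\Omega,\cdot\,\Omega\rangle$ and $\nu(\cdot)=\langle\Psi,\cdot\,\Psi\rangle$, using the remaining phase freedom on $\Psi$ to arrange $\alpha:=\langle\Omega,\Psi\rangle\in[0,1]$. If $\alpha=1$ the two states coincide and one takes $E\equiv 0$. Otherwise set $\theta:=\arccos\alpha\in(0,\pi/2]$ and $\tilde\Omega':=(\Psi-\alpha\Omega)/\sin\theta$, a unit vector orthogonal to $\Omega$, and define the constant-in-$s$ self-adjoint operator
\begin{equation*}
E \;:=\; i\theta\,\bigl(|\tilde\Omega'\rangle\langle\Omega| \,-\, |\Omega\rangle\langle\tilde\Omega'|\bigr) \in \caB(\caH),
\end{equation*}
viewed as a zero-dimensional TDI in the sense just defined above the lemma.

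Items (i) and (ii) should then follow from short computations. Writing $P$ for the orthogonal projection onto $\mathrm{span}(\Omega,\tilde\Omega')$, one checks $E^2=\theta^2 P$, so $E$ has eigenvalues $\pm\theta$ on its support and $\|E\|=\theta$. Taylor-expanding the exponential then gives $e^{-iE}\Omega = \cos\theta\,\Omega + \sin\theta\,\tilde\Omega' = \Psi$, which translates into $\omega\circ\alpha_E(1)=\nu$ and proves (ii). For (i), a direct diagonalisation of $|\Psi\rangle\langle\Psi|-|\Omega\rangle\langle\Omega|$ yields $\|\nu-\omega\|=2\sin\theta$, and combining this with the elementary inequality $\theta\le(\pi/2)\sin\theta$ on $[0,\pi/2]$ gives $|||E|||=\theta\le(\pi/4)\|\nu-\omega\|$, comfortably inside the required factor~$8$.

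The only step that requires real care is (iii). Under the added hypotheses, Lemma~\ref{lem: gns} supplies a strongly continuous representation $U(\cdot)$ with $U(g)\Omega=z_\omega(g)\Omega$ and $U(g)\Psi=z_\nu(g)\Psi$ for homomorphisms $z_\omega,z_\nu:G\to\bbS^1$, and the hypothesis $h_{\nu/\omega}=0$ is exactly $z_\omega=z_\nu=:z$. Consequently $\langle\Omega,\Psi\rangle$ is $G$-invariant, so our real choice $\alpha\in[0,1]$ is compatible with $U$, and from the definition of $\tilde\Omega'$ one reads off $U(g)\tilde\Omega'=z(g)\tilde\Omega'$. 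The two phases $z(g)$ and $\overline{z(g)}$ then cancel in each rank-one summand of $E$, yielding $U(g)EU(g)^*=E$ and hence $\gamma(g)\circ\alpha_E(s)=\alpha_E(s)\circ\gamma(g)$ for all $s$. I expect this phase-tracking to be the only mildly delicate step; everything else is entirely elementary finite-dimensional linear algebra on the two-plane $\mathrm{span}(\Omega,\tilde\Omega')$.
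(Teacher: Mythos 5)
Your argument is correct and takes a genuinely different route from the paper. The paper builds a \emph{time-dependent} path $\Psi(s)=y(s)\Omega+\sqrt{1-y(s)^2}\,\Psi_\perp$ (with the reparametrization $y(s)=a+(1-a)(1-(1-s)^2)$ chosen precisely so that $\dot\Psi$ stays bounded by $2\sqrt{1-a}$ near the endpoint), and then takes $K(s)$ to be the Kato adiabatic generator $i(P\dot P-\dot P P)$ of the projector $P(s)=|\Psi(s)\rangle\langle\Psi(s)|$; the factor $8$ in the statement is exactly what falls out of this estimate. You instead take $E$ \emph{constant in $s$}: the self-adjoint generator of the planar rotation in $\mathrm{span}(\Omega,\tilde\Omega')$ carrying $\Omega$ to $\Psi$ in unit time, with $e^{-iE}\Omega=\Psi$ verified directly from $E^2=\theta^2P$. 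This avoids the whole device of the reparametrized path and the adiabatic generator, and gives the sharper constant $\tfrac{\pi}{4}$ (or $\tfrac{\pi}{2}$ if one adopts the paper's own normalization $\|\nu-\omega\|=\sqrt{1-a^2}$; in fact $\|\nu-\omega\|$ as a trace norm equals $2\sqrt{1-a^2}=2\sin\theta$, matching your computation, so the paper's $8$ is even more generous than advertised). The verification of $G$-invariance is essentially the same in both arguments: once $z_\nu=z_\omega$, the vector $\tilde\Omega'$ (resp.\ $\Psi_\perp$) transforms with the same scalar phase, so each rank-one piece of the generator is invariant. One small cosmetic point in your write-up: the remark that ``$\langle\Omega,\Psi\rangle$ is $G$-invariant'' is automatic (any inner product of $U(g)$-images is unchanged) and not really needed to justify the reality of $\alpha$; the relevant fact is simply $U(g)\tilde\Omega'=z(g)\tilde\Omega'$, which you do derive. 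The only genuine difference with the statement is the degenerate case $\nu=\omega$, where the strict inequality $|||E|||<8\|\nu-\omega\|$ reads $0<0$; this edge case affects the paper's proof identically and is clearly an oversight in how the estimate is phrased rather than a defect in either proof.
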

The proof of this lemma uses only basic linear algebra and we postpone it to Appendix \ref{app: tricks}.

\subsubsection{Parallel transport for states normal to $\phi$}
Let $\psi$ be a pure state normal w.r.t.\ the pure product state $\phi$.
Recall the definition of the intervals $I_r$ in \eqref{eq: intervals}. 
\begin{lemma}\label{lem: parallel transport}
Let $\psi$ be a  pure state and $\phi$ a pure product state, such that
\begin{equation}\label{eq: one-sided bounds repeat}
|| (\psi-\phi)_{I_r^c} || \leq f(r),\qquad f \in \caF.
\end{equation}
Then there exist a  TDI $K$ such that 
\begin{enumerate}
\item   $|||K|||_{\{0,1\},\hat{f}} <1$ for $\hat{f}$ affiliated to $f$.  
\item $\psi = \phi\circ \alpha_K(1)$
\item  If both $\phi,\psi$ are $G$-invariant, with zero relative charge $h_{\psi/\phi}=0$, then $K$ can be chosen $G$-invariant.
\end{enumerate}
\end{lemma}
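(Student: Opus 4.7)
The plan is to use the GNS representation of $\phi$ to recast $\psi$ as a pure density matrix $\rho$ (this is exactly Lemma~\ref{lem: condition for normality}), then approximate $\rho$ by a sequence of pure density matrices each localized in the intervals $I_r$, and finally chain the zero-dimensional transport of Lemma~\ref{lem: parallel transport simple} along this sequence, performing the $r$-th step over a time window $[s_r,s_{r+1}]\subset[0,1]$ whose length is tuned to the decay of $f$.

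First I would fix $r_0$ large enough that $12\sqrt{f(r_0)}<1/2$, so that $\mathrm{Tr}(\Pi_{I_r^c}\rho)\geq 1/2$ for $r\geq r_0$ by the bound \eqref{eq: loc bound 2}.  For such $r$, define the pure state
\[
 \tilde\rho_r \;=\; \frac{\Pi_{I_r^c}\rho \Pi_{I_r^c}}{\mathrm{Tr}(\Pi_{I_r^c}\rho\Pi_{I_r^c})},
\]
which has the form $|\xi_r\rangle\langle\xi_r|$ with $\xi_r\in\caH_{I_r}\otimes\bbC\Omega_{I_r^c}$, and so corresponds, via $\pi_{I_r}$, to a pure state $\tilde\psi_r$ on $\caA_{I_r}$ tensored with $\phi|_{\caA_{I_r^c}}$.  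The bounds \eqref{eq: loc bound 1}--\eqref{eq: loc bound 3} give $\|\tilde\rho_{r+1}-\tilde\rho_r\|_1\leq c\sqrt{f(r)}$ and $\|\tilde\rho_{r}-\rho\|_1\to 0$.  Lemma~\ref{lem: parallel transport simple}, applied on $\caB(\caH_{I_{r+1}})\cong\caA_{I_{r+1}}$, then supplies, for each $r\geq r_0$, a (zero-dimensional) TDI $E^{(r)}$ supported in $\caA_{I_{r+1}}$ with $|||E^{(r)}|||\leq 8c\sqrt{f(r)}$ and $\tilde\psi_{r+1}=\tilde\psi_r\circ\alpha_{E^{(r)}}(1)$.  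An analogous initial step connects $\phi$ to $\tilde\psi_{r_0}$ inside $\caA_{I_{r_0}}$ with some bounded TDI $E^{(0)}$.

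Next I would assemble these pieces into a single TDI $K$ on $[0,1]$.  Choose a summable sequence $T_r>0$ with $T_0+\sum_{r\geq r_0}T_r\leq 1$, set $s_r=T_0+\sum_{r_0\leq r'<r}T_{r'}$, and, on $[s_r,s_{r+1}]$, define $K(s)$ to be the interaction having a single nonzero term $K(s)_{I_{r+1}}=\pi_{I_{r+1}}^{-1}\!\bigl(E^{(r)}((s-s_r)/T_r)\bigr)/T_r$, and similarly for the initial interval.  By construction $K$ is anchored in $\{0,1\}$, and $\phi\circ\alpha_K(1)=\psi$ (composing the evolutions gives $\tilde\psi_{r_0}\to \tilde\psi_{r_0+1}\to\cdots\to\psi$, with the last step a $\|\cdot\|$-limit).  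The anchored norm at time $s\in[s_r,s_{r+1}]$ reduces to a single term, so
\[
|||K|||_{\{0,1\},\hat f}\;=\;\sup_{r}\frac{\|K(s)_{I_{r+1}}\|}{\hat f(2r-1)}\;\lesssim\;\sup_r \frac{\sqrt{f(r)}}{T_r\,\hat f(2r-1)}.
\]
The main technical step is then to choose $T_r$ and $\hat f$ jointly so that this supremum is $<1$ while $\sum T_r\leq 1$.  Taking $T_r=1/(2r^2)$ say (up to normalization) forces $\hat f(2r-1)\gtrsim r^{2}\sqrt{f(r)}$; since $r^2\sqrt{f(r)}$ decays faster than any polynomial, and preserves stretched-exponential decay of the form $e^{-c r^\beta}$, one can interpolate/majorize to obtain a non-increasing $\hat f\in\caF$ affiliated to $f$ in the sense of Definition~\ref{def: derived function}.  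This is where most of the bookkeeping lives, and it is the main (essentially only) obstacle in the proof.

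Finally, for item (iii), I would verify $G$-invariance of each approximant.  Since $\phi$ is $G$-invariant and the action is on-site, $\Omega_{I_r^c}$ is $G$-invariant and so is $\Pi_{I_r^c}$; thus $\tilde\rho_r$ is $G$-invariant.  The zero-charge hypothesis $h_{\psi/\phi}=0$ means that the representative $\Psi$ of $\psi$ satisfies $U(g)\Psi=\Psi$ (since $U(g)\Omega=\Omega$).  Because $\xi_r\propto\Pi_{I_r^c}\Psi$ and $U(g)$ commutes with $\Pi_{I_r^c}$, we get $U(g)\xi_r=\xi_r$, so $h_{\tilde\psi_r/\phi}=0$, and by additivity $h_{\tilde\psi_{r+1}/\tilde\psi_r}=0$.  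Lemma~\ref{lem: parallel transport simple}(iii) therefore produces each $E^{(r)}$ $G$-invariantly, and $K$ inherits $G$-invariance term by term.
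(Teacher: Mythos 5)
Your proposal is correct and follows essentially the same route as the paper's proof: project the pure density matrix $\rho$ representing $\psi$ onto the factorized form $\mu_r\otimes\Pi_{I_r^c}$, chain the zero-dimensional parallel transport of Lemma~\ref{lem: parallel transport simple} between consecutive intervals, and assemble the pieces via time-reparametrization on windows of length $\propto 1/r^2$. The only (harmless) differences are cosmetic: the paper sets $\tilde\mu_r=\Pi$ for $r\leq r_0$ rather than performing one initial "big step," and its treatment of item (iii) is left to the reader whereas you spell out the $G$-invariance of each $\tilde\rho_r$ explicitly.
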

 \begin{proof}
By Lemma~\ref{lem: condition for normality}, $\psi$ is normal w.r.t. $\phi$. Let $\rho$ be the pure density matrix on $\caH$ representing $\psi$.
We set 
\begin{equation} \label{eq: def rho r}
\tilde\mu_r= \frac{\Pi_{I_{r}^c}\rho \Pi_{I_{r}^c}}{\Tr[\Pi_{I_{r}^c}\rho \Pi_{I_{r}^c}]}, \qquad \text{provided  $\Tr[\Pi_{I_{r}^c}\rho \Pi_{I_{r}^c}] \neq 0$  }  
\end{equation}  
By the bounds \eqref{eq: loc bound 2} and \eqref{eq: loc bound 3} with $\delta(r)=f(r)$ and the fact that $f$ is non-increasing, there is a finite $r_0>0$ such that for $r\geq r_0$, the denominator  \eqref{eq: def rho r} is not zero.
For $1 \leq r\leq r_0$, we set simply $\tilde\mu_r=\Pi$ (which coincides with the definition \eqref{eq: def rho r} for $r=1$).  In any case, the density matrices $\tilde\mu_r$ are pure, by the purity of $\rho$, and  of the form $\tilde\mu_r=\mu_r\otimes \Pi_{I_r^c}$ with $\mu_r$ pure as well. 
We have
\begin{align*}
||\mu_{r+1}-\mu_r\otimes \Pi_{I_{r+1}\setminus I_r}||_1 &= ||\tilde\mu_{r+1}-\tilde\mu_r||_1 \\
 &\leq ||\tilde\mu_r-\rho||_1 +||\tilde\mu_{r+1}-\rho||_1 \\
 &\leq  4\sqrt{f(r)}+4\sqrt{f(r+1)} \leq 8\sqrt{f(r)}
\end{align*}
by \eqref{eq: loc bound 3} with  $\delta(r)=f(r)$.

We now apply Lemma \ref{lem: parallel transport simple} to obtain a zero-dimensional TDI $E_r(\cdot)$ of operators on $\caH_{I_{r+1}}$ (see Section~\ref{sec:GNS of product state}) transporting $\mu_r\otimes \Pi_{I_{r+1}\setminus I_r}$ into  $\mu_{r+1}$, satisfying the bound 
\begin{equation}\label{eq: bound on operators l}
\sup_{s}||E_r(s)|| \leq 8 \sqrt{f(r)}
\end{equation}
 and we identify $E_r(\cdot)$ with $E_r(\cdot)\otimes \id_{I_{r+1}^c}$ so that $E_r(\cdot)$ are viewed as operators on $\caH$ transporting $\rho_r$ into $\rho_{r+1}$. Since the restricted GNS representation $\pi|_{\caA_X}:\caA_X\to \caB(\caH_X)$ is bijective for $X$ finite, we can define
$$
K_r(s)=\pi^{-1}(E_r(s))  \in \caA_{I_{r+1}}
$$
It remains to assemble this family of time-dependent operators into a single TDI.
Let $T_r \subset [0,1]$ be half-open intervals $[t,t')$ of length $(\pi^2/6)^{-1} \times 1/r^2$ such that $\min T_1=0$ and $\sup T_{r}=\min T_{r+1}$. Note that $\cup_{r} T_r=[0,1)$. Then, for $s\in [0,1]$,
$$
K_{S}(s)= \begin{cases}  \frac{1}{|T_r|}  K_r(\frac{s}{|T_r|})   &    S=I_{r+1}, s\in T_r \quad \text{for some $r\in \bbN^+$} \\
0 & \text{otherwise}. \end{cases}
$$
By construction, $\phi\circ\alpha_K(\sup T_r)$ is the state corresponding to the density matrix $\rho_{r+1}$ and so $\phi\circ\alpha_K(\sup T_r)\to \psi$ as $r\to\infty$. The bound $ |||K|||_{\{0,1\},\hat{f}}<1$ now follows from \eqref{eq: bound on operators l}.  The other claims simply carry over from Lemma~\ref{lem: parallel transport simple}. 
\end{proof}

 \section{A loop with trivial index and product basepoint is homotopic to a short loop}  \label{sec: trivial index loop is short}
 
By a \emph{short} loop, we mean a loop with a the generator of small norm. 
\begin{proposition}\label{prop: loop contractible to small loop}
Let $\phi$ be a $G$-invariant pure product state on $\caA$ and let $H$ be a $G$-invariant TDI such that $\phi\circ \alpha_H(\cdot)$ is a loop with trivial index, $I(\phi,H)= 0$. Assume that $|||H |||_f \leq C_H$ for some $f\in\caF$. Then there exists $G$-invariant TDIs $\widetilde H^{(R)}$, with $R\in 2\bbN$ such that
$\phi\circ\alpha_{\widetilde H^{(R)}}(\cdot)$ is a loop that is $G$-homotopic to $\phi\circ\alpha_{H}(\cdot)$, and such that
 $$
 \lim_{R\to\infty} \nor \widetilde H^{(R)}\nor_{\hat{f}} =0.
 $$  
 for $\hat{f}$ affiliated to $f$ and depending on $C_H$. The homotopy can be chosen to have fixed basepoint~$\phi$. 
\end{proposition}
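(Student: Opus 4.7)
The plan is to deform the loop in two stages: first, cut the TDI $H$ at a sequence of widely-spaced points so the loop decouples into a tensor product of independent short loops on finite blocks, using the triviality of the index to re-close the loop locally at every cut; second, contract each block loop independently to a very short loop in the finite-dimensional block algebra.

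More concretely, fix $R \in 2\bbN$ and introduce cuts at the sites $\{kR : k \in \bbZ\}$. Let $H^{R}_{\mathrm{block}}$ be the TDI obtained from $H$ by keeping only those terms $H_S$ with $S$ contained in a single block $[kR, (k+1)R)$. By the fast decay of $||H||_f$, the discrepancy $H - H^{R}_{\mathrm{block}}$ is anchored in the set of cuts and has $|||\cdot|||_{\hat f}$-norm controlled by $\hat f(R/2)$, hence vanishing as $R\to\infty$. At each cut $kR$, consider the one-sided pumped state $\psi_k = \phi\circ\alpha_{H^{(k,L)}}(1)$ with $H^{(k,L)}$ the restriction of $H$ to $\{\leq kR\}$. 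By Lemma~\ref{lem: cut state} each $\psi_k$ is normal with respect to $\phi$, and translation to the cut $0$ (or rather, direct inspection of the GNS-level argument) together with the hypothesis $I(\phi,H)=0$ gives $h_{\psi_k/\phi}=0$. Lemma~\ref{lem: parallel transport} then yields a $G$-invariant TDI $K_k$ anchored at $\{kR,kR+1\}$ with $|||K_k|||_{\{kR,kR+1\},\hat f} < 1$ and $\phi\circ\alpha_{K_k}(1)=\psi_k$. Using Lemma~\ref{lem: local perturbation tdi} and Lemma~\ref{lem: inverse small diff} (applied cut by cut, since the corrections at different cuts are mutually disjoint by Lemma~\ref{lem: loc and liebrobinson}(ii)), this produces an intermediate $G$-invariant TDI $\hat H^R$ such that $\phi\circ\alpha_{\hat H^R}$ is a loop and such that $\hat H^R = \bigoplus_k \hat H^R_k$, where each $\hat H^R_k$ is supported in the block $B_k=[kR,(k+1)R)$ (up to the anchored correction) and generates a block-local loop at the block-product state $\phi|_{B_k}$.

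Next I construct the homotopy from $\phi\circ\alpha_H(\cdot)$ to $\phi\circ\alpha_{\hat H^R}(\cdot)$. The linear interpolation of TDIs $H_\lambda = H + \lambda((\hat H^R - H))$ does not automatically preserve loop closure, so I use the two-generator formulation of homotopy from Section~\ref{sec: homotopy}: the $s$-generator is $H_\lambda$, and the $\lambda$-generator $F_s$ is obtained from Lemma~\ref{lem: inverse small diff} applied to the pair $(H_\lambda,H)$, with an additional correction at the cuts of the type produced in Step~1 to guarantee that $\phi\circ\alpha_{H_\lambda}(1)=\phi$ throughout; the uniform bounds~\eqref{Homotopy: Uniform bound} follow from Lemma~\ref{lem: loc and liebrobinson} and the $\hat f$-bound on the corrections.

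It remains to shrink each decoupled block loop to something of small norm. Each block loop is a loop of pure $G$-invariant states in the finite-dimensional algebra $\caA_{B_k} \simeq \caB(\caH_{B_k})$ with basepoint $\phi|_{B_k}$; by Proposition~\ref{prop: zerodim}, item $i)$, any such loop with trivial relative charge is contractible through a continuous family of $G$-invariant pure states, hence by Lemma~\ref{lem: parallel transport simple} through a family of zero-dimensional $G$-invariant TDIs. Applying this contraction on each block and stopping the contraction at parameter $\lambda$ close to $1$, I obtain a $G$-invariant TDI $\widetilde H^{(R)} = \bigoplus_k \widetilde H^{(R)}_k$ whose block pieces all have generator norm at most, say, $1/R$; together with Lemma~\ref{lem: loc and liebrobinson}(ii) this gives $\nor \widetilde H^{(R)} \nor_{\hat f} \to 0$. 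Concatenating this contraction with the homotopy from the first part (using Lemma~\ref{lem: composition of loops is homotopic to concatenation} and Lemma~\ref{lem: time-rescaling almost local evolutions} to realize the concatenation as an honest homotopy with fixed basepoint $\phi$) finishes the construction.

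The main obstacle I expect is the bookkeeping in the second paragraph: maintaining the loop-closure condition $\phi\circ\alpha_{H_\lambda}(1)=\phi$ along the homotopy while simultaneously keeping the generators uniformly $\hat f$-bounded. This is where the triviality of the index is doing real work — it guarantees that the cuts can be repaired by $G$-invariant TDIs whose norms do not grow with $R$, so that the Duhamel-type estimates of Lemma~\ref{lem: inverse small diff} and Lemma~\ref{lem: evolution by similar tdi} close without a divergent constant.
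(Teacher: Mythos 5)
There is a genuine gap at the very first step, and it propagates through the rest of the argument. You claim that the cross-cut discrepancy $H-H^R_{\mathrm{block}}$ "has $|||\cdot|||_{\hat f}$-norm controlled by $\hat f(R/2)$, hence vanishing as $R\to\infty$." This is false: the terms $H_S$ with $S$ straddling a cut $\{kR,kR+1\}$ include interaction terms of diameter $2,3,\ldots$ sitting right at the cut, whose norms do not decay in $R$ at all. For any site $j$ adjacent to a cut, $\sum_{S\ni j, S\,\mathrm{crosses}} \|H_S\|/f(1+\diam S)$ is of order $\|H\|_f$, not of order $\hat f(R)$. So the discrepancy is $O(1)$ in any $f$-norm, and the subsequent homotopy estimate based on its supposed smallness collapses.

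What the paper actually does is precisely designed to circumvent this. One first constructs, for each cut, an anchored correction $E^{(j)}$ of $O(1)$ norm (Lemma~\ref{lem: splitting single edge}) so that $H+E^{(j)}$ is \emph{exactly} split at that single edge — the trivial index and Lemma~\ref{lem: parallel transport} enter here, as you correctly identified. Only then does one sum the corrections and truncate: in $\widetilde F = H + \sum_n E^{(Rn)}$, the terms crossing the cut at $Rn$ come solely from the \emph{tails} of $E^{(Rn')}$ with $n'\neq n$, since $H+E^{(Rn)}$ has no such terms. Those tails have diameter $\geq R$, and that is the source of the $\hat f(R)$ smallness of $F-\widetilde F$ in Lemma~\ref{lem: small difference f}. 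Your plan skips this ordering: you chop $H$ first and then try to repair, so the chopping error is $O(1)$ and cannot be absorbed into the small part of the homotopy. Relatedly, you write $\hat H^R = \bigoplus_k \hat H^R_k$ "up to the anchored correction," but the parallel-transport corrections $K_k$ have almost-local tails and do not respect the block decomposition; handling that mismatch (and the further correction needed on the interleaved intervals $J_n$) occupies the entire second splitting step of the paper's proof. Finally, the last step — stopping the zero-dimensional contraction at $\lambda$ close to $1$ to get generator norm $1/R$ — would require a quantitative bound $\|E_\lambda\|\lesssim (1-\lambda)$ that Lemma~\ref{lem: contractibility zero dim} does not provide; the paper instead derives the smallness of $\widetilde H^{(R)}$ directly from Lemmas~\ref{lem: small differences k} and~\ref{lem: closure quasi loops} rather than from a late-time contraction.
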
 
We work throughout the section under the assumptions of the proposition. All constants $C$ that appear can depend on $C_H$ and $f$. 
In particular, we write $\hat{f}$ for a generic function affiliated to $f$ and depending on $C_H$.
Just like for constants $C$, it is understood that the function $\hat{f}$   can change from line to line.   
 
\subsection{Splitting of the loop at a single edge}\label{sec: splitting at single edge}

We say that a TDI $F$ is split at an edge $(j,j+1)$ iff\
$$
 \min S \leq j <\max S  \Rightarrow \forall s: F_S(s)=0.
$$
In such a case, we can write $F=F_{\leq j}+F_{>j}$ where, obviously, $F_{\leq j}, F_{>j}$ consist of terms $F_S$ with $\max S\leq j$ and $\min S>j$, respectively. The aim of this section is to modify the TDI $H$ locally, $H \to H+E^{(j)}$, such that $H+E^{(j)}$ is split at $(j,j+1)$  while still generating a loop with basepoint $\phi$.
We write 
$$
H=H_{\leq j}+H_{>j} + B_j,\qquad \text{where}\quad (B_j)_S= \chi( \min S \leq j <\max S)  H_S
$$ 
such that $H-B_j$ is split at $(j,j+1)$. 
Note that for $j=0$, we previously used the notation $H'$ for $H_{\leq 0}$. 
We recall that the pumped state $\phi\circ\alpha_{H_{\leq 0}}(1)$ is assumed to have zero charge with respect to $\phi$. The following lemma uses and strenghtens this statement.
\begin{lemma}\label{lem: zero charge both sides}
Let 
$$
\psi_-=\phi\circ\alpha_{H_{\leq j}}(1), \qquad \psi_+=\phi\circ\alpha_{H_{> j}}(1).
$$
Then the states $\psi_-|_{\leq j},\psi_+|_{> j} $ are normal and have zero relative charge with respect to $\phi|_{\leq j},\phi|_{> j} $, respectively. 
\end{lemma}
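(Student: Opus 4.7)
Since $H_{\leq j}$ is supported in $\{\leq j\}$, the state $\psi_-=\phi\circ\alpha_{H_{\leq j}}(1)$ agrees with $\phi$ exactly on $\{>j\}$. For $A\in\caA_{\leq j-r}$ I would rewrite
$$\psi_-(A)-\phi(A)=\phi\bigl(\alpha_{H_{\leq j}}(1)[A]-\alpha_H(1)[A]\bigr)$$
using $\phi\circ\alpha_H(1)=\phi$, and apply Lemma~\ref{lem: evolution by similar tdi}(ii) to $H-H_{\leq j}=H_{>j}+B_j$, which is anchored in $\{\geq j+1\}$; this yields a bound $\hat f(r)\|A\|$. Thus $\|(\psi_--\phi)_{(I_r^{(j)})^c}\|\to 0$, where $I_r^{(j)}$ is the interval of length $2r$ around $j{+}\tfrac{1}{2}$, and the $j$-translated version of Lemma~\ref{lem: condition for normality} gives that $\psi_-$ is normal with respect to $\phi$. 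Since both states factor across $(j,j{+}1)$, this passes to $\mu_-$ being normal w.r.t.\ $\phi|_{\leq j}$. The argument for $\psi_+$ and $\mu_+$ is entirely symmetric.

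\textbf{Vanishing relative charge of $\mu_-$.} I would compare the cut at $j$ with the cut at $0$, where the hypothesis $I(\phi,H)=0$ is phrased. The difference $H_{\leq j}-H_{\leq 0}$ consists of terms $H_S$ with $\max S\in(0,j]$ (assume $j>0$; the case $j<0$ is analogous), so it is anchored in the \emph{finite} set $\{1,\ldots,j\}$ with $\|\cdot\|_f$-norm bounded by $C_H$. Lemma~\ref{lem: local perturbation tdi}(i)--(ii) then furnishes a $G$-invariant unitary $V\in\caA$ such that $\psi_-=\bigl(\phi\circ\alpha_{H_{\leq 0}}(1)\bigr)\circ\Adjoint(V)$, and asserts that the two states on the right are mutually normal with vanishing relative charge. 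Combining with the hypothesis $h_{\phi\circ\alpha_{H_{\leq 0}}(1)/\phi}=I(\phi,H)=0$ and the additivity of the relative charge (Proposition~\ref{prop: zerodim}(ii)) gives $h_{\psi_-/\phi}=0$; tensor-additivity across the product $\psi_-=\mu_-\otimes\phi|_{>j}$ (Proposition~\ref{prop: zerodim}(iii)) then yields $h_{\mu_-/\phi|_{\leq j}}=0$.

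\textbf{Vanishing relative charge of $\mu_+$ and the main obstacle.} Set $\tilde H:=H-B_j=H_{\leq j}+H_{>j}$. Being split at $(j,j{+}1)$, its evolution factorizes and the product structure of $\phi$ gives $\phi\circ\alpha_{\tilde H}(1)=\mu_-\otimes\mu_+$. By tensor-additivity it suffices to prove $h_{(\mu_-\otimes\mu_+)/\phi}=0$, since $h_{\mu_-/\phi|_{\leq j}}$ is already $0$. The natural attempt --- view $\mu_-\otimes\mu_+$ as an inner $G$-invariant perturbation of $\phi=\phi\circ\alpha_H(1)$ via $B_j$ and close with Lemma~\ref{lem: local perturbation tdi}(ii) --- fails because $B_j$ is \emph{not} anchored in any finite set: a straddling term $H_S$ with $\min S\leq j<\max S$ need not meet $\{j,j{+}1\}$. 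This is the main obstacle, and I would resolve it by truncation. Let $B_j^{(N)}$ be the part of $B_j$ consisting of terms with $\diam S\leq N$, which \emph{is} anchored in the finite interval $[j{+}1{-}N,j{+}N]$, and set $H_N:=\tilde H+B_j^{(N)}$. Lemma~\ref{lem: local perturbation tdi}(i) then yields a $G$-invariant unitary $V_N\in\caA$ with $\alpha_{H_N}(1)=\alpha_{\tilde H}(1)\circ\Adjoint(V_N)$, so $\phi\circ\alpha_{H_N}(1)=(\mu_-\otimes\mu_+)\circ\Adjoint(V_N)$ has the same relative charge to $\phi$ as $\mu_-\otimes\mu_+$. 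Choosing $\hat f\in\caF$ with strictly slower decay than $f$ one has $\|B_j-B_j^{(N)}\|_{\hat f}\to 0$, so the Duhamel bound~\eqref{eq: duhamel practical} gives norm convergence $\phi\circ\alpha_{H_N}(1)\to\phi\circ\alpha_H(1)=\phi$, hence trace-norm convergence of the corresponding density matrices in the GNS representation of $\phi$. Since the charge map on pure normal density matrices is Lipschitz into the discrete group $H^1(G)$ (the estimate used in the proof of Lemma~\ref{lem: uniform cont index} together with the discreteness exploited in Proposition~\ref{prop: finite index set}), the charge is locally constant along this sequence, and for $N$ large it must equal $h_{\phi/\phi}=0$. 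Hence $h_{(\mu_-\otimes\mu_+)/\phi}=0$, which completes the argument.
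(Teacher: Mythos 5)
Your proof is correct and follows the same high-level strategy as the paper's (split the loop at the edge, use tensor-additivity of the relative charge, and invoke the index-zero hypothesis via translation of the cut site), but you resolve the step $h_{(\mu_-\otimes\mu_+)/\phi}=0$ by a genuinely different route, and in doing so you flag something worth discussing.

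The obstacle you identify --- that $B_j$ need not be anchored in $\{j,j+1\}$ because a straddling $S$ (e.g.\ $S=\{j-1,j+2\}$) need not contain $j$ or $j+1$ --- is real for general interaction supports. The paper's own one-line argument (``by Lemma~\ref{lem: local perturbation tdi}, $L$ is a $G$-invariant TDI anchored in $\{j,j+1\}$'') in fact requires $B_j$ to be $\{j,j+1\}$-anchored. What makes this legitimate in the paper is the convention, stated only in a footnote in the later proof of Lemma~\ref{lem: loops with prod equiv}, that one may assume without loss of generality that all interaction terms are supported on intervals (replace $H_S$ by summing it onto $[\min S,\max S]$; this preserves the derivation, the $\Vert\cdot\Vert_f$-bound, and $G$-invariance). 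With that convention, every straddling $S$ contains $\{j,j+1\}$, so $B_j$ is finitely anchored and the paper's appeal to Lemma~\ref{lem: local perturbation tdi} closes the argument in one stroke. Your truncation-plus-continuity argument avoids the convention entirely: $B_j^{(N)}$ is finitely anchored, $\Vert B_j-B_j^{(N)}\Vert_{\hat f}\to 0$ for a slower $\hat f\in\caF$, the localization bound $\Vert(\phi\circ\alpha_{H_N}(1)-\phi)_{(I_r^{(j)})^c}\Vert\leq\hat f(r)$ holds uniformly in $N$ because $B_j-B_j^{(N)}$ is anchored both in $\{\leq j\}$ and in $\{>j\}$ (every term has $\min S\leq j<\max S$), and then the quantitative convergence machinery from Lemma~\ref{lem: uniform cont index} combined with the discreteness of $H^1(G)$ gives the result. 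This is somewhat longer and you compress the trace-norm convergence step, but those details do fill in along the lines you indicate. In short: the paper gains brevity by the intervals convention; your version is more robust but heavier. Either way, you might wish to note that simply invoking the WLOG-intervals normalization would let you replace your entire third paragraph with the paper's one-liner.
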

\begin{proof}
Since $\psi_-=\psi_-|_{\leq j} \otimes \phi|_{> j}$ and  $\psi_+=\phi|_{\leq j} \otimes \psi_+|_{> j}$, we have  
\begin{equation}\label{eq: charge invariant under truncation}
h_{\left(\sfrac{\psi_-|_{\leq j}}{\phi|_{\leq j}}\right)} =  h_{\left(\sfrac{\psi_-}{\phi}\right)},\qquad   h_{\left(\sfrac{\psi_+|_{>j}}{\phi|_{> j}}\right)}=  h_{\left(\sfrac{\psi_+}{\phi}\right)}
\end{equation}
from item iii) of Proposition \ref{prop: zerodim}. 
Therefore, in what follows, we abuse notation and we write $\psi_-,\psi_+ $ also for $\psi_-|_{\leq j},\psi_+|_{> j} $, as the potential for confusion is small. 
 Then
$$
\psi_-\otimes\psi_+= \phi\circ \alpha_{H-B_j}(1)= \phi\circ\alpha_{H}(1)\circ   \alpha_L(1)=  \phi  \circ \alpha_L(1)
$$
where, by Lemma \ref{lem: local perturbation tdi}, $L$ is a G-invariant TDI anchored in $\{j,j+1\}$. The last equality follows because $H$ generates a loop. By invoking the last item of Lemma \ref{lem: local perturbation tdi}, we conclude that $\psi_- \otimes\psi_+$ has zero charge wrt.\ $\phi$. Therefore, using \eqref{eq: charge invariant under truncation} and Proposition \ref{prop: zerodim}, we have $h_{\psi_-/\phi}+h_{\psi_+/\phi}=0$.  
If $j=0$, then the state $\psi_-$ is the pumped state $\psi'$ defined in Section \eqref{sec: relative charge for cut loop}. For $j\neq0$, TDIs $H'=H_{\leq 0}$ and $H_{\leq j}$ differ by an interaction anchored in a finite set, therefore the resulting states have equal charge, invoking again Lemma \ref{lem: local perturbation tdi}. Since we assumed that the loop generated by $H$ has index zero, the claims  follow. 
\end{proof}

\begin{lemma}\label{lem: splitting single edge}
For any site $j \in \bbZ$, there is a $G$-invariant TDI $E^{(j)}$ with 
$$
\nor E^{(j)} \nor_{\{j,j+1\},\hat{f}} \leq  C,
$$
such that $ H+E^{(j)}$  is split at the edge $(j,j+1)$ and it generates a loop with basepoint $\phi$.
\end{lemma}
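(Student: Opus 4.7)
The strategy is to remove the terms of $H$ crossing the edge $(j,j+1)$ and to compensate for the resulting endpoint mismatch by a parallel transport localized near the edge, applied independently on each half. Set $H^{\mathrm{cut}} := H_{\leq j} + H_{>j} = H - B_j$; this TDI is split at $(j,j+1)$, and by the product structure of $\phi$ we have $\phi \circ \alpha_{H^{\mathrm{cut}}}(1) = \psi_- \otimes \psi_+$. Lemma \ref{lem: zero charge both sides} then states that each restriction $\psi_-|_{\leq j}$, $\psi_+|_{>j}$ is a pure state, normal with respect to $\phi|_{\leq j}$ respectively $\phi|_{>j}$, and carries zero relative $G$-charge.

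I would next quantify the localization of $\psi_\pm$ near the edge. Since $H^{\mathrm{cut}}$ and $H$ differ by $B_j$, which is anchored in $\{j,j+1\}$ (after regrouping any non-interval terms of $H$ into interval supports, an operation preserving the derivation), and $\phi\circ\alpha_H(1)=\phi$, Lemma \ref{lem: evolution by similar tdi}(ii) bounds
\begin{equation*}
|(\psi_\pm - \phi)(A)| \leq \hat f(\mathrm{dist}(\mathrm{supp}(A),\{j,j+1\}))\,\|A\|
\end{equation*}
for any $A$ in the respective half-chain algebra. Applying Lemma \ref{lem: parallel transport}, suitably translated to intervals centered at $\{j,j+1\}$ and restricted to each half-chain, yields $G$-invariant TDIs $K_-$ on $\caA_{\leq j}$ and $K_+$ on $\caA_{>j}$, each $\{j,j+1\}$-anchored with $\nor K_\pm \nor_{\{j,j+1\},\hat f} \leq C$, such that $\phi \circ \alpha_{K_\pm}(1)$ equals $\psi_-|_{\leq j}$ respectively $\psi_+|_{>j}$; the zero-charge hypothesis of Lemma \ref{lem: parallel transport} is supplied precisely by Lemma \ref{lem: zero charge both sides}.

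Finally I assemble the split TDI. By Lemma \ref{lem: manipulation of evolutions}(i), $\alpha_{K_\pm}^{-1}$ is generated by a $G$-invariant $\{j,j+1\}$-anchored TDI; then Lemma \ref{lem: local perturbation tdi}(i), applied with base $H_{\leq j/>j}$ and this local inverse as the $X$-anchored perturbation, produces $\{j,j+1\}$-anchored $G$-invariant TDIs $D_\pm$ satisfying $\alpha_{H_{\leq j/>j} + D_\pm} = \alpha_{H_{\leq j/>j}} \circ \alpha_{K_\pm}^{-1}$. Setting $\widetilde H := (H_{\leq j} + D_-) + (H_{>j} + D_+)$ yields a split $G$-invariant TDI whose evolution carries $\phi$ to $(\psi_-|_{\leq j} \circ \alpha_{K_-}^{-1}(1)) \otimes (\psi_+|_{>j} \circ \alpha_{K_+}^{-1}(1)) = \phi|_{\leq j} \otimes \phi|_{>j} = \phi$, so $E^{(j)} := \widetilde H - H = D_- + D_+ - B_j$ has the desired properties. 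The main technical obstacle is maintaining strict $\{j,j+1\}$-anchoring (rather than merely concentrating near the edge with $\hat f$-decay) across all these compositions: this rests on the stability of anchor sets under the operations of Section \ref{sec: technical preliminaries}, and in particular requires the initial regrouping of $H$ so that $B_j$ is genuinely $\{j,j+1\}$-anchored.
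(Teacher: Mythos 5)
Your proposal follows essentially the same route as the paper: cut $H$ at $(j,j+1)$, quantify how close the two restrictions $\psi_\pm$ of the resulting pumped state are to $\phi$ (Lemma~\ref{lem: cut state}--type estimate via Lemma~\ref{lem: evolution by similar tdi}), invoke Lemma~\ref{lem: zero charge both sides} for the zero-charge hypothesis, apply Lemma~\ref{lem: parallel transport} on each half-chain to obtain local $K_\pm$, and then absorb $\alpha_K^{-1}$ into a modification of $H$ anchored near the edge. The only deviation is bookkeeping: the paper works with the product $\alpha_{H-B_j}\circ\alpha_K^{-1}$ globally and reads off $E^{(j)}(s)=\alpha_K(s)[-K(s)-B_j(s)]+(\alpha_K(s)[H(s)]-H(s))$ directly from Lemma~\ref{lem: inverse small diff}(i), whereas you build $D_\pm$ separately on each half; this is equivalent since the two halves commute.

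One small imprecision: your citation of Lemma~\ref{lem: local perturbation tdi}(i) goes in the wrong direction. That lemma says that adding an anchored $B$ to $H$ produces a factorization $\alpha_{H+B}=\alpha_H\circ\alpha_K$ with $K$ anchored; you want the converse (given anchored $K$, find anchored $D$ with $\alpha_{H+D}=\alpha_H\circ\alpha_{K}^{-1}$). That converse does hold and is extracted from the same tools (Lemma~\ref{lem: manipulation of evolutions}(iii) together with Lemma~\ref{lem: inverse small diff}(i), plus a Duhamel-type bound via Lemma~\ref{lem: loc and liebrobinson} to keep $D_\pm$ anchored at $\{j,j+1\}$), so the construction you intend is sound, but the correct reference is to those lemmas rather than to Lemma~\ref{lem: local perturbation tdi}. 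Your caveat about regrouping non-interval terms of $H$ so that $B_j$ is literally $\{j,j+1\}$-anchored is a point the paper handles in a footnote later; flagging it is correct and slightly more careful than the paper's presentation.
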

In particular the affiliated function $\hat{f}$ can be chosen independently of $j$.
\begin{proof}
We fix $j$ and we use again the notation $\psi_-,\psi_+$ as above. 
By the same analysis as the one in the proof of Lemma \ref{lem: cut state}, we get
$$
||(\phi-\psi_-)|_{\leq j-r-2} || \leq \hat{f}(r), \qquad ||(\phi-\psi_+)|_{> j+r} ||\leq \hat{f}(r).
$$
We will use Lemma \ref{lem: parallel transport} to connect these states. Even though Lemma \ref{lem: parallel transport} is formulated for states on a spin chain algebra, i.e.\ referring to the graph $\caA=\caA_\bbZ$, it of course holds just as well for states on $\caA_X$ with $X\subset\bbZ$. We apply it once with $X=\{\leq j\}$ and once with $X=\{> j\}$, to obtain TDIs $K_-,K_+$,  supported on $\{\leq j\}$ and  $\{> j\}$, and such that 
$$
\psi_-\otimes\psi_+  = \phi \circ \alpha_K(1), \qquad K=K_-+K_+, \qquad \nor K_{\pm}\nor_{\{j,j+1\},\hat{f}} \leq 1,
$$ 
with $\hat{f}$ chosen independently of $j$: Indeed, in both Lemmas~\ref{lem: cut state} and~\ref{lem: parallel transport}, the function $\hat f$ depends only on $f$ and $C_H$, in particular not on $j$. Hence we have now obtained that the path of states
$$
\phi\circ \alpha_{H-B_j} \circ \alpha_K^{-1}(\cdot)
$$ is a loop with basepoint $\phi$ and it is factorized between $\{\leq j\}$ and $\{>j\}$. 
The TDI generating this loop is, by Lemma \ref{lem: inverse small diff},
$$
H+E^{(j)}, \qquad E^{(j)}(s)= \alpha_K(s)[- K(s) - B_j(s)] + ( \alpha_{K}(s)[H(s)] - H(s)).
$$
Since $H-B_j$ and $K$ are split at $(j,j+1)$, we deduce that $H+E^{(j)}$ is split as well. 
To get the desired bounds on the TDI $E^{(j)}$, we use Lemma~\ref{lem: loc and liebrobinson} and Duhamel's formula.
\end{proof}

\subsection{Approximate splitting into factors}\label{sec: approximate splitting}

In the previous subsection, we constructed anchored interactions $E^{(j)}$ that each split the loop at the edge $(j,j+1)$. 
Now we will try to split a loop at all edges $(Rn,Rn+1)$ at once, with  $n\in \bbZ$ and $R$ a large even integer.  We will however only succeed to do this approximatively, and we will obtain the representation
$$
\alpha_H=   \alpha_{F} \circ  \alpha_{W} \circ \alpha_Z
$$ 
where $F,W,Z$ are $G$-invariant TDIs depending on the parameter $R$ such that
\begin{enumerate}
\item $W$ is small in the sense that $\nor W \nor_{\hat{f}}\leq \hat{f}(R)$.
\item $F$ is factorized over the intervals $I_n$  and $Z$ is factorized over the intervals $J_n$ (both $I_n,J_n$ are defined below and  have length $R$, see Figure~\ref{fig: intervals}).
\item $F$ is close to generating a loop: for each $n$, 
$$||(\phi-\phi\circ\alpha_{F}(1))_{I_n}|| \leq \hat{f}(R), $$
and analogously for $Z$ and the intervals $J_n$.
\end{enumerate}

In the present section \ref{sec: approximate splitting} and the next section \ref{sec: final stage proof}, all TDIs and all states will be $G$-invariant, and we will not repeat this.

\subsubsection{Splitting the loop into factorized quasiloops}\label{sec:factorized quasi-loops}

Let $R\in\bbN$. We introduce the intervals 
$$
I_n= [Rn+1,R(n+1)],\qquad   J_n= [R(n-\tfrac{1}{2})-1,R(n+\tfrac{1}{2})]\qquad(n\in \bbZ).
$$

\begin{figure}[htb]
\begin{center}
\includegraphics[width=0.75\textwidth]{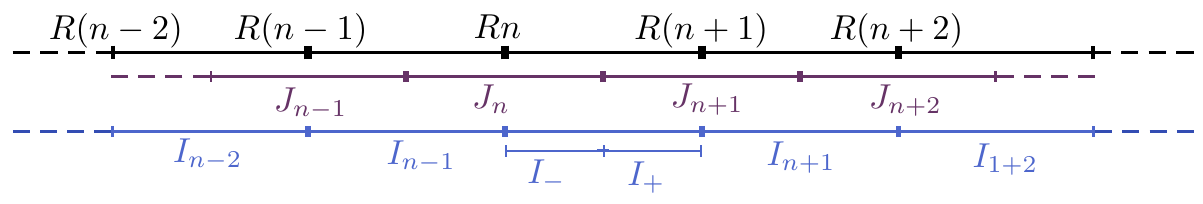}
\caption{The various intervals used in the splitting procedure}
\label{fig: intervals}
\end{center}
\end{figure}
 
We consider  the TDI 
\begin{equation}\label{F tilde}
\widetilde F=H+\sum_{n \in \bbZ} E^{(Rn)}
\end{equation}
where $E^{(Rn)}$ are defined as in Lemma~\ref{lem: splitting single edge}. The infinite sum is understood in a weak sense, as in Section \ref{sec: manipulating interactions}. Since $E^{(Rn)}$ is anchored at $\{Rn,Rn+1\}$, Lemma \ref{lem: loc and liebrobinson}, item ii)  implies that $\nor\widetilde F\nor_{\hat{f}}\leq C$.
Because of the tails of the anchored interactions $E^{(Rn)}$, $\widetilde F$ is not exactly split over the intervals $I_n$.
 Therefore, we explicitly define the split interaction $F$ by
$$
F_S=\begin{cases}  \widetilde F_S & S \subset I_n \quad \text{for some $n$} \\
0  & \text{otherwise}
\end{cases}
$$
Let $\widetilde Z$ be a TDI 
such that  $\alpha_{\widetilde Z}=\alpha_{F}^{-1} \circ \alpha_H$. We similarly define its split version, but unlike $F$, the terms are now supported on the intervals $J_n$:
$$
Z_S=\begin{cases}  \widetilde Z_S & S \subset J_n \quad \text{for some $n$} \\
0  & \text{otherwise}
\end{cases}
$$
We will now express that the differences $F-\widetilde F, Z-\widetilde Z$ are small. We note first that all four of these interactions depend on the parameter $R$ and this is the parameter that controls the smallness, namely
\begin{lemma}\label{lem: small difference f}
$$
\nor F-\widetilde F\nor_{\hat{f}} \leq \hat{f}(R)
$$
\end{lemma}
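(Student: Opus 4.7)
The plan is to exploit the cancellation between $H_S$ and $(E^{(Rm)})_S$ whenever $S$ crosses the edge $(Rm,Rm+1)$: since $E^{(Rm)}$ was designed in Lemma \ref{lem: splitting single edge} so that $H+E^{(Rm)}$ is split at that edge, every such $S$ satisfies $(H+E^{(Rm)})_S=0$, hence
\begin{equation*}
\widetilde F_S \;=\; H_S + \sum_{n\in\bbZ}(E^{(Rn)})_S \;=\; \sum_{n\neq m}(E^{(Rn)})_S.
\end{equation*}
The upshot is that, for sets $S$ that cross some boundary edge, $\widetilde F_S$ sees only the tails of the \emph{other} anchored interactions, which should not reach across the edge $(Rm,Rm+1)$ unless $S$ is large.

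Making this precise: fix $S$ not contained in any $I_n$ and let $m=m(S)$ be any index for which $S$ crosses $(Rm,Rm+1)$. I claim that $\widetilde F_S=0$ whenever $\diam S<R$. Indeed, by the cancellation identity only the terms with $n\neq m$ contribute, and each of these requires $\{Rn,Rn+1\}\cap S\neq\emptyset$. If $n\geq m+1$ then $\max S\geq Rn\geq R(m+1)$ while $\min S\leq Rm$, so $\diam S\geq R$; if $n\leq m-1$, a symmetric argument applies, giving the same conclusion. Either way $\diam S\geq R$, contradicting the assumption.

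By construction $(F-\widetilde F)_S=-\widetilde F_S$ when $S$ crosses some boundary edge, and zero otherwise. Combining with the locality just proved,
\begin{equation*}
\|F-\widetilde F\|_{\hat f}\;\leq\;\sup_j\sum_{\substack{S\ni j\\\diam S\geq R}}\frac{\|\widetilde F_S\|}{\hat f(1+\diam S)}.
\end{equation*}
Let $f'$ be the function (affiliated to $f$, depending on $C_H$) for which $\|\widetilde F\|_{f'}\leq C$, as noted just after the definition of $\widetilde F$. I then choose $\hat f\in\caF$, affiliated to $f'$, such that $f'(r)\leq C^{-1}\hat f(r)^2$ for all $r\geq 1$; this is possible because any element of $\caF$ admits such a ``square root'' (for $f'(r)\leq C'e^{-c'r^\beta}$, take $\hat f(r)$ proportional to $e^{-c'r^\beta/2}$), and the resulting $\hat f$ is in $\caF$ and satisfies property (ii) of Definition \ref{def: derived function}. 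Since $\hat f$ is non-increasing, for $\diam S\geq R$ we get $f'(1+\diam S)/\hat f(1+\diam S)\leq C^{-1}\hat f(1+\diam S)\leq C^{-1}\hat f(R)$, and the displayed bound becomes $\hat f(R)\cdot\|\widetilde F\|_{f'}\leq \hat f(R)$, as required.

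The only substantive point is the cancellation identity of the first step, which rests on recalling that the $E^{(Rn)}$ were \emph{chosen} to split the loop at the $n$-th edge; the remainder is a standard decay-trading argument in the spirit of Lemma \ref{lem: loc and liebrobinson}, and I anticipate no further obstacles.
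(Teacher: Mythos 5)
Your argument is correct and follows the same route as the paper: identify that $(F-\widetilde F)_S$ is nonzero only for $S$ crossing some edge $(Rm,Rm+1)$, use the split property to reduce $\widetilde F_S$ to the tails of the $E^{(Rn)}$ with $n\neq m$, conclude that contributing $S$ have $\diam S\geq R$, then trade part of the decay of $\hat f$ for the prefactor $\hat f(R)$. The only cosmetic difference is that the paper bounds $\nor F-\widetilde F\nor_{\sqrt{\hat f}}$ by $\sqrt{\hat f(R)}\,\nor\widetilde F\nor_{\hat f}$ and renames, whereas you keep the target norm fixed and build the relation $f'\leq C^{-1}\hat f^{\,2}$ directly into the choice of $\hat f$; both are the same decay-trading trick.
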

\noindent To avoid confusion, we remind that the two functions called $\hat{f}$ in the above expression need not be equal,  as explained at the beginning of Section \ref{sec: trivial index loop is short}.
\begin{proof} 
The difference $F-\widetilde F$ contains only those terms that cross at least one of the cuts between two intervals $I_n$. Since $H+E^{(Rn)}$ is split at site $Rn$, the only interaction terms in $\widetilde F=H+\sum_{n'} E^{(Rn')}$ that contain $\{Rn,Rn+1\}$ arise from $ E^{(Rn')}$ with $n' \neq n$ and hence they correspond to a set $S$ with $\diam(S)\geq R$ since $ E^{(Rn')}$ is anchored at $\{Rn',Rn'+1\}$. It follows that every term in $F-\widetilde F$ is of the form $\widetilde F_S$  with  $\diam(S)\geq R$.  Therefore
$$\nor F-\widetilde F\nor_{\sqrt{\hat{f}}} 
\leq \sqrt{\hat{f}(R)}  \nor \widetilde F \nor_{\hat{f}}. $$
As we already remarked that $\nor \widetilde F \nor_{\hat{f}}\leq C$, the lemma is proven.
\end{proof}

\begin{lemma}\label{lem: small differences k}
$$
  \nor Z-\widetilde Z\nor_{\hat{f}} \leq \hat{f}(R).
$$
\end{lemma}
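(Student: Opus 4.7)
The plan is to mirror the proof of Lemma~\ref{lem: small difference f}: exhibit $\widetilde Z$ as a weak sum of interaction terms each of which either (a) lies in a single $J_n$ or (b) has diameter at least $\sim R/2$. Then $Z$ can be defined as the sum of the type-(a) terms, $\widetilde Z - Z$ consists only of type-(b) terms, and the latter's $\hat f$-norm is controlled by $\hat f(R)$ via the same factor-out-$\sqrt{\hat f(R)}$ trick used before.

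To realize this, I would pick the concrete representative of $\widetilde Z$ furnished by Lemma~\ref{lem: inverse small diff}(ii) with $H_1 = F$, $H_2 = H$, namely
\[
\widetilde Z(s) = (H-F)(s) + i\,\alpha_H^{-1}(s)\left[\int_0^s \alpha_H(u)\big\{[H(u)-F(u),\,\alpha_F(s,u)\{F(s)\}]\big\}\,du\right].
\]
The zeroth-order piece decomposes as $H - F = -\sum_n E^{(Rn)} + (\widetilde F - F)$. The terms in $\widetilde F - F$ have diameter $\geq R$ by the very argument of Lemma~\ref{lem: small difference f}, while each $E^{(Rn)}$ is anchored at $\{Rn,Rn+1\}$, which sits at distance $\geq R/2 - 1$ from $J_n^c$, so every interaction term $E^{(Rn)}_S$ either satisfies $S\subset J_n$ or has $\diam(S) \geq R/2 - 1$. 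The disjointness of the anchor sets together with Lemma~\ref{lem: loc and liebrobinson}(ii) then controls the type-(b) contribution of $\sum_n E^{(Rn)}$ by $\hat f(R)$.

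For the Duhamel correction the same philosophy propagates one layer deeper: Lemma~\ref{lem: loc and liebrobinson}(iii) bounds the $\hat f$-norm of $\alpha_F(s,u)\{F(s)\}$; Lemma~\ref{lem: loc and liebrobinson}(i) shows the commutator $[H-F,\,\alpha_F(s,u)\{F(s)\}]$ inherits, term by term, the ``in a $J_n$ vs.\ diameter $\gtrsim R$'' dichotomy from $H-F$; and the outer $\alpha_H^{-1}(s)$ preserves this structure via Lieb--Robinson locality at the cost of one further affiliation of $\hat f$.

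I expect the main obstacle to be the bookkeeping around the fact that the definition~\eqref{eq: evolved interaction} of an automorphism acting on an interaction redistributes terms into balls $B_k(j)$, so the literal anchoring at the sites $\{Rn\}_n$ is lost once $\alpha_H^{-1}(s)$ is applied. The remedy is that we do not need literal anchoring but only weighted-norm bounds: a ball $B_k(j)$ whose associated interaction term crosses a $J_n$-boundary and contains an anchor site $Rm$ necessarily has $k \geq R/2 - O(1)$, which yields the required $\hat f(R)$ decay after one additional affiliation of $\hat f$, completing the bound.
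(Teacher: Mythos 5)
Your proposal is correct and follows essentially the same route as the paper: starting from the explicit Duhamel representation of $\widetilde Z$ from Lemma~\ref{lem: inverse small diff}(ii), decomposing $H-F$ into the small piece $\widetilde F - F$ and the $R\bbZ$-anchored piece $\sum_n E^{(Rn)}$, propagating this dichotomy through the commutator and the evolutions via Lemma~\ref{lem: loc and liebrobinson}, and finally using that any $S$ contributing to $\widetilde Z - Z$ which is anchored near some $Rm$ but not contained in $J_m$ must have $\diam(S)\gtrsim R/2$. Your remark about the redistribution of terms into balls $B_k(j)$ under \eqref{eq: evolved interaction} is correct but unnecessary to belabor: the anchored norm $\nor\cdot\nor_{R\bbZ,\hat f}$ is exactly the bookkeeping device that makes this automatic, which is how the paper silently handles it.
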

\begin{proof}
Since $\widetilde Z$ is such that $\alpha_{\widetilde Z}=\alpha_{F}^{-1} \circ \alpha_H$, Lemma \ref{lem: inverse small diff} yields
\begin{equation}\label{eq: difference widetilde k}
\widetilde Z(s)=H(s)-F(s)- i \alpha^{-1}_{H}(s)\left[\int_0^s \alpha_{H}(u)\Big\{\big[H(u) - F(u),\alpha_F(s,u)\{F(s)\}\big]\Big\}du \right]
\end{equation}
We write
\begin{equation}\label{Decomp of H-F}
    H-F = D_1+D_2,\qquad  D_2= H - \widetilde F = \sum_n E^{(Rn)}, \qquad  D_1= \widetilde F - F,
\end{equation}
and we note the following properties
$$
\nor D_1\nor_{\hat{f}}\leq \hat{f}(R),\qquad   \nor D_2\nor_{R\bbZ,\hat{f}}\leq C.
$$ 
The bound on $D_1$ follows from Lemma \ref{lem: small difference f}. The bound on $D_2$ is by Lemma \ref{lem: loc and liebrobinson} item ii) and Lemma \ref{lem: splitting single edge}.  
We now plug the decomposition~(\ref{Decomp of H-F}) of $H-F$ into \eqref{eq: difference widetilde k}. Lemma \ref{lem: loc and liebrobinson} items i,iii) then yields that $\widetilde Z$ has a representation similar to $H-F$, namely
\begin{equation}\label{Ztilde estimates}
    \widetilde Z = D_1'+D'_2,\qquad   \nor D_1'\nor_{\hat{f}} \leq   \hat{f}(R),\qquad  \nor D_2'\nor_{R\bbZ,\hat{f}} \leq   C.
\end{equation}
We recall now that $Z$ is the split version of $\widetilde Z$, over the intervals $J_n$. Since $D_2'$ is anchored at the intersection of intervals $I_n$, $(D_2')_S \neq 0$ implies then $\diam(S) \geq R/2$ if $S$ intersects more than a single interval $J_n$. We can therefore conclude as in the proof of the previous lemma that
$$
  \nor Z-\widetilde Z \nor_{\sqrt{\hat{f}}} \leq  \nor D_1' \nor_{\sqrt{\hat{f}}}+ \sqrt{\hat{f}(R)} \nor D_2'\nor_{R\bbZ,\hat{f}}
$$  which, together with~(\ref{Ztilde estimates}), yields the claim.
\end{proof}

\subsubsection{$F$ and $Z$ generate quasiloops}\label{sec: quasiloops}

We have established that $F$ and $Z$  are manifestly split;  $F$ is decoupled over intervals $I_n$ and $Z$ is decoupled over intervals $J_n$. 
So in particular we have that both $\phi \circ \alpha_{F}$ and $\phi \circ \alpha_{Z}$ are factorized over intervals of length $R$.  
We now investigate how close those states are to $\phi$, namely how close $F, Z$ are to generating genuine loops.  
\begin{lemma}\label{lem: almost loop f}
Let $\psi=\phi \circ \alpha_{F}(1)$. Then
$$||(\psi  - \phi)|_{{I_n}}|| \leq \hat{f}(R).$$
\end{lemma}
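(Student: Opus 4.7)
The plan is to reduce, using that $F$ is exactly split over the intervals $I_m$, the comparison $\psi|_{I_n}$ vs $\phi|_{I_n}$ to a computation on the finite interval $I_n$, and then to identify the relevant TDI on $I_n$ with a piece of a nearby loop-generator that splits on both sides of $I_n$. The bound will then follow from Lieb--Robinson on each half of $I_n$, combined with a purity argument via the Schmidt decomposition.

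Since $F$ is split, $\alpha_F(1)[A]=\alpha_{F_{I_n}}(1)[A]$ for every $A\in\caA_{I_n}$, so $\psi|_{I_n}=\phi|_{I_n}\circ\alpha_{F_{I_n}}(1)$. A direct inspection shows that only $E^{(Rm)}$ with $m\in\{n,n+1\}$ have an anchor $\{Rm,Rm+1\}$ meeting $I_n=[Rn+1,R(n+1)]$, so $F_{I_n}=M^{(n)}$ where $M^{(n)}:=(H+E^{(Rn)}+E^{(R(n+1))})|_{I_n}$. The TDI $\widehat H:=H+E^{(Rn)}+E^{(R(n+1))}$ is split at both boundaries of $I_n$, decomposing as $\widehat H=L+M^{(n)}+R^{(n)}$, with $L=(H+E^{(Rn)})_{\le Rn}$ and $R^{(n)}=(H+E^{(R(n+1))})_{>R(n+1)}$. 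Since $H+E^{(Rn)}$ and $H+E^{(R(n+1))}$ each generate a loop based at the product state $\phi$ (Lemma~\ref{lem: splitting single edge}), their split-off halves $L$ and $R^{(n)}$ individually generate loops on the respective half-lines. Hence $\phi\circ\alpha_{\widehat H}(1)=\phi\circ\alpha_{M^{(n)}}(1)=:\mu$, and since $M^{(n)}$ is supported in $I_n$, $\mu$ factorizes as $\phi|_{\le Rn}\otimes\mu_{I_n}\otimes\phi|_{>R(n+1)}$, with $\mu_{I_n}=\psi|_{I_n}$ pure on $\caA_{I_n}$.

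Writing $\widehat H$ alternatively as $(H+E^{(Rn)})+E^{(R(n+1))}$ or as $(H+E^{(R(n+1))})+E^{(Rn)}$, Lemma~\ref{lem: local perturbation tdi} item i) produces $G$-invariant TDIs $K_1, K_2$ of $O(1)$ norm, anchored at $\{R(n+1),R(n+1)+1\}$ and $\{Rn,Rn+1\}$ respectively, with $\alpha_{\widehat H}=\alpha_{H+E^{(Rn)}}\circ\alpha_{K_1}=\alpha_{H+E^{(R(n+1))}}\circ\alpha_{K_2}$. The loop property of $H+E^{(Rn)}$ and $H+E^{(R(n+1))}$ then gives $\mu=\phi\circ\alpha_{K_1}(1)=\phi\circ\alpha_{K_2}(1)$. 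Split $I_n=I_n^L\cup I_n^R$ into its two halves of length $R/2$. Applying Lemma~\ref{lem: evolution by similar tdi} item ii) to $K_1$ with $Y=I_n^L$ (at distance $R/2$ from, and disjoint from, the anchor of $K_1$) bounds $\|\mu|_{I_n^L}-\phi|_{I_n^L}\|\le\hat f(R)$; symmetrically $\|\mu|_{I_n^R}-\phi|_{I_n^R}\|\le\hat f(R)$.

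A standard Schmidt-decomposition argument then converts these two marginal bounds for the pure state $\mu_{I_n}$ on $\caA_{I_n^L}\otimes\caA_{I_n^R}$ into closeness to the pure product $\phi|_{I_n^L}\otimes\phi|_{I_n^R}=\phi|_{I_n}$: the first marginal bound forces the dominant Schmidt coefficient to be $\ge 1-\hat f(R)$ with leading left vector close to the representative of $\phi|_{I_n^L}$, and the second bound then pins the right vector to the representative of $\phi|_{I_n^R}$, yielding $\|\mu_{I_n}-\phi|_{I_n}\|\le C\sqrt{\hat f(R)}$, which is still of the form $\hat f(R)$ after modifying the affiliated function (the class of affiliated $\hat f$ is stable under $\sqrt{\cdot}$). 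The main obstacle is precisely this last step: because $K_1$ and $K_2$ have only $O(1)$ norm and their anchors sit on the boundary of $I_n$, one cannot compare $\mu$ and $\phi$ directly on the whole of $I_n$; one is forced to split $I_n$, get decay from the far anchor on each side, and close the bound through purity of $\mu_{I_n}$.
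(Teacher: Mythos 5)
The central step of your argument is the claim that $\widehat H := H+E^{(Rn)}+E^{(R(n+1))}$ is split at both edges of $I_n$, giving an \emph{exact} decomposition $\widehat H = L+M^{(n)}+R^{(n)}$. This is false. Each of $H+E^{(Rn)}$ and $H+E^{(R(n+1))}$ is split at its own edge, but the sum is split at neither: for $S$ straddling $(Rn,Rn+1)$ we have $(H+E^{(Rn)})_S=0$, yet $E^{(R(n+1))}_S$ need not vanish, since $E^{(R(n+1))}$ is merely \emph{anchored} at $\{R(n+1),R(n+1)+1\}$ and has tails reaching across the far edge; symmetrically at the other edge. Hence $\mu:=\phi\circ\alpha_{\widehat H}(1)$ does not factor as $\phi|_{\leq Rn}\otimes\mu_{I_n}\otimes\phi|_{>R(n+1)}$, and neither of the two conclusions you draw from the factorization --- that $\mu_{I_n}=\psi|_{I_n}$ and that $\mu_{I_n}$ is pure --- is available. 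Your later estimates (via $K_1,K_2$ and Lemma~\ref{lem: evolution by similar tdi}) do control $\mu|_{I_n^{L/R}}$, but without the factorization they say nothing about $\psi|_{I_n}$.

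The gap is repairable, and the fix is short: every nonzero term of $\widehat H-(L+M^{(n)}+R^{(n)})$ must straddle one edge of $I_n$ while also meeting the anchor at the opposite edge, so it has $\diam(S)\geq R$; therefore $\Vert\widehat H-(L+M^{(n)}+R^{(n)})\Vert_{\sqrt{\hat f}}\leq\sqrt{\hat f(R)}\,\Vert\widehat H\Vert_{\hat f}$. A Duhamel estimate (Lemma~\ref{lem: evolution by similar tdi} item ii) with $Y=I_n$) then gives $\Vert(\mu-\phi\circ\alpha_{L+M^{(n)}+R^{(n)}}(1))|_{I_n}\Vert\leq R\,\hat f(R)$, and since $(\phi\circ\alpha_{L+M^{(n)}+R^{(n)}}(1))|_{I_n}=\phi|_{I_n}\circ\alpha_{M^{(n)}}(1)=\psi|_{I_n}$ is pure, your subsequent argument goes through. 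Once this step is inserted, the route is essentially the paper's: the paper compares $\alpha_{\widetilde F}$ directly to the single-edge loop generators $F_m=H+E^{(Rm)}$ on each half $I_\pm$ (using that $\widetilde F-F_m$ is anchored $\sim R/2$ away), while you reach the same comparison via $K_1,K_2$ from Lemma~\ref{lem: local perturbation tdi}; and your Schmidt-decomposition step is exactly the paper's Lemma~\ref{lem: bipartite}.
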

\begin{proof}
First of all, Lemma~\ref{lem: small difference f} together with Lemma~\ref{lem: evolution by similar tdi} imply that $\Vert \alpha_{F}(1)[A]- \alpha_{\widetilde F}(1)[A]\Vert \leq  R \hat f(R)\Vert A\Vert$ for any $A\in\caA_{I_n}$. We absorb $R$ into $\hat f$ and we conclude that $|| (\psi-\phi\circ\alpha_{\widetilde F})_{I_n} ||\leq \hat f(R)$.  It remains therefore to compare $\phi$ with $\phi\circ\alpha_{\widetilde F}$. For this, we  partition, see Figure~\ref{fig: intervals},  
$$I_n= I_- \cup I_+, \qquad  I_-=I_n \cap J_{n}, \qquad I_+=I_n \cap J_{n+1}.$$
 Let us first estimate $||(\psi-\phi)|_{I_{+}}||$.
Let $F_n=H+E^{(Rn)}$. Then $F_n$ generates a loop with basepoint $\phi$ by Lemma~\ref{lem: splitting single edge} and we have, for $A\in \caA_{I_+}$,
\begin{align}
\phi(A)-\phi\circ\alpha_{\widetilde F}(A)& = \phi(\alpha_{F_n}(1)[A]- \alpha_{\widetilde F}(1)[A] ).
\end{align}
We note now that the TDI $\widetilde F-F_n=\sum_{n'\neq n } E^{(Rn')}$ is anchored at distance at least $R/2$ from $I_+$, see Figure~\ref{fig: intervals} and Lemma \ref{lem: loc and liebrobinson}.  Therefore, by Lemma \ref{lem: evolution by similar tdi},  the integrand is bounded by  $\hat{f}(R) ||A ||$.
We next estimate $||(\psi-\phi)|_{I_{-}}||$ by the same argument but this time using $F_{n+1}=H+E^{(R(n+1))}$, so that, again $\widetilde F-F_{n+1}$ is anchored at distance $R/2$ from the relevant region $I_-$.  We have thus obtained 
$$
||(\psi-\phi)_{I_{\pm}}||  \leq \hat{f}(R).
$$
These estimates on each half $I_\pm$ of $I_n$ suffice to obtain the claim by applying Lemma~\ref{lem: bipartite}, where the projectors correspond to the (pure) restriction of the product state $\phi$ to $I_\pm$, and the density matrices are the reduced density matrices of $\psi$ to $I_\pm$.
\end{proof}
\begin{lemma}\label{lem: almost loop k}
Let $\psi=\phi \circ \alpha_{Z}^{-1}(1)$. Then
$$||(\psi  -\phi)|_{J_n}|| \leq \hat{f}(R).$$
\end{lemma}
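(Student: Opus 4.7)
The strategy will mirror that of Lemma~\ref{lem: almost loop f}: I will first pass from $Z$ to $\widetilde Z$ at a cost of $\hat f(R)$, then exploit the loop identity $\phi\circ\alpha_H(1)=\phi$ to reduce the problem to $\phi\circ\alpha_F(1)$, which factorizes over the intervals $I_n$ that cover $J_n$. From there the previous lemma does the work.

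For the first step, take $A\in\caA_{J_n}$ and write
\[
\alpha_Z^{-1}(1)[A]-\alpha_{\widetilde Z}^{-1}(1)[A]=\alpha_Z^{-1}(1)\Big[A-(\alpha_Z\circ\alpha_{\widetilde Z}^{-1})(1)[A]\Big].
\]
By Lemma~\ref{lem: inverse small diff}(i) the evolution $\alpha_Z\circ\alpha_{\widetilde Z}^{-1}$ is generated by the TDI $\alpha_{\widetilde Z}(s)[Z(s)-\widetilde Z(s)]$, whose $\nor\cdot\nor_{\hat f}$-norm is bounded by $\hat f(R)$ after combining Lemma~\ref{lem: small differences k} with the Lieb--Robinson bound in Lemma~\ref{lem: loc and liebrobinson}(iii). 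Applying Lemma~\ref{lem: evolution by similar tdi}(ii) with $H_1=0$, $H_2$ the above generator, $X=\bbZ$, and $Y=J_n$ of size $\approx R$, I then get $\Vert A-(\alpha_Z\circ\alpha_{\widetilde Z}^{-1})(1)[A]\Vert\leq R\hat f(R)\Vert A\Vert$, and absorbing the factor $R$ into $\hat f$ yields $|(\psi-\phi\circ\alpha_{\widetilde Z}^{-1}(1))(A)|\leq\hat f(R)\Vert A\Vert$.

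Next, I observe that $\alpha_{\widetilde Z}=\alpha_F^{-1}\circ\alpha_H$ gives $\alpha_{\widetilde Z}^{-1}(1)=\alpha_H^{-1}(1)\circ\alpha_F(1)$, and since $\phi\circ\alpha_H(1)=\phi$ implies $\phi\circ\alpha_H^{-1}(1)=\phi$, the exact identity $\phi\circ\alpha_{\widetilde Z}^{-1}(1)=\phi\circ\alpha_F(1)$ holds. Because $F$ is split over every edge $Rn$, the automorphism $\alpha_F(1)$ factorizes as $\bigotimes_n\alpha_F^{(n)}(1)$ with $\alpha_F^{(n)}(1)$ acting inside $\caA_{I_n}$; together with the product structure of $\phi$, this makes $\phi\circ\alpha_F(1)$ a product over the $I_n$. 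For $R$ large enough, $J_n\subset I_{n-1}\cup I_n$, so restricting gives
\[
(\phi\circ\alpha_F(1))|_{J_n}=(\phi\circ\alpha_F(1))|_{J_n\cap I_{n-1}}\otimes(\phi\circ\alpha_F(1))|_{J_n\cap I_n},
\]
and Lemma~\ref{lem: almost loop f} applied to $I_{n-1}$ and $I_n$ bounds each factor's distance to the corresponding restriction of $\phi$ by $\hat f(R)$. A bipartite triangle inequality (or Lemma~\ref{lem: bipartite}) then produces $\Vert(\phi\circ\alpha_F(1)-\phi)|_{J_n}\Vert\leq 2\hat f(R)$, and combining with the first step via the triangle inequality finishes the proof.

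The main technical hurdle is Step~1, since $Z-\widetilde Z$ is not anchored in a finite region but only small in $\nor\cdot\nor_{\hat f}$. The trick is to set $X=\bbZ$ in Lemma~\ref{lem: evolution by similar tdi}(ii); this picks up the factor $|J_n|\approx R$ from $1+|X\cap Y|$, but the rapid decay of $\hat f$ absorbs it and still yields a $\hat f(R)$ bound. The rest of the argument is structural and follows the template already established for~$F$.
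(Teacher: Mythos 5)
Your proof is correct and follows essentially the same route as the paper's: you pass from $Z$ to $\widetilde Z$ at a cost controlled by Lemma~\ref{lem: small differences k}, use the loop identity $\phi\circ\alpha_H(1)=\phi$ to reduce the remaining term to $\phi\circ\alpha_F(1)$, and then invoke Lemma~\ref{lem: almost loop f} over the two $I$-intervals covering $J_n$. The only cosmetic difference is that you compare $\alpha_Z^{-1}$ with $\alpha_{\widetilde Z}^{-1}$ rather than writing $\psi=\phi\circ\alpha_F(1)\circ\alpha_{\widetilde Z}(1)\circ\alpha_Z^{-1}(1)$ and peeling off $\alpha_{\widetilde Z}\circ\alpha_Z^{-1}$; both lead to the same small generator $\propto Z-\widetilde Z$ and the same $R\hat f(R)$ estimate. (As a minor remark, your indexing $J_n\subset I_{n-1}\cup I_n$ is the correct one; and since $\phi\circ\alpha_F(1)|_{J_n}$ is itself a product, a plain triangle inequality suffices for the final factorization step, so the advertised $2\hat f(R)$ is indeed what you get without needing the square-root loss of Lemma~\ref{lem: bipartite}.)
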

\begin{proof}
Recalling that $\alpha_{\widetilde Z}=\alpha_{F}^{-1} \circ \alpha_H$, we have
$$\psi= \phi \circ \alpha_{H}^{-1}(1) \circ \alpha_{F}(1) \circ \alpha_{\widetilde Z}(1) \circ 
\alpha_{Z}^{-1}(1)= \phi \circ \alpha_{F}(1) \circ \alpha_{\widetilde Z}(1) \circ 
\alpha_{Z}^{-1}(1),
$$
where the last equality follows because $H$ generates a loop with basepoint $\phi$. 
Now we estimate
\begin{equation}\label{eq: almost loop k}
|\psi(A)-\phi(A)| \leq    |\phi \circ \alpha_{F}(1)[A]-\phi[A]| +   ||  A- \alpha_{\widetilde Z}(1) \circ 
\alpha_{Z}^{-1}(1)[A] ||
\end{equation}
for $A\in\caA_{J_n}$, and we deal with both terms separately. For the first term, we use that $J_n$ is contained in the union $I_n \cup I_{n+1}$. Employing Lemma  \ref{lem: almost loop f} for the intervals $I_n,I_{n+1}$ and Lemma \ref{lem: bipartite}, we can bound this term by $||A || \hat{f}(R)$.
  For the second term in \eqref{eq: almost loop k}, we note that $\alpha_{\widetilde Z} \circ 
\alpha_{Z}^{-1}$ is generated by $W= \alpha_Z[\widetilde Z - Z]$, see Lemma \ref{lem: manipulation of evolutions}. The smallness of $\widetilde Z - Z$, see Lemma 
\ref{lem: small differences k} leads via Lemma \ref{lem: loc and liebrobinson} to  $||W||_{\hat{f}}\leq \hat{f}(R)$ and hence  to the estimate
$$
 ||  A- \alpha_{W}(1)[A] || \leq \int_0^1 ds  ||  [W(s), A]||  \leq    \hat{f}(R)  |J_n| || A ||
$$
Since $|J_n|$ grows linearly in $R$, it can be absorbed in $\hat{f}(R)$. Therefore, \eqref{eq: almost loop k} is bounded by $\hat{f}(R) ||A||$, as required. 
\end{proof}

Let us check that we achieved the result described at the beginning of Section \ref{sec: approximate splitting}. We write
\begin{equation}\label{First Splitting Loop}
    \alpha_H=   \alpha_{F} \circ  \alpha_{F}^{-1} \circ \alpha_H = \alpha_{F} \circ  \left(\alpha_{\widetilde Z} \circ \alpha^{-1}_Z\right) \circ \alpha_Z =  \alpha_{F} \circ  \alpha_{W} \circ \alpha_Z
\end{equation}
and the desired properties of  $F,Z,W$ are manifest from the results above.

\subsection{Final stage of the proof of Proposition \ref{prop: loop contractible to small loop}} \label{sec: final stage proof}

\subsubsection{Closing the product quasi-loops}

Up to now, we have defined the TDIs $F,Z$ that are split on intervals of length $R$. They do not generate loops, but they fail to do so by an error which decays fast in~$R$. We now show that, by modifying $F,Z$ by a small term that is split as well, the quasi-loops are made into genuine loops.
\begin{lemma}\label{lem: closure quasi loops}
There are TDIs $E=E^{(R)},L=L^{(R)}$ such that 
\begin{enumerate}
\item  $\nor E \nor_{\hat{f}}\to 0 $ and $\nor L \nor_{\hat{f}}\to 0 $, as $R\to\infty$.
\item  $E$ is split over the intervals $I_n$ and $L$ is split over the intervals $J_n$.
\item  $\phi \circ  \alpha_{F} \circ \alpha_{E}$  and $\phi  \circ \alpha_{L} \circ \alpha_{Z}$ are loops with basepoint $\phi$.
\end{enumerate}
\end{lemma}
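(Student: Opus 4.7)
The plan is to split the loop-closure problem into independent finite-interval problems, one per $I_n$ for $E$ and one per $J_n$ for $L$, and to solve each via the zero-dimensional parallel transport Lemma~\ref{lem: parallel transport simple}. Since $F$ is split over the $I_n$, the state $\phi \circ \alpha_F(1)$ factorizes as $\bigotimes_n \psi_n$ with $\psi_n := \phi|_{I_n} \circ \alpha_{F|_{I_n}}(1)$. As $F|_{I_n}$ is anchored on the finite set $I_n$, each $\alpha_{F|_{I_n}}(1)$ is an inner $G$-invariant automorphism of $\caA_{I_n} = \caB(\caH_{I_n})$, so every $\psi_n$ is a pure $G$-invariant state on this matrix algebra, with $\|\psi_n - \phi|_{I_n}\| \leq \hat f(R)$ by Lemma~\ref{lem: almost loop f}. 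The analogous factorization $\phi \circ \alpha_Z^{-1}(1) = \bigotimes_n \psi'_n$ and bound $\|\psi'_n - \phi|_{J_n}\| \leq \hat f(R)$ follow from Lemma~\ref{lem: almost loop k}.

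The key point is the vanishing, for $R$ large enough, of the relative $G$-charges $h_{\psi_n/\phi|_{I_n}}$ and $h_{\psi'_n/\phi|_{J_n}}$. On a finite interval $X$, the on-site $G$-action is implemented by the unitary $U_X(g) = \bigotimes_{j \in X} U_j(g)$. If $\Psi, \Phi$ are unit representatives of two pure $G$-invariant states with $U_X(g)\Psi = z_1(g)\Psi$ and $U_X(g)\Phi = z_2(g)\Phi$, then computing $\langle \Phi, U_X(g)\Psi\rangle$ in two different ways gives $(z_1(g) - z_2(g))\langle \Phi, \Psi\rangle = 0$. The trace-distance identity for pure states, $\big\| |\Psi\rangle\langle\Psi| - |\Phi\rangle\langle\Phi| \big\|_1 = 2\sqrt{1 - |\langle \Psi, \Phi\rangle|^2}$, combined with $\|\psi_n - \phi|_{I_n}\| \leq \hat f(R) < 2$ for $R$ large, forces $|\langle \Psi, \Phi\rangle| > 0$, hence $z_1 = z_2$ and the relative charge vanishes. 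The same argument handles $\psi'_n$.

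With the charges shown to vanish, Lemma~\ref{lem: parallel transport simple} applied on each $\caB(\caH_{I_n})$ produces a $G$-invariant zero-dimensional TDI $E^{(n)}$ with $\nor E^{(n)} \nor \leq 8\hat f(R)$ and $\phi|_{I_n} = \psi_n \circ \alpha_{E^{(n)}}(1)$; symmetrically one obtains $L^{(n)}$ on $\caB(\caH_{J_n})$ with $\nor L^{(n)} \nor \leq 8\hat f(R)$ and $\psi'_n = \phi|_{J_n} \circ \alpha_{L^{(n)}}(1)$. Assemble chain TDIs by setting $E_{I_n}(s) := E^{(n)}(s)$ and $L_{J_n}(s) := L^{(n)}(s)$, with all other components zero. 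These are $G$-invariant and manifestly split over the $I_n$ (resp.\ the $J_n$). Choosing $\tilde f := \sqrt{\hat f} \in \caF$, the only nonzero contribution to $\|E(s)\|_{\tilde f}$ at any site $j \in I_n$ comes from $S = I_n$, so $\|E(s)\|_{\tilde f} \leq 8\hat f(R)/\tilde f(R) = 8\sqrt{\hat f(R)} \to 0$, and analogously for $L$. The factorizations finally give $\phi \circ \alpha_F(1) \circ \alpha_E(1) = \bigotimes_n \phi|_{I_n} = \phi$, so $\phi \circ \alpha_F \circ \alpha_E$ is a loop based at $\phi$, and by the same token $\phi \circ \alpha_L \circ \alpha_Z$ is as well.

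The main obstacle is the vanishing-charge step: one must convert a norm estimate between two pure $G$-invariant states on a finite matrix algebra into an exact equality of their characters. Purity is essential here, as it supplies the overlap identity that rules out any character mismatch as soon as the trace distance drops below $2$; the splitting structure of $F$ and $Z$ on the finite intervals is precisely what makes purity (hence this finite-dimensional reduction) available. Everything else is bookkeeping on TDI norms via Lemma~\ref{lem: loc and liebrobinson}.
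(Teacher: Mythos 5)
Your proof is correct and follows the same route as the paper's: restrict to each finite interval, invoke Lemma~\ref{lem: parallel transport simple} to parallel-transport $\phi\circ\alpha_F(1)|_{I_n}$ back to $\phi|_{I_n}$ (and similarly over the $J_n$), and reassemble the zero-dimensional generators into a split TDI whose anchored norm is controlled by $\hat f(R)$ because every block has the same length $R$.

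The one place you add something the paper's proof leaves unspoken is the verification that the relative $G$-charges $h_{\psi_n/\phi|_{I_n}}$ and $h_{\psi'_n/\phi|_{J_n}}$ vanish, which is the hypothesis of Lemma~\ref{lem: parallel transport simple}~(iii) needed to make $E$ and $L$ $G$-invariant. Your overlap argument --- two close pure $G$-invariant states on a matrix algebra have nonzero overlap, hence the same character --- is correct. Note, though, that it only works once $\hat f(R)<2$, so strictly it only produces the TDIs for $R$ beyond a threshold (harmless here, since the conclusion is asymptotic). A slightly more robust observation makes the charge vanish for every $R$: the restriction $F|_{I_n}$ is a $G$-invariant interaction anchored in the finite set $I_n$, so $\alpha_{F|_{I_n}}(1)=\Adjoint(V_n)$ with $V_n$ a $G$-invariant unitary (the Dyson series for a $G$-invariant generator is $G$-invariant), and Lemma~\ref{lem: local perturbation tdi}~(ii) then immediately gives $h_{\psi_n/\phi|_{I_n}}=0$; likewise for $Z|_{J_n}$. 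Either way your reasoning closes the gap and the rest of the bookkeeping on the anchored norms is exactly as in the paper.
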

\begin{proof}
Both $\phi \circ \alpha_{F}(1)$ and $\phi$ are factorized exactly over intervals $I_n$. We consider each interval separately and we note that we are exactly in the framework of Lemma \ref{lem: parallel transport simple}. For each $n$, we obtain a time-dependent observable $E_{n}(\cdot)$ in $ \caA_{I_n}$
that generates parallel transport from $(\phi \circ \alpha_{F}(1))_{I_n}$ to $\phi_{I_n}$, with norm bounded by 
\begin{equation}\label{En s}
\nor E_n \nor \leq 8 || (\phi \circ \alpha_{F}(1)-\phi)_{I_n}    || \leq  \hat{f}(R),
\end{equation}
see Lemma~\ref{lem: almost loop f}. We then assemble the different $E_n(\cdot)$ into a TDI: Let $
E_S(s)= E_n(s) $ for $S=I_n$ for some $n$ and $E_S(s)=0$ otherwise.
This TDI $E$ is manifestly split and it satisfies
$$
\phi \circ \alpha_{F}(1) \circ \alpha_E(1)=\phi.
$$
Since the non-zero contributions to $E$ are given explicitly by $E_n$ that are strictly supported in the finite intervals $I_n$, the norm of $E$ is immediately bounded, for any $h \in \caF$, by
\begin{equation}\label{E epsilon}
\nor E \nor_h\leq \sup_n \frac{\sup_{s}||E_n(s)||}{h(R)}.
\end{equation}
We choose $h=\sqrt{\hat{f}}$, and so (\ref{En s}) and (\ref{E epsilon}) yield $\nor E \nor_{\sqrt{\hat{f}}}\leq \sqrt{\hat{f}(R)}$. This proves all claims concerning $E$.
We proceed similarly with the generator of $\alpha_Z^{-1}$ given by Lemma~\ref{lem: manipulation of evolutions}, but replacing now $I_n$ by $J_n$ and calling the resulting (split and small) interaction $Y$. This achieves that $\phi = \phi  \circ \alpha_{Z}^{-1}(1) \circ \alpha_{Y}(1) 
$ and hence also 
$$
\phi = \phi  \circ \alpha_{Y}^{-1}(1) \circ \alpha_{Z}(1) 
= \phi  \circ \alpha_{L}(1) \circ \alpha_{Z}(1).
$$
where $L$ generates the inverse of $\alpha_Y$.
\end{proof}

\subsubsection{Contracting product loops}

In the above sections, we have produced $G$-loops factorized over a collection of intervals $Y_n$. In our case $Y_n=I_n$ or $Y_n=J_n$.
We will now show that such a loop is contractible, i.e.\ homotopic to a constant loop.  
\begin{lemma}\label{lem: contractibility of product loop}
Consider a $G$-loop $\psi$ that is factorized over intervals $Y_n$, with $Y_n$ forming a partition of $\bbZ$. If $\diam(Y_n)$ is bounded uniformly in $n$, then $\psi$ is $G$-homotopic to a constant loop with the same basepoint. 
\end{lemma}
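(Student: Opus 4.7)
The strategy is to reduce the contraction to a family of independent finite-dimensional problems on each factor $\caA_{Y_n}$ and then assemble, using the uniform bound $\sup_n|Y_n|\leq R$ to transfer operator-norm bounds on the individual pieces into $|||\cdot|||_f$-bounds on the whole. First I would choose a generator of $\psi$ that is split across the $Y_n$: since $\psi(s)=\bigotimes_n \psi_n(s)$ with each $\psi_n(s)$ a pure $G$-invariant state on the finite-dimensional matrix algebra $\caA_{Y_n}$, each local path $s\mapsto\psi_n(s)$ can be generated by a $G$-invariant time-dependent Hamiltonian $H^{(n)}(s)\in\caA_{Y_n}$ of uniformly bounded operator norm (any smooth finite-dimensional loop admits such a generator, and symmetrization produces the $G$-invariant version). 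Setting $H(s)=\sum_n H^{(n)}(s)$ as a weak sum gives a split $G$-invariant TDI generating $\psi$ with finite $|||H|||_f$ by Lemma~\ref{lem: loc and liebrobinson}(ii).

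Second, I would analyze each factor loop topologically. By continuity of $s\mapsto\psi_n(s)$ and the discreteness of $H^1(G)$, Proposition~\ref{prop: zerodim} forces each $\psi_n$ to stay inside a single charge sector $h_n\in H^1(G)$ relative to $\psi_n(0)$; the representing unit vectors in $\caH_{Y_n}$ then lie in the finite-dimensional $G$-isotypic subspace $\caH_{Y_n}^{(h_n)}\subset\caH_{Y_n}$, on which the implementing representation $U_n(g)$ acts as the scalar $e^{ih_n(g)}$. The pure $G$-invariant states in this sector form a complex projective space $\mathrm{CP}(\caH_{Y_n}^{(h_n)})$, which is simply connected (or a single point), so each loop $\psi_n$ is null-homotopic inside it.

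Third, and this is the crux, I would build an explicit $G$-invariant contraction $\psi_n^{(\lambda)}$ of each loop $\psi_n$ to the constant loop at $\psi_n(0)$, with generators of uniformly bounded operator norm. The loop $\psi_n$ has Fubini-Study length at most $\int_0^1\|H^{(n)}(s)\|\,ds\leq C$ independently of $n$. Subdividing $[0,1]$ into finitely many subintervals so that the image of each subarc has length less than the injectivity radius $\pi/2$ of $\mathrm{CP}(\caH_{Y_n}^{(h_n)})$, I would first deform each subarc to the unique connecting geodesic in the complex $2$-plane of its endpoints, then retract the resulting piecewise-geodesic loop back to the basepoint vertex by vertex via geodesic homotopies. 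Each elementary geodesic motion is generated by a rank-two Hermitian operator of the form $iC[\rho_1,\rho_0]$, where $\rho_i$ are the endpoint projections and $C$ is controlled by the Fubini-Study distance; its operator norm depends only on $C$, not on the dimension of $\caH_{Y_n}^{(h_n)}$. $G$-invariance is automatic because the whole construction sits inside the $G$-isotypic subspace, on which the group action is scalar.

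Fourth, I would assemble by setting $H_\lambda(s):=\sum_n H_\lambda^{(n)}(s)$ and $F_s(\lambda):=\sum_n F_s^{(n)}(\lambda)$, where $H_\lambda^{(n)}$ and $F_s^{(n)}$ generate the two coordinate directions of the local homotopy on $\caA_{Y_n}$. By Lemma~\ref{lem: loc and liebrobinson}(ii) together with the disjointness and uniform diameter of the $Y_n$, these are $G$-invariant TDIs with finite $|||\cdot|||_f$-norm, satisfying the compatibility equations~\eqref{eq: homotopy} and the uniform boundedness~\eqref{Homotopy: Uniform bound} by construction, and the basepoint $\psi(0)$ is preserved throughout. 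The main obstacle lies in the third step: simple-connectedness alone only provides existence of a contraction, and the uniform operator-norm bound demanded by~\eqref{Homotopy: Uniform bound} rests on the interplay of a quantitative input (bounded Fubini-Study length of each individual loop, inherited from $\|H^{(n)}(s)\|\leq C$) and a dimension-free construction (geodesic contraction via rank-two generators); without both, one could not exclude contractions of diverging speed, as warned by the pathological reparametrization example following the definition of homotopy.
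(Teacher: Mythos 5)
Your overall architecture matches the paper's: factorize over the $Y_n$, contract each finite-dimensional loop separately with a dimension-independent generator bound, then assemble via the disjointness of the $Y_n$ and Lemma~\ref{lem: loc and liebrobinson}(ii). Steps 1, 2 and 4 are essentially what the paper does (it cites Lemma~\ref{lem: contractibility zero dim} for the local contraction and Lemma~\ref{lem: closure quasi loops} for the assembly). Where you genuinely diverge is in your step 3, the finite-dimensional contraction itself.

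The paper's version (Appendix~\ref{app: tricks}, proof of Lemma~\ref{lem: contractibility zero dim}) avoids all of the differential-geometric machinery you invoke. It lifts the loop to a path of unit vectors $\widetilde\Psi(s)=U(s)\Omega$, then performs a $U(1)$-gauge correction $\Psi(s)=a(s)\widetilde\Psi(s)$ with $a$ Lipschitz of constant $4|||E|||$, chosen so that $\Re\langle\Psi(s),\Omega\rangle\geq -1/2$ for all $s$. The contraction is then the bare affine interpolation $\Psi_\lambda(s)=N(s,\lambda)^{-1/2}(\lambda\Omega+(1-\lambda)\Psi(s))$, which is well defined and Lipschitz uniformly in the Hilbert-space dimension because the phase trick guarantees $N\geq 1/4$. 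The resulting Kato generator $E_\lambda(s)=-i[\partial_s P,P]$ then obeys the explicit bounds $\sup_\lambda|||E_\lambda|||\leq 80|||E|||$, $\sup_s|||F_s|||\leq 208$. This buys exactly what you correctly flag as the crux: a dimension-free, quantitative contraction, and it does so with a one-line trick rather than Fubini-Study geodesics.

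Your geodesic plan is plausible but leaves two nontrivial gaps unfilled. First, the deformation of each short subarc to the connecting geodesic is itself a homotopy that must be generated by uniformly bounded TDIs in both parameter directions; you assert existence but do not construct it, and it is not an off-the-shelf fact (the naive convex interpolation between a curve and a geodesic in $\mathrm{CP}(\caH)$ is precisely the kind of thing the paper's phase trick is designed to control). Second, the "retraction vertex by vertex" is not spelled out: concatenating finitely many geodesic homotopies gives a contraction, but one must check that both the $s$-direction and $\lambda$-direction generators stay uniformly bounded throughout and satisfy the compatibility equations~\eqref{eq: homotopy}. Both issues are presumably fixable with enough care, but they are exactly the sort of quantitative bookkeeping the paper's affine construction sidesteps. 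You also spend effort on the isotypic-subspace/discreteness of $H^1(G)$ argument for $G$-invariance; the paper absorbs this into Lemma~\ref{lem: contractibility zero dim}(v) using the same observation (the whole contraction lives in the span of $\{\Omega,\Psi(s)\}$, which all carry the same character), so that part is consistent with the paper, just more verbose.

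One small point worth making explicit in either approach: the lemma requires local generators for the restricted loops $\psi_n(\cdot)=\psi(\cdot)|_{Y_n}$. In the applications the ambient TDI is already split over the $Y_n$, which makes this immediate; if you present the lemma in the generality stated, you should argue (as you do in step 1) that a split generator can be chosen, rather than inheriting one from the hypothesis.
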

It is intuitively clear that this lemma relies on the fact that zero-dimensional systems don't allow for nontrivial loops because $\caB(\caH)$ is simply connected.  However, our notion of loops and their homotopy for states on spin chain algebras, does not simply reduce, in the case the chain is finite, to the standard notion of continuous loops and homotopy on topological spaces.  The reason is that we require the loops to be generated by a TDI, which, even for loops on $\caB(\caH)$, is a stronger notion than continuity. Actually, the requirement of being generated by a zero-dimensional TDI  $E(\cdot)$, as defined before Lemma \ref{lem: parallel transport}, is in that case equivalent to absolute continuity of the loop. 
For this reason, we actually need an additional lemma on zero-dimensional loops.
\begin{lemma}\label{lem: contractibility zero dim} Let $s\mapsto\nu(s)$ be a loop of pure normal states on $\caB(\caH)$, with $\caH$ a finite-dimensional Hilbert space, 
and  $E(\cdot)$ a zero-dimensional TDI with $ |||E|||= \sup_s||E(s)|| <\infty$ such that 
$$ \nu(s)=  \nu(0) \circ  \alpha_E(s).$$
Then, there is a two-dimensional family of pure normal states $\nu_{\lambda}(s)$ with $\lambda, s \in [0,1]$, and zero-dimensional TDIs $E_\lambda(\cdot),F_s(\cdot)$ such that 
\begin{enumerate}
\item $\sup_\lambda|||E_\lambda||| \leq  80|||E||| $ and  $\sup_s |||F_s||| \leq  208 $
\item $\nu_1(s)=\nu_{\lambda}(0)=\nu_{\lambda}(1)=\nu(0)=\nu(1)$ for all $s,\lambda$. 
\item  $\nu_{\lambda}(s) =  \nu(0) \circ  \alpha_{E_\lambda}(s)  $
\item  $\nu_{\lambda}(s)  = \nu(s) \circ  \alpha_{F_s}(\lambda)   $
\item If the loop $\nu(\cdot)$ is $G$-invariant, then the family $\nu_\lambda(s)$ and the families $E_\lambda(s),F_s(\lambda)$ can also be chosen $G$-invariant. 
\end{enumerate}
\end{lemma}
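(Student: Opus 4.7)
The plan. The lemma is a quantitative version of the fact that the pure-state space $\mathbb{CP}(\caH)$ is simply connected whenever $\dim\caH\geq 2$; the case $\dim\caH=1$ is trivial as $\caB(\caH)=\bbC$ admits a unique normal state. The real content is to produce an explicit homotopy whose $s$- and $\lambda$-generators are controlled in norm by universal constants.

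\emph{Step 1 (vector lift and phase absorption).} Pick a unit vector $\Psi_0\in\caH$ representing $\nu(0)$ and let $U(s)\in\mathrm U(\caH)$ solve $i\partial_sU(s)=E(s)U(s)$, $U(0)=\mathbf{1}$. Then $\Psi(s):=U(s)\Psi_0$ represents $\nu(s)$ and $\|\dot\Psi(s)\|\leq|||E|||$. The loop condition gives $U(1)\Psi_0=e^{i\theta}\Psi_0$ for some $\theta\in(-\pi,\pi]$. Replacing $E(s)$ by $E(s)-\theta|\Psi_0\rangle\langle\Psi_0|$ does not change the state evolution (it shifts $\Psi(s)$ only by an overall phase) but turns $\Psi(\cdot)$ into a closed curve on the unit sphere $S(\caH)$; the cost in generator norm is at most $\pi$, to be absorbed in the final constants. \emph{Step 2 (subdivision to avoid antipodes).} Since $\|\dot\Psi(s)\|\leq|||E|||$, partition $[0,1]$ into $N$ equal intervals with $N$ just large enough that on each $[s_k,s_{k+1}]$ one has $\mathrm{Re}\langle\Psi(s_k),\Psi(s)\rangle>1/2$; this requires $N$ linear in $|||E|||$. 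On each such arc the straight-line normalized interpolation
$$
\Phi_\mu^{(k)}(s)=\frac{(1-\mu)\Psi(s)+\mu\Psi(s_k)}{\|(1-\mu)\Psi(s)+\mu\Psi(s_k)\|},\qquad\mu\in[0,1],
$$
is well defined (the denominator is bounded below by a universal constant thanks to the hemisphere condition) and contracts the arc smoothly onto $\Psi(s_k)$.

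\emph{Step 3 (global assembly and TDI bounds).} To glue the local contractions into a single homotopy $\nu_\lambda(s)$, I run them together with a second contraction that collapses the polygonal skeleton $\Psi(s_0),\dots,\Psi(s_N)=\Psi_0$ to the basepoint $\Psi_0$. Since consecutive vertices lie in the same hemisphere, this second stage is again a straight-line normalized interpolation, applied recursively if needed, and is uniformly bounded. Concatenating the two stages inside $\lambda\in[0,1]$ and reparameterizing gives the required family $\Psi_\lambda(s)$ and hence $\nu_\lambda(s)=|\Psi_\lambda(s)\rangle\langle\Psi_\lambda(s)|$, with the boundary conditions (ii) manifest and (iii, iv) realized by the unitary lifts $V_\lambda(s), W_s(\lambda)$ of the paths $s\mapsto\Psi_\lambda(s)$ and $\lambda\mapsto\Psi_\lambda(s)$. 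The generators are then extracted as $E_\lambda(s)=i(\partial_sV_\lambda(s))V_\lambda(s)^{-1}$ and $F_s(\lambda)=i(\partial_\lambda W_s(\lambda))W_s(\lambda)^{-1}$; a direct differentiation of the explicit formula for $\Phi_\mu^{(k)}$, together with the reparameterization factors, yields bounds linear in $|||E|||$ for $|||E_\lambda|||$ and independent of $E$ for $|||F_s|||$, the latter because $\lambda$ traverses a Fubini--Study geodesic of length at most $\pi/2$. The specific constants $80$ and $208$ emerge from the concatenation of the two stages and the loss in the hemisphere estimate.

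$G$-invariance (item v) is inherited automatically: every ingredient of the construction (phase shift, subdivision points, straight-line interpolation, recursive contraction) depends functorially on the pair $(\Psi_0,\Psi(s))$ and therefore intertwines the unitary representation of $G$ on $\caH$; a final averaging over the Haar measure of $G$, as in Section~\ref{sec: processes}, symmetrizes the generators without increasing their norms. The main obstacle is the quantitative step. Two competing requirements shape the constants: the subdivision in Step~2 must be fine enough to keep the normalization denominators bounded below, which forces $N\sim|||E|||$, while at the same time the ensuing reparameterization in $s$ must not multiply $|||E_\lambda|||$ by an uncontrolled factor. Balancing these precisely is what fixes the explicit numbers $80$ and $208$.
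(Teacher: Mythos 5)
Your general strategy—lifting to vectors, controlling the $s$-Lipschitz bound via $|||E|||$, and contracting by straight-line normalized interpolation—is the right family of ideas, and Step~1 matches the paper. But the subdivision in Step~2 is not what the paper does, and it breaks the bound you need in Step~3.

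The genuine gap is the claimed bound $\sup_s|||F_s|||\leq 208$, \emph{independent of $|||E|||$}. For a fixed $s\in[s_k,s_{k+1}]$, your $\lambda$-contraction must traverse the local interpolation back to $\Psi(s_k)$ and then collapse the polygonal skeleton $\Psi(s_k),\Psi(s_{k-1}),\dots,\Psi(s_0)$ down to the basepoint. Each edge of that skeleton has length of order $1$ in projective space (the hemisphere condition only guarantees they are not antipodal), and there are $k\leq N\sim|||E|||$ of them. Any continuous $\lambda\in[0,1]$ reparameterization of this concatenated path therefore produces a generator with $|||F_s|||\gtrsim N\sim|||E|||$, not a universal constant. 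Your remark that ``$\lambda$ traverses a Fubini--Study geodesic of length at most $\pi/2$'' would only apply if you chose the geodesic from $\nu(s)$ directly to $\nu(0)$ at each $s$; but that geodesic is not defined continuously when the loop passes through states orthogonal to $\nu(0)$, which is precisely what the subdivision was introduced to sidestep. The polygon-collapse and the universal $F_s$-bound cannot both hold.

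What you are missing is the paper's key device, which eliminates the subdivision entirely: instead of absorbing only a single global phase at $s=1$, one multiplies $\widetilde\Psi(s)$ by a \emph{Lipschitz $U(1)$-valued function} $a(s)$ chosen so that $k(s)=\Re\langle a(s)\widetilde\Psi(s),\Omega\rangle\geq -1/2$ for \emph{all} $s$. This is possible with Lipschitz constant $4|||E|||$ because the sets $\{k>0\}$ and $\{k<-1/2\}$ are at distance $\gtrsim 1/|||E|||$ and a sign flip (multiplication by $\pm1$, smoothed) is allowed in between. With $\Psi(s)=a(s)\widetilde\Psi(s)$ in a fixed hemisphere around $\Omega$, the single global interpolation
$\Psi_\lambda(s)=(\lambda\Omega+(1-\lambda)\Psi(s))/\sqrt{N(s,\lambda)}$ with $N(s,\lambda)=\lambda^2+(1-\lambda)^2+2\lambda(1-\lambda)k(s)\geq 1/4$
works for every $s$ simultaneously; the $\lambda$-derivative is then bounded by a universal constant, giving $|||F_s|||\leq 208$, while the $s$-derivative inherits the Lipschitz constant of $a(\cdot)$ and $\Psi(\cdot)$, giving $|||E_\lambda|||\leq 80|||E|||$. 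Finally, since every ingredient ($a$, $k$, the interpolation, the Kato generator $-i[\partial P,P]$) intertwines the $G$-action, $G$-invariance is automatic without any Haar averaging; averaging an arbitrary homotopy after the fact need not preserve the loop boundary conditions, so that step in your argument should be dropped in favor of the functoriality observation you already made.
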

We refer to the appendix for proof of this lemma, relying on standard Hilbert space theory.

\begin{proof}[Proof of Lemma \ref{lem: contractibility of product loop}]
Let $E$ be the TDI generating the factorized loop $\psi$, with basepoint $\phi$. 
Then $(\psi(s))_{Y_n}$ is a loop of pure, G-invariant states on $\caA_{Y_n}$.
Since $\caA_{Y_n}$ is isomorphic to $\caB(\caH_{Y_n})$ through the map $\pi^{-1}$, we can apply Lemma \ref{lem: contractibility zero dim} for every $n$ separately. We first obtain then the time-dependent operators $E_{\lambda,n}(\cdot)$ and $F_{s,n}(\cdot)$ in $\caB(\caH_{Y_n})$.
Proceeding as in the proof of Lemma~\ref{lem: closure quasi loops}, we assemble these time-dependent operators in two TDIs $E_\lambda,F_s$.
Both have finite $f$-norm for any $f\in\caF$ since $\sup_n|Y_n|<\infty$ and $\sup_n||E_{n,\lambda}||<\infty$.
\end{proof}

\subsubsection{Proof of Proposition~\ref{prop: loop contractible to small loop}}

We start from the identity (\ref{First Splitting Loop}) and use Lemma~\ref{lem: closure quasi loops} by bringing in the almost local evolutions $E,L$ defined therein and  bracketing terms judiciously to get:
$$
\alpha_H=   \left(\alpha_{F} \circ \alpha_E\right)   \circ  \left(\alpha^{-1}_E \circ  \alpha_{\widetilde Z} \circ \alpha^{-1}_Z \circ \alpha^{-1}_L\right) \circ \left(\alpha_L \circ \alpha_Z\right).
$$ 
The first and third bracketed evolutions generate loops with basepoint $\phi$. Since $\alpha_H$ also generates a loop with basepoint $\phi$, we conclude that also the second bracketed term generates a loop with basepoint $\phi$. By Lemma \ref{lem: composition of loops is homotopic to concatenation},  the above almost local evolution is $G$-homotopic to the concatenation of 3 loops\footnote{Concatenation of loops is not an associative operation. However, Lemma \ref{lem: composition of loops is homotopic to concatenation} shows that $(\alpha_{H_1}\loopc\alpha_{H_2})\loopc \alpha_{H_3}$ is homotopic to $\alpha_{H_1}\loopc(\alpha_{H_2}\loopc \alpha_{H_3})$, since the composition $\circ$ is associative. Therefore we allow ourselves here to write simply $\ldots\loopc\ldots\loopc\ldots$.}.
$$
\left( \alpha_{F} \circ \alpha_E \right) \loopc \left(\alpha^{-1}_E \circ  \left(\alpha_{\widetilde Z} \circ \alpha^{-1}_Z\right) \circ \alpha^{-1}_L\right) \loopc\left( \alpha_L \circ \alpha_Z \right)
$$
 The first and third loop are product loops, hence contractible by Lemma \ref{lem: contractibility of product loop}. It follows that the loop $\phi\circ \alpha_H$ is $G$-homotopic to the loop $\phi\circ \alpha_{\widetilde H}$, where
$$
\alpha_{\widetilde H}=\alpha^{-1}_E \circ  \left(\alpha_{\widetilde Z} \circ \alpha^{-1}_Z\right) \circ \alpha^{-1}_L.
$$
Just as all of $E,\widetilde Z,Z,L$ constructed above, the  TDI $\widetilde H$ depends on the parameter $R$, and we denote it $\widetilde H^{(R)}$
Since $Z-\widetilde Z$ is small (Lemma \ref{lem: small differences k}) and $E,L$ are small in the same sense, we get from Lemma \ref{lem: inverse small diff} and Lemma~\ref{lem: loc and liebrobinson}, that  $\widetilde H^{(R)}$ is also small, namely
$||\widetilde H^{(R)}||_{\hat{f}}\leq \hat{f}(R)$. 
\hfill \ensuremath{\Box}

\section{Contractibility of short loops with product basepoint}\label{sec: contractibility of short loops with product basepoint}

In this section, we prove that short loops, i.e.\ where the TDI has small norm, with product basepoint, are $G$-homotopic to a constant loop. The more precise statement is below in Proposition~\ref{prop: small loop is contractible}.  As before, we fix $\phi$ to be a product $G$-state.
\begin{proposition}\label{prop: small loop is contractible}
For any $f\in\caF$, there is an $\epsilon_1(f)$, such that, if $|||H|||_f\leq \epsilon_1(f)$ and the $G$-invariant TDI $H$ generates a loop $\phi\circ\alpha_H$, then this loop is $G$-homotopic to the trivial loop with basepoint~$\phi$, and the homotopy can be chosen to have constant basepoint. 
\end{proposition}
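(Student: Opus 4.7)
\noindent\emph{Proof plan.} The strategy is to invoke the stability of the gapped ground state of a trivial on-site Hamiltonian, provided by Proposition \ref{prop: uniqueness of ground state}, and to build the homotopy directly as the family of ground states of a two-parameter deformation of that Hamiltonian.

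Fix a $G$-invariant on-site interaction $H_0$ with $(H_0)_{\{j\}} = \mathbbm{1} - P_j$, where $P_j$ is the rank-one projector onto the restriction $\phi|_{\caA_j}$, and $(H_0)_S=0$ for $|S|\neq 1$, so that $\phi$ is the unique gapped ground state of $H_0$. For $|||H|||_f$ small, a Duhamel estimate combined with Lemma \ref{lem: loc and liebrobinson} yields
$$
||\alpha_H^{-1}(s)[H_0]-H_0||_{\hat f} \leq C\, |||H|||_f,\qquad s\in[0,1],
$$
for some $\hat f$ affiliated to $f$. Consider the $G$-invariant family of interactions
$$
K_\lambda(s) = (1-\lambda)\, H_0 + \lambda\, \alpha_H^{-1}(s)[H_0], \qquad (s,\lambda)\in[0,1]^2.
$$
Provided $\epsilon_1(f)$ is small enough, every $K_\lambda(s)$ stays within the perturbative regime of Proposition \ref{prop: uniqueness of ground state} and therefore admits a unique gapped ground state $\psi_\lambda(s)$, with a uniform lower bound on the spectral gap.

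The boundary behaviour of this family is exactly what is needed. Since $K_0(s)=H_0$, the loop $\psi_0(\cdot)$ is constantly equal to $\phi$. Since $K_1(s)=\alpha_H^{-1}(s)[H_0]$ has $\phi\circ\alpha_H(s)$ as its ground state, $\psi_1(\cdot)$ reproduces the given short loop. Since $K_\lambda(0)=H_0$ we have $\psi_\lambda(0)=\phi$, which provides the fixed basepoint. The crucial check is at $s=1$: the loop hypothesis $\phi\circ\alpha_H(1)=\phi$ turns $\phi$ into a ground state of $\alpha_H^{-1}(1)[H_0]$ sharing the same zero ground-state energy as $H_0$, so the convex combination $K_\lambda(1)$ inherits $\phi$ as a ground state; uniqueness, from Proposition \ref{prop: uniqueness of ground state}, then forces $\psi_\lambda(1)=\phi$. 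Hence $s\mapsto\psi_\lambda(s)$ is a loop with basepoint $\phi$ for every $\lambda$.

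To promote this family to a genuine homotopy one needs $G$-invariant TDIs $H_\lambda$ and $F_s$ generating the curves $s\mapsto\psi_\lambda(s)$ and $\lambda\mapsto\psi_\lambda(s)$ respectively, and satisfying the uniform bound \eqref{Homotopy: Uniform bound}. These are supplied by the quasi-adiabatic (Hastings) generators associated with the smooth curves $s\mapsto K_\lambda(s)$ and $\lambda\mapsto K_\lambda(s)$: the uniform spectral gap and the almost-local bounds on $\partial_s K_\lambda = -\lambda\, \alpha_H^{-1}(s)[[H(s),H_0]]$ and $\partial_\lambda K_\lambda = \alpha_H^{-1}(s)[H_0]-H_0$ translate, via the Hastings integral formula, into $G$-invariant TDIs whose $\nor\cdot\nor_{\hat f}$-norms are $O(|||H|||_f)$. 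The main technical obstacle is carrying out the quasi-adiabatic construction in the infinite-volume, almost-local framework while preserving $G$-invariance and verifying the strong measurability and decay hypotheses for a TDI; this is precisely the setting in which Proposition \ref{prop: uniqueness of ground state}, together with its concomitant gap stability, plays its decisive role.
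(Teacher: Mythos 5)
Your proof is essentially the same as the paper's: both construct the homotopy as the unique gapped ground states of a two-parameter family of interactions that stays uniformly close to the reference on-site Hamiltonian $F$ (your $H_0$), invoking Proposition~\ref{prop: uniqueness of ground state} for uniqueness and Proposition~\ref{prop: bhm} (spectral flow) to produce the generating TDIs, and both use the key observation that the loop hypothesis makes $\phi$ a ground state of $\alpha_H^{-1}(1)[F]$ and hence of any convex combination of $F$ and $\alpha_H^{-1}(1)[F]$. Your parametrization $K_\lambda(s)=(1-\lambda)H_0 + \lambda\,\alpha_H^{-1}(s)[H_0]$ is slightly more direct than the paper's, which first rescales the loop to run in $[0,1/2]$ and then defines $Z(s)$ piecewise so that $Z(0)=Z(1)=F$ before contracting linearly in $\lambda$; these two devices achieve the same end and the proof content is identical. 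One small slip: $\partial_s K_\lambda = -i\lambda\,[H(s),\alpha_H^{-1}(s)[H_0]]$, not $-\lambda\,\alpha_H^{-1}(s)\bigl[[H(s),H_0]\bigr]$ (these differ, since $\alpha_H^{-1}(s)[[H(s),H_0]] = [\alpha_H^{-1}(s)[H(s)],\alpha_H^{-1}(s)[H_0]]$), though the resulting norm estimate $O(|||H|||_f)$ is unaffected.
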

\subsection{Construction of a loop of ground states}\label{sec: construction of loop of gs}
First we rescale the given $G$-loop $\phi\circ\alpha_H$ to
\begin{equation}\label{Rescaled loop}
\psi(s)=\begin{cases}  \phi\circ\alpha_H(2s)  &  s\leq 1/2, \\  
\phi &   s>1/2,
 \end{cases}
\end{equation}
so that now $\psi(1/2)=\phi$. The resulting loop $\psi$ is $G$-homotopic to the original one by Lemma \ref{lem: composition of loops is homotopic to concatenation} item i).  Next, we construct a specific $G$-invariant interaction $F$, tailored to the product state $\phi$:
$F$ has only on-site terms $F_{\{i\}}= \id-P_i$, which is the orthogonal projection on the kernel of the state $\phi|_i$. It follows that for any finite $S$, $\sum_{i \in S} F_{\{i\}}$ has a simple eigenvalue $0$, corresponding to the state $\phi|_S$, and it has no other spectrum below $1$.  
We now define the $G$-invariant interaction
$$
Z(s)= \begin{cases}  \alpha^{-1}_H(2s)[F] & s \leq 1/2,  \\[2mm]
 (2s-1)F+    (2-2s)Z(1/2)   & s >1/2.
 \end{cases}
$$
We have obtained a loop of interactions, since $Z(1)=Z(0)=F$. We contract it to a point (namely $F$) by setting
$$
Z_{\lambda}(s)  =Z(0)+ (1-\lambda)( Z(s)-Z(0)), \qquad  \lambda \in [0,1].
$$ 
This two-parameter family of interactions has the property that it remains close to $F$, as we remark now. 
\begin{lemma}\label{lem: loop is small}
If  $|||H|||_f\leq  \epsilon \leq  1$, then, for $\hat{f}$ affiliated to $f$,
\begin{enumerate}
\item  $
||Z_\lambda(s)-F||_{\hat{f}} \leq   \epsilon  
$
\item  There are $G$-invariant TDIs $X_\lambda(\cdot), X'_s(\cdot)$ such that 
$$
Z_{\lambda}(s)-Z_{\lambda}(0)=\int_{0}^{s} du X_\lambda(u), \qquad    Z_{\lambda}(s)-Z_{0}(s)=\int_{0}^{\lambda} d\lambda' X'_s(\lambda') 
$$
\item  The TDIs  from item $ii)$ are uniformly bounded 
$$ \sup_{s,\lambda} (||| X_\lambda |||_{\hat{f}}+ ||| X'_s |||_{\hat{f}}) \leq \epsilon
$$
\end{enumerate}
\end{lemma}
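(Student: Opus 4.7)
The plan is to exploit the flexibility in choosing an interaction representative within a derivation equivalence class $\eder$, emphasized in Section~\ref{sec: automorphisms}. Rather than taking $Z(s)=\alpha_H^{-1}(2s)[F]$ literally via~(\ref{eq: evolved interaction}), I will use the representative furnished by the Duhamel formula: on the level of derivations on $\caal$, the Heisenberg equation for $\alpha_H^{-1}$ yields
\[
\alpha_H^{-1}(s)[F] - F \eder -i\int_0^s \alpha_H^{-1}(u)\{[H(u), F]\}\,du,
\]
so I redefine, for $s\in[0,1/2]$,
\[
Z(s) := F - i\int_0^{2s}\alpha_H^{-1}(u)\{[H(u), F]\}\,du,
\]
the integral being Bochner in the Banach space of interactions equipped with a suitable norm $\|\cdot\|_{\hat f}$, and the integrand interpreted as an interaction via the prescriptions~(\ref{commutator of interactions}) and~(\ref{eq: evolved interaction}). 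This representative lies in the correct derivation class, so none of the uses of $Z(\cdot)$ in Section~\ref{sec: construction of loop of gs} that depend only on the generated derivation (in particular the evolution $\alpha_{Z(\cdot)}$) is affected.

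Item~(i) will then follow from chaining Lemma~\ref{lem: loc and liebrobinson}(i) and (iii) to bound the integrand. Since $F$ is on-site with $\|F_{\{i\}}\|\leq 1$, any $g\in\caF$ with $g(1)\geq 1$ gives $\|F\|_g\leq 1$, and~(i) of that lemma yields $\|[H(u), F]\|_{\hat f_1}\leq C\|H(u)\|_f\leq C\epsilon$. Then~(iii) yields $\|\alpha_H^{-1}(u)\{[H(u), F]\}\|_{\hat f}\leq C\epsilon$ uniformly in $u\in[0,1]$, and integration gives $\|Z(s)-F\|_{\hat f}\leq C\epsilon$ for $s\leq 1/2$. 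For $s>1/2$ the piecewise-linear formula yields $Z(s)-F=(2-2s)(Z(1/2)-F)$ directly, so the same bound holds. Since $Z_\lambda(s)-F=(1-\lambda)(Z(s)-F)$, item~(i) follows after absorbing $C$ into $\hat f$.

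For items~(ii) and~(iii) I will simply read off the $s$- and $\lambda$-derivatives from the integral representation and the piecewise-linear formula. Setting
\[
X_\lambda(v) := \begin{cases} -2i(1-\lambda)\,\alpha_H^{-1}(2v)\{[H(2v), F]\} & v<1/2,\\[1mm] 2(1-\lambda)(F-Z(1/2)) & v>1/2,\end{cases}
\]
(the value at $v=1/2$ is immaterial), the identity $Z_\lambda(s)-Z_\lambda(0)=\int_0^s X_\lambda(u)\,du$ is automatic, and strong measurability follows from that of $H$. Similarly, since $Z_\lambda(s)-Z_0(s)=-\lambda(Z(s)-F)$, the constant-in-$\lambda'$ choice $X'_s(\lambda'):=-(Z(s)-F)$ does the job. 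The bounds in~(iii) then reduce to those of~(i): $|||X_\lambda|||_{\hat f}\leq C\epsilon$ follows from the integrand bound for $v<1/2$ and from (i) itself for $v>1/2$, and $|||X'_s|||_{\hat f}\leq C\epsilon$ from (i) directly. $G$-invariance propagates throughout because $F$ is $G$-invariant ($P_i$ commutes with $\gamma(g)$ as $\phi|_i$ is $G$-invariant), $H$ is $G$-invariant, and commutators and Heisenberg evolutions preserve $G$-invariance. The only real nuisance is bookkeeping of the affiliated functions after each application of Lemma~\ref{lem: loc and liebrobinson}, but all intermediate $\hat f$ remain affiliated to $f$ with constants depending only on $C_H$, so the final $\hat f$ is affiliated to $f$ as required.
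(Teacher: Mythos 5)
Your instinct to replace the paper's $Z(s)=\alpha_H^{-1}(2s)[F]$ (via~(\ref{eq: evolved interaction})) by a derivation-equivalent Duhamel representative is a reasonable and in fact clarifying move, since it makes item~(ii) literally true on the level of interactions rather than only on the level of derivations. However, the Duhamel formula you wrote down for $\alpha_H^{-1}$ is incorrect, and the error is not cosmetic. You claim
\[
\alpha_H^{-1}(s)[F] - F \eder -i\int_0^s \alpha_H^{-1}(u)\{[H(u), F]\}\,du,
\]
but this is false for time-dependent $H$. The TDI generating $\alpha_H^{-1}$ is $-\alpha_H(u)[H(u)]$ (Lemma~\ref{lem: manipulation of evolutions}(i)), not $-H(u)$ (the latter generates the time-reversal $\alpha_H^\theta$, a different object). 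Inserting the correct generator into the Heisenberg equation gives
\[
\alpha_H^{-1}(s)[F] - F \eder -i\int_0^s \alpha_H^{-1}(u)\bigl\{\bigl[\alpha_H(u)[H(u)],\, F\bigr]\bigr\}\,du
= -i\int_0^s \bigl[H(u),\, \alpha_H^{-1}(u)[F]\bigr]\,du,
\]
which agrees with the direct computation $\partial_s\alpha_H^{-1}(s)[F]=-i[H(s),\alpha_H^{-1}(s)[F]]$. Your integrand $\alpha_H^{-1}(u)\{[H(u),F]\}=[\alpha_H^{-1}(u)[H(u)],\alpha_H^{-1}(u)[F]]$ differs from this unless $\alpha_H^{-1}(u)[H(u)]=H(u)$, which requires $[H(s),H(s')]=0$ for all $s,s'$.

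This matters beyond being a wrong formula, because the whole point of your redefinition is to stay within the derivation class of $\alpha_H^{-1}(2s)[F]$: only then does the argument of Section~\ref{sec: construction of loop of gs} (that $\psi(s)$ is a ground state of $Z(s)$) carry over, since Definition~\ref{def: ground state infinite volume} depends only on the derivation. Your $Z(s)$ generates a different derivation from $\alpha_H^{-1}(2s)[F]$, so $\psi(s)$ need not be its ground state, and the downstream use of Propositions~\ref{prop: uniqueness of ground state} and~\ref{prop: bhm} would collapse. The corresponding $X_\lambda$ you wrote is likewise the derivative of the wrong object.

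The fix is local: use the correct Duhamel representative above, in which case $X_\lambda(v) = -2i(1-\lambda)[H(2v),\alpha_H^{-1}(2v)[F]]$ for $v<1/2$, which is exactly the paper's choice (the paper omits the $(1-\lambda)$ prefactor, presumably a typo, but you correctly include it). The estimates then run the same way you outline them — first Lemma~\ref{lem: loc and liebrobinson}(iii) to bound $\alpha_H^{-1}(u)[F]$, then~(i) for the commutator with $H(u)$ — so items (i) and (iii) go through unchanged, and $G$-invariance propagates as you say. Once corrected, your argument coincides in substance with the paper's.
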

\begin{proof}
Item i) follows directly from the definition of $Z$, using Lemma \ref{lem: loc and liebrobinson}, and $||F||_f =f(1)$. 
For items ii) and iii), we pick
$$
X_\lambda(s)= \begin{cases} -2\iu \left[H(2s),\alpha_H^{-1}(2s)[F]\right] &  s\leq 1/2  \\[2mm]
2F-2Z(1/2)  & s>1/2  \end{cases},\qquad  X'_s(\lambda)= - (Z(s)-Z(0))
$$ 
and we obtain the bounds similarly to item i). 
\end{proof}

To continue, we recall the notion of a ground state in the infinite-volume setting of quantum lattice systems. We use implicitly the discussion of Section~\ref{subsec:alal}.
\begin{definition}\label{def: ground state infinite volume}
A state $\psi$ on a spin chain algebra $\caA$ is a ground state associated to an interaction $K$ iff\
$$
\psi(A^*[K,A]) \geq 0,\qquad \forall A \in \caal
$$
\end{definition}
We will need two particular properties of a ground state. The first one is its invariance under the dynamics generated by the constant TDI $K(s)=K$:
\begin{equation}\label{eq: invariance of ground states}
\psi=\psi\circ\alpha_K(s), \qquad s\in [0,1]
\end{equation}
It follows by using that $[K,\cdot]$ is a derivation. Indeed, $[K,A^*A]=  A^*[K,A] - (A^*[K,A])^*$
and hence $\psi([K,A^*A])=0$ for a ground state $\psi$, which implies \eqref{eq: invariance of ground states}. The other property is the following variational statement, see Theorem~6.2.52 in~\cite{BratRob2}.
\begin{lemma}\label{lem: variational principle}
Let $X$ be a finite subset of $\bbZ$ and let $K$ be an interaction. Consider the functional  $e_{X,K}:\caP(\caA)\to \bbC$ (recall that $\caP(\caA)$ is the set of states) given by
$$e_{X,K}(\psi')= \sum_{S: S \cap X \neq \emptyset} \psi'(K_S)$$ 
Then, any ground state $\psi$ satisfies
$$
e_{X,K}(\psi)=\min_{\psi':\psi'|_{X^c}=\psi|_{X^c} }  e_{X,K}(\psi')
$$
\end{lemma}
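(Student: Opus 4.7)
My plan is to identify $e_{X,K}(\psi')$ with the expectation of a single almost-local observable, and then derive the variational bound from the commutator form of the ground state condition. First I would observe that the $f$-norm bound on $K$ implies $\sum_{S:S\cap X\neq\emptyset}\|K_S\|\leq |X|f(1)\|K\|_f<\infty$, so that $H_X:=\sum_{S:S\cap X\neq\emptyset}K_S$ converges in norm to a self-adjoint element of $\caal$. Consequently $e_{X,K}(\psi')=\psi'(H_X)$, and the claim reduces to showing $\psi'(H_X)\geq\psi(H_X)$ for every state $\psi'$ with $\psi'|_{X^c}=\psi|_{X^c}$.

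Next, I would specialize the defining inequality $\psi(A^*[K,A])\geq 0$ to $A\in\caA_X$. For any $S$ disjoint from $X$, $K_S\in\caA_{X^c}$ commutes with $A$, so only the terms with $S\cap X\neq\emptyset$ contribute, yielding $\psi(A^*[H_X,A])\geq 0$ for all $A\in\caA_X$. Applied to a unitary $U\in\caA_X$ this gives
\[
\psi(U^*H_XU)-\psi(H_X)=\psi(U^*[H_X,U])\geq 0,
\]
i.e.\ the state $\psi_U:=\psi\circ\Adjoint(U)$ satisfies $\psi_U(H_X)\geq\psi(H_X)$. Since $U\in\caA_X$ commutes with every element of $\caA_{X^c}$, we have $\psi_U|_{X^c}=\psi|_{X^c}$, so the variational bound is established on the class of states obtained from $\psi$ by conjugation with a unitary in $\caA_X$.

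The main obstacle is to extend this bound from the unitary-conjugation class to every state $\psi'$ with $\psi'|_{X^c}=\psi|_{X^c}$, and this is where I expect the real work to lie. The strategy is a standard convexity and density argument in the GNS representation $(\caH,\Omega,\pi)$ of $\psi$: because $\caA_X$ is a finite-dimensional matrix algebra, $\caM:=\pi(\caA_X)''$ is a finite type I factor whose commutant contains $\pi(\caA_{X^c})$. Any state $\psi'$ agreeing with $\psi$ on $\caA_{X^c}$ then corresponds to a normal positive functional whose ``fiber over $\caA_{X^c}$'' is encoded by a density matrix on $\caM$; a spectral decomposition together with a polar-decomposition argument inside $\caM$ writes $\psi'$ as a convex combination (or weak-$*$ limit of such combinations) of states of the form $\psi\circ\Adjoint(U_i)$ with $U_i\in\caA_X$ unitary. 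By linearity of $\psi'\mapsto\psi'(H_X)$ the bound from the previous step then extends to all admissible $\psi'$. This is exactly the content of Theorem~6.2.52 in \cite{BratRob2}, to which I would defer for the technical details of the density step.
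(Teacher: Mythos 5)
The paper itself does not prove this lemma: it simply points the reader to Theorem~6.2.52 of Bratteli--Robinson. Your preliminary reductions are correct --- the absolute convergence of $\sum_{S\cap X\neq\emptyset} K_S$, the identification $e_{X,K}(\psi')=\psi'(H_X)$, the observation that $\psi(A^*[K,A])=\psi(A^*[H_X,A])$ for $A\in\caA_X$, and the resulting bound $\psi\circ\Adjoint(U)(H_X)\geq\psi(H_X)$ for unitaries $U\in\caA_X$ all check out.

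The difficulty is entirely in the extension step, and there your sketch has real gaps. First, the assertion that ``any state $\psi'$ agreeing with $\psi$ on $\caA_{X^c}$ then corresponds to a normal positive functional'' is unjustified: the constraint $\psi'|_{X^c}=\psi|_{X^c}$ is a constraint on a restriction, and it does not force $\psi'$ to be normal in the GNS representation of $\psi$. (It happens to when $\psi|_{X^c}$ is pure, but for an entangled ground state $\psi|_{X^c}$ is typically mixed, and then the extension problem is genuinely infinite-dimensional.) Second, even among normal $\psi'$, that the weak-$*$-closed convex hull of $\{\psi\circ\Adjoint(U):U\in\caA_X\ \text{unitary}\}$ exhausts the constraint set is a nontrivial extreme-point claim that you state but do not argue. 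Third, the citation is circular: Theorem~6.2.52 in \cite{BratRob2} is the variational characterization of ground states itself (i.e., essentially the lemma being proved), not an auxiliary density lemma, so deferring to it for ``the density step'' amounts to citing the result for a piece of its own proof. In effect your writeup reduces to the paper's own one-line citation, dressed with correct but nonessential preliminary computations and an incomplete convexity argument. If you want a self-contained proof you would need to either carry out the extreme-point analysis rigorously in the GNS picture (handling the non-normal case, e.g.\ by a finite-volume truncation/weak-$*$ compactness argument) or reproduce the Bratteli--Robinson proof directly, which does not go through the convex-hull-of-unitary-conjugates route as you describe it.
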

This property makes explicit the fact that ground states minimize the energy locally. 
The relevance to our problem is that we have a loop of ground states of explicitly given  interactions, namely
\begin{lemma}
For all $s\in[0,1]$,  $\psi(s)$ defined in~(\ref{Rescaled loop}) is a ground state of the interaction $Z(s)$. 
\end{lemma}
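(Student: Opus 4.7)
The plan is to reduce everything to the elementary fact that the product state $\phi$ is a ground state of the on-site interaction $F$. Once that is established, the cases $s\leq 1/2$ and $s>1/2$ follow by automorphic covariance and linearity of the ground state condition, respectively.

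\textbf{Step 1: $\phi$ is a ground state of $F$.} Let $A\in\caal$. Choose any finite $X\subset\bbZ$ so that $[F,A]=[H_X,A]$ with $H_X:=\sum_{i\in X} F_{\{i\}}\in\caA_X$ (this works for a finite-range $A$; the general case follows by a density argument using the bound of Lemma~\ref{lem: loc and liebrobinson}). In the GNS representation of $\phi$ the vector $\Omega$ factorizes as $\otimes_i|0\rangle_i$, and each $F_{\{i\}}=\id-P_i$ annihilates $|0\rangle_i$, so $H_X\Omega=0$. Consequently
\begin{equation*}
\phi(A^*[F,A]) \;=\; \langle A\Omega,H_X A\Omega\rangle - \langle A\Omega,AH_X\Omega\rangle \;=\; \|H_X^{1/2}\pi(A)\Omega\|^2 \;\geq\; 0.
\end{equation*}

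\textbf{Step 2: $s\in[0,1/2]$.} Set $\beta=\alpha_H(2s)$, so $\psi(s)=\phi\circ\beta$ and, on the level of derivations, $Z(s)\eder \beta^{-1}[F]$ via definition \eqref{eq: evolved interaction} and relation \eqref{eq: commutation autos}. For $A\in\caal$, the derivation identity gives $[Z(s),A]=\beta^{-1}\{[F,\beta[A]]\}$, hence
\begin{equation*}
\psi(s)(A^*[Z(s),A]) \;=\; \phi\big(\beta[A^*]\,[F,\beta[A]]\big) \;=\; \phi(B^*[F,B]),
\end{equation*}
with $B=\beta[A]\in\caal$ (since $\alpha_H$ preserves $\caal$, see Section~\ref{subsec:alal}). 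By Step~1 this is non-negative.

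\textbf{Step 3: $s\in[1/2,1]$.} Here $\psi(s)=\phi$ and $Z(s)=(2s-1)F+(2-2s)Z(1/2)$. Since the map $K\mapsto \phi(A^*[K,A])$ is linear in $K$, it suffices to check that $\phi$ is a ground state of both $F$ (done in Step~1) and $Z(1/2)=\alpha_H^{-1}(1)[F]$. For the latter, we repeat the computation of Step~2 with $\beta=\alpha_H(1)$; the key additional input is the loop condition $\phi=\phi\circ\alpha_H(1)=\phi\circ\beta$, which guarantees $\psi(s)=\phi=\phi\circ\beta$, so the same identity
\begin{equation*}
\phi(A^*[Z(1/2),A]) \;=\; \phi(B^*[F,B]) \;\geq\; 0,\qquad B=\beta[A],
\end{equation*}
applies.

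\textbf{Main subtlety.} The only point requiring care is that $Z(s)$ is an interaction-valued object defined via \eqref{eq: evolved interaction}, and $\beta[F]$ is not unique as an interaction — only its associated derivation is. All manipulations above must therefore be carried out at the level of derivations on $\caal$, using \eqref{eq: commutation autos} rather than any literal algebraic identity between interactions. Once one commits to this viewpoint the calculation is a one-liner on each interval.
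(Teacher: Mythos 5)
Your proof is correct and follows essentially the same route as the paper: reduce everything to the derivation level, use the conjugation identity $[\beta^{-1}[F],\cdot]=\beta^{-1}\circ[F,\cdot]\circ\beta$ to rewrite $\psi(s)(A^*[Z(s),A])$ as $\phi(B^*[F,B])$ on $[0,1/2]$, and on $(1/2,1]$ invoke linearity of the ground-state condition plus the loop relation $\phi\circ\alpha_H(1)=\phi$. Your Step~1, establishing that $\phi$ is a ground state of $F$ via $H_X\Omega=0$ and positivity, is spelled out explicitly where the paper takes it as immediate; otherwise the argument is the same (minor note: write $\pi(H_X)^{1/2}$ rather than $H_X^{1/2}$ in the displayed norm, since you are working in the GNS space).
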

\begin{proof}
Definition~\ref{def: ground state infinite volume} makes it explicit that the property of being a ground state depends only on the derivation association to the interaction $Z(s)$.  For  $s \in [0,1/2] $, we have $$\psi(s)(A\str[Z(s),A]) = \phi(B(s)\str[F,B(s)]),\qquad B(s) = \alpha_H(2s)(A),$$ and therefore the fact that $\psi(s)$ is a ground state of $Z(s)$ follows directly from $\phi$ being a ground state of $F$.  For $s>1/2$, we have $Z(s) \eder (2s-1)F+    (2-2s)\alpha^{-1}_H(1)[F]$ and we use that $\phi\circ\alpha_H(1)=\phi$. Therefore $\phi$ is a ground state of both $F$ and     $\alpha_H^{-1}(1)[F]$ and hence also of their convex combinations. 
\end{proof}

\subsection{Construction of the homotopy via spectral flow}\label{sec: homotopy from spectral flow}
Our main idea to prove Proposition \ref{prop: small loop is contractible} is to consider the 2-parameter family of states given by  groundstates of the two-parameter family $Z_\lambda(s)$. We first remark, in Proposition \ref{prop: uniqueness of ground state}, that these groundstates are unique in the perturbative regime (small perturbations of product states). Results of that kind have been established in great generality (see~\cite{nachtergaele2020quasi} for results that are very close to ours, and also~\cite{yarotsky2006ground,bravyi2010topological,michalakis2013stability, roeck2017exponentially,del2021lie}) but the specific claim we need does not seem to appear in the literature. Therefore, we provide a proof in Appendix \ref{sec: app stability}
\begin{proposition}\label{prop: uniqueness of ground state}
For any $h\in\caF$, there is a $\epsilon_2(h)>0$ such that, if $||W||_h \leq \epsilon_2(h)$, then 
the interaction $F+W$ has a unique ground state. 
\end{proposition}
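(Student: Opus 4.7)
The strategy is to construct a $G$-invariant ``quasi-adiabatic'' almost local evolution $\beta$ that transports the unique ground state $\phi$ of $F$ to a ground state of $F+W$, and then observe that the bijectivity of such a transport forces uniqueness. Concretely, interpolate $K_r = F + rW$ for $r \in [0,1]$. Following Hastings and Michalakis-Zwolak, in each finite volume $\Lambda$ one can define a self-adjoint generator $D_\Lambda(r)$ such that the corresponding propagator intertwines the ground-state projectors of $K_r|_\Lambda$ and of $F|_\Lambda$. Decomposing $D_\Lambda(r)$ into an anchored interaction via the canonical slicing of Section~\ref{sec: automorphisms} and taking the thermodynamic limit should extract a $G$-invariant TDI $D(r)$ with $\nor D \nor_{\hat h} < \infty$ for some $\hat h \in \caF$ affiliated to $h$.

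The main technical obstacle is a uniform lower bound on the spectral gap of $K_r|_\Lambda$, independent of $\Lambda$ and $r$: the naive perturbative bound fails because $\Vert W|_\Lambda\Vert$ grows with $|\Lambda|$. The plan is to follow~\cite{michalakis2013stability} but exploit that $F = \sum_i (\id - P_i)$ is an on-site commuting projector whose unique ground state $\phi$ is a product, so the local topological quantum order condition is trivially satisfied with zero remainder. The argument then reduces to a relative boundedness estimate of $W|_\Lambda$ on the orthogonal complement of the ground vector $\Omega_\Lambda = \otimes_{i\in\Lambda}\Omega_i$: the key observation is that matrix elements $\langle \Omega_\Lambda, W_S\Psi\rangle$ with $\Psi \perp \Omega_\Lambda$ vanish unless $\Psi$ carries an excitation of $F$ inside $S$, so $W|_\Lambda$ decomposes cleanly along local excitations and its effect on the gap is controlled by $\Vert W\Vert_h$ times quantities depending only on $h$. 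This fixes $\epsilon_2(h)$ as the threshold below which the gap stays bounded away from zero, and is the step I expect to absorb most of the technical work.

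Once the uniform gap bound is in hand, the standard Hastings construction produces $D_\Lambda(r)$ as a time-convolution of $W|_\Lambda$ conjugated by $e^{it K_r|_\Lambda}$ against a filter with fast-decaying Fourier transform; combined with Lieb-Robinson bounds for the dynamics generated by $K_r$, this yields the desired quasi-locality of $D(r)$ and the affiliated estimate on $\hat h$. The spectral flow property then reads: $\psi$ is a ground state of $K_r$ if and only if $\psi \circ \alpha_D(r)$ is a ground state of $F$, which can be verified directly at the level of the infinite-volume Definition~\ref{def: ground state infinite volume} using the invariance property~\eqref{eq: invariance of ground states} and the intertwining in finite volume.

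To conclude uniqueness, observe that any ground state $\psi'$ of $F$ satisfies $\psi'(\id - P_i) = 0$ for every $i$, since $\id - P_i \geq 0$ and the variational principle (Lemma~\ref{lem: variational principle}) with $\psi'' = \phi|_{\{i\}} \otimes \psi'|_{\{i\}^c}$ forces $\psi'(\id - P_i) \leq 0$. Hence $\psi'(P_i) = 1$, which pins down $\psi'|_i = \phi|_i$ for every site; by the reconstruction of a state from mutually independent on-site marginals together with purity, $\psi' = \phi$. Applying the spectral flow at $r=1$ now shows that the unique ground state of $F + W$ is $\phi \circ \alpha_D(1)^{-1}$, proving the proposition.
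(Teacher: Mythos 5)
Your proof of the uniqueness of the ground state of $F$ itself (using the variational principle with $\psi'' = \phi|_{\{i\}}\otimes\psi'|_{\{i\}^c}$ and the on-site commuting projector structure to conclude $\psi'(P_i)=1$ for all $i$) is correct, and your plan for the uniform gap bound — exploiting the zero-remainder LTQO of the on-site projector $F$ — is in the right spirit. The paper itself simply quotes this gap and the existence of the limiting ground state from the spectral-flow literature (Lemma~\ref{lem: limiting ground state}), so nothing wrong with attacking it directly.

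However, the final step has a genuine gap. You assert that ``the spectral flow property reads: $\psi$ is a ground state of $K_r$ if and only if $\psi\circ\alpha_D(r)$ is a ground state of $F$'' and that this ``can be verified directly at the level of the infinite-volume Definition.'' It cannot. The quasi-adiabatic flow intertwines the finite-volume ground-state \emph{projectors}, not the Hamiltonians: $\alpha_D(r)^{-1}[F]$ is \emph{not} equal to $K_r$ as a derivation, only their ground spaces agree in finite volume. Since Definition~\ref{def: ground state infinite volume} is a condition involving $[K,\cdot]$, conjugating it by $\alpha_D(r)$ yields the ground-state condition for $\alpha_D(r)[F]\neq K_r$, so the claimed biconditional does not follow. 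Moreover, even the set-level statement — that every infinite-volume ground state of $K_r$ is the image under the flow of an infinite-volume ground state of $F$ — is not automatic: the flow gives you the \emph{forward} statement (it carries $\phi$ to \emph{a} ground state of $K_r$, as in Proposition~\ref{prop: bhm}) but does not a priori rule out additional ground states of $K_r$ not arising this way. Controlling all infinite-volume ground states, not just the thermodynamic limit of the finite-volume one, is precisely the hard part of the proposition. The paper's proof closes this gap by a different argument: using a quasi-adiabatic filter $\caK^H_v$ and the dynamical invariance of ground states, it shows any ground state $\psi$ of $F+W$ decomposes as $\psi=a\nu+(1-a)\mu$ with $\nu$ the limit of finite-volume ground states, and then uses purity of $\psi$ together with a variational comparison (showing $\liminf_L\psi(P_L)>0$) to force $a=1$. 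Your approach would need a substitute for that step — e.g.\ an a priori argument that the infinite-volume ground-state set of $K_r$ is a singleton given the uniform gap — which is not supplied by the spectral-flow construction alone.
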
 
 If we choose $\epsilon_1(f)$ in Proposition \ref{prop: small loop is contractible} small enough, then, by Lemma \ref{lem: loop is small}, all of the 
interactions $Z_{\lambda}(s)$ satisfy the condition of the above proposition and hence they have a unique ground state, that we call $\psi_\lambda(s)$. This two-dimensional family  $(s,\lambda)\mapsto \psi_\lambda(s)$ is the  homotopy that will realise Proposition \ref{prop: small loop is contractible}. Indeed, for any $\lambda$, we have $Z_{\lambda}(0)=Z_{\lambda}(1)=F$, which has $\phi$ as the unique ground state, and hence $\psi_\lambda(0)=\psi_\lambda(1)=\phi$. 
 For $\lambda=0$, the family $\psi_\lambda(s)$ reduces to the loop $\psi(\cdot)$, as we had already remarked that this was a loop of ground states of $Z(\cdot)$.
For $\lambda=1$, we have that $Z_1(s)=F$, and hence $\psi_1(\cdot)$ is the constant loop $\phi$. It remains to show that the family $(s,\lambda)\mapsto \psi_\lambda(s)$ is generated by a TDI in $\lambda$ and $s$ directions.\\ Such TDI will be obtained from the so-called \emph{spectral flow} \cite{hastings2005quasiadiabatic,bachmann2012automorphic,moon2020automorphic}, combined with  results on stability of the spectral gap. In particular, Propostion \ref{prop: bhm} below, follows from \cite{nachtergaele2020quasi}, except that the condition on spatial decay of potentials is relaxed slightly. The possibility of doing this is clear when inspecting the proofs of \cite{nachtergaele2020quasi}.
To avoid a clash of notation, we use $z \in[0,1]$ (instead of $s,\lambda$) as time-parameter in a TDI.
\begin{proposition}\label{prop: bhm}
For any $h\in\caF$, there is $0<\epsilon_3(h)< 1$ such that, for any pair of TDIs $W$ and $Y$ satisfying
\begin{equation}\label{eq: w and e}
 |||W|||_h\leq \epsilon_3(h), \quad   |||Y|||_h <\infty, \qquad   W(z)-W(0)=\int^{z}_{0} d z' Y(z'),
\end{equation}
the following holds: There exists a TDI $\hat{Y}$ satisfying $|||\hat Y|||_{h'} \leq ||| Y|||_{h} $ for  $h' \in \caF$ depending only on $h$, such that, if the pure state $\nu(0)$ is a ground state of $F+W(0)$, then 
\begin{equation}\label{eq: states determined}
\nu(z)=\nu(0)\circ\alpha_{\hat Y}(z),\qquad z \in [0,1] 
\end{equation}
is a ground state of $F+W(z)$. If $Y$ is $G$-invariant, then also $\hat{Y}$ can be chosen $G$-invariant.
\end{proposition}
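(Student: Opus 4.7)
The plan is to adapt the standard quasi-adiabatic continuation (spectral flow) construction of Hastings and Wen to our setting, relying on the product-state structure of $F$ to secure a uniform spectral gap, and to relax the decay conditions slightly so as to keep everything in $\caF$.

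First, I would use Proposition~\ref{prop: uniqueness of ground state} to pick $\epsilon_3(h)$ small enough that for every $z \in [0,1]$ the interaction $F + W(z)$ has a unique ground state $\nu(z)$. Because the product state $\phi$ is a gapped ground state of $F$ with gap exactly $1$ (every on-site term $\id - P_i$ has spectrum $\{0,1\}$), the stability analysis behind Proposition~\ref{prop: uniqueness of ground state} (see Appendix~\ref{sec: app stability}) in fact yields a uniform spectral gap $\Delta > 0$, say $\Delta = 1/2$, above $\nu(z)$ for all $z$. Moreover the Heisenberg dynamics $\alpha^z_t$ generated by the \emph{time-independent} TDI $s \mapsto F + W(z)$ satisfies Lieb-Robinson bounds uniform in $z$, since $|||F + W(z)|||_h \leq f(1) + \epsilon_3(h)$.

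Second, I would define the spectral flow generator pointwise in $z$. Fix an odd real function $W_\Delta \in L^1(\mathbb{R})$ whose Fourier transform vanishes on $[-\Delta,\Delta]$ and which decays in $t$ faster than any inverse polynomial (such functions are standard in the literature). For each finite $S$ set
\begin{equation*}
    \tilde Y(z)_S := \int_{\mathbb{R}} W_\Delta(t)\, \alpha^z_t[Y(z)_S]\, dt,
\end{equation*}
and reassemble these observables into an interaction $\hat Y(z)$ by applying the decomposition into balls $B_k(j)$ from Section~\ref{sec: automorphisms}. The $G$-invariance of $\hat Y$ is automatic: $\gamma(g)$ commutes with $\alpha^z_t$ since $F + W(z)$ is $G$-invariant. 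The defining property of $W_\Delta$ (its Fourier transform vanishes on the gap) ensures that, inside the GNS representation of $\nu(0)$, $\partial_z[\nu(z) \circ \alpha_{\hat Y}(0,z)] = 0$, so $\nu(0)\circ\alpha_{\hat Y}(z)$ is a ground state of $F + W(z)$, by now the standard argument of \cite{hastings2005quasiadiabatic,bachmann2012automorphic,moon2020automorphic}; by uniqueness this state equals $\nu(z)$.

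The main obstacle is the norm estimate $|||\hat Y|||_{h'} \leq |||Y|||_h$, which is exactly the point where the treatment in \cite{nachtergaele2020quasi} needs minor extension. To obtain it, one combines the decay of $W_\Delta(t)$ in $t$ with the Lieb-Robinson bound of Lemma~\ref{lem: loc and liebrobinson}(iii) for $\alpha^z_t[Y(z)_S]$, and integrates over $t$ after balancing the Lieb-Robinson light cone against the decay of $W_\Delta$. The outcome of this integral, as a function of the diameter of the support, must lie in $\caF$; this is delicate because $\caF$ allows decay that is slower than the stretched-exponential class used in \cite{nachtergaele2020quasi}. One chooses $W_\Delta$ with sub-exponential decay that is nevertheless faster than any polynomial, and checks directly that the convolution-type estimate preserves affiliation in the sense of Definition~\ref{def: derived function}. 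Once this is in place, the constants produced by Lieb-Robinson are absorbed into the affiliated function $h'$, yielding the clean bound $|||\hat Y|||_{h'} \leq |||Y|||_h$ as stated.
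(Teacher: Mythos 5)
The paper's own "proof" of this proposition is essentially a citation: it asserts that Proposition~\ref{prop: bhm} follows from~\cite{nachtergaele2020quasi} after a minor relaxation of the spatial-decay hypotheses, without spelling out the argument. Your proposal outlines exactly the same plan — extract a uniform gap from the product structure of $F$, apply the quasi-adiabatic / spectral-flow machinery of~\cite{hastings2005quasiadiabatic,bachmann2012automorphic,moon2020automorphic,nachtergaele2020quasi}, and rework the norm estimates so that the output stays in the class of decay functions~$\caF$ — and you correctly identify that the Lieb-Robinson/filter-function convolution estimate is the point that needs to be rechecked in the larger class. In that sense you are on the same track as the authors.

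There is, however, one genuine error in your description of the filter function. You require $W_\Delta$ to be an odd $L^1$ function whose Fourier transform \emph{vanishes on $[-\Delta,\Delta]$}, and you refer to "its Fourier transform vanishes on the gap" as "the defining property" that makes $\nu(0)\circ\alpha_{\hat Y}(z)$ a ground state. That is not the right condition. The spectral-flow filter is characterised by $\hat W_\Delta(\omega) = -i/\omega$ (up to normalisation) for $|\omega|\geq\Delta$, i.e.\ it must reproduce the Kato-generator kernel $1/(E_m-E_n)$ on energy differences above the gap; its values inside $(-\Delta,\Delta)$ are irrelevant for ground-state evolution and are chosen only to make $W_\Delta\in L^1$ with fast decay. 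A function with $\hat W_\Delta\equiv 0$ on $[-\Delta,\Delta]$ but with no constraint on $|\omega|\geq\Delta$ would not produce the parallel transport of the ground-state projector: the derivative $\partial_z\bigl[\nu(z)\circ\alpha_{\hat Y}(z,0)\bigr]$ would not vanish, and uniqueness of the ground state cannot rescue it. You have in effect conflated $W_\Delta$ with the Lieb-Robinson filter function $v$ of Lemma~\ref{lem: Block diagonalization}, whose Fourier transform \emph{is} supported in $[-\Delta,\Delta]$ (with $\hat v(0)=1$) — that function serves to block-diagonalise, not to generate the flow. Replacing your stated property by the correct one, $\hat W_\Delta(\omega)=-i/\omega$ for $|\omega|\geq\Delta$, repairs the argument and brings it in line with the construction in~\cite{nachtergaele2020quasi}.
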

Based on this result we can finish in a straighforward way the
\begin{proof}[Proof of Proposition \ref{prop: small loop is contractible}]
By choosing $\epsilon$ in Lemma \ref{lem: loop is small} small enough, we can apply the spectral flow technique, i.e.\ Proposition \ref{prop: bhm}, to get generating TDIs for the paths $s\mapsto \psi_\lambda(s)$, and  $\lambda
\mapsto \psi_\lambda(s)$. They are uniformly bounded in $|||\cdot|||_{h'}$ by Proposition \ref{prop: bhm} and hence they provide the desired homotopy. 
\end{proof}

By combining Proposition~\ref{prop: small loop is contractible} with Proposition \ref{prop: loop contractible to small loop} we conclude 
\begin{theorem}\label{thm: contractibility products}
Let $\phi$ be a pure $G$-invariant product state and let $\psi(\cdot)=\phi\circ\alpha_{H}(\cdot)$ be a $G$-loop such that $I(\phi,H)=0$, i.e.\ its index is zero, then the loop is homotopic to the constant loop $\Id_\phi$, via a homotopy with constant basepoint. 
\end{theorem}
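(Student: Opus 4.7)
The plan is to simply concatenate the two constructions already established: first deform the given loop into a short one via Proposition~\ref{prop: loop contractible to small loop}, then contract the short loop to a point via Proposition~\ref{prop: small loop is contractible}. Transitivity of $G$-homotopy with constant basepoint $\phi$ does the rest.

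\textbf{Step 1 (reduction to a short loop).} Since $H$ is a $G$-invariant TDI with $|||H|||_f\leq C_H$ for some $f\in\caF$ and $C_H<\infty$, and since $I(\phi,H)=0$ by assumption, Proposition~\ref{prop: loop contractible to small loop} produces for each $R\in 2\NN$ a $G$-invariant TDI $\widetilde H^{(R)}$ such that $\phi\circ\alpha_{\widetilde H^{(R)}}(\cdot)$ is a $G$-loop which is $G$-homotopic to $\phi\circ\alpha_H(\cdot)$ via a homotopy with constant basepoint $\phi$, and such that $\nor\widetilde H^{(R)}\nor_{\hat f}\to 0$ as $R\to\infty$ for some $\hat f$ affiliated to $f$.

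\textbf{Step 2 (contracting the short loop).} Proposition~\ref{prop: small loop is contractible}, applied with the affiliated function $\hat f$, yields a threshold $\epsilon_1(\hat f)>0$ such that any $G$-invariant TDI $K$ with $|||K|||_{\hat f}\leq\epsilon_1(\hat f)$ generating a loop $\phi\circ\alpha_K(\cdot)$ is $G$-homotopic to the constant loop $\Id_\phi$, with a homotopy of constant basepoint $\phi$. I choose $R$ large enough that $\nor\widetilde H^{(R)}\nor_{\hat f}\leq \epsilon_1(\hat f)$, and apply the proposition to $K=\widetilde H^{(R)}$. This gives a second $G$-homotopy, with constant basepoint $\phi$, from $\phi\circ\alpha_{\widetilde H^{(R)}}(\cdot)$ to $\Id_\phi$.

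\textbf{Step 3 (concatenating the homotopies).} It remains to compose the homotopy from $\phi\circ\alpha_H(\cdot)$ to $\phi\circ\alpha_{\widetilde H^{(R)}}(\cdot)$ with the homotopy from $\phi\circ\alpha_{\widetilde H^{(R)}}(\cdot)$ to $\Id_\phi$ into a single $G$-homotopy. Writing a homotopy as a parameter $\lambda\in[0,1]$ with generating TDIs $H_\lambda$ (in $s$) and $F_s$ (in $\lambda$) in the sense of equations~(\ref{eq: homotopy})--(\ref{Homotopy: Uniform bound}), concatenation at $\lambda=1/2$ is performed exactly as for loops in Section~\ref{sec: concatenation loops}: one rescales the $\lambda$-variable on each half, which by Lemma~\ref{lem: time-rescaling almost local evolutions} keeps the resulting $F_s$ a TDI with a uniform bound (doubling the previous uniform bounds), while $H_\lambda$ is unchanged up to a reparametrisation of $\lambda$. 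The $G$-invariance, the constancy of the basepoint at $\phi$, and the uniform bounds~(\ref{Homotopy: Uniform bound}) all pass to the concatenation because they held for each piece.

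\textbf{Main obstacle.} All the real work is done in the two propositions being invoked; the only non-trivial point here is verifying that the gluing of two $G$-homotopies at $\lambda=1/2$ produces a $G$-homotopy in the precise sense of Section~\ref{sec: homotopy} (uniform boundedness of both $H_\lambda$ and $F_s$, and strong measurability). Since both homotopies share the constant basepoint $\phi$ at $\lambda=1/2$, the glued family $\psi_\lambda(s)$ is well-defined, and the rescaling argument above shows the generating TDIs satisfy~(\ref{Homotopy: Uniform bound}). This yields the desired $G$-homotopy from $\phi\circ\alpha_H(\cdot)$ to $\Id_\phi$ with constant basepoint $\phi$.
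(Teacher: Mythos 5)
Your proof is correct and takes exactly the same route as the paper: the paper's own ``proof'' of Theorem~\ref{thm: contractibility products} is the single sentence ``By combining Proposition~\ref{prop: small loop is contractible} with Proposition~\ref{prop: loop contractible to small loop} we conclude,'' which is precisely your Steps 1 and 2. Your Step 3 is a sensible fleshing-out of the implicit transitivity of $G$-homotopy (gluing at $\lambda=1/2$, reparametrizing via Lemma~\ref{lem: time-rescaling almost local evolutions}, checking the uniform bounds~(\ref{Homotopy: Uniform bound}) survive), which the paper leaves unspoken but which is needed to be fully rigorous.
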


\section{Classification of loops with product basepoint} \label{sec: classification for product loops}

We are now ready to prove the full result for the case of loops with a fixed product basepoint. Throughout this section, $\caA$ is a fixed chain algebra equipped a fixed group action and $\phi$ is again a fixed pure, $G$-invariant product state.

\subsection{Well-definedness of the index}
Up to now, the index $I(\phi,H)$ was defined as a function of the TDI $H$ and the basepoint $\phi$, see Section~\ref{sec: states cut loop}. We prove now that for product basepoint, it actually depends only on the loop itself and not on the choice of TDI that generates it. 
\begin{lemma}\label{lem: index indep of h}
Let $H_1,H_2$ be $G$-invariant TDIs that generate loops with basepoint $\phi$. If the loops are equal, namely $\phi\circ\alpha_{H_1}(s)=\phi\circ\alpha_{H_2}(s)$ for all $s\in [0,1]$, then the relative charge of the pumped states  $\phi\circ\alpha_{H'_1}(s)$, $\phi\circ\alpha_{H'_2}(s)$ is zero and hence 
$$
I(\phi,H_1)=I(\phi,H_2).
$$ 
\end{lemma}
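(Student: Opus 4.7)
The plan is to interpolate between $H_1$ and $H_2$ by the convex family $H_\mu := (1-\mu) H_1 + \mu H_2$, $\mu\in[0,1]$, and to invoke Proposition~\ref{prop: finite index set} together with the discreteness of $H^1(G)$. The decisive point is to verify that each $H_\mu$ belongs to the set $\caS$ introduced in Section~\ref{sec: states cut loop}, i.e.\ that it generates the common loop $\psi(s):=\phi\circ\alpha_{H_1}(s)=\phi\circ\alpha_{H_2}(s)$ with basepoint $\phi$. Once this is done, the estimate $|||H_\mu-H_{\mu'}|||^{(1)}_f = |\mu-\mu'|\cdot |||H_2-H_1|||^{(1)}_f$ together with the uniform bound $|||H_\mu|||_f\leq\max(|||H_1|||_f,|||H_2|||_f)$ imply via Lemma~\ref{lem: uniform cont index} and Proposition~\ref{prop: finite index set} that $\mu\mapsto I(\phi,H_\mu)$ is continuous into the discrete space $H^1(G)$ and hence constant. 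Consequently $I(\phi,H_1)=I(\phi,H_2)$, and the claim about the vanishing relative charge of the pumped states $\psi_1',\psi_2'$ follows from Proposition~\ref{prop: zerodim}(ii).

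To show $H_\mu\in\caS$, I would first extract from the hypothesis the identity
\begin{equation*}
\psi(s)\bigl([H_1(s)-H_2(s),A]\bigr)=0,\qquad A\in\caal,\ \text{for a.e.}\ s\in[0,1],
\end{equation*}
by writing the two Heisenberg integral equations~(\ref{eq: heisenberg}) for $\psi(s)=\phi\circ\alpha_{H_j}(s)$ ($j=1,2$), subtracting, and differentiating in $s$. Next, Duhamel's principle~(\ref{Duhamel}) comparing $\alpha_{H_\mu}$ with $\alpha_{H_1}$ yields
\begin{equation*}
\alpha_{H_\mu}(s)[A]-\alpha_{H_1}(s)[A]= i\mu \int_0^s \alpha_{H_\mu}(u)\bigl\{[H_2(u)-H_1(u),\alpha_{H_1}(s,u)[A]]\bigr\}du.
\end{equation*}
Evaluating on $\phi$, setting $D(s)(A):=(\phi\circ\alpha_{H_\mu}(s)-\psi(s))(A)$ and decomposing $\phi\circ\alpha_{H_\mu}(u)=\psi(u)+D(u)$, the $\psi(u)$-contribution vanishes by the displayed identity applied to $B:=\alpha_{H_1}(s,u)[A]\in\caal$, leaving the closed Volterra-type relation
\begin{equation*}
D(s)(A)=i\mu\int_0^s D(u)\bigl([H_2(u)-H_1(u),\alpha_{H_1}(s,u)[A]]\bigr)du,\qquad D(0)=0.
\end{equation*}

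Iterating this relation and controlling, via Lemma~\ref{lem: loc and liebrobinson} and Lemma~\ref{lem: evolution by similar tdi}, the almost-local spread of $\alpha_{H_1}(s,u)[A]$ and the resulting nested commutators with $H_2-H_1$, one obtains a Picard-type bound on $|D(s)(A)|$ of order $(Cs)^n/n!$ for every $n$. This forces $D\equiv 0$ for every $A\in\caal$, i.e.\ $\phi\circ\alpha_{H_\mu}(s)=\psi(s)$, so $H_\mu\in\caS$ as required. The main technical obstacle is precisely this uniqueness step for the Volterra equation in the infinite-volume setting: one must absorb the linear growth of supports under $\alpha_{H_1}$ against the rapid decay enforced by $\caF$ so that the iterated Picard series converges, guaranteeing uniqueness of the loop generated by $H_\mu$ applied to $\phi$.
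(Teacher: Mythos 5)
Your overall strategy—interpolate the two TDIs, show the interpolants all generate the same loop, and use the uniform continuity of $I(\phi,\cdot)$ together with the discreteness of $H^1(G)$ (Lemma~\ref{lem: uniform cont index}, Proposition~\ref{prop: finite index set})—is a genuinely different route from the paper. The paper instead forms the single path of states $\xi(s)=\phi\circ\alpha_{H_1'}(s)\circ\alpha_{H_2'}^{-1}(s)$, proves a uniform-in-$s$ localization estimate of the type \eqref{eq: uniform approx of xi}, deduces that $s\mapsto\rho(s)$ is trace-norm continuous as the uniform limit of the finite-rank approximants $\rho_r(s)\otimes\Pi_{I_r^c}$, and then invokes Proposition~\ref{prop: zerodim}(i) and (iv). Your idea is appealing because it reuses the machinery already built in Section~\ref{sec: states cut loop}; but as written it has a real gap at the step you yourself flag.

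The gap is the convergence of the Picard iteration for the Volterra identity $D(s)(A)=i\mu\int_0^sD(u)\bigl([\,\Delta(u),\alpha_{H_1}(s,u)[A]\,]\bigr)du$ with $\Delta=H_2-H_1$. You claim a bound of order $(Cs)^n/n!$; this would require the $n$-fold nested object $T_n=[\Delta(u_n),\alpha_{H_1}(u_{n-1},u_n)[\,\cdots[\Delta(u_1),\alpha_{H_1}(s,u_1)[A]]\cdots]]$ to have norm growing at most like $C^n\Vert A\Vert$, with no $n!$. In fact the effective support of the iterate grows linearly with $n$ (Lieb--Robinson spreading), and by Lemma~\ref{lem: evolution by similar tdi}(i) the commutator with an interaction contributes a factor that is at best linear in the size of the support, so one expects $\Vert T_n\Vert\lesssim C^n\,n!\,\Vert A\Vert$; the simplex volume $s^n/n!$ then leaves $(Cs)^n$, not $(Cs)^n/n!$. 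That would still force $D\equiv 0$ on an interval $[0,1/C)$ (and then on all of $[0,1]$ by bootstrapping), but only \emph{if} the bound $\Vert T_n\Vert\leq C^n n!\Vert A\Vert$ actually holds. For interactions with exponential decay this is an Araki-type analyticity statement; for the class $\caF$ used in this paper, which allows arbitrary super-polynomial decay including stretched exponentials, such analyticity is not available, and the iterated use of Lemma~\ref{lem: loc and liebrobinson} changes the affiliated function $\hat f$ at every step in a way that is not obviously summable. So the ``main technical obstacle'' you name is a genuine obstruction, not a routine estimate.

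There is a repair that keeps your interpolation strategy but sidesteps the uniqueness problem. Let $B$ be the TDI generating $\alpha_{H_2}\circ\alpha_{H_1}^{-1}$ via Lemma~\ref{lem: inverse small diff}(i), so that by hypothesis $\phi\circ\alpha_B(s)=\phi\circ\alpha_{H_2}(s)\circ\alpha_{H_1}^{-1}(s)=\psi(s)\circ\alpha_{H_1}^{-1}(s)=\phi$ for all $s$. Now define $\widetilde H_\mu$ to be the TDI generating the composed evolution $s\mapsto \alpha_B(\mu s)\circ\alpha_{H_1}(s)$; by Lemma~\ref{lem: time-rescaling almost local evolutions} and Lemma~\ref{lem: manipulation of evolutions}(iii), this is $\widetilde H_\mu(s)=H_1(s)+\alpha^{-1}_{H_1}(s)[\mu B(\mu s)]$, it is $G$-invariant, its $\nor\cdot\nor_f$-norm is bounded uniformly in $\mu$, $\widetilde H_0=H_1$, and $\widetilde H_1$ generates $\alpha_{H_2}$. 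The decisive advantage is that $\phi\circ\alpha_{\widetilde H_\mu}(s)=\phi\circ\alpha_B(\mu s)\circ\alpha_{H_1}(s)=\phi\circ\alpha_{H_1}(s)=\psi(s)$ holds tautologically, with no Volterra-uniqueness argument whatsoever. Continuity of $\mu\mapsto\widetilde H_\mu$ in $\nor\cdot\nor^{(1)}_f$ follows from $L^1$-continuity of the dilation $\mu\mapsto B(\mu\cdot)$ for the strongly measurable, bounded TDI $B$, and then your appeal to Lemma~\ref{lem: uniform cont index} and Proposition~\ref{prop: finite index set} closes the argument exactly as you envisioned. As your proof stands, however, the convex interpolation $H_\mu=(1-\mu)H_1+\mu H_2$ is not justified and the Picard step needs to be either corrected or replaced.
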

\begin{proof}
Let $\xi(s)=\phi\circ\alpha_{H'_1}(s) \circ \alpha^{-1}_{H'_2}(s) $. Then
$$
\xi(s)-\phi= \phi \circ \left(\alpha_{H'_1}(s) \circ \alpha^{-1}_{H'_2}(s)-  \alpha_{H_1}(s) \circ \alpha^{-1}_{H_2}(s) \right)
$$
By similar reasoning as in the proof of Lemma \ref{lem: cut state}, we find then some $f\in\caF$ such that
\begin{equation}\label{eq: uniform approx of xi}
   \sup_{s\in [0,1]} || (\xi(s)-\phi)_{I_r^c} || \leq f(r), 
\end{equation}
In particular, by Lemma \ref{lem: condition for normality}, $\xi(s) $ are pure states that are normal with respect to $\phi$; they can be represented as density matrices $\rho(s)$ on $\caB(\caH)$. Using \eqref{eq: uniform approx of xi}, Lemma \ref{lem: condition for normality} and its proof imply then that $\sup_{s\in [0,1]}||\rho(s)-\rho_r(s)\otimes \Pi_{I_r^c}|| \to 0$, as $r\to\infty$, with $\rho_r(s) \otimes \Pi_{I_r^c}$ as in the proof of Lemma \ref{lem: condition for normality}. 
Moreover,  $s\mapsto \xi(s)$
is weakly-*continuous (by the strong continuity of $s\mapsto \alpha_{H_{i}}(s)$ with $i=1,2$) and therefore the function $s\mapsto  \rho_r(s) \otimes \Pi_{I_r^c}$ is actually norm-continuous, since its range is contained in a finite dimensional vector space. 
Hence, $s\mapsto \rho(s)$ is the uniform limit of a sequence of norm-continuous functions $s\mapsto \rho_r(s) \otimes \Pi_{I_r^c}$, and thereby itself norm-continuous. Therefore, the states $\rho(s)$ are $G$-equivalent to $\rho_{\phi}$ (i.e.\ to $\Pi$), and so item i) of Proposition \ref{prop: zerodim} implies that the states $\rho(s)$ have zero relative charge with respect to $\rho_\phi$. Item iv) of the proposition and the definition of the index (Section~\ref{sec: states cut loop}) for mutually normal states then imply that $\phi\circ\alpha_{H'_1}(s)$ and $\phi\circ\alpha_{H'_2}(s)$ have zero relative charge.  
The claim of the lemma then follows by item ii) of Proposition~\ref{prop: zerodim}.
\end{proof}

 
Since the basepoint is fixed in this section, this result means that we can now consistenly use the notation $h(\psi)$ for a loop $\psi$, instead of the more tedious object $I(\phi,H)$.

\subsection{Additivity of the index}

\begin{lemma} \label{lem: additivity}
Let the TDIs $H_1$ and $H_2$ generate $G$-loops $\psi_1,\psi_2 $ with common product basepoint $\phi$.  Then
$$
h(\psi_2\loopc \psi_1)=h(\psi_2)+h(\psi_1)
$$
\end{lemma}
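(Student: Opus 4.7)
The plan is to compute $h(\psi_2\loopc\psi_1)$ directly from a convenient generator, by passing to the GNS representation of $\phi$ and using unitary implementers on the left half-chain. By Lemma~\ref{lem: index indep of h} the value of the index depends only on the loop, so I may choose the TDI $K$ given by $K(s)=2H_1(2s)$ for $s\in[0,\tfrac12]$ and $K(s)=2H_2(2s-1)$ for $s\in[\tfrac12,1]$. With this choice $\alpha_K = \alpha_{H_1}\loopc\alpha_{H_2}$ in the sense of~(\ref{concatenation}), and $\phi\circ\alpha_K(\cdot)=\psi_2\loopc\psi_1$. The cut TDI $K'$ inherits the same piecewise structure, and combining the cocycle equation~(\ref{cocylce equation}) with a time-rescaling on each piece (cf.~Lemma~\ref{lem: time-rescaling almost local evolutions}) yields $\alpha_{K'}(1)=\alpha_{H'_2}(1)\circ\alpha_{H'_1}(1)$. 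Hence the pumped state associated with the concatenation is $\phi\circ\alpha_{H'_2}(1)\circ\alpha_{H'_1}(1)$.

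Next I realize this pumped state by an explicit representing vector. Since each $H'_i$ is supported on $\{\leq 0\}$, the automorphism $\alpha_{H'_i}(1)$ acts trivially on $\caA_{\geq 1}$ and restricts to an automorphism $\beta_i$ of $\caA_{\leq 0}$. The state $\omega_i:=\phi|_{\leq 0}\circ\beta_i$ is pure and normal with respect to $\phi|_{\leq 0}$ by Lemma~\ref{lem: cut state} applied to the half-chain; hence $\omega_i$ is quasi-equivalent to $\phi|_{\leq 0}$ and $\beta_i$ lifts to an automorphism of the GNS von Neumann algebra, which is the type~I factor $\caB(\caH_{\leq 0})$. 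That lift is inner, so $\beta_i=\Ad(V_i)$ for a unitary $V_i$ on $\caH_{\leq 0}$ (unique up to a scalar), and $V_i^*\Omega_{\leq 0}$ represents $\omega_i$. The composition $\beta_2\circ\beta_1$ is implemented by $V_2V_1$, so $\Psi^0_{12}:=V_1^*V_2^*\Omega_{\leq 0}$ represents $\omega_{12}:=\phi|_{\leq 0}\circ\beta_2\circ\beta_1$, and the pumped state of $\psi_2\loopc\psi_1$ is represented on $\caH=\caH_{\leq 0}\otimes\caH_{\geq 1}$ by $\Psi^0_{12}\otimes\Omega_{\geq 1}$.

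The final step is to read off the relative charges. Because each $H'_i$ is $G$-invariant, $\beta_i$ commutes with the on-site $G$-action, which on $\caH_{\leq 0}$ is implemented by the restriction of $U(\cdot)$ from Lemma~\ref{lem: gns}. Factoriality forces $V_iU(g)V_i^{-1}U(g)^{-1}$ to be a scalar, defining a continuous homomorphism $\chi_i\in H^1(G)$ via $V_iU(g)=\chi_i(g)U(g)V_i$, equivalently $U(g)V_i^*=\chi_i(g)V_i^*U(g)$. Since $U(g)\Omega_{\leq 0}=\Omega_{\leq 0}$, one obtains $U(g)V_i^*\Omega_{\leq 0}=\chi_i(g)V_i^*\Omega_{\leq 0}$, so by the characterization of relative charge in Section~\ref{sec: g charge} we have $e^{ih(\psi_i)(g)}=\chi_i(g)$. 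Iterating the intertwining twice yields
\begin{equation*}
U(g)\,V_1^*V_2^*\Omega_{\leq 0} \;=\; \chi_1(g)\,V_1^*U(g)V_2^*\Omega_{\leq 0} \;=\; \chi_1(g)\chi_2(g)\,V_1^*V_2^*\Omega_{\leq 0},
\end{equation*}
so $e^{ih(\psi_2\loopc\psi_1)(g)}=\chi_1(g)\chi_2(g)=e^{i(h(\psi_1)+h(\psi_2))(g)}$, which is the desired additivity.

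The only non-trivial input is the inner implementation of $\beta_i$ on $\caB(\caH_{\leq 0})$: this combines the quasi-equivalence of $\phi|_{\leq 0}$ and $\omega_i$ (which lets $\beta_i$ lift to the GNS von Neumann algebra) with the inner-ness of automorphisms of a type~I factor, both classical facts. Once the implementers $V_i$ are in hand the additivity is essentially immediate, as it reflects the fact that implementers compose while the scalar twist with $U(g)$ is multiplicative.
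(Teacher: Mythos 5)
Your argument is correct in substance, but it takes a different route than the paper. One small error: from $K' = K|_{\{\leq 0\}}$ and the cocycle relation you should get $\alpha_{K'}(1)=\alpha_{H'_1}(1)\circ\alpha_{H'_2}(1)$, not $\alpha_{H'_2}(1)\circ\alpha_{H'_1}(1)$ — definition~(\ref{concatenation}) reads $(\alpha\loopc\alpha')(1)=\alpha(1)\circ\alpha'(1)$, so the $H_1$-piece acts on the outside. Consequently the representing vector of the pumped state restricted to $\caA_{\leq 0}$ is $V_2^*V_1^*\Omega_{\leq 0}$ rather than $V_1^*V_2^*\Omega_{\leq 0}$. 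Since $H^1(G)$ is abelian, the intertwining computation still yields $\chi_1\chi_2$, so the conclusion survives, but the bookkeeping should be fixed.

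Comparing with the paper's proof: the paper stays within its own toolbox of anchored TDIs. It shifts the second truncation far to the left to $H'_2(r)$ (which does not change the charge by Lemma~\ref{lem: local perturbation tdi}), builds finite-volume parallel-transport unitaries $V_1, V_2(r)$ via Lemma~\ref{lem: parallel transport}, shows the commutator $[V_1,\gamma(g)[V_2(r)]]$ tends to zero by locality, and takes $r\to\infty$ to factorize the phases. Your proof instead invokes the structural facts that a pure state on $\caA_{\leq 0}$ has an irreducible GNS representation with trivial commutant, that a $\phi|_{\leq 0}$-normal automorphism is therefore implemented by a unitary $V_i$ on $\caH_{\leq 0}$, and that the $G$-covariance of $\beta_i$ plus irreducibility forces the Weyl-type commutation relation $V_iU(g)=\chi_i(g)U(g)V_i$ for a scalar $\chi_i(g)$. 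This is a cleaner and shorter argument and it directly exhibits the additivity as multiplication of implementers, at the cost of importing von Neumann-algebraic machinery (unitary implementability from normality, triviality of the center) that the paper chose not to rely on explicitly. You should spell out the continuity of $\chi_i$ (e.g.\ from $\chi_i(g)=\langle\Omega_{\leq 0},V_iU(g)V_i^*\Omega_{\leq 0}\rangle$ and strong continuity of $U$) to justify $\chi_i\in H^1(G)$, and you should confirm, rather than merely cite Lemma~\ref{lem: cut state}, that the same $||\cdot||_f$-estimates apply to the half-chain restriction $\omega_i$ of the pumped state, since that lemma is phrased on the full chain.
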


\begin{proof}
The pumped state of the loop $\psi_2\loopc \psi_1$ is 
\begin{equation}  \label{eq: double pumping}
\phi \circ \alpha_{H'_1}(1) \circ \alpha_{H'_2}(1)
\end{equation}
where, as before, $H'_{j}$ are the truncated TDI. We let  $H'_2(r)$ denoted the TDI truncated between sites $-r$ and $-r+1$ instead of the sites $0$ and $1$. By Lemma \ref{lem: local perturbation tdi}, we can replace $H'_2$ by $H'_2(r)$ , such that the resulting pumped state
$$
\psi'(r)=\phi \circ \alpha_{H'_1}(1) \circ \alpha_{H'_2(r)}(1)
$$
has the same charge as  the state \eqref{eq: double pumping}, for any $r$. 
Let $C_H$ be a constant such that $|||H_1|||_f,|||H_2|||_f\leq C_H$ for some $f\in\caF$.
Let now $K_1,K_2(r)$ by the parallel transport TDIs constructed in Lemma \ref{lem: parallel transport} satisfying
\begin{equation}\label{eq: parallel for truncated pair}
\phi \circ \alpha_{H'_1}(1)=\phi \circ \alpha_{K_1}(1) ,\qquad\phi \circ \alpha_{H'_2}(1)=\phi \circ \alpha_{K_2}(1)
\end{equation}
and the bounds 
$$
||K_1||_{\{0,1\}, \hat{f}} \leq  1,\qquad  \sup_r ||K_2||_{\{-r,-r+1\}, \hat{f}} \leq  1 
$$
for $\hat{f}$ affiliated to $f$ and depending on $C_H$.
Then we calculate 
\begin{align}
\psi'(r) &= \phi \circ \alpha_{K_1}(1) \circ \alpha_{H'_2(r)}(1) 
=\phi \circ \alpha_{H'_2(r)}(1) \circ \alpha_{K_1+E(r)}(1) \nonumber \\
&=\phi \circ \alpha_{K_2(r)}(1) \circ \alpha_{K_1+E(r)}(1)\label{eq: double pumped repeat} 
\end{align}
The first and third equality follow from \eqref{eq: parallel for truncated pair}, and the second equality follows by the commutation formula \eqref{eq: commutation autos integrated}
with $E(r)=\alpha^{-1}_{H'_2(r)}[K_1]-K_1$.  Locality estimates based on the fact that $H'_2(r)$ is anchored in a region at distance $r$ from the origin, yield  $||E(r)||_{\{0,1\},\hat{f}}\to 0$ as $r\to\infty$ and therefore also
\begin{equation}\label{eq: psiprime cauchy}
||   \psi'(r) -  \phi \circ \alpha_{K_2(r)}(1) \circ \alpha_{K_1}(1) || \to 0
\end{equation}
As $K_1,K_2(r)$ are anchored in a finite set, we have $K_1=\Adjoint(V_1), K_2(r)=\Adjoint(V_2(r)) $ and by the locality properties we deduce that  $ ||[V_1,\gamma(g)[V_2(r)]|| \to 0$.
We let the pure density matrix $\rho^{(r)}$ represent the (pure) pumped state $\psi'(r)$ and we write $W_j = \pi(V_j)$.  Then \eqref{eq: psiprime cauchy} reads in the GNS representation
\begin{equation}\label{eq: factor charges}
 || \rho^{(r)}-  \Adjoint(W^*_1 W^*_2(r))\Pi  ||_1 \to 0,
\end{equation}
with $\Pi$, as before, the pure density matrix corresponding to the state $\phi$, and 
\begin{equation}\label{eq: commutation unitaries}
||[W_1,U(g)W_2(r)U^*(g)|| \to 0
\end{equation}
by the corresponding property for $V_1,V_2(r)$ above. We now bound
\begin{equation} \label{eq: control of additivity}
d(h(\psi_2\loopc\psi_1),h(\psi_2)+h(\psi_1)) \leq {C_{\bbS^1}}\sup_{g\in G}\left| \Tr [ \rho^{(r)} U(g)] - e^{i(h_1(g)+h_2(g))}   \right|, \qquad h_j=h(\psi_j)
\end{equation}
for any  $r\in\bbN^+$, with $C_{\bbS^1}=\sup_{\theta\in[0,\pi]}\frac{\theta}{|e^{i\theta}-1|}$.    We use \eqref{eq: factor charges}, \eqref{eq: commutation unitaries} and the definitions of the separate charges
$$
U(g)\Adjoint(W^*_j)\Pi =   e^{i h_j(g)}  \Adjoint(W^*_j)\Pi,\qquad  j=1,2,
$$
to argue that the right hand side of \eqref{eq: control of additivity} vanishes as $r\to\infty$, which concludes the proof. 
%
\end{proof}

\subsection{Homotopy invariance}

\begin{proposition} \label{prop: homotopy invariance products}
Let $\psi_0$ and $\psi_1$ be $G$-invariant loops with common basepoint $\phi$. If  $\psi_0$ and $\psi_1$ are $G$-homotopic, then $h(\psi_1)=h(\psi_0)$.
  \end{proposition}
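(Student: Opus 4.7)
The strategy is to reduce to the setting of loops with basepoint $\phi$, where Proposition~\ref{prop: finite index set} provides local constancy of the index, and then exploit the discreteness of $H^1(G)$. From the homotopy equations one has $\psi_\lambda(0)=\phi\circ\alpha_{F_0}(\lambda)=:\phi_\lambda$, and the hypothesis forces $\phi_0=\phi_1=\phi$. For each $\lambda\in[0,1]$, let $\zeta_\lambda$ be the path $\phi\to\phi_\lambda$ obtained by running $F_0$ over $[0,\lambda]$ and rescaling to $[0,1]$ via Lemma~\ref{lem: time-rescaling almost local evolutions}, and define the auxiliary loop $\chi_\lambda:=\zeta_\lambda^\theta\loopc\psi_\lambda\loopc\zeta_\lambda$ with basepoint $\phi$. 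By Lemma~\ref{lem: composition of loops is homotopic to concatenation} this is generated, up to reparametrization, by the conjugated evolution $\alpha_{F_0}(\lambda)\circ\alpha_{H_\lambda}(\cdot)\circ\alpha_{F_0}(\lambda)^{-1}$, whose middle-segment TDI is the pushforward $\alpha_{F_0}(\lambda)[H_\lambda]$ defined by~\eqref{eq: evolved interaction}.

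At $\lambda=0$ the path $\zeta_0$ is trivial and $\chi_0=\psi_0$, giving $h(\chi_0)=h(\psi_0)$. At $\lambda=1$ the path $\zeta_1$ is itself a loop $\omega$ at $\phi$ generated by $F_0$, so $\chi_1=\omega^\theta\loopc\psi_1\loopc\omega$ and Lemma~\ref{lem: additivity} yields $h(\chi_1)=h(\omega^\theta)+h(\psi_1)+h(\omega)$. A direct computation in the GNS of $\phi$ — namely that the truncated time-reverse pumps onto the vector $V\Omega$ where $V^{\ast}\Omega$ was the pumped-state representative of $\omega$, which inverts the $G$-character $g\mapsto\langle\Omega,U(g)\cdot\rangle$ — shows $h(\omega^\theta)=-h(\omega)$, and hence $h(\chi_1)=h(\psi_1)$. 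It therefore suffices to show that $\lambda\mapsto h(\chi_\lambda)\in H^1(G)$ is constant on $[0,1]$; by the discreteness of $H^1(G)$ this reduces to local constancy in $\lambda$.

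A natural candidate for a TDI $G_\lambda$ generating $\chi_\lambda$ is the piecewise combination of rescaled restrictions of $F_0$ on the two side segments and $\alpha_{F_0}(\lambda)[H_\lambda]$ on the middle segment. The two side pieces vary $L^1$-continuously in $\lambda$ by standard arguments combining the uniform bound on $\nor F_0 \nor_f$, a change of variables in the Bochner integral, and the continuity of translations in Bochner-$L^1$. If one had analogous $L^1$-continuity for the middle piece, Proposition~\ref{prop: finite index set} would give local constancy at once. \emph{The main obstacle} is that the definition of $G$-homotopy provides only uniform boundedness and strong $s$-measurability of $\lambda\mapsto H_\lambda$, with no prescribed continuity in $\lambda$ at the TDI level; only the state-level continuity encoded in $F_s$ is available.

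To bypass this, I would work at the state level rather than the TDI level. The pumped state $\xi_\lambda=\phi\circ\alpha_{G'_\lambda}(1)$ (with $G'_\lambda$ the truncation of $G_\lambda$ at the cut $0\,|\,1$) satisfies the uniform tail bound $\Vert(\xi_\lambda-\phi)|_{I_r^c}\Vert\leq\hat f(r)$ exactly as in Lemma~\ref{lem: cut state}, so it is represented by a pure density matrix $\rho_\lambda$ in the GNS of $\phi$. Using the homotopy identity $\psi_\lambda(s)=\psi_0(s)\circ\alpha_{F_s}(\lambda)$ together with the Lieb-Robinson control of Lemma~\ref{lem: evolution by similar tdi}, the finite-interval restrictions $\xi_\lambda|_{I_r}$ can be expressed through the $F_s$-data (which is strongly measurable and uniformly bounded in $s$), yielding norm-continuity of these restrictions in $\lambda$. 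Replicating the estimate in the proof of Lemma~\ref{lem: uniform cont index} then upgrades this to trace-norm continuity of $\lambda\mapsto\rho_\lambda$. Since the charge map $\hat h\colon\caT_{1,*}(\caH)\to H^1(G)$ of Section~\ref{sec: states cut loop} is Lipschitz into the discrete space $H^1(G)$, $\lambda\mapsto h(\chi_\lambda)$ is locally constant, hence constant, yielding $h(\psi_0)=h(\chi_0)=h(\chi_1)=h(\psi_1)$.
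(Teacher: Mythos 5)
Your setup — defining $\chi_\lambda=\zeta_\lambda^\theta\loopc\psi_\lambda\loopc\zeta_\lambda$ with $\chi_0=\psi_0$ and $\chi_1=\omega^\theta\loopc\psi_1\loopc\omega$, observing $h(\omega^\theta)=-h(\omega)$ by a GNS computation, and reducing to local constancy of $\lambda\mapsto h(\chi_\lambda)$ — is a reasonable framing, and you correctly identify the central obstacle: the homotopy definition gives no continuity of $\lambda\mapsto H_\lambda$ in any TDI-norm. The gap is in your proposed bypass. You claim that the finite-interval restrictions $\xi_\lambda|_{I_r}$ of the pumped state can be "expressed through the $F_s$-data" and thereby shown norm-continuous in $\lambda$. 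This does not follow. The pumped state is $\phi\circ\alpha_{G'_\lambda}(1)$, where the prime denotes truncation at the cut — a \emph{TDI-level} surgery that retains the terms of $G_\lambda$ (hence of $H_\lambda$) supported to the left. The identity $\psi_\lambda(s)=\psi_0(s)\circ\alpha_{F_s}(\lambda)$ controls the \emph{untruncated} loop states; it says nothing about how the truncated evolution $\alpha_{H'_\lambda}(1)$ depends on $\lambda$. Indeed, Lemma~\ref{lem: uniform cont index} shows that $\hat\rho$ is uniformly continuous \emph{in the $L^1$-norm of the generating TDI}, and if $H_\lambda$ jumps (which the homotopy definition permits), the pumped state $\rho_\lambda$ can jump in trace norm; all that Lemma~\ref{lem: index indep of h} guarantees is that such a jump carries zero relative charge. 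So trace-norm continuity of $\lambda\mapsto\rho_\lambda$ is false in general, and local constancy of $h(\chi_\lambda)$ cannot be obtained this way.

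The paper instead avoids any appeal to continuity in $\lambda$. It discretizes the unit square $[0,1]^2$ into a lattice of mesh $1/N$ with $N$ chosen so that one step changes the associated TDI by at most $\epsilon$ in $L^1$-norm, where $\epsilon$ is the threshold from Proposition~\ref{prop: finite index set}. To each monotone lattice walk from $(0,0)$ to $(1,1)$ one assigns a TDI $D_x$ generating a loop with basepoint $\phi$ (upward steps use $H$, rightward steps use $F$), and adjacent walks (differing by flipping a single corner) have $|||D_x-D_y|||^{(1)}_f\leq C_0/(2N)$, hence equal index by Proposition~\ref{prop: finite index set}. Deforming the two extreme walks into each other and identifying them with $\psi_1\loopc\nu$ and $\nu\loopc\psi_0$ gives $h(\nu)+h(\psi_0)=h(\psi_1)+h(\nu)$ by additivity, and the result follows. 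The key point your argument misses is that one only needs the uniform boundedness~\eqref{Homotopy: Uniform bound}, not continuity, to make a \emph{single discrete step} small in $L^1$; no joint regularity of $(s,\lambda)\mapsto H_\lambda(s)$ is required. If you wish to salvage the $\chi_\lambda$ framing, you would have to replace the claimed continuity-in-$\lambda$ by a comparable discretization argument applied to $G_\lambda$, at which point you would essentially be reproving the paper's walk lemma.
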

\begin{proof}
We consider a homotopy $\psi_\lambda(s)$ relating $\psi_0$ and $\psi_1$, as defined in Section \ref{sec: homotopy}, with the families of TDIs $H_\lambda$ and $F_s$,  and we set
$$
C_0= \sup_{s,\lambda} ( \nor H_\lambda \nor_f+ \nor F_\lambda \nor_f ),
$$
with $C_0$ finite by the condition \eqref{Homotopy: Uniform bound}.
Moreover,  for every $\lambda$, the TDI  $H_\lambda$ generates a loop.
For later reference, let us also define the loop 
\begin{equation}\label{eq: funny loop nu}
\lambda\mapsto \nu(\lambda)= \psi_\lambda(0)=\psi_\lambda(1).
\end{equation}
where the last equality follows from the fact that $\psi_\lambda(\cdot)$ is a loop for every $\lambda$. Moreover, $\nu$ is a loop with basepoint $\phi$ because the loops $\psi_0$ and $\psi_1$ both have basepoint $\phi$.

We now take the set $\caS=\caS(f,\phi,C_0)$ as defined in Section \ref{sec: states cut loop} and we take $\epsilon$ as furnished by Proposition \ref{prop: finite index set}. Note that $\epsilon$ depends on $\caS$.
We set $N=1/\lfloor C_0\epsilon\rfloor$ and $\caT_N=\{k/N:k=0,\ldots,N\}$. For any $t\in\caT_{2N} \setminus \{1\}$, we denote by $\hat{t}$ its successor, i.e.\ $\hat{t}=t+\tfrac{1}{2N}$.
We consider functions $x: \caT_{2N} \to \caT_{N}^2 $ that satisfy
\begin{enumerate}
\item $x(0)=(0,0)$
\item $x(1)=(1,1)$
\item $x(\hat{t})-x(t)$ equals either $(1/N,0)$ or $(0,1/N)$.
\end{enumerate}  
and we call such functions \emph{admissible walks}, 
By drawing a straight line between $x(t)$ and $x(\hat{t})$, we obtain a piecewise linear path in the unit square, starting at the bottom-left corner $(0,0)$ and ending in the top-right corner $(1,1)$, and always moving either upwards or to the right, see Figure \ref{fig: homotopywalks} and  \ref{fig: connected walks} for examples.
To every admissible walk $x$, we associate a TDI $D_x(\cdot)$ as follows: for $s\in (t,\hat{t}]$, we set 
\begin{equation}
D_x(s)= 
 \begin{cases}
2H_{x_1(t)} ((x_2(t)+2(s-t)) &  \quad \text{if} \quad   x(\hat{t})-x(t)= (0,\tfrac{1}{N})   \\[1mm]
2F_{x_2(t)} ((x_1(t)+2(s-t)) &  \quad \text{if} \quad  x(\hat{t})-x(t)= (\tfrac{1}{N},0)   
\end{cases}
\end{equation}
This construction is illustrated in Figure \ref{fig: homotopywalks}. Every one of such TDIs $D_x$ generates a loop with basepoint $\phi$. This follows from the definition of homotopy given in Section \ref{sec: homotopy}, in particular from the fact that $H_\lambda$ and $F_s$ generate motion on the two-dimensional sheet of states $(s,\lambda)\mapsto\psi_\lambda(s)$.

\begin{figure}
    \centering
    \begin{minipage}{0.475\textwidth}
        \centering
        \includegraphics[width=0.9\textwidth]{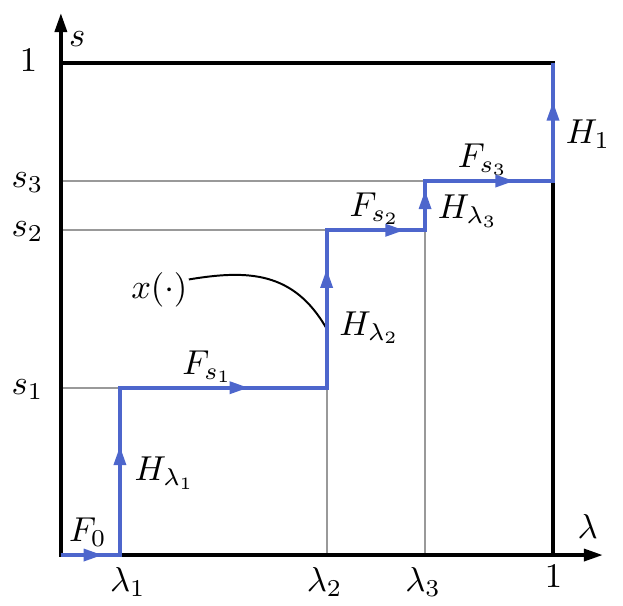}
        \caption{The TDI associated to the walk $x$ is represented in the square $[0,1]^2$. The upward steps are generated by $H$, the right-moving steps are generated by $F$.}
        \label{fig: homotopywalks}
    \end{minipage}\hfill
    \begin{minipage}{0.475\textwidth}
        \centering
        \includegraphics[width=0.9\textwidth]{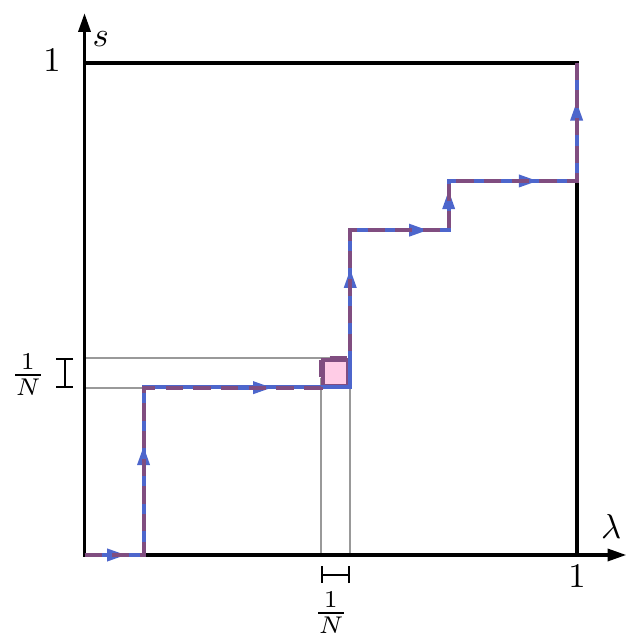}
        \caption{Two walks that are adjacent. They delineate a small square with sides of length~$1/N$.  }
        \label{fig: adjacentpaths}
    \end{minipage}
\end{figure}

We say that two distinct admissible walks $x,y$ are adjacent if $x(t)=y(t)$ for all, but one value of $t\in\caT_{2N}$, that we call $t_*$.  It follows that $x,y$ are obtained from each other by flipping the steps between preceding and following the time $t^*$, i.e. and upwards step becomes right-moving, and vice versa, see Figure \ref{fig: adjacentpaths}.  

We then note that for $x,y$ adjacent admissible walks, we have
$$
||| D_x-D_y |||^{(1)}_f \leq C_0/(2N)
$$
By Proposition \ref{prop: finite index set}, and the choice of $N$ made above, we have $h(D_x)=h(D_y)$.  Therefore, two admissible walks give rise to loops with the same index if those walks can be deformed into each other by switching a walk into an adjacent one. 
 Consider then the two admissible walks
$$
x(t)=\begin{cases} (0,2t) &  t \leq \tfrac{1}{2} \\
(2(t- \tfrac{1}{2}),1) &  t  > \tfrac{1}{2} \\
 \end{cases}
 \qquad
 y(t)=\begin{cases} (2t,0) &  t \leq \tfrac{1}{2} \\
(1, 2(t- \tfrac{1}{2}) ) &    t  > \tfrac{1}{2}  \\
 \end{cases}
$$
These walks can be transformed into each other, as we illustrate in Figure \ref{fig: connected walks}.
\begin{figure}[htb!]
\begin{center}
\includegraphics[width=0.9\textwidth]{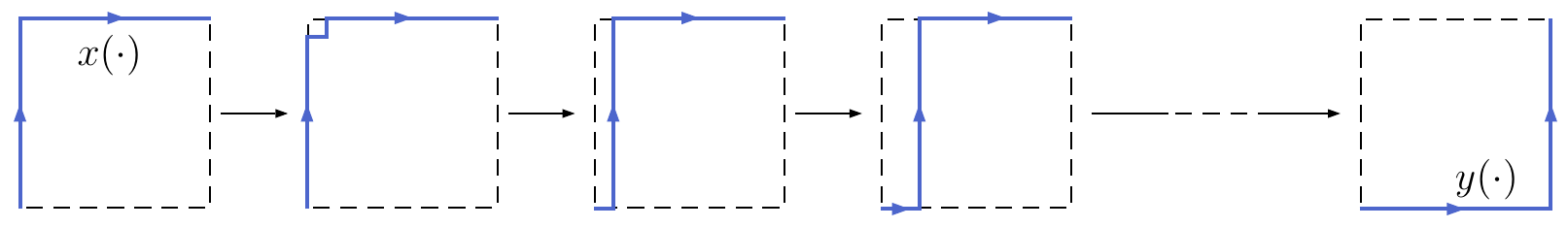}
\caption{Transformation of the walk $x(\cdot)$ into $y(\cdot)$ via a path of adjacent walks. 
}
\label{fig: connected walks}
\end{center}
\end{figure} 
The walk $x$ correspond to the loop $\psi_1\loopc\nu$ and the walk $y$ corresponds to $\nu\loopc\psi_0$. These two loops have hence the same index and by using the additivity Lemma \ref{lem: additivity}, we conclude that $h(\nu)+h(\psi_0)= h(\psi_1)+h(\nu)$ which implies  $h(\psi_0)= h(\psi_1)$. 
\end{proof}

\subsection{Homeomorphism property of $h$}\label{sec: group homomorfism}

We have already established that $h$ maps the concatenation of loops into the sum of charges. We would like to state that $h$ is a homeomorphism, but as it stands, the concatenation  of loops admits neither an identity nor an inverse. This can be remedied by using ($G$)-homotopy to define equivalence classes of loops. The following lemma shows that $h$ acts as a homeomorphism from homotopy classes to $H^1(G)$. 
Let $\Id_\phi$ be the constant loop with basepoint $\phi$, i.e.\ $\phi(s)=\phi$, and recall the definition of the inverse ${\psi^{\theta}}(s)=\psi(1-s)$ associated to a loop  $\psi$.
\begin{lemma}\label{lem: inverse loop}
Let $\psi$ be a $G$-invariant loop with basepoint $\phi$. Then
\begin{enumerate}
\item $h(\Id_\phi)=0$
\item  ${\psi^{\theta}}\loopc\psi$ is $G$-homotopic to $\Id_\phi$.
\item $h(\psi)=-h({\psi^{\theta}})$
\end{enumerate}
 \end{lemma}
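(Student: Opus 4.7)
The strategy is to settle the three parts in order: part (i) is direct from the definition, part (ii) is proven by constructing an explicit contraction, and part (iii) then follows from additivity (Lemma~\ref{lem: additivity}) combined with homotopy invariance (Proposition~\ref{prop: homotopy invariance products}) and the first two parts.

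For part (i), I would simply take the $G$-invariant TDI $H \equiv 0$, which generates $\Id_\phi$. The truncated TDI $H'$ is then also zero, so the pumped state equals $\phi$ itself and the relative charge $h_{\phi/\phi}$ vanishes by Proposition~\ref{prop: zerodim}. For part (ii), let $H$ be a $G$-invariant TDI generating $\psi$ (guaranteed by the averaging remark following Definition~\ref{def: paths}), and define the two-parameter family
\begin{equation*}
\phi_\lambda(s) = \begin{cases} \phi \circ \alpha_H(2\lambda s), & s \in [0,1/2], \\ \phi \circ \alpha_H(2\lambda(1-s)), & s \in [1/2,1]. \end{cases}
\end{equation*}
This family reduces to the constant loop $\Id_\phi$ at $\lambda=0$ and to ${\psi^{\theta}} \loopc \psi$ at $\lambda=1$, and $\phi_\lambda(0) = \phi_\lambda(1) = \phi$ for every $\lambda$, so each $\phi_\lambda(\cdot)$ is a loop with common basepoint $\phi$. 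Applying Lemma~\ref{lem: time-rescaling almost local evolutions} separately on $[0,1/2]$ and $[1/2,1]$, the $s$-evolution at fixed $\lambda$ is generated by
\begin{equation*}
H_\lambda(s) = \begin{cases} 2\lambda\, H(2\lambda s), & s\leq 1/2, \\ -2\lambda\, H(2\lambda(1-s)), & s > 1/2, \end{cases}
\end{equation*}
and analogously the $\lambda$-evolution at fixed $s$ is generated by $F_s(\lambda) = 2s\, H(2s\lambda)$ for $s \leq 1/2$ and $F_s(\lambda) = 2(1-s)\, H(2\lambda(1-s))$ for $s > 1/2$. These TDIs inherit $G$-invariance from $H$ and satisfy the uniform bound $|||H_\lambda|||_f, |||F_s|||_f \leq 2|||H|||_f$, so~\eqref{Homotopy: Uniform bound} holds and we obtain a genuine $G$-homotopy in the sense of Section~\ref{sec: homotopy}.

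For part (iii) I would then combine additivity with the two preceding parts:
\begin{equation*}
h(\psi^{\theta}) + h(\psi) = h(\psi^{\theta} \loopc \psi) = h(\Id_\phi) = 0,
\end{equation*}
where the first equality is Lemma~\ref{lem: additivity}, the second uses the homotopy from (ii) together with Proposition~\ref{prop: homotopy invariance products}, and the third is part (i). The only genuine technicality lies in part (ii), namely verifying that the piecewise-defined $H_\lambda$ and $F_s$ are strongly measurable and genuinely generate the prescribed families of states across the break at $s = 1/2$. Strong measurability is preserved under affine reparametrization and sign flips of $s$, and a direct differentiation of $s \mapsto \phi \circ \alpha_H(2\lambda(1-s))$ using the Heisenberg equation \eqref{eq: heisenberg} confirms that the sign change is absorbed into $H_\lambda$, so no real obstacle arises.
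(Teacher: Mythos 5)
Your proof is correct and follows essentially the same route as the paper: part (i) by choosing $H=0$, part (ii) by a direct two-parameter homotopy (yours is the paper's $\zeta_\lambda$ with the roles of $\lambda$ and $1-\lambda$ exchanged, and you usefully make the generators $H_\lambda$, $F_s$ explicit), and part (iii) by the same chain of equalities via Lemma~\ref{lem: additivity}, Proposition~\ref{prop: homotopy invariance products}, and part (i). The only cosmetic difference is that the paper invokes Lemma~\ref{lem: composition of loops is homotopic to concatenation}(i) rather than spelling out the time-rescaling TDIs as you do.
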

\begin{proof}
Item $i)$ follows because a possible TDI generating constant loops is $H=0$, and for that TDI, the pumped state is equal to the basepoint. 
To show item $ii)$, we consider the following homotopy:
\begin{equation}
\zeta_\lambda(s) = \begin{cases} \psi((1-\lambda)2s) &  s\leq 1/2 \\
{\psi^{\theta}}(\lambda+(1-\lambda)(2s-1))  &  s >1/2
\end{cases}
\end{equation}
We note that
$$
\psi((1-\lambda) 2s)=  {\psi^{\theta}}(\lambda+(1-\lambda)(2s-1)), \qquad \text{for $s=1/2$}
$$
by the definition of ${\psi^{\theta}}$, resolving the apparent incongruence between $s\leq 1/2$ and $s>1/2$ in the definition of $\zeta_\lambda(s)$. 
The homotopy property of $(s,\lambda)\mapsto \zeta_\lambda(\cdot)$ now follows because $\psi,{\psi^{\theta}}$ are loops and using Lemma \ref{lem: composition of loops is homotopic to concatenation} item i). 
We then remark that $\zeta_0(\cdot)= {\psi^{\theta}}\loopc\psi$ and $\zeta_1(\cdot)= \Id_\phi\loopc\Id_\phi=\Id_\phi$ which settles item ii). Finally, by additivity and item i), $
h({\psi^{\theta}})+h(\psi)= h({\psi^{\theta}}\loopc\psi)= h(\Id_\phi)= 0
$
which implies item iii). 
\end{proof}

\subsection{Completeness of classification}
\begin{proposition}\label{lem: completeness products}
If a pair of $G$-invariant loops $\psi_1(\cdot)$ and $\psi_2(\cdot)$ with common basepoint $\phi$ have equal index, then they are $G$-homotopic.  
\end{proposition}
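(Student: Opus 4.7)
The plan is to reduce the general comparison of $\psi_1$ and $\psi_2$ to the already-established contractibility of index-zero loops (Theorem~\ref{thm: contractibility products}), and then exploit the group-like structure on $G$-homotopy classes of loops with basepoint $\phi$ that was outlined in Section~\ref{sec: concatenation loops}.

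First, I would form the composite loop $\psi_2^\theta \loopc \psi_1$, which has common basepoint $\phi$ since both $\psi_1$ and $\psi_2$ do and since $\psi_2^\theta$ has the same basepoint as $\psi_2$. By additivity of the index (Lemma~\ref{lem: additivity}) together with the time-reversal formula $h(\psi_2^\theta) = -h(\psi_2)$ (Lemma~\ref{lem: inverse loop} item $iii$), one obtains
\begin{equation*}
h(\psi_2^\theta \loopc \psi_1) \;=\; h(\psi_2^\theta) + h(\psi_1) \;=\; h(\psi_1)-h(\psi_2) \;=\; 0
\end{equation*}
by hypothesis. Theorem~\ref{thm: contractibility products} then gives that $\psi_2^\theta \loopc \psi_1$ is $G$-homotopic to the constant loop $\mathrm{Id}_\phi$, via a $G$-homotopy with constant basepoint~$\phi$.

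Next, I would left-concatenate this homotopy with the loop $\psi_2$. Since the homotopy has constant basepoint $\phi$ (which is also the basepoint of $\psi_2$), concatenation yields a $G$-homotopy
\begin{equation*}
\psi_2 \loopc (\psi_2^\theta \loopc \psi_1) \;\sim\; \psi_2 \loopc \mathrm{Id}_\phi.
\end{equation*}
The right-hand side is $G$-homotopic to $\psi_2$ because concatenation with a constant loop is homotopic to the original loop (a standard reparametrization argument, implemented by Lemma~\ref{lem: composition of loops is homotopic to concatenation} item $i$ applied to an appropriate piecewise linear reparametrization $j$). For the left-hand side, I would use associativity of $\loopc$ up to $G$-homotopy, itself a consequence of Lemma~\ref{lem: composition of loops is homotopic to concatenation} (since $\circ$ is strictly associative while each side of the associativity relation is $G$-homotopic to the corresponding composition), to rewrite
\begin{equation*}
\psi_2 \loopc (\psi_2^\theta \loopc \psi_1) \;\sim\; (\psi_2 \loopc \psi_2^\theta) \loopc \psi_1.
\end{equation*}
Applying Lemma~\ref{lem: inverse loop} item $ii$ to the time-reversed loop $\psi_2^\theta$ (whose own time-reversal is $\psi_2$) gives $\psi_2 \loopc \psi_2^\theta \sim \mathrm{Id}_\phi$, and another left-identity argument yields $\mathrm{Id}_\phi \loopc \psi_1 \sim \psi_1$. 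Chaining these $G$-homotopies, $\psi_1 \sim \psi_2$, which is the desired conclusion.

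The main technical obstacle lies not in the high-level algebra above but in verifying that the elementary monoidal relations (associativity of $\loopc$, left/right unit laws $\mathrm{Id}_\phi \loopc \psi \sim \psi \sim \psi \loopc \mathrm{Id}_\phi$, and the fact that $G$-homotopy is preserved under concatenation) actually hold as \emph{$G$-homotopies} in our sense, i.e.\ are generated by TDIs satisfying the uniform boundedness condition~\eqref{Homotopy: Uniform bound}. All of these reduce to the reparametrization lemma~\ref{lem: time-rescaling almost local evolutions} and Lemma~\ref{lem: composition of loops is homotopic to concatenation}: the unit laws follow from reparametrizing the piecewise rescaling used to define $\loopc$, preservation of $G$-homotopy under concatenation is handled by gluing the two defining TDIs $H_\lambda,F_s$ along the time cut at $s=\tfrac12$ and verifying the uniform bound~\eqref{Homotopy: Uniform bound}, and associativity follows by interpreting both $(\psi_3 \loopc \psi_2)\loopc\psi_1$ and $\psi_3\loopc(\psi_2\loopc\psi_1)$ as $G$-homotopic to the associative composition $\alpha_{H_3}\circ\alpha_{H_2}\circ\alpha_{H_1}$ acting on the common basepoint~$\phi$.
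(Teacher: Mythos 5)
Your strategy shares the same starting move as the paper --- show $h(\psi_2^\theta\loopc\psi_1)=0$ and invoke Theorem~\ref{thm: contractibility products} to contract it --- but then diverges. You appeal to the full groupoid structure on $G$-homotopy classes of loops based at $\phi$: preservation of $G$-homotopy under left/right concatenation, associativity up to $G$-homotopy, the unit laws, $\psi_2\loopc\psi_2^\theta\sim\Id_\phi$, and transitivity of $G$-homotopy. The paper's proof is deliberately constructed so as not to need the concatenation-preservation step (nor the associativity or $\psi\loopc\psi^\theta\sim\Id_\phi$ facts): it instead takes the single two-parameter family $\zeta_\lambda(s)$ contracting $\psi_2^\theta\loopc\psi_1$ and, by folding and reparametrizing the parameter square in two different ways, produces two explicit homotopies $\nu_\lambda(\cdot),\omega_\lambda(\cdot)$ that both terminate at the same loop $\kappa^\theta\loopc\Id_\phi$ (where $\kappa(z)=\zeta_z(1/2)$), with $\nu_0=\Id_\phi\loopc\psi_1$ and $\omega_0=\Id_\phi\loopc\psi_2$. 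Your approach is conceptually cleaner and more modular, and all the facts you call on are true; the trade-off is that you must establish preservation of $G$-homotopy under $\loopc$ (and associativity) as genuine $G$-homotopies in the sense of eqs.~(\ref{eq: homotopy})--(\ref{Homotopy: Uniform bound}), none of which the paper proves as a standalone statement. You correctly flag this and your gluing sketch is the right way to do it (it works because Theorem~\ref{thm: contractibility products} and Lemma~\ref{lem: inverse loop} both deliver constant-basepoint homotopies, so the $\lambda$-direction generator is trivially extendable past the time cut $s=1/2$), but as written it is a plan rather than a proof. For a fully self-contained argument you would either carry out those verifications or mirror the paper's more hands-on construction, which sidesteps them at the cost of some opaque explicit formulas.
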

\begin{proof}
The loop $\psi^\theta_2\loopc\psi_1$ has index zero, by Lemma \ref{lem: inverse loop} and Lemma \ref{lem: additivity}.  By Theorem \ref{thm: contractibility products}, this loop is $G$-homotopic to the constant loop $\Id_\phi$.  Let $\zeta_\lambda(s)$ be this homotopy, satisfying also $\zeta_\lambda(0)=\zeta_\lambda(1)=\phi$. 
We now construct two derived homotopies $\nu$ and $\omega$ by 
\begin{equation*}
\nu_\lambda(s)=\begin{cases} \zeta_\lambda(s)  &  s\leq 1/2 \\
\zeta_{\lambda-(2s-1)\lambda}(1/2)   &  s > 1/2
\end{cases},\qquad
\omega_\lambda(s)=\begin{cases} \zeta_\lambda(1-s)  &  s\leq 1/2 \\
\zeta_{\lambda-(2s-1)\lambda}(1/2)   &  s > 1/2
\end{cases}.
\end{equation*}
\begin{figure}[h] 
\begin{center}
\includegraphics[width=0.4\textwidth]{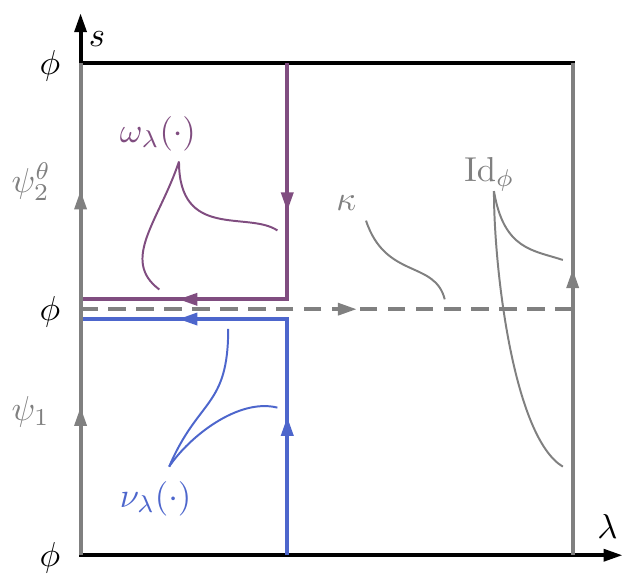}
\caption{The square represents the homotopy $(s,\lambda)\mapsto \zeta_\lambda(s)$. In this square we show the loops $\nu_\lambda(\cdot)$ and $\omega_\lambda(\cdot)$ building the homotopies $\nu$ and $\omega$. }
\label{fig: homotopycontraction}
\end{center}
\end{figure}
The apparent incongruence at $s=1/2$ is resolved by noting that 
$ \zeta_\lambda(s)=  \zeta_{\lambda-(2s-1)\lambda}(1/2)$ for $s=1/2$. 
The fact that $s\mapsto \nu_\lambda(s) $ and $\lambda\mapsto \nu_\lambda(s)$ are generated by uniformly bounded TDIs, follows from the fact that $(s,\lambda)\mapsto \zeta_\lambda(s)$ is a homotopy and from Lemma \ref{lem: time-rescaling almost local evolutions}. 
Then, by construction,  $\zeta_0(1/2)=\phi$ and $\zeta_{\lambda}(0)=\phi$ so that $\nu_\lambda(\cdot)$ are loops with basepoint $\phi$. This confirms that  $(s,\lambda)\mapsto \nu_\lambda(s)$ is indeed a homotopy.  In an analogous way, one checks that  $(s,\lambda)\mapsto \omega_\lambda(s)$ is a homotopy.   In Figure \ref{fig: homotopycontraction}, we illustrate how $\nu$ and $\omega$ are derived from $\zeta$.
We define also the path $\kappa(z)=\zeta_{z}(1/2)$ and note that this path is in fact a loop. Indeed, $\kappa(1)=\phi$ because the original homotopy $\zeta$ reduces for $\lambda=1$ to  $\Id_\phi$, and $\kappa(0)=\zeta_0(1/2)=\phi$ as already used above. 
We are now ready to analyze the homotopies $\nu$ and $\omega$, following the definitions. Since $\zeta_0 = \psi^\theta_2\loopc\psi_1$, we have that $\zeta_0(s) = \psi_1(2s)$ for $s\leq 1/2$. Hence,
$$
\nu_0 = \Id_{\phi}\loopc\psi_1.
$$
For $\lambda=1$, $\zeta_0 = \Id_{\phi}$ while $\zeta_{2-2s}(1/2) = \kappa^\theta(2s)$ and so
\begin{equation}\label{eq: new endpoint homo}
\nu_1 = \kappa^\theta\loopc \Id_{\phi}.
\end{equation}
Similarly, $\zeta_0(1-s) = \psi_2(2s)$ and so
$$
\omega_0 = \Id_{\phi}\loopc\psi_2
$$
and $\omega_1 = \nu_1$. We conclude that $
\Id_{\phi}\loopc\psi_1 
$ and $
\Id_{\phi}\loopc\psi_2 
$
are homotopic to the same loop \eqref{eq: new endpoint homo} and hence to each other. It follows that $\psi_1$ is  homotopic to $\psi_2$.
\end{proof}

\section{Proof of main theorems}\label{sec: proofs of main}

In Section \ref{sec: classification for product loops}, we proved all results of this paper in the special case of loops with product basepoint.  In the present section, we prove our results without this restriction by relying on the results in Section \ref{sec: classification for product loops}. Indeed, by the assumption of invertibility, we can reduce all our results to statements about loops with product basepoint. We use the terminology introduced at the beginning of Section \ref{sec: classification of loops}.

\subsection{Proof of Proposition \ref{prop: equivalence of states}} \label{sec: proof of equivalence propo}
Let $\psi\circ\alpha_H(\cdot)$ be loop. We let $\overline{\psi}$ be an inverse to the basepoint $\psi$, such that
\begin{equation} \label{eq: relation to phi}
\phi=\psi \otimes \overline{\psi} \circ \beta^{-1}.
\end{equation}
where $\phi$ is a product state and $\beta=\alpha_{E}(1)$ for some TDI $E$.  As in the text preceding Proposition~\ref{prop: equivalence of states}, we let $\psi'=\psi \circ \alpha_{H'}(1)$ with $H'$ the truncated TDI. Let also
\begin{equation} \label{eq: relation to phiprime}
\phi'= \psi' \otimes \overline{\psi} \circ \beta^{-1}
\end{equation}
By the uniqueness of the GNS-representation, normality of $\psi'$ w.r.t.\ $\psi$ is equivalent to normality of $\psi'\otimes\overline{\psi}$ w.r.t.\  $\psi\otimes\overline{\psi}$. By \eqref{eq: relation to phi} and \eqref{eq: relation to phiprime}, it is therefore also equivalent to normality of $\phi'$ w.r.t.\  $\phi$. By equation \eqref{eq: commutation autos integrated}, we have (here we write $\alpha_{H'}$ for $\alpha_{H'}\otimes \mathrm{id}$)
$$
\phi'= \phi  \circ \beta \circ \alpha_{H'}(1) \circ \beta^{-1} =   \phi  \circ \alpha_{\beta(H')}(1).
$$
Since $\phi  \circ \alpha_{\beta(H)}(\cdot)$ is a loop with product basepoint, the state $\phi  \circ \alpha_{\beta(H)'}(1)$ is normal with respect to $\phi$ by Lemma~\ref{lem: cut state}.  
Moreover, by Lemma \ref{lem: local perturbation tdi}, and the fact that $\beta(H)'-\beta(H')$ is a TDI anchored in a finite set, we get that 
$\phi  \circ \alpha_{\beta(H')}(1) $  is normal with respect to $\phi  \circ \alpha_{\beta(H)'}(1) $.
Since normality of pure states is a transitive relation, the claim is proven.

\subsection{Loops whose basepoint is $G$-equivalent to a product}\label{sec: loops with equivalent base}

As a first step in dealing with loops with non-product basepoint, we formulate a lemma for loops whose basepoint is $G$-equivalent (as opposed to stably $G$-equivalent) to a product state. 
\begin{lemma}\label{lem: loops with prod equiv}
Let $\psi$ be a $G$-loop generated by the TDI $H$, such that, for some  $G$-invariant TDI $K$ and a product state $\phi$, 
$$\phi=\psi(0)\circ \alpha^{-1}_K(1).$$ 
Then the loop $\psi$ is homotopic to the loop 
\begin{equation} \label{eq: def loop three parts}
\nu=\phi \circ \left(({\alpha_{K}} \loopc \alpha_{H}) \loopc  {\alpha^\theta_{K}}\right)
\end{equation}
which has product basepoint $\phi$. Moreoever, their indices are equal, i.e.\
\begin{equation} \label{eq: equality index inter}
I(\psi(0),H)=h(\nu)
\end{equation} 
\end{lemma}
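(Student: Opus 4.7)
The lemma has two claims: the $G$-homotopy between $\psi$ and $\nu$, and the index equality. They require different arguments and I address them in turn.

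For the homotopy, the plan is to exhibit an explicit two-parameter family $\zeta_\lambda(s)$ whose basepoint $\zeta_\lambda(0)$ interpolates between $\psi(0)$ (at $\lambda=0$) and $\phi$ (at $\lambda=1$) along the path $\phi\circ\alpha_K(1-\lambda)$. Concretely, I would set
\[
\zeta_\lambda(s) = \begin{cases} \phi\circ \alpha_K(1-\lambda + 4s), & s \in [0, \lambda/4], \\ \psi\!\left(\frac{s-\lambda/4}{1-\lambda/2}\right), & s \in [\lambda/4, 1-\lambda/4], \\ \phi\circ \alpha_K(1-\lambda + 4(1-s)), & s \in [1-\lambda/4, 1]. \end{cases}
\]
Direct computation verifies that the three pieces agree at the interfaces, that $\zeta_\lambda(0)=\zeta_\lambda(1)=\phi\circ\alpha_K(1-\lambda)$ so $\zeta_\lambda$ is a loop for every $\lambda$, and that $\zeta_0=\psi$ whereas $\zeta_1$ equals $\nu$ up to a piecewise-linear time reparameterization, which is a $G$-homotopy by Lemma~\ref{lem: composition of loops is homotopic to concatenation}(i). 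The $s$- and $\lambda$-direction generators are piecewise proportional to $K$ and $H$ with rescaling factors bounded by $4$, so the uniform bound \eqref{Homotopy: Uniform bound} holds and $G$-invariance is inherited from $K,H$. Since each intermediate $\zeta_\lambda(0) = \phi\circ\alpha_K(1-\lambda)$ is a $G$-state (an almost local deformation of the product state $\phi$), this constitutes a valid $G$-homotopy.

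For the index equality, I would first compute the pumped states. Truncating at the origin commutes with concatenation of TDIs, so the cut version of the generator of $\nu$ yields, at $s=1$, the evolution $\alpha_{K'}(1)\circ \alpha_{H'}(1)\circ \alpha_{K'}^{-1}(1)$; hence $\psi'_\nu = \phi\circ \alpha_{K'}(1)\circ \alpha_{H'}(1)\circ \alpha_{K'}^{-1}(1)$, while $\psi'_\psi = \phi\circ\alpha_K(1)\circ\alpha_{H'}(1)$. The spin-chain analogue of Proposition~\ref{prop: zerodim}(iv), applied through the GNS construction to the $G$-equivariant automorphism $\alpha_K^{-1}(1)$, converts
\[
I(\psi(0),H) = h_{\psi'_\psi/\psi(0)} = h_{\phi \circ \alpha_K(1) \circ \alpha_{H'}(1) \circ \alpha_K^{-1}(1) / \phi},
\]
so the identity reduces to showing that replacing $K$ by $K'$ in the conjugation does not alter the charge.

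I would settle this last point by interpolation and discreteness. Setting $K_t=(1-t)K'+tK$, the family $\mu_t = \phi\circ \alpha_{K_t}(1)\circ \alpha_{H'}(1)\circ \alpha_{K_t}^{-1}(1) = \phi\circ \alpha_{\alpha_{K_t}(1)[H']}(1)$ interpolates between the two candidate pumped states. A locality estimate as in Lemma~\ref{lem: cut state}, together with Lemma~\ref{lem: condition for normality}, yields normality of each $\mu_t$ with respect to $\phi$; the Duhamel-style bounds of Lemma~\ref{lem: evolution by similar tdi} give continuity of $t\mapsto \rho_{\mu_t}$ in trace norm; and Lipschitz continuity of the charge map $\hat h$ (Lemma~\ref{lem: uniform cont index}) then makes $t\mapsto h_{\mu_t/\phi}$ continuous into the discrete group $H^1(G)$, forcing it to be constant. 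The main obstacle is this continuity step: one has to rerun the locality estimates of Section~\ref{sec: hilbert space theory} with the interpolating TDI $\alpha_{K_t}(1)[H']$ in place of $H'$, checking that the Lieb-Robinson tails remain uniformly controlled in $t$, which is feasible because $K-K'$ is anchored in $\{>0\}$ and thus $\alpha_{K_t}(1)[H']$ inherits $\hat f$-decay away from the origin uniformly in~$t$.
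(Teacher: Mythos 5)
Your proposal is essentially correct on both counts, but the two halves compare differently against the paper.

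For the homotopy, you and the paper are doing the same thing in mirror image: you grow the $\alpha_K$-tails outward from $\psi$ until they fill out $\nu$, while the paper's homotopy $\mu_\lambda$ starts from $\nu$ and shrinks those tails until only $(\mathrm{Id}_{\psi(0)}\loopc\psi)\loopc\mathrm{Id}_{\psi(0)}$ remains. Both yield a 2-parameter family with piecewise generators uniformly proportional to $K$ or $H$, and both invoke Lemma~\ref{lem: composition of loops is homotopic to concatenation} to absorb a final reparametrization. No meaningful difference.

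For the index equality, the two arguments share the same skeleton --- compute the pumped state of $\nu$ as $\phi\circ\alpha_{\alpha_{K'}(1)[H']}(1)$, pass through $\phi\circ\alpha_{\alpha_K(1)[H']}(1)$, and unwind the conjugation by Proposition~\ref{prop: zerodim}(iv) with $\delta=\alpha_K(1)$ --- but differ in the middle step. You interpolate $K_t=(1-t)K'+tK$ and argue $t\mapsto h_{\mu_t/\phi}$ is continuous into the discrete group $H^1(G)$, hence constant. The paper instead observes directly that $\alpha_K(1)[H']-\alpha_{K'}(1)[H']$ is a $G$-invariant TDI anchored in $\{0,1\}$ (via Duhamel, since $K-K'$ is anchored in $\{>0\}$ while $H'$ is anchored in $\{\leq 0\}$), and then invokes Lemma~\ref{lem: local perturbation tdi}(ii) to conclude immediately that the two states have zero relative charge. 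The paper's route is shorter and avoids redoing the uniform normality and trace-norm continuity estimates you flag as the ``main obstacle'' --- in fact, the anchoring fact you need to make your continuity bounds uniform in $t$ is precisely the observation that the paper exploits directly. Your argument works, but it is an indirect path to the same input: once you've established that the interpolating family differs from $\mu_0$ only through a TDI anchored near the origin (which is what makes the Lieb-Robinson tails uniform), Lemma~\ref{lem: local perturbation tdi}(ii) already gives the conclusion without interpolation or discreteness.
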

\begin{proof}
Recalling the definition of Section~\ref{sec:Concatenation of evolutions}, we note that the loop $\nu$ is such that  ${\alpha_{K}}$ is traversed between times $s=0$ and $s=1/4$, $\alpha_{H}$ is traversed between $s=1/4$ and $s=1/2$, and ${\alpha^\theta_{K}}$ is traversed between $s=1/2$ and $s=1$. We define the homotopy, illustated in   Figure \ref{fig: homotopytoproductbase}, 
\begin{figure}[h] 
\begin{center}
\includegraphics[width=0.7\textwidth]{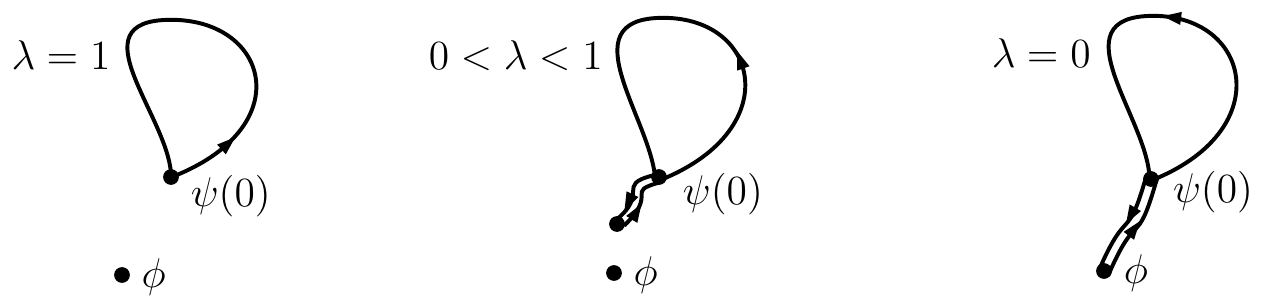}
\caption{The homotopy $\mu$ moves the basepoint $\psi(0)$ up to the product state $\phi$.}
\label{fig: homotopytoproductbase}
\end{center}
\end{figure}
$$
\mu_\lambda(s) = \begin{cases}  \nu(\lambda/4+(1-\lambda)s )  &   s\leq 1/4 \\
\nu(s)&   1/4 < s < 1/2 \\
 \nu(1/2 + (s-1/2)(1-\lambda) )   &     1/2\leq s
\end{cases}
$$
We note first that there is no incongruence at $s=1/4$ and $s=1/2$ since, for any $\lambda$, 
$$
\mu_\lambda(1/4)= \nu(1/4 ) = \psi(0)=\psi(1)=   \nu(1/2 ) =  \mu_\lambda(1/2)
$$
From Lemma \ref{lem: time-rescaling almost local evolutions}, we see that the family $(s,\lambda)\mapsto \mu_\lambda(s)$ is generated by uniformly bounded TDIs in $s$ and $\lambda$ direction. To conclude that this family is homotopy, we check that it is a loop for all $\lambda$, using the definition of time-reversed loops;
$$
 \mu_\lambda(0) =\nu(\lambda/4 )=\psi(\lambda) ={\psi^\theta}(1-\lambda)= \nu(1-\lambda/2)=         \mu_\lambda(1). $$
This homotopy connects the loop $\nu$ with the loop $(\Id_{\psi(0)} \loopc \psi) \loopc  \Id_{\psi(0)}$ and the latter is clearly homotopic to $\psi$, which settles the first claim of the lemma. 
Since we have not yet proven in general that homotopy preserves the index,  we need another argument to get \eqref{eq: equality index inter}.  The pumped state of the loop $\nu(\cdot)$ is
\begin{equation}\label{Complicated Pumped State}
   \phi \circ   \alpha_{K'}(1)\circ \alpha_{H'}(1) \circ \alpha_{(K')^\theta}(1)=  
\phi \circ   \alpha_{K'}(1)\circ \alpha_{H'}(1) \circ  \alpha_{K'}^{-1}(1) = 
\phi \circ \alpha_{\alpha_{K'}(1)[H']}(1) 
\end{equation}
where the first expression follows because  $(K')^\theta=(K^\theta)'$ with $K^\theta$ the TDI generating ${\alpha^\theta_K}$, i.e.\ ${K^\theta(s)}=-K(1-s)$. 
Then, we observe that 
$$
\alpha_{K}(1)[H']-\alpha_{K'}(1)[H']
$$
is a $G$-invariant TDI anchored in $\{0,1\}$. This follows from Duhamel's formula (\ref{Duhamel}) and Lemma \ref{lem: loc and liebrobinson} since $[K-K',\alpha_{K'}(\cdot,\cdot)[H']]$ is anchored\footnote{Here, we assume without loss that all interaction terms are supported on intervals. Indeed, for any finite set $X$, we have that $\caA_X\subset\caA_{[\min(X),\max(X)]}$.} in $\{0,1\}$. This implies then by Lemma \ref{lem: local perturbation tdi} item~ii) that the pumped state~(\ref{Complicated Pumped State}) has zero relative charge w.r.t. the state
$$
\phi \circ \alpha_{\beta[H']}(1)= \phi \circ \beta \circ \alpha_{H'}(1) \circ \beta^{-1}
$$
where we abbreviated $\beta=\alpha_{K}(1)$. 
Hence to prove the claim of the lemma, we have to show that the relative charge of 
$$
\phi\circ \beta \circ \alpha_{H'}(1)\circ \beta^{-1}  \quad \text{w.r.t.}\quad \phi
$$
equals the relative charge of 
$$
\phi\circ \beta \circ \alpha_{H'}(1)  \quad \text{w.r.t.}\quad \phi\circ \beta
$$
since the latter are the pumped state and basepoint, respectively, of the loop $\psi$.
This equality of relative charges follows then from item $v)$ of Proposition \ref{prop: zerodim}, taking $\delta=\beta$.
\end{proof}

\subsection{Tensor products}\label{sec: tensor products}

Given two $G$-loops $\psi_1,\psi_2$ on spin chain algebra's $\caA_1,\caA_2$, generated by TDIs $H_1,H_2$, we can consider the product loop 
$
\psi_1 \otimes\psi_2
$
on the spin chain algebra $\caA_1\otimes\caA_2$,
generated by the TDI
$$
H_1\otimes \Id +\Id \otimes H_2.
$$
The truncation of this TDI to the left equals the sum of truncated TDIs $H'_1\otimes \Id + \Id \otimes H'_2$ and it then follows that the pumped state is a tensor product of pumped states 
$
\psi_1(0) \circ\alpha_{H'_1}(1) \otimes \psi_2(0) \circ\alpha_{H'_2}(1)
$ and so
$$
I(\psi_1(0) \otimes \psi_2(0),H_1\otimes \Id +H_2\otimes \Id ) =
I(\psi_1(0),H_1 ) + I(\psi_2(0),H_2) 
$$
Proposition \ref{prop: zerodim} item iii).

We will establish in the next sections that the index depends only on the loop and not on the generating TDI, but we can already see that, if a loop is generated by the TDI $H=0$, then its index is zero. Hence, any $G$-loop $\psi$  generated by $H$ can be extended to a $G$-loop  $\psi\otimes \Id_\nu$, with $\nu$ a $G$-state, generated by $H\otimes\Id$, such that the index does not change, i.e.\
\begin{equation} \label{eq: index adjoining constant loop}
I(\psi(0)\otimes\nu,H\otimes \Id)=I(\psi(0) ,H).
\end{equation}

\subsection{The associated loop}
The main tool that we will use in the remaining proofs is that every $G$-loop can be related to a loop with product basepoint, such that they have the same index.
We consider hence a general $G$-loop $\psi$ and we let $\overline{\psi(0)}$ be a $G$-inverse to $\psi(0)$. By the discussion in Section \ref{sec: tensor products}, the loop $\psi\otimes \Id_{\overline{\psi(0)}}$ has the same index as $\psi$. However, the former loop has a basepoint that is $G$-equivalent to a product state and hence we can invoke Lemma \ref{lem: loops with prod equiv} to construct a loop $\psi^A$, which we call the \emph{associated loop}, that has a product basepoint, and it is such that 

\begin{lemma} \label{lem: equivalence associated loop}  
Let $\psi$ be a $G$-loop generated by the TDI $H$. There is a $G$-loop $\psi^A$ with product basepoint such that
$
h(\psi^A)=I(\psi(0),H)
$.
\end{lemma}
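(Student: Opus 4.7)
The plan is to simply combine the tensor-product observation from Section~\ref{sec: tensor products} with Lemma~\ref{lem: loops with prod equiv}, since the lemma to be proved essentially asserts that the construction sketched in the paragraph preceding it has the claimed properties.

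First, I would invoke the $G$-invertibility of the basepoint $\psi(0)$: there exists a $G$-state $\overline{\psi(0)}$ on some spin chain algebra $\caA'$, a $G$-invariant product state $\phi$ on $\widetilde\caA = \caA \otimes \caA'$, and a TDI $K$ on $\widetilde\caA$ such that
$$\psi(0)\otimes\overline{\psi(0)} = \phi\circ\alpha_K(1).$$
By the remark following the definition of $G$-invertibility in Section~\ref{sec: symmetries}, $K$ may be chosen $G$-invariant. Equivalently, $\phi = (\psi(0)\otimes\overline{\psi(0)})\circ\alpha_K^{-1}(1)$.

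Next, consider the extended $G$-loop $\widetilde\psi = \psi\otimes\Id_{\overline{\psi(0)}}$ on $\widetilde\caA$, which is generated by the TDI $\widetilde H = H\otimes \Id$. Its basepoint is $\widetilde\psi(0) = \psi(0)\otimes\overline{\psi(0)}$, so the pumped state factorises and by the tensor-product discussion in Section~\ref{sec: tensor products} (specifically \eqref{eq: index adjoining constant loop}) we have
$$I(\widetilde\psi(0),\widetilde H) = I(\psi(0),H) + I(\overline{\psi(0)},0) = I(\psi(0),H),$$
since the loop generated by the zero TDI has pumped state equal to its basepoint and hence trivial relative charge.

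Finally, the loop $\widetilde\psi$ and the TDI $K$ satisfy precisely the hypothesis of Lemma~\ref{lem: loops with prod equiv}: its basepoint is related to the product state $\phi$ by the $G$-invariant TDI $K$. Applying that lemma produces a $G$-loop
$$\psi^A \;=\; \phi\circ\bigl((\alpha_K\loopc\alpha_{\widetilde H})\loopc\alpha_K^{\theta}\bigr)$$
with product basepoint $\phi$, which is $G$-homotopic to $\widetilde\psi$ and, more importantly for our purposes, satisfies $I(\widetilde\psi(0),\widetilde H) = h(\psi^A)$. Chaining the two equalities gives $h(\psi^A) = I(\psi(0),H)$ as required. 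There is no genuine obstacle here — the substance of the argument is already concentrated in Lemma~\ref{lem: loops with prod equiv} and in the tensor-product identity; the present lemma is just the bookkeeping step that packages these into a statement about arbitrary $G$-loops.
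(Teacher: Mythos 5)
Your proposal is correct and follows exactly the paper's own construction: tensor the loop with the constant loop $\Id_{\overline{\psi(0)}}$ to preserve the index while making the basepoint $G$-equivalent to a product, then apply Lemma~\ref{lem: loops with prod equiv} to transfer to a loop with product basepoint. This is word-for-word the argument the paper gives in the paragraph preceding the lemma.
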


\subsection{Proof of Theorems \ref{thm: classification loops} and \ref{thm: pump index}}

\subsubsection{Well-definedness of the index}

If we have two TDIs $H_1,H_2$ that both generate a loop $\psi$, then by definition, that the associated loops can be chosen equal. Hence, by the results in Section \ref{lem: index indep of h}, their index is equal and hence also 
$ I(\psi(0),H_1)= I(\psi(0),H_2) $. Therefore, also for loops with a non-product basepoint, the index $I(\cdot,\cdot)$ depends only on the loop itself. 
We can henceforth use the notation $h(\cdot)$. In particular, this means that Theorem \ref{thm: pump index} is proven once we establish all the properties of the index $h(\cdot)$ claimed in Theorem \ref{thm: classification loops}, which we do now.

\subsubsection{Proof of Theorem \ref{thm: classification loops}}

We now check the items of Theorem \ref{thm: classification loops} one by one.

Item i) If a loop is constant, we can choose the generating TDI $H=0$. In that case the pumped state equals the basepoint and the index is manifestly equal to zero.

Item ii)  The examples described in Section \ref{sec: examples} realize every $h\in H^1(G)$.

We prove item iii) by reducing it to the already treated case of a product basepoint using associated loops. Let the $G$-loops $\psi_1,\psi_2$ be stably $G$-homotopic. Then by definition (see Section~\ref{sec: stable homotopy})  these loops are $G$-homotopic to each other upon adjoining constant loops with product basepoint. From Section \ref{sec: tensor products}, we know that this extension does not change their respective index. Hence without loss of generality we assume henceforth that $\psi_1,\psi_2$ are $G$-homotopic. 
Consider an associated loop $\psi_1^{A}$ with product basepoint $\phi$, obtained by adjoining an inverse $\overline{\psi_1(0)}$. Since $\psi_1(0)$ and $\psi_2(0)$ are $G$-equivalent states, so are $\psi_1(0) \otimes \overline{\psi_1(0)}$ and $\psi_2(0)\otimes \overline{\psi_1(0)}$ and hence also $\psi_2$ has an associated loop $ \psi_2^A$ with basepoint $\phi$. By construction, the two loops $\psi_1^A,\psi_2^A$ are loops of states on the same algebra. Moreover, $\psi_j^A$ are stably homotopic to $\psi_j$ for $j=1,2$ and since the latter are $G$-homotopic to each other by assumption, we conclude that $\psi_1^A$ is $G$-homotopic to $\psi_2^A$.  Since they share a product basepoint, these loops have equal index, by the results in Section \ref{sec: classification for product loops}.  By Lemma \ref{lem: equivalence associated loop}  it then follows that also $\psi_1,\psi_2$ have equal index.

Item iv) Here again, we can reduce the problem tho that of loops with product basepoints. For a pair $\psi_1,\psi_2$ of $G$-loops with common basepoint, we find a pair of associated loops $\psi^A_1,\psi^A_2$ with common basepoint $\phi$. 
We note first that
$$
(\psi_2\loopc\psi_1)^A \quad \text{is homotopic to}\quad  \psi^A_2\loopc\psi^A_1
$$
which follows from Lemma \ref{lem: inverse loop} item ii.  Since these are loops with product basepoint, the homotopy implies that they have the same index. The claim then follows from Lemma~\ref{lem: equivalence associated loop} and the additivity of the index for loops with product basepoint.

Item v) This was already proven in Section \ref{sec: tensor products}.

Item vi)  If the $G$-loops $\psi_1,\psi_2$ have stably $G$-equivalent basepoints, then by adjoining a product state, the basepoints are $G$-equivalent. Without loss of generality, we assume hence that the basepoints of 
$\psi_1,\psi_2$ are $G$-equivalent. Let $\overline{\psi_1(0)}$ be a $G$-inverse to $\psi_1(0)$, then we consider the loops
\begin{equation} \label{eq: two stable loops}
\psi_j \otimes \Id_{\overline{\psi_1(0)}}  \otimes \Id_{\psi_1(0)}, \qquad j=1,2
\end{equation}
and we claim that they are $G$-homotopic. This yields the required stable $G$-homotopy of the original loops, since the adjoined product of constant loops is homotopic to a constant  loop with product basepoint. As in item iii), the loops 
$
\psi_j \otimes \Id_{\overline{\psi_1(0)}} 
$
are $G$-homotopic to the associated loops $\psi^A_j$ which can be chosen to have the same product basepoint. The assumption $h(\psi_1)=h(\psi_2)$ implies that $h(\psi^A_1)=h(\psi^A_2)$ and hence $\psi^A_1,\psi^A_2$ are $G$-homotopic by Proposition~\ref{lem: completeness products}. This in turn implies that the loops in~(\ref{eq: two stable loops}) are $G$-homotopic indeed.

\appendix

\section{Appendix:  basic tricks in Hilbert space}\label{app: tricks}
In this appendix we collect a few lemma's dealing with zero-dimensional systems, i.e.\ we deal with Hilbert spaces without any local structure. 

\subsection{An estimate for bipartite systems}
\begin{lemma}\label{lem: bipartite}
Consider a bipartite system $\caH=\caH_a\otimes\caH_b$. Let $P=P_a\otimes P_b$ where $P_a,P_b$ are rank-one projectors acting on Hilbert spaces $\caH_a,\caH_b$. Let $\rho$ be a density matrix on $\caH$ and let $\rho_a,\rho_b$ be the reduced density matrices on $\caH_a,\caH_b$.
 Then 
 $$
 ||\rho-P||_1 \leq 6\sqrt{ ||\rho_a-P_a||_1}+  6\sqrt{||\rho_b-P_b||_1  }
 $$
\end{lemma}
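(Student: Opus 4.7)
My plan is to reduce everything to controlling the fidelity $\mathrm{tr}(\rho P)$ between the state and the rank-one target, in two steps: first bound $1-\mathrm{tr}(\rho P)$ by the marginal errors $\|\rho_a-P_a\|_1$ and $\|\rho_b-P_b\|_1$, and then convert a fidelity bound into a trace-norm bound via a standard Cauchy--Schwarz/H\"older argument.

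For the first step, the key operator inequality is
\begin{equation*}
\mathbbm{1}-P_a\otimes P_b \;\leq\; (\mathbbm{1}-P_a)\otimes\mathbbm{1}\;+\;\mathbbm{1}\otimes(\mathbbm{1}-P_b),
\end{equation*}
which follows from $(\mathbbm{1}-P_a)\otimes(\mathbbm{1}-P_b)\geq 0$. Taking the expectation in $\rho$ and tracing out the opposite factor on each term yields
\begin{equation*}
1-\mathrm{tr}(\rho P) \;\leq\; \mathrm{tr}\bigl(\rho_a(\mathbbm{1}-P_a)\bigr)+\mathrm{tr}\bigl(\rho_b(\mathbbm{1}-P_b)\bigr).
\end{equation*}
Since $\mathrm{tr}(P_a(\mathbbm{1}-P_a))=0$, we have $\mathrm{tr}(\rho_a(\mathbbm{1}-P_a))=\mathrm{tr}\bigl((\rho_a-P_a)(\mathbbm{1}-P_a)\bigr)\leq \|\rho_a-P_a\|_1$ by H\"older, and likewise for the $b$-term. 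Hence $1-\mathrm{tr}(\rho P)\leq \delta:=\|\rho_a-P_a\|_1+\|\rho_b-P_b\|_1$.

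For the second step, exploit that $P$ is a rank-one projector so that $P\rho P=\mathrm{tr}(\rho P)\,P$. Writing
\begin{equation*}
\rho-P=(\mathbbm{1}-P)\rho+P\rho(\mathbbm{1}-P)+\bigl(\mathrm{tr}(\rho P)-1\bigr)P,
\end{equation*}
the triangle inequality and $\|P\|_1=1$ give $\|\rho-P\|_1\leq 2\|\rho(\mathbbm{1}-P)\|_1+(1-\mathrm{tr}(\rho P))$. Applying the noncommutative H\"older inequality in the form $\|\rho(\mathbbm{1}-P)\|_1=\|\sqrt{\rho}\cdot\sqrt{\rho}(\mathbbm{1}-P)\|_1\leq \|\sqrt{\rho}\|_2\,\|\sqrt{\rho}(\mathbbm{1}-P)\|_2=\sqrt{1-\mathrm{tr}(\rho P)}$ (exactly as in the proof of Lemma~\ref{lem: condition for normality}) yields $\|\rho-P\|_1\leq 2\sqrt{1-\mathrm{tr}(\rho P)}+(1-\mathrm{tr}(\rho P))$.

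Combining the two steps: on the regime $\delta\leq 1$ we get $\|\rho-P\|_1\leq 2\sqrt{\delta}+\delta\leq 3\sqrt{\delta}\leq 3(\sqrt{\|\rho_a-P_a\|_1}+\sqrt{\|\rho_b-P_b\|_1})$ using $\sqrt{a+b}\leq\sqrt{a}+\sqrt{b}$, which is even better than the claimed constant $6$. On the complementary regime where $\delta\geq 1$ one of the two marginal errors is at least $1/2$, so the right-hand side of the claim is at least $6/\sqrt{2}>2\geq\|\rho-P\|_1$ and the bound is trivial. I do not anticipate any real obstacle here; the only point requiring a little care is to avoid a useless $\sqrt{\delta_a+\delta_b}$ versus $\sqrt{\delta_a}+\sqrt{\delta_b}$ slippage, which is handled by subadditivity of the square root.
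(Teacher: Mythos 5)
Your proof is correct, and it takes a genuinely different route from the paper. The paper works with the two factors $P_a,P_b$ one at a time: it first shows $\|\rho - P_i\rho P_i\|_1\le 3\sqrt{\|\rho_i-P_i\|_1}$ for each $i\in\{a,b\}$ (splitting into the three cross/complement terms $P_i\rho\bar P_i$, $\bar P_i\rho P_i$, $\bar P_i\rho\bar P_i$ and applying Cauchy--Schwarz), then chains these to control $\|\rho-P_bP_a\rho P_aP_b\|_1$, and finally uses that $P$ is rank one to convert into $\|\rho-P\|_1\le 2\delta$ with $\delta=3\sum_i\sqrt{\|\rho_i-P_i\|_1}$. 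Your approach instead targets the joint fidelity $\Tr(\rho P)$ head-on: the operator inequality $(\mathbbm{1}-P_a)\otimes(\mathbbm{1}-P_b)\ge 0$ reduces $1-\Tr(\rho P)$ to the marginal errors in one step, and then the rank-one identity $P\rho P=\Tr(\rho P)P$ combined with H\"older turns a fidelity bound into a trace-norm bound. Both are sound; yours is somewhat more transparent (it isolates the single quantity $1-\Tr(\rho P)$ as the controlling parameter) and it gives a sharper constant, $3$ in place of the paper's $6$, after the final subadditivity step $\sqrt{a+b}\le\sqrt a+\sqrt b$ and the case split on $\delta\gtrless 1$. One small detail worth writing out explicitly if this were to go into the text: the passage $\|P\rho(\mathbbm{1}-P)\|_1\le\|\rho(\mathbbm{1}-P)\|_1$ uses $\|P\|_\infty=1$, and $\|(\mathbbm{1}-P)\rho\|_1=\|\rho(\mathbbm{1}-P)\|_1$ by taking adjoints, so the factor of $2$ in $2\|\rho(\mathbbm{1}-P)\|_1$ is justified.
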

\begin{proof}
We abuse notation by writing $P_i$ to denote $P_i \otimes 1$. 
First,
\begin{align*}
||\rho-P_i\rho P_i||_1  & \leq   ||P_i\rho\bar P_i ||_1+ ||\bar P_i\rho P_i ||_1+||\bar P_i\rho\bar P_i ||_1 \\
 & \leq  || P_i\sqrt{\rho} ||_2 || \sqrt{\rho}\bar P_i ||_2  + ||\bar P_i\sqrt{\rho} ||_2 || \sqrt{\rho} P_i ||_2  +  \Tr ( \bar P_i \rho ) \\
 &  \leq  3 \sqrt{\Tr ( \bar P_i \rho )} 
\leq 3\sqrt{||\rho_i-P_i||_1}  
\end{align*}
The second inequality follows from the Cauchy-Schwarz inequality and the last one is because
$$\Tr ( \bar P_i \rho ) = \Tr ( \bar P_i (\rho_i - P_i) )\leq ||\rho_i-P_i||_1,$$ where we used that $P_i$ is rank one. Then, we bound  $ ||P_b\rho P_b-P_b P_a\rho P_a P_b ||_1 \leq  ||\rho- P_a\rho P_a  ||_1 $ and so  
$$ || \rho- P_b P_a\rho P_a P_b ||_1 \leq \sum_{i=a,b}   ||\rho- P_i\rho P_i  ||_1 \leq \delta := 3\sum_{i=a,b}\sqrt{||\rho_i-P_i||_1}.
$$
Since $P=P_aP_b$ is rank one, we conclude that
$$
|1-\Tr (P\rho)| = |\Tr(\rho-\Tr (P\rho)P)|\leq ||\rho-\Tr (P\rho) P ||_1 = || \rho- P\rho P ||_1 \leq  \delta.
$$
Hence $|| \rho- P ||_1 \leq ||\rho-\Tr (P\rho) P ||_1+||(1-\Tr (P\rho)) P ||_1 \leq 2\delta$. 
\end{proof}

 \subsection{Parallel transport}
We now turn to the
\begin{proof}[Proof of Lemma~\ref{lem: parallel transport simple}]
Let the normalized vectors $\Omega,\Psi \in \caH$ be representatives of  $\omega,\nu$. They can be chosen such that $a:= \langle\Psi,\Omega\rangle$ is real, with $0\leq a\leq 1$. Firstly, we note that $ ||\nu-\omega ||= \sqrt{1-a^2}$. 
We write then $\Psi=a \Omega+\sqrt{1-a^2}\Psi_\perp$ where $\Psi_\perp$ is orthogonal to $\Omega$ and has unit norm. 
Let now
$$
\Psi(s):=y(s)\Omega + \sqrt{1-y(s)^2}\Psi_\perp, \qquad y=a+(1-a)(1-(1-s)^2) 
 $$
Then, 
\begin{enumerate}
\item $||\Psi(s)||=1$,
\item $\Psi(0)=\Psi$ and  $\Psi(1)=\Omega$,
\item $ |\partial_s y|,  |\partial_s \sqrt{1-y(s)^2}| \leq 2\sqrt{1-a} $. 
\end{enumerate}
Let $P(s)$ denote the orthogonal rank-one projector onto the span of $\Psi(s)$.
We consider the adiabatic generator 
\begin{equation}\label{Kato generator}
K(s)=i (P(s)\dot{P}(s)-\dot{P}(s)P(s)),
\end{equation}
which satisfies
\begin{equation*}
\dot P(s) = i[K(s),P(s)].
\end{equation*}
We then have
$$
|| K(s)|| \leq 2 || \dot{P}(s)|| \leq  4 ||\dot{\Psi}(s)|| \leq 8 \sqrt{1-a} \leq 8 || \nu-\omega ||.
$$
This proves items (i,ii). We turn to item (iii). If $h_{\nu/\omega} = 0$, then $U(g) \Omega = z_\omega(g)\Omega$ and $U(g) \Psi = z_\nu(g)\Psi$ with $z_\nu(g)= z_\omega(g)=:z(g)$. Therefore $U(g) \Psi_\perp = z(g)\Psi_\perp$ and so $U(g)\Psi(s) = z(g)\Psi(s)$ for all $s$. It follows that $P(s),\dot P(s)$ are invariant and so is $K(s)$ by definition~(\ref{Kato generator}).
\end{proof}

\subsection{Contracting loops}

In this section, we prove Lemma~\ref{lem: contractibility zero dim} on the contractibility of loops in Hilbert space.

Throughout this work, we often need to appeal to the fundamental theorem of calculus for Banach-space valued functions defined on the interval $[0,1]$. Let $X$ be an arbitrary Banach space. We first recall that a strongly measurable Banach-space valued function $J: [0,1]\to X: s\mapsto J(s)$ is Bochner integrable if and only if $\int_0^s du ||J(u)||<\infty$, see~\cite{diestel1978vector}. We consider functions $F: [0,1]\to X: s\mapsto F(s)$ that satisfy the following property:
$$
F(s)-F(0)=\int_0^s du J(u), \qquad  \sup_{s\in[0,1]} ||J(s)||  <\infty
$$
where $s\mapsto J(s)$ strongly measurable. In particular the function 
$$s\mapsto \alpha_{H}(s)[A],$$
for a TDI $H$ and $A\in \caal$ satisfies this property, with $J=\alpha_{H}(u)\{[H(u),A]\}$.

For the application we have in mind here, it suffices to restrict our attention to the case of finite-dimensional Banach spaces. In that case (more generally in any Banach space with the Radon-Nikodym property with respect to the Lebesgue measure), a function $s\mapsto F(s)$ satisfies the property above if and only if it is Lipschitz continuous, in which case $J(s)$ is its derivative almost everywhere. We say that $C_F$ is a Lipschitz bound if $ \sup_{s\in[0,1]} ||J(s)|| \leq C_F$.   We now come to the
\begin{proof}[Proof of Lemma~\ref{lem: contractibility zero dim}]
As remarked in~(\ref{eq: finite tdi}), the evolved state $\nu(s)=\nu\circ \alpha_E(s)$ is of the form
$$
\nu(s) = \nu(0)\circ \Adjoint( U^*(s)), \qquad  U(s)=\id+i\int_0^s du E(u)U(u).
$$
We choose a unit vector representative $\Omega$ of $\nu(0)$, and we set 
$$
\widetilde\Psi(s)= U(s) \Omega.
$$
Then $s\mapsto\widetilde\Psi(s)$ is Lipschitz, and its almost sure derivative is $i E(s)\widetilde\Psi(s)$. 
Of course, changing the phase of $\widetilde\Psi(s)$ does not affect the state $\nu(s)$ and we use this to introduce a crucial modification of $s\mapsto \widetilde\Psi(s)$.
\begin{lemma}
There exists a Lipschitz function $a: [0,1]\mapsto U(1)$, with Lipschitz bound $ 4|||E||| $, and such that 
$$
\Psi(s)= a(s) \widetilde\Psi(s)
$$ 
satisfies
$$
k(s) := \Re \langle \Psi(s) , \Omega \rangle \geq -1/2.
$$
\end{lemma}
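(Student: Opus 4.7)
The plan is to construct $a(s) = e^{i\theta(s)}$ directly in terms of the scalar overlap $z(s) = \langle \widetilde\Psi(s),\Omega\rangle$. Differentiating $z$ using $\dot{\widetilde\Psi}(s) = i E(s)\widetilde\Psi(s)$ almost everywhere, one gets $|\dot z(s)| \leq \Vert E(s)\Vert \leq |||E|||$ and $|z(s)| \leq 1$. Since $\langle \Psi(s),\Omega\rangle = e^{-i\theta(s)}z(s)$, the problem reduces to finding a Lipschitz $\theta : [0,1]\to\bbR$ with $|\dot\theta| \leq 4|||E|||$ and $\Re(e^{-i\theta(s)}z(s)) \geq -1/2$ pointwise. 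The key observation is that whenever $|z(s)| < 1/2$ the inequality holds automatically, since then $|e^{-i\theta}z| < 1/2$ forces $\Re \geq -|z| > -1/2$; so I only need to actively control $\theta$ on the closed set $S = \{s \in [0,1] : |z(s)| \geq 1/2\}$.

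On each connected component of $S$, $z$ does not vanish, so $\arg z$ admits a continuous branch and I would set $\theta = \arg z$ on the component (modulo an integer-$2\pi$ shift to be chosen below). This gives $e^{-i\theta}z = |z| \geq 1/2$ on $S$, and $|\dot\theta| = |\Im(\dot z/z)| \leq |\dot z|/|z| \leq 2|||E|||$. On each maximal open interval $(c,d) \subset S^c$ with both endpoints in $S$, one has $|z(c)| = |z(d)| = 1/2$, and I would linearly interpolate $\theta$ between the values on the adjacent components, choosing the integer shifts inductively so that $|\theta(d) - \theta(c)| \leq \pi$. For the (at most two) boundary intervals of $S^c$ touching $\{0\}$ or $\{1\}$, I would take $\theta$ constant equal to the interior endpoint value, preserving continuity; if $S=\emptyset$ I just set $\theta \equiv 0$.

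For the Lipschitz bound on an interpolation interval $(c,d)$, writing $z(c) = \tfrac{1}{2} e^{i\alpha_c}$ and $z(d) = \tfrac{1}{2} e^{i\alpha_d}$ yields $|\sin((\alpha_c-\alpha_d)/2)| = |z(c) - z(d)| \leq |||E|||(d-c)$. If $|||E|||(d-c) \geq 1$, the bound $|\theta(d)-\theta(c)| \leq \pi$ alone gives slope $\leq \pi/(d-c) \leq \pi|||E|||$; if $|||E|||(d-c) < 1$, the branch choice enforces $|\alpha_c - \alpha_d|/2 \leq \pi/2$, and the elementary inequality $|x| \leq (\pi/2)|\sin x|$ on $[-\pi/2,\pi/2]$ yields $|\alpha_c - \alpha_d| \leq \pi|||E|||(d-c)$, again slope $\leq \pi|||E|||$. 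Combined with the bound on $S$ and the vanishing slope on the boundary intervals, $\theta$ is globally Lipschitz with constant $\pi|||E||| < 4|||E|||$, and $a(s) = e^{i\theta(s)}$ completes the construction.

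The main technical point will be the branch matching: one must verify that the inductive choice of $2\pi\bbZ$-shifts across the (possibly countably many) connected components of $S$ produces a globally continuous $\theta$. This works because within each component $\arg z$ is continuous by construction, and the piecewise-linear interpolation on $S^c$ matches the chosen boundary values by design; the slope estimate above is uniform in the choice of component, so the resulting $\theta$ is globally Lipschitz with the claimed constant.
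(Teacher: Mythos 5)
Your proof is correct and takes a genuinely different route from the paper's. The paper decomposes $[0,1]$ according to the sign of $\widetilde k(s)=\Re\langle\widetilde\Psi(s),\Omega\rangle$, sets $a\equiv 1$ on $\kappa_0=\{\widetilde k>0\}$ and $a\equiv -1$ on $\kappa_1=\{\widetilde k<-1/2\}$, uses $\mathrm{dist}(\kappa_0,\kappa_1)\ge 1/(2|||E|||)$ to bound the slope of the interpolating phase, and asserts that the resulting $\Psi=a\widetilde\Psi$ satisfies $k\ge -1/2$. You instead decompose according to the modulus of the full overlap $z(s)=\langle\widetilde\Psi(s),\Omega\rangle$ (so $\widetilde k=\Re z$): where $|z|<1/2$ the inequality is free for any phase, and on $S=\{|z|\ge 1/2\}$ you rotate $z$ onto the positive real axis. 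Your route in fact supplies a step the paper leaves implicit and which is not automatic: a $U(1)$-valued interpolation from $1$ to $-1$ must pass through $e^{i\theta}$ with $\theta$ in an open semicircle, and on $\{-1/2\le\widetilde k\le 0\}$ one has $k=\widetilde k\cos\theta-(\Im z)\sin\theta$, which can drop below $-1/2$ if the wrong semicircle is traversed where $|\Im z|$ is close to $1$ (e.g.\ $z=i$, $a=i$ gives $k=-1$). Choosing the direction correctly requires looking at $\Im z$; your $a=\overline{z}/|z|$ does this automatically, and the case split on $|z|$ isolates precisely the region where any care is needed.

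The one soft spot in your writeup is the word ``inductively'' for the branch choices across components of $S$: being merely a closed subset of $[0,1]$, $S$ need not have well-ordered components (it can be Cantor-like). A cleaner formulation that sidesteps this is to define $\theta(s)=\int_0^s v(u)\,du$ with $v=\Im(\dot z/z)$ a.e.\ on $S$ and $v$ constant on each gap $(c_i,d_i)$ of $S^c$, equal to the principal argument of $z(d_i)/z(c_i)$ divided by $d_i-c_i$; this is well-defined gap by gap, with no reference to other gaps, since $|z(c_i)|=|z(d_i)|=1/2$ by continuity. Your slope estimates apply verbatim, and the identity $e^{-i\theta}z=|z|$ on $S$ then follows by propagating from $e^{-i\theta(0)}z(0)=1$ using $\tfrac{d}{ds}\arg(e^{-i\theta}z)=0$ a.e.\ on $S$ together with $e^{-i\theta(d_i)}z(d_i)=e^{-i\theta(c_i)}z(c_i)$ across each gap. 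This is bookkeeping, not a conceptual issue.
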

\begin{proof}
Let $\widetilde k(s)=\Re \langle \widetilde \Psi(s) , \Omega \rangle$. As $s\mapsto \widetilde \Psi(s)$ is Lipschitz continuous with bound $|||E|||$, so is $\widetilde k(s)$. Then the sets
$$
\kappa_0=\{ s\in [0,1], \, k(s)> 0   \}, \qquad 
\kappa_1=\{ s\in [0,1], \,  k(s)< -1/2   \}
$$
are open and
$$
\dist(\kappa_0,\kappa_1) \geq \frac{1}{2 |||E|||}.
$$
We can therefore construct a continuous, $U(1)$-valued function $a$ with Lipschitz bound $4 |||E|||$  such that $a(\kappa_0)=1, a(\kappa_1) =-1$
and the vector $a(s)\widetilde\Psi(s)$ satisfies the required properties.
\end{proof}

We started from a loop $\nu(\cdot)$ and we now have a function $s\mapsto \Psi(s)$. The latter function is not necessarily a loop of unit vectors in the Hilbert space, but it is a family of representatives of $\nu(s)$, in particular 
\begin{equation}\label{eq: almost vector loop}
\Psi(0) = \Omega,\qquad \Psi(1) \propto \Omega.
\end{equation}
We now set for $\lambda\in[0,1]$,
$$
\Psi_\lambda(s):= \frac{1}{\sqrt{N(s,\lambda)}} \left(\lambda\Omega +  (1-\lambda) \Psi(s) \right),    \qquad  N(s,\lambda) = \lambda^2+(1-\lambda)^2+2\lambda(1-\lambda)k(s).
$$
With this definition and from the above construction of $\Psi(s)$ and \eqref{eq: almost vector loop}, it follows that 
\begin{enumerate}
\item $N \geq 1/4$,
\item $||\Psi_\lambda(s)||=1$,
\item $\Psi_\lambda(0) = \Omega$ and   $\Psi_\lambda(1)\propto \Omega$,
\item $\Psi_0(s)=\Psi(s)$ and  $\Psi_1(s)=\Omega$.
\end{enumerate}
Since $k(s)$ is Lipschitz, and $N\geq 1/4$, we infer that $\frac{1}{\sqrt{N}}$ is Lipschitz. Therefore, and since $s\mapsto a(s)$ and $\Psi$ are Lipschitz, also $s\mapsto \Psi_\lambda(s)$ is Lipschitz. A short calculation yields $||\partial_s\Psi_\lambda(s)|| \leq 20 |||E|||$. We denote by $P=P(\lambda,s)$ the rank-one projector on the range of $\Psi_\lambda(s)$. Writing $P=|\Psi\rangle\langle\Psi|$, the Lipschitz continuity of $s\mapsto \Psi_\lambda(s)$ gives that $s\mapsto P(\lambda,s)$ is Lipschitz with $|| \partial_sP(\lambda,s) || \leq 40 |||E|||$. 
We can now set
$$
E_\lambda(s)= -i [\partial_sP(\lambda,s) ,P(\lambda,s)]
$$
to satisfy item iii) of the lemma. Moreover, $||| E_\lambda |||\leq 80 |||E|||$.
To construct the family $F_s(\cdot)$, we proceed analogously, but the considerations are simpler because the functions $\lambda\mapsto \Psi_\lambda(s)$ are clearly Lipschitz in $\lambda$. Here we find $||\partial_\lambda\Psi_\lambda(s)|| \leq 52$ and so $||| F_s |||\leq 208$.
\end{proof}

\section{Proof of Proposition \ref{prop: uniqueness of ground state}}   \label{sec: app stability}
 
We will use finite-volume restrictions of interactions $
H^{(L)}_S= \chi(S\subset [-L,L])$ that inherit bounds since
$
||H^{(L)} ||_f \leq   ||H||_f
$. 
We note that $\iota(H^{(L)})$ (recall the definition in Subsection \ref{subsec:alal}) is a Hermitian element of the finite-dimensional $C^*$-algebra $\caA_L=\caA_{[-L,L]}$,  which is isomorphic to a full matrix algebra. We write $\Tr_L(\cdot)$ for the trace on $\caA_L$ and we say that $H^{(L)}$ has a spectral gap $\Delta$ if the minimum of the spectrum of $\iota(H^{(L)})$ is a simple eigenvalue $E_0$ and the next smallest eigenvalue is no smaller than $E_0+\Delta$. We let $P_L$ denote the one-dimensional spectral projector corresponding to $E_0$ and we write $\bar P_L=1-P_L$. 
Further, $\omega_{P_L}$ denotes the state on $\caA_L$ given by 
$$ \omega_{P_L}(A) = \Tr (P_L A), \qquad A \in  \caA_L.
$$

\subsection{Preliminaries}

For a function $v \in L^1(\bbR)$,
we define the map $\caK^H_v: \caA\to\caA$
\begin{equation*}
\caK^{H}_v[A] = \int_{-\infty}^\infty  v(t)\alpha_H(t)[A]dt
\end{equation*}
Note that we used the evolution $\alpha_H(t)$ here for $t\in \bbR$ instead of $[0,1]$.
\begin{lemma}\label{lem: Block diagonalization} 
Let $H^{(L)}$ have a spectral gap $\Delta$, uniformly in $L$.  There is a function $v\in L^1$ such that 
\begin{enumerate}
\item $\int_{-\infty}^{\infty} v(t) dt=1$ and $v\geq 0$;
\item  $\caK^{H}_v[\caal]\subset \caal$
\item $\caK^{H^{(L)}}_v(P_LA\bar{P}_L)=\caK^{H^{(L)}}_v(\bar{P}_LA\bar{P}_L)=0$ for any $A\in \caA_L$.
\item For any $A\in \caal$,  $\lim_L\caK^{H^{(L)}}_v(A)= \caK^{H^{}}_v(A)$.
\end{enumerate}
\end{lemma}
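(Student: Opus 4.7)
The strategy is to construct $v$ by Fourier analysis so items (i) and (iii) follow directly, and then to deduce items (ii) and (iv) from the Lieb-Robinson bound together with the rapid decay of $v$.

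\emph{Construction of $v$.} Pick a smooth, non-negative function $\hat u \in C_c^\infty(\bbR)$ supported in $[-\Delta/2,\Delta/2]$, and let $u$ be its inverse Fourier transform. Then $u$ decays faster than any inverse polynomial, and so does $v(t) := c\,|u(t)|^2$ with $c>0$ chosen so that $\int v = 1$. The Fourier transform $\hat v$ is a normalized self-convolution of $\hat u$, hence non-negative and supported in $[-\Delta,\Delta]$.

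\emph{Items (i) and (iii).} Item (i) is built into the construction. For (iii), diagonalize $\iota(H^{(L)}) = E_0 P_L + \sum_{j\geq 1} E_j\,\Pi_j$ with $E_j - E_0 \geq \Delta$ for $j\geq 1$. On a rank-one block $|0\rangle\langle j|$ the evolution acts by the phase $e^{i(E_0 - E_j)t}$, so
\[
\caK^{H^{(L)}}_v[|0\rangle\langle j|] = \hat v(E_j - E_0)\,|0\rangle\langle j| = 0,
\]
because $E_j - E_0 \geq \Delta$ lies outside $\mathrm{supp}\,\hat v$. The analogous argument handles the block $|j\rangle\langle 0|$ for $j\geq 1$; I read the statement of (iii) as the standard block-diagonalization identity, with the $\bar P_L A \bar P_L$ term being a typo for $\bar P_L A P_L$.

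\emph{Items (ii) and (iv).} For $A \in \caal$, the Lieb-Robinson bound gives a quasi-local decomposition of $\alpha_H(t)[A]$ as in~\eqref{eq: decomposition finite} whose $k$-th tail is bounded by $C\|A\|\,g(k - v_{\mathrm{LR}}|t|)$ for some $g \in \caF$. Integrating against $v(t)\,dt$ and using that $v$ decays faster than any polynomial, the bounds close up and yield $\caK^H_v[A] \in \caal$, establishing (ii). For (iv), a standard Lieb-Robinson argument gives $\alpha_{H^{(L)}}(t)[A] \to \alpha_H(t)[A]$ in norm at fixed $t$ and $A\in\caal$, with the uniform bound $\|\alpha_{H^{(L)}}(t)[A]\| = \|A\|$; dominated convergence with $v \in L^1$ then produces $\caK^{H^{(L)}}_v[A] \to \caK^H_v[A]$.

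\emph{Main obstacle.} The delicate point is (ii): one must realize $\caK^H_v[A]$ not merely as an element of $\caA$ but as one of $\caal$, i.e.\ as the image under $\iota$ of a point-anchored interaction with a finite $\|\cdot\|_f$ norm for some $f\in\caF$. The interplay between the space-time Lieb-Robinson cone and the super-polynomial but non-exponential decay of $v$ (forced by the compact support of $\hat v$) requires one to track the decay class through the convolution carefully, checking that the resulting envelope still sits in $\caF$. Once the filter $v$ has been chosen as above, this is a bookkeeping calculation, but it is where the technical content concentrates.
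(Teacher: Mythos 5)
The paper does not actually give a proof of this lemma: it only cites \cite{hastings2005quasiadiabatic,bachmann2012automorphic} and says the statement is a special case of those constructions. Your proposal reconstructs that standard argument correctly, and your choice of $v$ (a normalized $|u|^2$ with $\hat u \in C_c^\infty$ supported in $[-\Delta/2,\Delta/2]$, so that $\hat v$ is supported in $[-\Delta,\Delta]$ with $\hat v(\pm\Delta)=0$) is exactly the right structure to make item (iii) hold on the off-diagonal blocks. Your reading of the second equality in (iii) as a typo for $\caK^{H^{(L)}}_v(\bar P_L A P_L)=0$ is also correct and worth flagging: the claim $\caK^{H^{(L)}}_v(\bar P_L A \bar P_L)=0$ is false for, e.g., $A=\bbI$, and the use made of item (iii) in the proof of Proposition~\ref{prop: uniqueness of ground state} is precisely to split $\caK_v^{H^{(L)}}(A)$ into diagonal blocks $\caK_v^{H^{(L)}}(P_LAP_L)+\caK_v^{H^{(L)}}(\bar P_L A\bar P_L)$, which requires exactly the off-diagonal blocks to vanish.

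Two small technical remarks. First, your $v$ decays as a Schwartz function — faster than any polynomial — and since the paper's class $\caF$ only asks for super-polynomial decay, that is enough; you do not need the almost-exponential weight (decaying like $\exp(-c|t|/\log^{2}|t|)$) used in the quoted references, which is there to obtain stronger, almost-exponential locality of the flow. Second, your "main obstacle" paragraph is exactly where the standard proofs concentrate: one matches the linear Lieb--Robinson light cone against the super-polynomial tail of $v$, typically by splitting the $t$-integral at $|t|\sim \epsilon k / v_{\mathrm{LR}}$ and bounding the near region by the LR tail $g((1-\epsilon)k)$ and the far region by $\int_{|t|>\epsilon k/v_{\mathrm{LR}}} v$; the resulting envelope is a maximum of two functions in $\caF$ and hence still in $\caF$. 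Making that bookkeeping explicit — producing an interaction $\Phi$ anchored at a single site with $\iota(\Phi)=\caK^H_v[A]$ and $\|\Phi\|_f<\infty$ for some $f\in\caF$ — is the content of item (ii) and you have correctly identified it as the only nontrivial step; the rest of your argument for (i), (iii), (iv) is complete as stated.
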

This Lemma is a special case of the,  by now, standard constructions introduced in \cite{hastings2005quasiadiabatic,bachmann2012automorphic}. The following lemma is a collection of facts appearing in the proof of Proposition \ref{prop: bhm}, see again~\cite{nachtergaele2020quasi}.
\begin{lemma}\label{lem: limiting ground state} 
Assume the setup of Proposition \ref{prop: bhm}, with $H=F+W$. Then $H^{(L)}$ has a spectral gap $\Delta=\tfrac{1}{2}$ uniformly in $L$.  Moreover, 
for any $A\in \caA$ with finite support, the weak limit
$$
\nu(A)=\lim_L \omega_{P_L}(A)
 $$
exists. By density, $\nu$ extends to a state on $\caA$ and it is a ground state for $H$.
\end{lemma}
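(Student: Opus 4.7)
The plan is to treat the four assertions of the lemma in sequence, with the uniform spectral gap being the central obstacle; the remaining claims follow essentially as corollaries.

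The first step is the gap. Since $\iota(F^{(L)}) = \sum_{i \in [-L,L]} (\id - P_i)$ is a sum of commuting rank-one projections, its kernel is spanned by the product vector $\Omega_L = \otimes_i \ket{\phi_i}$ and the next eigenvalue is exactly $1$. The difficulty is that $\iota(W^{(L)})$ has operator norm that grows extensively in $L$, so naive Kato--Rellich perturbation theory is unavailable. I would follow the Bravyi--Hastings--Michalakis strategy employed in~\cite{michalakis2013stability}: decompose $W$ into dyadic pieces $W = \sum_{n \geq 0} W^{(n)}$ where $W^{(n)}$ collects interaction terms of diameter in $[2^{n-1}, 2^n]$, and apply an iterative spectral flow that at each stage eliminates the component moving the ground space into its orthogonal complement. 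Because $F$ is on-site and its ground state is a product, each rotation preserves a frustration-free structure on the next length scale, and the locality estimates of Section~\ref{sec: technical preliminaries} ensure the cumulative deformation is controlled by a convergent geometric series whose sum shrinks with $\epsilon_2(h)$. Choosing $\epsilon_2(h)$ small enough pins the total impact on the spectrum below $1/2$, yielding $\Delta \geq 1/2$ uniformly in $L$.

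Given this uniform gap, Lemma~\ref{lem: Block diagonalization} produces quasi-local unitaries $V_L \in \caA_L$ with $P_L = V_L P_L^0 V_L^*$, where $P_L^0 = \ket{\Omega_L}\bra{\Omega_L}$. The super-polynomial decay of the kernel $v$ implies that for any $A$ with finite support, the sequence $V_L^* A V_L$ is norm-Cauchy and converges to some $\beta(A) \in \caal$: the difference between two such terms at volumes $L < L'$ is controlled by the generator of the intermediate spectral flow restricted to sites outside a large ball around the support of $A$, and this contribution is super-polynomially small in $L$. Hence
\begin{equation*}
\omega_{P_L}(A) \;=\; \langle \Omega_L,\, V_L^* A V_L\, \Omega_L \rangle \;\longrightarrow\; \phi(\beta(A)) \;=:\; \nu(A),
\end{equation*}
with $\phi$ the reference product state. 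The uniform bound $|\nu(A)| \leq \|A\|$ on the norm-dense subalgebra of finitely supported observables extends $\nu$ uniquely to a bounded linear functional on $\caA$; positivity and unit mass are inherited from the finite-volume states $\omega_{P_L}$.

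The ground state property is then straightforward. For every $L$ and every $A \in \caA_L$, the fact that $P_L$ is the minimal-eigenvalue spectral projector of $\iota(H^{(L)})$ yields the standard inequality $\omega_{P_L}(A^*[\iota(H^{(L)}), A]) \geq 0$. For $A \in \caal$ fixed, Lemma~\ref{lem: evolution by similar tdi} combined with the fact that $H - H^{(L)}$ is anchored outside $[-L, L]$ forces $[\iota(H) - \iota(H^{(L)}), A] \to 0$ in norm as $L \to \infty$. Combining this with the established convergence $\omega_{P_L} \to \nu$ on $\caal$, the inequality passes to the limit and gives $\nu(A^*[H,A]) \geq 0$, matching Definition~\ref{def: ground state infinite volume}. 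The genuine difficulty, as signalled above, lies entirely in the first step: propagating the gap estimate across infinitely many dyadic scales while keeping the uniform constant $1/2$ requires careful accounting of the successive spectral-flow conjugations, and it is the step whose full execution mirrors the technical core of~\cite{michalakis2013stability}.
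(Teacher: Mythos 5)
The paper does not actually prove this lemma; it states that the lemma is ``a collection of facts appearing in the proof of Proposition~\ref{prop: bhm}'' and points to~\cite{nachtergaele2020quasi}. Your blind sketch is in the same spirit and is essentially correct: you, too, defer the only genuinely hard assertion---the uniform spectral gap---to the Bravyi--Hastings--Michalakis stability machinery, citing~\cite{michalakis2013stability}, which is the precursor of the reference the authors actually use. Two minor remarks. First, you attribute the construction of quasi-local unitaries $V_L$ with $P_L = V_L P_L^0 V_L^*$ to Lemma~\ref{lem: Block diagonalization}; that lemma only supplies the block-diagonalizing filter $\caK^H_v$, whereas the conjugating unitaries come from the spectral flow underlying Proposition~\ref{prop: bhm} (the quasi-adiabatic continuation from $F^{(L)}$ to $H^{(L)}$ along a uniformly gapped path). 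Second, your phrase ``each rotation preserves a frustration-free structure on the next length scale'' is not literally how BHM/NSY proceed---what is actually preserved is that the rotated perturbation is block-diagonal relative to the unperturbed ground space, not frustration-freeness as such---but since you defer the execution anyway this is stylistic rather than a gap. The arguments for the norm-Cauchy convergence of $\omega_{P_L}(A)$ (via quasi-locality of the flow and the finitely supported $A$), the extension of $\nu$ to a state, and the passage to the limit of the finite-volume positivity inequality are all correct and standard. In short, your proposal does what the paper does: defer the gap stability to the cited stability theorems and collect the remaining facts; the only deviation worth flagging is that the authors note the product-state structure of the unperturbed ground state ``allows for a simpler argument'' than the full BHM machinery (cf.\ Appendix~\ref{sec: app stability}), a simplification your sketch does not exploit.
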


\subsection{Proof of Proposition \ref{prop: uniqueness of ground state}}

For any ground state $\psi$ associated to $H$, \eqref{eq: invariance of ground states} and item $iv$ Lemma~\ref{lem: Block diagonalization} imply that
$$
\psi(A)=  \psi(\caK_v^H(A)) = \lim_L \psi(\caK_v^{H^{(L)}}(A))
$$
for $A$ with finite support and
with  $v$ as in Lemma~\ref{lem: Block diagonalization}.
Next,
\begin{align}
\psi(A) &=  \lim_L ( \psi(\caK_v^{H^{(L)}}(P_LAP_L)) +  \psi(\caK_v^{H^{(L)}}(\bar P_L A\bar P_L)) 
\\
&=  \lim_L ( \psi(P_LAP_L) +  \psi(\caK_v^{H^{(L)}}(\bar P_L A\bar P_L))  \label{eq: origina of decomp}
\end{align}
with the first equality follows from item iii) of Lemma~\ref{lem: Block diagonalization}, and
where the second equality follows because $P_L$ is a one-dimensional spectral projection of $H^{(L)}$. 
For $L$ such that $[-L,L]$ contains the support of $A$, we have hence
$$
\psi(P_LAP_L)= \psi(P_L) \Tr_L(P_LA).
$$
Since $\psi(P_L) \in [0,1]$, there is a subsequence $L_n$ such that the limit $\lim_n\psi(P_{L_n})$ exists. We call this limit $a \in [0,1]$.
We then see from Lemma \ref{lem: limiting ground state} that
$$
\lim_n \psi(P_{L_n}AP_{L_n}) = a \nu(A)
$$
and from \eqref{eq: origina of decomp}, it then follows that 
\begin{equation}\label{eq: convex decomposition}
\psi(A)= a \nu(A) +(1-a)\mu(A)
\end{equation}
where, in case $ a < 1$,  
$$
\mu(A)=\lim_n    \frac{1}{\psi(\bar{P}_{L_n})}  \psi(\bar{P}_{L_n}\caK_v^{H^{(L_n)}}(A)\bar{P}_{L_n})
$$
and we used Lemma \ref{lem: Block diagonalization} item iii) to commute the projectors. The existence of the limit follows because the first term in \eqref{eq: origina of decomp} has a limit.  
We now claim that $\mu$ extends to a state on $\caA$. Indeed, positivity follows from the nonnegativity of the function $v$. Normalization follows from $\caK_v^{H^{(L_n)}}(\bbI) = \bbI$ since $\int v = 1$, and the extension is then by density of operators with local support. Therefore, also \eqref{eq: convex decomposition} extends to any $A\in \caA$.

Now, since $\psi$ is assumed to be pure, \eqref{eq: convex decomposition} means that either $a=0$ or $a=1$, since we can easily check that $\nu$ and $\mu$ are not equal. We will exclude the case $a=0$, which will end the proof.  

We define the boundary operator
$$
B_L= \sum_{S: S \cap [-L,L]\neq 0, S \cap [-L,L]^c \neq \emptyset}  W_S
$$
satisfying $||B_L|| \leq 2f(1) ||W||_f$. By the variational principle Lemma \ref{lem: variational principle}
$$
\psi(H^{(L)}+B_L) \leq (\omega_{P_L}\otimes\psi|_{[-L,L]^c})(H^{(L)}+B_L)
$$
which implies 
$$
\psi(H^{(L)}) \leq \omega_{P_L}(H^{(L)}) + 2||B_L||  
$$
By the gap assumption, we have 
also
$$
\psi(H^{(L)}) \geq \psi(P_L) E_{0,L}+ (1-\psi(P_L)) (E_{0,L} +\Delta)    
$$
and hence
$$
\psi(P_L) E_{0,L}+ (1-\psi(P_L)) (E_{0,L} +\Delta) \leq  E_{0,L} + 2||B_L|| 
$$
from which we conclude that
$$
(1-\psi(P_L))   \leq  \frac{2||B_L|| }{\Delta} \leq  \frac{4 f(1) ||W||_f  }{\Delta}.
$$
If $||W||_f$ is small enough, we get that $a = \liminf_L\psi(P_L)>0$ and so the alternative $a=0$ is indeed excluded. \hfill $\Box$

\bibliographystyle{unsrt}
 
\bibliography{sptpump}

\end{document}